\def\eqref#1{equation~\ref{#1}}
\def\1{\bm{1}}
\DeclareMathAlphabet{\mathsfit}{\encodingdefault}{\sfdefault}{m}{sl}
\SetMathAlphabet{\mathsfit}{bold}{\encodingdefault}{\sfdefault}{bx}{n}
\newcommand{\Var}{\mathrm{Var}}
\newcommand{\Cov}{\mathrm{Cov}}
\DeclareMathOperator*{\argmax}{arg\,max}
\definecolor{mygray}{gray}{0.85}
\theoremstyle{plain}
\newtheorem{theorem}{Theorem}[section]
\newtheorem{proposition}[theorem]{Proposition}
\theoremstyle{definition}
\newtheorem{definition}[theorem]{Definition}
\newtheorem{observation}{Observation}
\newcommand{\abs}[1]{\ensuremath{\left|#1\right|}}
\newcommand{\norm}[2][]{\ensuremath{\left\Vert #2 \right\Vert}}
\renewcommand{\vec}[1]{\mathbf{#1}}
    \newcommand{\BC}{{\mathbb {C}}} 
    \newcommand{\BE}{{\mathbb {E}}}
     \newcommand{\BR}{{\mathbb {R}}}
     \newcommand{\CL}{{\mathcal {L}}}
    \newcommand{\CM}{{\mathcal {M}}} \newcommand{\CN}{{\mathcal {N}}}
    \newcommand{\CO}{{\mathcal {O}}}
    \renewcommand{\Im}{{\mathrm{Im}}}
    \newcommand{\Ker}{{\mathrm{Ker}}}
    \renewcommand{\Re}{{\mathrm{Re}}}
    \newcommand{\Spec}{{\mathrm{Spec}}}
\newcommand\coolover[2]{\mathrlap{\smash{\overbrace{\phantom{%
    \begin{matrix} #2 \end{matrix}}}^{\mbox{$#1$}}}}#2}
    \theoremstyle{plain}
    \newtheorem{thm}{Theorem}[section] \newtheorem{cor}[thm]{Corollary}
    \newtheorem{lem}[thm]{Lemma}  \newtheorem{prop}[thm]{Proposition}
     \newtheorem{defn}[thm]{Definition}
     \newtheorem {rem}[thm]{Remark}
\definecolor{mydarkblue}{rgb}{0,0.08,0.45}
\icmltitlerunning{Prediction Accuracy of Learning in Games :
Follow-the-Regularized-Leader meets Heisenberg}
\begin{document}
\twocolumn[
\icmltitle{Prediction Accuracy of Learning in Games : 
\\Follow-the-Regularized-Leader meets Heisenberg}

\begin{icmlauthorlist}
\icmlauthor{Yi Feng}{yyy}
\icmlauthor{Georgios Piliouras}{comp}
\icmlauthor{Xiao Wang}{yyy,yyy2}
\end{icmlauthorlist}

\icmlaffiliation{yyy}{Shanghai University of Finance and Economics, Shanghai, China}
\icmlaffiliation{comp}{Google DeepMind, London, United Kingdom}
\icmlaffiliation{yyy2}{Key Laboratory of Interdisciplinary Research of Computation and Economics, China}

\icmlcorrespondingauthor{Xiao Wang}{wangxiao@sufe.edu.cn}

\icmlkeywords{Machine Learning, ICML}

\vskip 0.3in
]



\printAffiliationsAndNotice{} 

\begin{abstract}
We investigate the accuracy of prediction in deterministic learning dynamics of zero-sum games with random initializations, specifically focusing on observer uncertainty and its relationship to the evolution of covariances. Zero-sum games are a prominent field of interest in machine learning due to their various applications. Concurrently, the accuracy of prediction in dynamical systems from mechanics has long been a classic subject of investigation since the discovery of the Heisenberg Uncertainty Principle. This principle employs covariance and standard deviation of particle states to measure prediction accuracy. In this study, we bring these two approaches together to analyze the Follow-the-Regularized-Leader (FTRL) algorithm in two-player zero-sum games. We provide growth rates of covariance information for continuous-time FTRL, as well as its two canonical discretization methods (Euler and Symplectic). A Heisenberg-type inequality is established for FTRL. Our analysis and experiments also show that employing Symplectic discretization enhances the accuracy of prediction in learning dynamics.
\end{abstract}

\section{Introduction}

In recent years understanding the behavior of learning algorithms in repeated games has attracted increasing interests from the machine learning community~\citep{lanctot2017unified,yang2020overview}. Follow-the-Regularized-Leader (FTRL) algorithm~\citep{abernethy2009competing,shalev2012online}, arguably the most well known class of no-regret dynamics, plays a prominent role in analysis of behavior of learning algorithms. The dynamics of such online learning algorithm in zero-sum games has been a particularly intense object of study as zero-sum games related to numerous recent applications and advances in AI such as, achieving super-human performance in Go~\citep{silver2016mastering}, Poker~\citep{brown2018superhuman} and Generative Adversarial Networks (GANs)~\citep{goodfellow2014generative} to name a few.



Predicting players' long term behaviors in a repeated game is a fundamental and challenging problem \citep{nachbar1997prediction,cesa2006prediction}. The conventional wisdom in this regard is that players' strategies will eventually converge to some equilibrium. However, recent studies have shown that such a belief usually fails, in particular, FTRL dynamics do not converge in zero-sum games and exhibit complex behaviors such as recurrence and divergence \citep{MPP18,bailey2018multiplicative}. Therefore, one needs to predict the future behaviors of players by tracing their day-to-day (a.k.a. last-iterate) behaviors \cite{daskalakis2017training,gidel2018variational,gidel2019negative}. By modeling the learning dynamics as a deterministic dynamical system, an observer with the computational ability to trace the learning dynamics can accurately predict future player states by knowing their \textit{exact} current states.

However, in practice, the observer may have some uncertainty about the current states of players. This kind of uncertainty is common both from a game-theoretic perspective (i.e., the unknown external action on agents’ preference) as well as from a Machine Learning perspective (i.e., system initialization by sampling from a distribution and noise introduced during training ). Thus, the observer may hope that \textit{slightly inaccurate} knowledge of current conditions can lead to \textit{slightly inaccurate} prediction. For example, by tracking the expectation of the evolving distribution along the learning dynamics that model his uncertainty, satisfactory predictions can be obtained. 

\begin{figure*}[t]
\centering
\subfigure[t = 0, variance = 0.00020781]{
\includegraphics[clip,width=0.65\columnwidth]{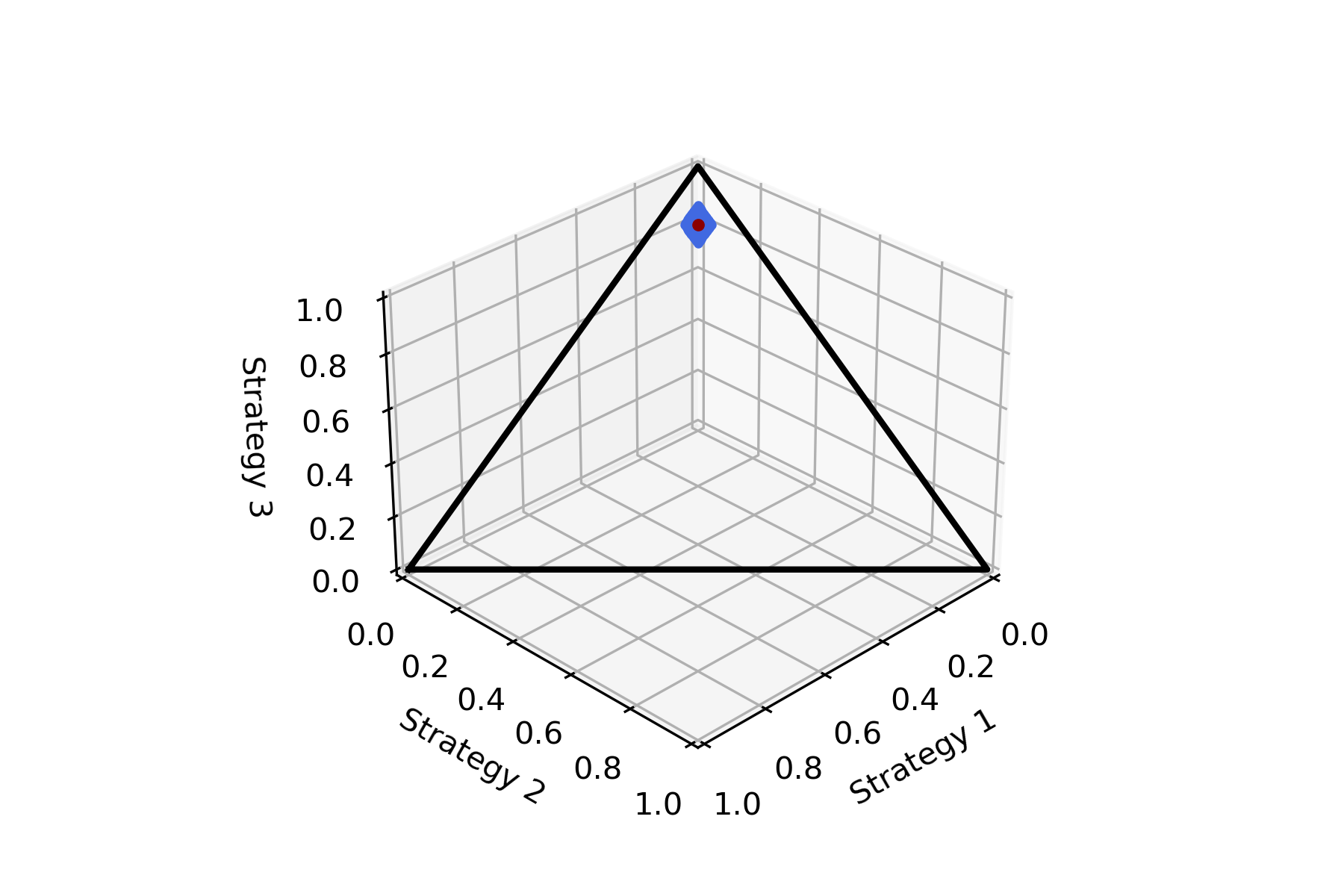}
}
\subfigure[t = 50, variance = 0.01038411]{
\includegraphics[clip,width=0.65\columnwidth]{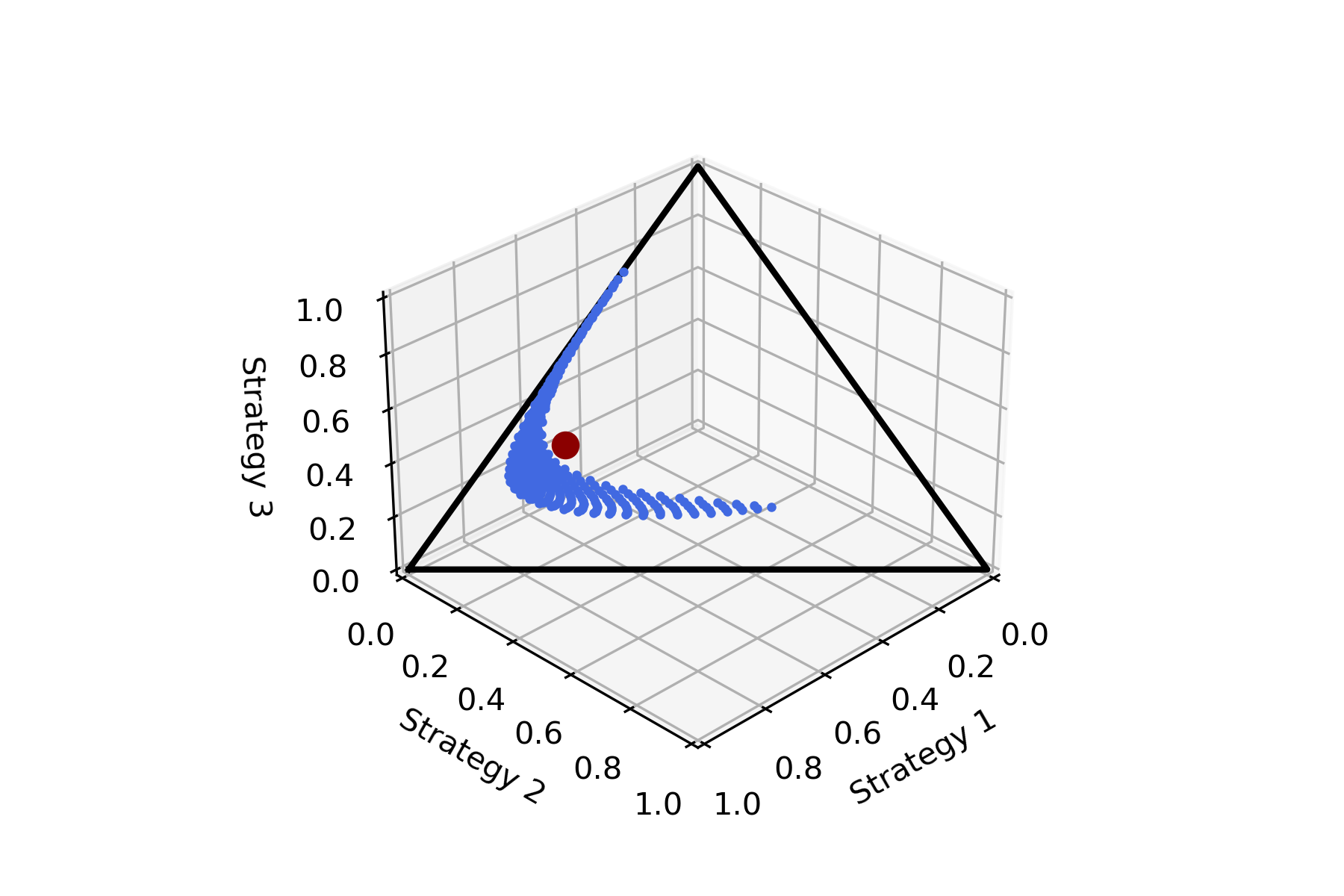}
}
\subfigure[t = 80, variance = 0.05732407]{
\includegraphics[clip,width=0.65\columnwidth]{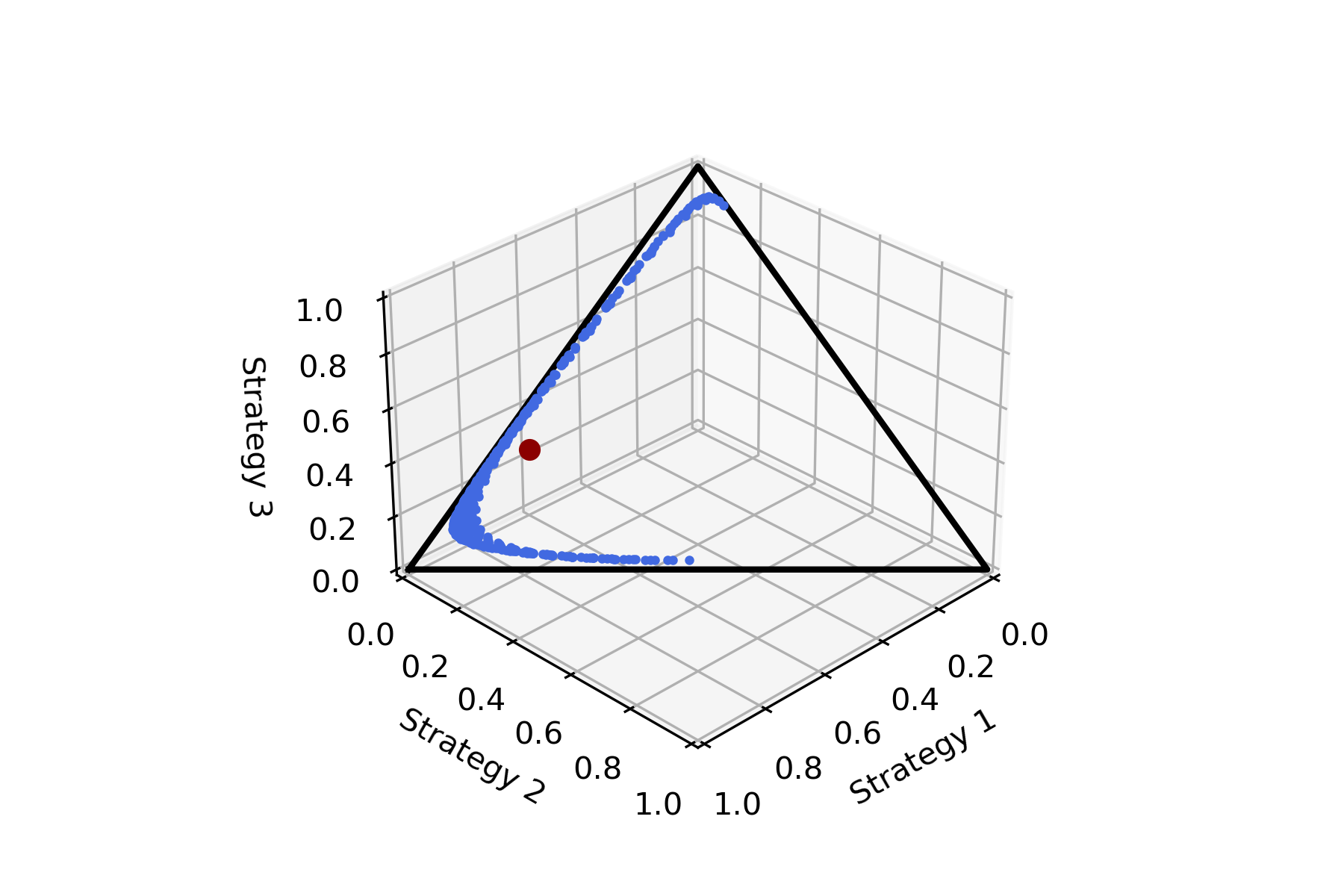}
}
\subfigure[t = 130, variance = 0.09969448]{
\includegraphics[clip,width=0.65\columnwidth]{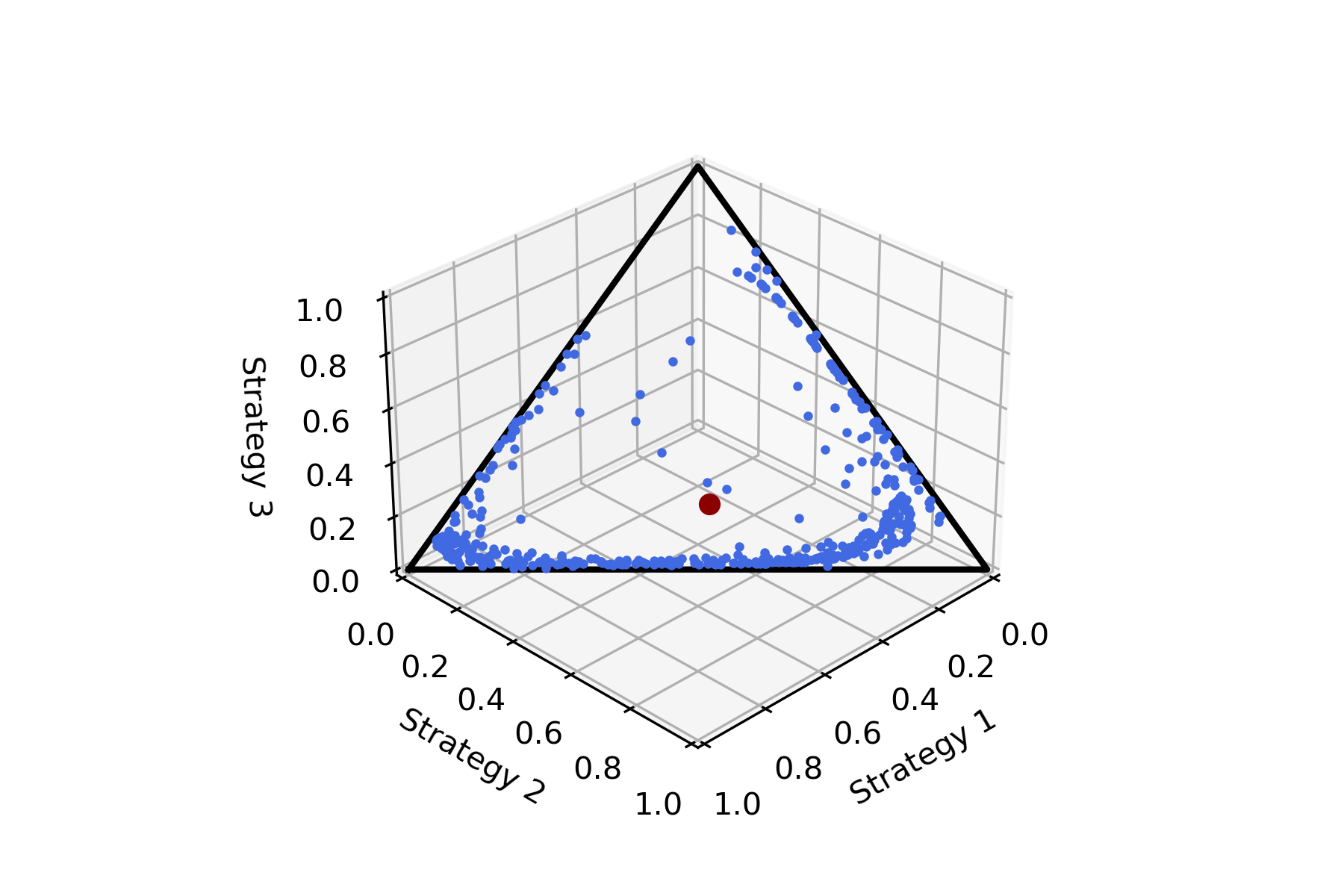}
}
\subfigure[t = 150, variance = 0.07895284]{
\includegraphics[clip,width=0.65\columnwidth]{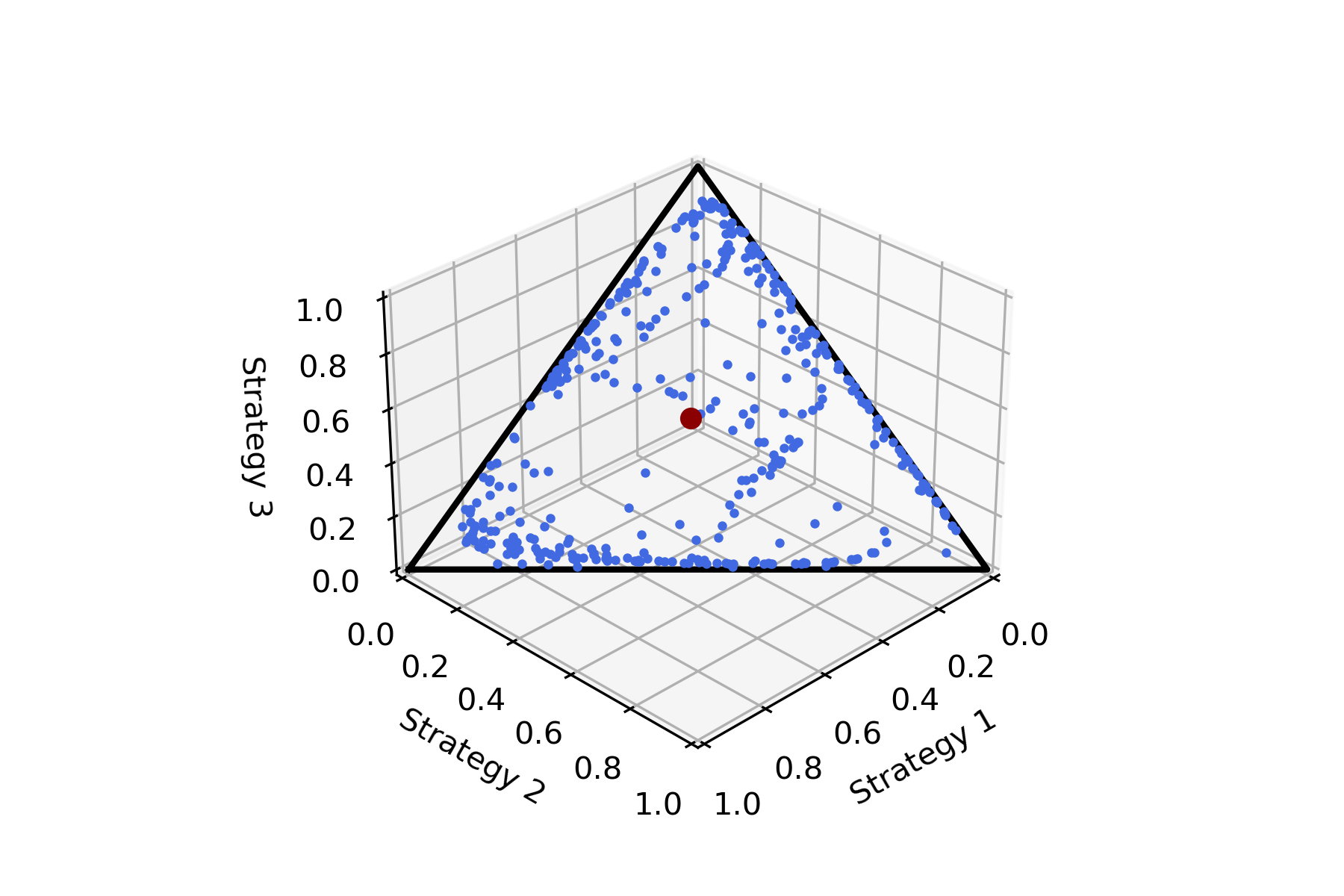}
}
\subfigure[t = 200, variance = 0.08126675]{
\includegraphics[clip,width=0.65\columnwidth]{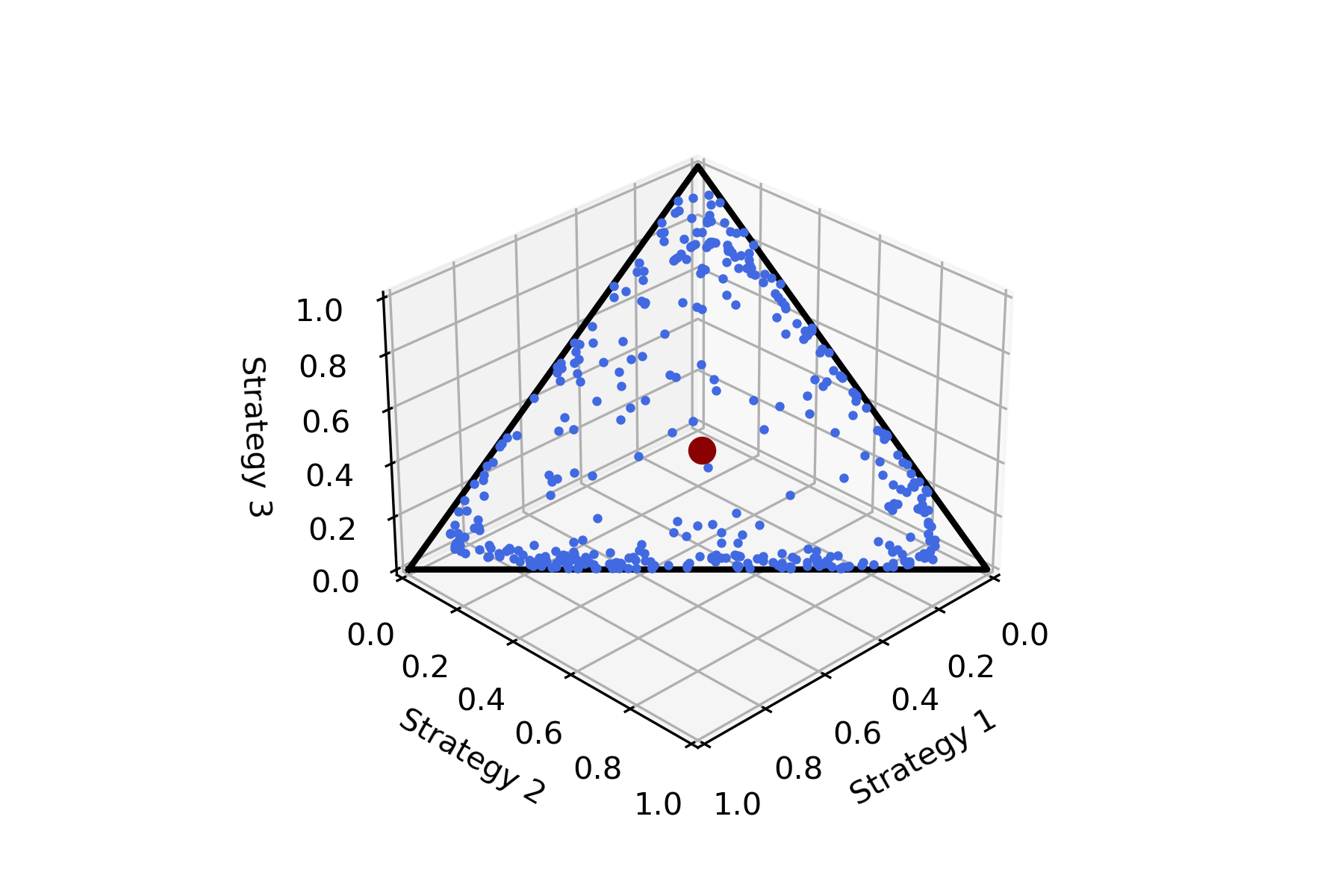}
}
\caption{Evolution of 400 initial conditions (blue points) for one player using (\ref{2pMWUA}) in R-P-S game, the red point is the expectation of blue points, the variance is calculated on the first pure strategy. At time $t=0$, these point are sampled in a small square. As time evolves, it appears that these points are randomly located on the simplex and the variance is magnified by a factor of 400. Moreover, the expectation (red point in the figures) cannot accurately predict future outcomes, as sample points in subsequent times may deviate significantly from the expected value. \textbf{A demo animation can be found \href{https://www.dropbox.com/scl/fi/u65qqyidnhzpu61mmyb7r/AltMWU1_newgif.gif?rlkey=jcuib751hn3ph69kf67hhobe4&dl=0}{here}.}}
\label{Tra 3d}
\end{figure*}
Unfortunately, this hope didn't materialize as expected. One remarkable aspect of several learning dynamics, like Multiplicative Weights Updates (FTRL with negative entropy regularizer), is that even slight initial deviation can give rise to significantly divergent strategy trajectories for players over extended periods \citep{sato2002chaos,vilone2011chaos,galla2013complex}. Figure \ref{Tra 3d} illustrates that in a repeated Rock-Paper-Scissor game, the evolution of many orbits starting from a small region when players use Alternating Multiplicative Weights Update. Figure \ref{Tra 3d} shows that even small initial uncertainty can be amplified and make the accurate prediction difficult. Moreover, tracking the expectation fails to accurately prediction as a large portion of  points will deviate significantly from the expectation; in terms of statistics, the \textbf{(co)variance} of the distribution can be large over time, hindering accurate predictions of players' future behaviors. Motivated by this example, we naturally formulate the following question:

\textit{How to track the accuracy of prediction in learning dynamics?}

The most relevant paper in this direction is \citep{CPT2022}, which utilized \textit{differential entropy} as their metric of uncertainty and demonstrated that it grows linearly fast in two-player zero-sum games, quantifying the amount of excess information an observer must gather to keep track of the uncertain system evolution. However, as we will show in following, differential entropy cannot capture the uncertainty evolution of the \emph{alternating} update rule of game dynamics, such as the alternating MWU in Figure \ref{Tra 3d}. 

In this work, we will study the evolution of \textit{covariance (standard deviation)} associated with related random variables that govern the dynamics of FTRL, utilizing the framework of the Hamiltonian formulation of FTRL \citep{PB1,wibisono2022alternating}. This perspective studies the evolution of covariance closely related to the well known \emph{Heisenberg Uncertainty Principle} in quantum mechanics, which states that the covariance of the momentum and position of a microscopic particle cannot be small at the same time, i.e., $\Delta p\Delta q\ge \text{some positive constant}$. The Hamiltonian formulation of FTRL endows the \textit{cumulative strategy} and \textit{cumulative payoff} of each agent the roles of position $q$ and momentum $p$ of each particle, and these quantities completely determine the dynamics of the game. Once the initialization is randomized, the deterministic learning dynamics still makes the cumulative strategy and payoff random variables whose mean and covariance matrix is related to that of the initialization. As what is implied from Heisenberg Uncertainty Principle \citep{busch2007heisenberg}, information on covariance (standard deviation) measures the \emph{accuracy of prediction} in learning dynamics. We will also demonstrate that this analogy is not vague; similar phenomena to the Heisenberg Uncertainty Principle already exist in FTRL dynamics.



\paragraph{Our contributions.}We highlight our main contributions as follows:
\begin{itemize}[leftmargin=*]
\item We prove that differential entropy remains constant when two players alternatively update their strategies. Therefore, it is not a strong concept for capturing the evolution of uncertainty in alternating update, see Proposition \ref{entropya}.

\item We propose the covariance matrix as an uncertainty measurement which captures both simultaneous and alternating updates, with rate of increasing calculated concretely in Euclidean regularized FTRL, see Theorem \ref{Covariance Evolution in FTRL with Euclidean Regularizer}. Our results imply a separation between simultaneous and continuous time/alternating plays. As an immediate application, Corollary \ref{cor:Cheb} provides a prediction with quantitative description of risk, i.e., probability of deviating from expectation up to time $t$;

\item For FTRL with general regularizers, a Heisenberg type inequality on variances of cumulative strategy and payoff is obtained, i.e., $\Delta X_{i,\alpha}\Delta y_{i,\alpha}\ge\text{positive constant}$. Similar to the Heisenberg Uncertainty Principle in quantum mechanics, this inequality indicates a  trade-off between prediction accuracy in strategy spaces versus payoff spaces for game dynamics, see Theorem \ref{prop:nonlinear}. In Figure \ref{UI} we present an example to illustrate this point.
\end{itemize}
\begin{figure}[h]
    \centering
    \includegraphics[width=0.45\textwidth]{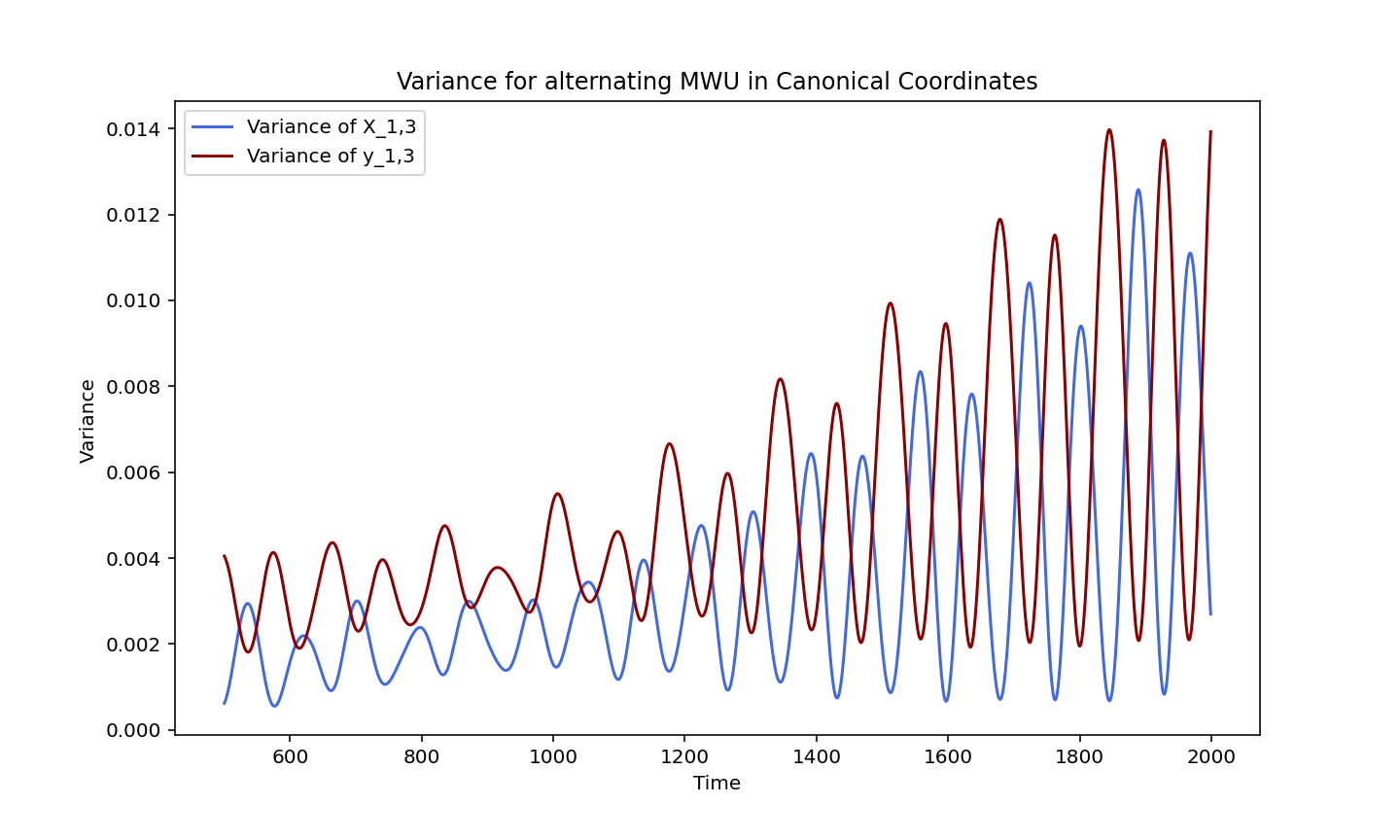}
    \caption{The two curves represent the evolution of  $\Delta X_{i,\alpha}$ and $\Delta y_{i,\alpha}$
    on 100 random samples of two players when they use (\ref{2pMWUA}).
    When one curve is decreasing, another curve is increasing.
    This implies a tradeoff between accuracy in strategy spaces versus payoff spaces.}
    \label{UI}
\end{figure}

\paragraph{Technical innovations.}
The techniques used in drawing the above conclusions come from different areas. The theoretical framework for analyzing the uncertainty evolution is the classic mechanical formulation of games \citep{PB1,wibisono2022alternating}. A rigorous correspondence has been established between symplectic discretization \citep{haier2006geometric} and alternating plays to study their properties. To demonstrate that differential entropy is constant in alternating plays, we utilize the volume preservation property of Symplectic discretization. Furthermore, the intuition in  deriving the covariance evolution of Symplectic discretization is from \citep{wang1994some}, and the proof combine tools from application of matrix analysis in dynamical systems \citep{colonius2014dynamical}. The uncertainty inequality for general FTRL is a consequence of a classic result from symplectic geometry known as non-squeezing theorem \citep{mcduff2017introduction} and variance analysis methods from multivariate statistics.

\section{Preliminaries}\label{Preliminaries}
\paragraph{Learning in games.}
A  two agent zero-sum game consists of two agent $\CN = \{1,2\}$, where agent $i$ selects a strategy from the strategy space (or primal space) $\mathcal{X}_i \subset \BR^{n_i}$ and $n_i$ represents the number of actions available to agent $i$.  Typically, $\mathcal{X}_i$ is chosen to be $\BR^{n_i}$, which we called the unconstrained zero-sum game, or it is chosen to be the simplex constrains
\begin{align*}
    \Delta_i = \{ x \  \lvert \ \sum^{n_i}_{s=1} x_{i,s} = 1, x_{i,s} \ge 0 \}.
\end{align*}
Utilities of both agents are determined via payoff matrix $A^{(ij)} \in \BR^{n_i \times n_j}$, and in a zero-sum game, the pay off matrix satisfy $A^{(ij)} = - A^{(ji)}$. 
For convenience, we will also use $A$ to refer to $A^{(12)}$, and thus $A^{(21)} = -A^{\top}$. Given that agent $i$ selects strategy $x_i \in \mathcal{X}_i \subset \BR^{n_i}$, agent 1 receives utility $u_1(x_1,x_2) = \langle x_1, Ax_2 \rangle$, and agent 2 receives utility $u_2(x_2,x_1) = - \langle x_2, A^{\top}x_1 \rangle$.  Naturally agents want to maximize their utility resulting the following max-min problem: 
\begin{align*}
    \max_{x_1 \in \mathcal{X}_1} \min_{x_2 \in \mathcal{X}_2} x^{\top}_1 Ax_2 .\tag{Zero-Sum Game}
\end{align*}
\paragraph{Follow-the-Regularized-Leader.} Follow-the-Regularized-Leader (FTRL) is a widely used class of no-regret online learning algorithms. In continuous time FTRL, at time $t$, agent $i$ updates strategies  $x_i(t)$ based on the cumulative  payoff vector $y_i(t)$, 
\begin{align*}\label{2pFTRL}
\tag{Continuous FTRL}
&y_i(t)=y_i(0)+\int_0^t A^{(ij)}x_j(s)ds
\\
&x_i(t)=\argmax_{x_i\in\mathcal{X}_i}\{\langle x_i,y_i(t)\rangle-h_i(x_i)\} 
\end{align*}
where $h_i$ is a strongly convex function, which is called the regularizer. It is also well known that 
\begin{align}\label{conv_conj}
    x_i(t) =  \nabla h^*_i\left(y_i(t) \right),
\end{align} where 
\begin{align}
h^*_i(y_i) = \max_{x_i \in \mathcal{X}_i} \{ \langle x_i,y_i  \rangle - h_i(x_i) \} 
\end{align}
is the convex conjugate of $h_i$ \citep{shalev2006convex}. Therefore, if an observer knows the regularizer used by players, they can convert the information of $y_i(t)$ into the primal space $x_i(t)$.

 Gradient descent ascent (GDA) and multiplicative weights updates (MWU) are two  of the most well known special cases of FTRL algorithms. For unconstrained GDA, the regularizers are chosen to be the Euclidean norm, i.e.,  $h_i(x_i)  = \lVert x_i \lVert^2$ and $\mathcal{X}_i = \BR^{n_i}$. For MWU, the regularizers are chosen to be negative entropy, i.e., 
 \begin{align*}
      h_i(x_i)  =\sum_{j \in [n_i]} x_{i,j} \ln(x_{i,j})
 \end{align*}
  and $\mathcal{X}_i$ be the simplex constrains. 

In discrete time, FTRL has two kinds of implementations : simultaneous and alternating. Let
$y^{t}_1 = A x^{t}_2, \ \ y^{t}_2 = A^{\top} x^{t}_1.$
In the case of GDA and MWU, the update rules with step size $\eta$ are :
\begin{align*}
&x^t_1  =x^{t-1}_1 + \eta y^{t-1}_1, \ \ \label{Sim GDAof1}
\tag{GDA}
x^t_2  =x^{t-1}_2  - \eta y^{t-1}_2.
\\
&x^t_2  = x^{t-1}_2 - \eta y^{t-1}_2,  \ \ \label{Alt GDAof1}
\tag{AltGDA}
x^t_1  = x^{t-1}_1 + \eta y^t_1.
\end{align*}
and
\begin{align*}
& x^t_1 =\left (\frac{ x^{t-1}_{1,s} e^{\eta  y^{t-1}_{1,s}}   }{ \sum^{n_1}_{j=1} x^{t-1}_{1,j} e^{\eta y^{t-1}_{1,j} }} \right )_s
\label{2pMWUS}
\tag{MWU}
 x^t_2 = \left (\frac{ x^{t-1}_{2,s} e^{- \eta y^{t-1}_{2,s} }   }{ \sum^{n_2}_{j=1} x^{t-1}_{2,j} e^{- \eta y^{t-1}_{2,j}}} \right )_s
\\
&x^t_2 = \left (\frac{ x^{t-1}_{2,s} e^{- \eta y^{t-1}_{2,s}}   }{ \sum^{n_2}_{j=1} x^{t-1}_{2,j} e^{- \eta y^{t-1}_{2,j}}} \right )_s
\label{2pMWUA}
\tag{AltMWU}
x^t_1 =\left (\frac{ x^{t-1}_{1,s} e^{\eta y^t_{1,s}}}{ \sum^{n_1}_{j=1} x^{t-1}_{1,j} e^{\eta y^t_{1,j}}} \right )_s
\end{align*}
for $s = \{1,2,..., n_i\}$ and $i = 1$ or $2$. Note that in simultaneous updates, both two players use the payoff feedback from round $t-1$ to update their strategies in round $t$, while in alternating updates, Player 2 first update his strategy $x^t_2$, and Player 1 subsequently updates her strategy $x^t_1$ based on payoff feedback of $x^t_2$. Comparing to its simultaneous partner, a series of recent works show that alternating update rule has a slower regret growth rate \citep{bailey2020finite,wibisono2022alternating,cevher2023alternation}.

\paragraph{Dynamical system.} A system of ordinary differential equations $\dot{x}=f(x)$ where $f:\mathbb{R}^n\rightarrow\mathbb{R}^n$ is a differentiable dynamical system. $f(x)$ is called the vector field of the dynamical system. If $f$ is Lipschitz continuous, there exists a continuous map $$\varphi(t,x_0):\mathbb{R}\times\mathbb{R}^n\rightarrow\mathbb{R}^n$$ such that for all $x_0\in\mathbb{R}^n$, $\varphi(t,x_0)$ is the unique solution of the initial condition problem $\{\dot{x} =f(x), x(0)=x_0\}$. The solution $\varphi(t,x_0)$ is called a \emph{trajectory} or \emph{orbit} of the dynamical system.

\textbf{Hamiltonian systems.}  A Hamiltonian system is a class of differential equations describing the evolution of momentums and positions of particles by a scalar function $H(X,Y)$  called Hamiltonian function. The state of the system, the momentum $Y=(y_1,...,y_n)^{\top}$ and position $X=(x_1,...,x_n)^{\top}$ evolves according to the following Hamilton's equations:
\begin{equation}\label{HamiltonianSystem}
\frac{dx_i}{dt}=\frac{\partial H}{\partial y_i},\ \ \ \ \frac{dy_i}{dt}=-\frac{\partial H}{\partial x_i},\ \ \ \ \text{for}\ \ \ i\in[n]. 
\end{equation}
The solution $\varphi(t,\cdot)$ of a Hamiltonian system is called a \emph{symplectic} map which is a special case of volume-preserving maps, thus the absolute value of determinant of the Jacobian matrix equals to $1$. The variables $(X,Y)$ are also referred to the \textit{canonical coordinates} of the system \citep{arnol2013mathematical}.


\subsection{Measure of Observer Uncertainty}

\textbf{Differential Entropy.} The concept of differential entropy was introduced by Shannon \citep{shannon1948mathematical} as a measure of the uncertainty associated with a continuous probability distribution. For a random vector $X \in \BR^n$ with probability density function $g(x)$ supported on $\mathcal{X} \subset \BR^n$, the differential entropy of $X$ is defined as
\begin{align*}\label{differential entropy}
\tag{Differential Entropy}
S(X) = -\int_{\mathcal{X}} g(x)\log g(x)dx.
\end{align*}

\textbf{Covariances of random vectors.} Given a random vector $X = (x_1, ..., x_m)^{\top}$ such that every $x_i$ is a random variable with finite variance and expected value, the covariance matrix $P(X) \in \BR^{m \times m}$ of $X$ is a  symmetric and positive semi-definite square matrix whose $(i,j)$ entry is the covariance, i.e.,
 \[
 \Cov(x_i,x_j) = \BE[(x_i - \BE(x_i))(x_j - \BE(x_j))].
 \]
  Note that the diagonal elements of $P(x)$ are variances of $\{x_1, ..., x_m\}$. 

In general, the differential entropy  of a random variable provides a lower bound on the determinant of its covariance matrix. Precisely, if a random vector $X \in \BR^m$ has zero mean and covariance matrix $P(X)$, then
\begin{align*}
    S(X) \le \frac{1}{2}\log \left(   (2 \pi e)^n \det P(X)  \right).
\end{align*}

\section{Setup}\label{setup}
In this section we leverage the power of the Hamiltonian formulation of game dynamics  \citep{PB1,wibisono2022alternating}. To study the strategy-payoff evolution from the perspective of each agent, it is convenient to apply the Hamiltonian formulation of the continuous time FTRL. We will establish the equivalence between discretization of the Hamiltonian system induced by continuous time FTRL and direct discretization of FTRL, where the latter leads to the GDA or MWU. 


\textbf{Euler discretization.}  Given an ordinary differential equation $\dot{x}=f(x)$ with initial condition $x(t_0)$ at time $t_0$, the Euler discretization begin the process by setting $x_0 = x(t_0)$, next choose a step size $\eta$ and set $t_{n} = t_0 + n\eta$, then the Euler discretization $\phi_{\eta}(\cdot)$  is defined by $x_{n+1} = \phi_{\eta}(x_{n}) = x_{n} + \eta f(x_n)$. The value $x_{n+1}$ is an approximation of the solution of $\dot{x}=f(x)$ at time $t_{n+1}$.

\textbf{Symplectic discretization.}  Given a Hamiltonian system as in (\ref{HamiltonianSystem}), 
a numerical method $\phi_{\eta} (\cdot)$ is called a Symplectic discretization if when applied to a Hamiltonian system, the discrete flow  $\phi_{\eta} : x \to \phi_{\eta}(x)$ is a Symplectic map for sufficient small step sizes. In this paper we focus on the following Symplectic discretizations: for $X = (x_1,...,x_n)^{\top}, Y=(y_1, ...,y_n)^{\top}$,
\begin{align*}\label{type1}
&Y^{t+1} = Y^t - \eta \nabla_{X}H(X^t,Y^{t+1}),\\ 
&X^{t+1} = X^t+  \eta \nabla_{Y}H(X^t,Y^{t+1}),
\tag{Type \uppercase\expandafter{\romannumeral1} method}
\end{align*}
or
\begin{align*}\label{type2}
&X^{t+1} = X^t + \eta \nabla_{Y}H(X^{t+1},Y^t),\\ 
&Y^{t+1} = Y^t -  \eta \nabla_{X}H(X^{t+1},Y^t).
\tag{Type \uppercase\expandafter{\romannumeral2} method}
\end{align*}
Both methods are Symplectic methods, i.e., they make the map $(X^t,Y^t) \to (X^{t+1},Y^{t+1})$ to be symplectic. More details of Symplectic method can be found in \citep{haier2006geometric}. Note that although both methods are generally implicit,  they become explicit when the Hamiltonian function is separable, i.e., can be expressed as $H(X,Y) = f(X) + g(Y)$ with functions f and g. This property holds for the Hamiltonian formulation of FTRL dynamics.

\paragraph{Canonical coordinates of FTRL dynamics.} In this paper we will focus on the dynamics of cumulative strategy and cumulative payoff. The cumulative strategy $X_i(t)$ of agent $i \in [2]$ is defined as follows : 
\begin{equation}\label{eqX}
X_i(t) = \int^t_0 x_i(s)ds .
\end{equation}
\citep{PB1} demonstrates that (\ref{2pFTRL}) can be formulated as a Hamiltonian system through $(X_i(t),y_i(t))$ using the Hamiltonian function
\begin{align}
    H(X_i,y_i) = h^*_i(y_i(t)) + h^*_j(y_j(0) + A^{(ji)}X_i(t))
\end{align}
for $j \ne i$ be the Hamiltonian function \footnote{Intuitive explanation of this Hamiltonian function can be found in Appendix \ref{Hamiltonian formulation of FTRL}.}. Thus $X_i(t)$ and $y_i(t)$ evolve according to the following Hamiltonian system
\begin{equation}\label{HamiltonianFTRL}
\frac{dX_i}{dt}=\frac{\partial H(X_i,y_i)}{\partial y_i},\ \ \ \frac{dy_i}{dt}=-\frac{\partial H(X_i,y_i)}{\partial X_i}.
\end{equation}
Following the tradition of Hamiltonian mechanics, we will refer to 
$(X_i(t),y_i(t))$ for $i = 1$ or $2$ as the \textit{canonical coordinates} of FTRL dynamics. This can be analogously understood as the position-momentum coordinates used to describe the dynamic of a particle. 

It may initially seem strange to trace the dynamics of players in a game based on their cumulative strategies/payoffs rather than their actual strategies $x_i(t)$. However, there is no loss in tracing these variables since $y_i(t)$ can be translated into $x_i(t)$ through the map $\nabla h^*(\cdot)$ introduced in (\ref{conv_conj}), and $X_i(t)$ can be easily translated into $y_j(t)$ through $y_j(t) = y_j(t_0) + A^{(ji)} \left(X_i(t)-X_i(t_0)\right)$.

\paragraph{Primal-dual correspondence via discretization.}Since we use Euler and Symplectic discretization on the Hamiltonian system, which is not obviously equivalent to the conventionally natural update rules in the strategy spaces $\mathcal{X}_i$, we next establish formally that the Euler or Symplectic discretization of  continuous time FTRL with Euclidean norm / negative entropy regularizer implies GDA/MWU or AltGDA/AltMWU respectively. This correspondence can be stated in the following proposition.
\begin{proposition}\label{prop:primal-duel}
For each agent $i\in[2]$, let $\mathcal{X}_i$ denote the strategy spaces. Then following statements holds:
\begin{itemize}[leftmargin=*]
\item If both players use Euclidean regularizers and $\mathcal{X}_i = \BR^{n_i}$, then the Euler discretization of (\ref{HamiltonianFTRL}) is equivalent to Gradient Descent-Ascent (GDA) on the strategy spaces; and the Symplectic discretization of (\ref{HamiltonianFTRL}) is equivalent to Alternating Gradient Descent-Ascent (AltGDA) on the strategy spaces.
\item If both players use entropy regularizers and $\mathcal{X}_i = \Delta_i$ be the simplex, then the Euler discretization of (\ref{HamiltonianFTRL}) is equivalent to Multiplicative Weights Update (MWU) on the strategy spaces; and the Symplectic discretization of (\ref{HamiltonianFTRL}) is equivalent to Alternating Multiplicative Weights Update (AltMWU) on the strategy spaces.
\end{itemize}
\end{proposition}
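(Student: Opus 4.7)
}

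The plan is to carry out a direct, coordinate-by-coordinate verification that applying each numerical scheme to the Hamiltonian system \eqref{HamiltonianFTRL} reproduces the primal-space update listed in Section \ref{Preliminaries}. First I would compute the two partial derivatives of $H(X_i,y_i) = h_i^*(y_i) + h_j^*\!\bigl(y_j(0)+A^{(ji)}X_i\bigr)$ once and for all: using \eqref{conv_conj} together with $A^{(ji)}=-(A^{(ij)})^\top$, one obtains
\begin{align*}
\frac{\partial H}{\partial y_i}=\nabla h_i^*(y_i)=x_i,\qquad
-\frac{\partial H}{\partial X_i}=-\bigl(A^{(ji)}\bigr)^\top \nabla h_j^*\!\bigl(y_j(0)+A^{(ji)}X_i\bigr)=A^{(ij)}x_j.
\end{align*}
Thus Hamilton's equations read $\dot X_i=x_i$ and $\dot y_i = A^{(ij)}x_j$, which already matches \eqref{2pFTRL}. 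The rest of the proof is verifying that the three discretization rules in Section \ref{setup} reduce, after applying the map $\nabla h_i^*$, to the primal updates.

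For Euler, substituting the above partial derivatives into $y_i^{t+1}=y_i^t+\eta\,A^{(ij)}x_j^t$ gives a dual-space update in which both players use the state at time $t$. When $h_i(x_i)=\tfrac12\|x_i\|^2$ we have $\nabla h_i^*(y_i)=y_i$, so $x_i=y_i$ and the rule becomes $x_1^{t+1}=x_1^t+\eta A x_2^t$, $x_2^{t+1}=x_2^t-\eta A^\top x_1^t$, i.e.\ \eqref{Sim GDAof1}. When $h_i(x_i)=\sum_s x_{i,s}\ln x_{i,s}$ on $\Delta_i$, the conjugate's gradient is the softmax $\nabla h_i^*(y_i)_s = e^{y_{i,s}}/\sum_k e^{y_{i,k}}$; multiplying the numerator and denominator of $\operatorname{softmax}(y_i^t+\eta A^{(ij)}x_j^t)$ by $\exp(-y_i^t)$ and absorbing the common constant recovers the MWU formula in \eqref{2pMWUS}.

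For the symplectic schemes, I would apply \eqref{type1} (resp.\ \eqref{type2}) to \eqref{HamiltonianFTRL}. Because $H$ is separable as $h_i^*(y_i)+h_j^*(y_j(0)+A^{(ji)}X_i)$, the nominally implicit equations become explicit: $\nabla_X H$ depends only on $X_i$ and $\nabla_Y H$ depends only on $y_i$, so Type \uppercase\expandafter{\romannumeral1} gives the explicit pair
\begin{align*}
y_i^{t+1}=y_i^t+\eta\,A^{(ij)}x_j^t,\qquad X_i^{t+1}=X_i^t+\eta\,x_i^{t+1},
\end{align*}
where $x_j^t=\nabla h_j^*(y_j(0)+A^{(ji)}X_i^t)$ is evaluated at the old $X_i^t$, but the companion coordinate $y_j^{t+1}=y_j(0)+A^{(ji)}X_i^{t+1}$ is forced, through the coupling, to use the freshly updated $X_i^{t+1}$ (hence the freshly updated $x_i^{t+1}$). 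Translated through $\nabla h_j^*$ this yields exactly the ``update player $i$ first, then player $j$ with the new iterate'' structure of \eqref{Alt GDAof1} and \eqref{2pMWUA}; Type \uppercase\expandafter{\romannumeral2} produces the same pair with the roles swapped. Specializing $\nabla h^*$ to identity or softmax as in the Euler case completes the identification with AltGDA and AltMWU respectively.

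The main obstacle I anticipate is bookkeeping rather than mathematics: because the proposition treats both players symmetrically while the Hamiltonian \eqref{HamiltonianFTRL} is written from a single agent's viewpoint, one must justify that the constraint $y_j(t)=y_j(0)+A^{(ji)}X_i(t)$ really dictates the discrete-time value of $x_j$ used in $x_i$'s update, and verify that the sign conventions induced by $A^{(ji)}=-(A^{(ij)})^\top$ survive all the rewrites. A secondary subtlety is that \textbf{the entropy regularizer} is only finite on $\Delta_i$, so strictly speaking $\nabla h_i^*$ should be computed via Lagrange multipliers; I would handle this once in a short lemma showing that the softmax is the correct formula and then reuse it in both the Euler and symplectic specializations.
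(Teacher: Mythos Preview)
Your proposal is correct and mirrors the paper's argument almost step for step: compute the two partials of $H$, write out the Euler and symplectic updates, prove the softmax formula for the entropy case via Lagrange multipliers, and then identify the primal rules. The one place where the paper does more than you anticipate is precisely the step you label ``bookkeeping'': the discrete-time identity $y_j^{\,n}=y_j(0)+A^{(ji)}X_i^{\,n}$ is not automatic after discretization and is established by an explicit induction (Lemma~\ref{cpayoff}), and that induction closes because agent~$1$'s Hamiltonian is discretized with the Type~\uppercase\expandafter{\romannumeral1} scheme while agent~$2$'s uses Type~\uppercase\expandafter{\romannumeral2}---the paper flags this mixed Type~\uppercase\expandafter{\romannumeral1}/Type~\uppercase\expandafter{\romannumeral2} choice as the technical novelty of the proof, so expect it to take a few lines rather than being immediate.
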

Here equivalent means the variables $(X^t_i, y^t_i)$ getting from the discretizations is the same as the cumulative strategy and payoff of the game dynamics. 

It is known that Euler discretization significantly alters the properties of continuous system \citep{holmes2007introduction}. This results the differences between continuous FTRL and simultaneous plays. For example, continuous FTRL exhibits cycle behaviors \citep{MPP18} while simultaneous plays typically diverge \citep{bailey2018multiplicative}. Compared to Euler discretization , Symplectic discretization can preserve the structure of the continuous system \citep{haier2006geometric}. Therefore, Proposition \ref{prop:primal-duel} suggests that 

\emph{The dynamical behaviors of alternating update rules should be consistent with those of continuous learning dynamics.} 

An example of this phenomenon
is (AltGDA) keeps the cycle behaviors of its continuous partner \citep{bailey2020finite}.

To prove Proposition \ref{prop:primal-duel}, we introduce a novel method of discretizing the continuous Hamiltonian system (\ref{HamiltonianFTRL})
by a combination of two types Symplectic methods, while still keep the symplectic structures on the dynamics of each agents. We believe this method is of independent interests. The detailed proofs of Proposition \ref{prop:primal-duel} are deferred to Appendix \ref{Proof of Proposition prop:primal-duel}.

\par \paragraph{Random initialization.} We consider the case when noise is introduced to the canonical coordinates $(X_i(t_0),y_i(t_0))$ at time moment $t_0>0$ in continuous FTRL or $(X_i^{t_0},y_i^{t_0})$ in discrete time settings. The main objective of this paper is to study the evolution of observer uncertainty given the covariance matrix $P_0$ of the initialization $(X_i(t_0),y_i(t_0))$ or $(X_i^{t_0},y_i^{t_0})$. Take discrete time FTRL for example, the covariance matrix $P_0$ consists of variances $\Var(X_{i,\alpha}^{t_0})$, $\Var(y_{i,\alpha}^{t_0})$, and covariances $\Cov(X_{i,\alpha}^{t_0},y_{i,\beta}^{t_0})$ for all $\alpha,\beta\in[n_i]$ . Tracing the evolution of $\Var(X_{i,\alpha}^t)$ and $\Var(y_{i,\alpha}^t)$ in iterations, we are able to quantify how accurate the prediction will be in FTRL dynamics. 


\section{Deficiency of Differential Entropy}\label{DE in FTRL}

Differential entropy, as a measure of observer uncertainty, was used in studying the predictability of MWU in zero-sum games \citep{CPT2022}. In this section we investigate the evolution of differential entropy in FTRL with differential discretization methods. In Proposition \ref{entropya} we show that differential entropy is insufficient in capturing the uncertainty evolution for alternating plays.

\begin{prop}\label{entropya}
When two players use (\ref{2pMWUA}) with arbitrary step size, the differential entropy of their cumulative strategy and payoff keeps constant, i.e., if $t>t_0$,
\begin{align}
S(X_i^t,y_i^t) = S(X_i^{t_0},y_i^{t_0}).
\end{align}
\end{prop}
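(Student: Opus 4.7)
The plan is to reduce the statement to a volume-preservation argument at the level of canonical coordinates. By Proposition \ref{prop:primal-duel}, one iteration of AltMWU with step size $\eta$ is the symplectic discretization of the Hamiltonian system \eqref{HamiltonianFTRL} applied to $(X_i,y_i)$, so AltMWU defines a smooth invertible map
\[
T_\eta:(X_i^t, y_i^t)\longmapsto (X_i^{t+1}, y_i^{t+1}).
\]
The defining property of a symplectic method is that its Jacobian preserves the standard symplectic form, i.e.\ $DT_\eta^\top J\, DT_\eta = J$. Taking determinants yields $|\det DT_\eta|\equiv 1$, so $T_\eta$ is a volume-preserving diffeomorphism on the $(X_i,y_i)$-space. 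This holds for arbitrary $\eta$ because the Hamiltonian in \eqref{HamiltonianFTRL} is separable, which makes \eqref{type1} and \eqref{type2} explicit and exactly symplectic for every step size.

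Next I would invoke the standard change-of-variables formula for differential entropy: if $Z = T(W)$ for a smooth invertible $T$ and $W$ admits a density, then
\[
S(Z) = S(W) + \mathbb{E}\!\left[\log|\det DT(W)|\right].
\]
Substituting $T = T_\eta$ and using $|\det DT_\eta|\equiv 1$ gives $S(X_i^{t+1},y_i^{t+1}) = S(X_i^t,y_i^t)$. Iterating this identity from the randomization time $t_0$ up to any later iterate $t$ yields
\[
S(X_i^t,y_i^t) = S(X_i^{t_0},y_i^{t_0}),
\]
which is exactly the claim.

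The main obstacle is the first step, namely justifying that the symplecticity of the canonical-coordinate update is really inherited by AltMWU, since AltMWU is specified directly on the primal simplex through an entropy-regularized argmax and not on the $(X_i,y_i)$ level. This is where Proposition \ref{prop:primal-duel} does the heavy lifting, identifying the two updates bijectively via $x_i = \nabla h_i^*(y_i)$ and its inverse. Once that identification is in place, everything after it is a generic fact about volume-preserving maps and densities; no game-theoretic content is used in the entropy conservation step. A minor technical point worth flagging is that the argument uses the joint density of $(X_i^t,y_i^t)$, so we need the random initialization to admit one, which is implicit in writing $S(X_i^{t_0},y_i^{t_0})$ in the first place.
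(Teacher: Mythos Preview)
Your proposal is correct and follows essentially the same route as the paper: invoke Proposition~\ref{prop:primal-duel} to identify one step of AltMWU with a symplectic update on $(X_i,y_i)$, use that symplectic maps have Jacobian determinant of absolute value $1$, and apply the change-of-variables formula for differential entropy to conclude $S(X_i^{t+1},y_i^{t+1})=S(X_i^t,y_i^t)$. The only cosmetic difference is that the paper separates out the invertibility of the one-step map as an explicit lemma before applying the entropy change-of-variables formula, whereas you assert it directly from the symplectic structure.
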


\begin{prop}\label{entropys}
When two players use (MWU) with step sizes  $\eta < \min \{1, 1/ \lVert A \lVert_2^2 \}$, the differential entropy  of their cumulative strategy and payoff  has linear growth rate, i.e., 
\begin{align}
S(X_i^t,y_i^t) \ge S(X_i^{t_0},y_i^{t_0}) + c t
\end{align}
where $c > 0$ is a constant determined by payoff matrix $A$. 
\end{prop}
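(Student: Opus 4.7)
The plan is to apply the change-of-variables formula for differential entropy to the one-step MWU update and telescope. For any diffeomorphism $\Phi$ on $\mathbb{R}^N$ and random vector $Z$ with a density,
\[
S(\Phi(Z)) = S(Z) + \mathbb{E}\bigl[\log|\det J_\Phi(Z)|\bigr],
\]
so the per-iteration entropy increment equals the expected log-Jacobian determinant. By Proposition~\ref{prop:primal-duel}, simultaneous MWU is the Euler discretization of the Hamiltonian system (\ref{HamiltonianFTRL}), so the joint state $z^t = (X_1^t, X_2^t, y_1^t, y_2^t)$ evolves via
\[
X_i^{t+1} = X_i^t + \eta \nabla h_i^*(y_i^t), \qquad y_i^{t+1} = y_i^t + \eta A^{(ij)} \nabla h_j^*(y_j^t),
\]
with $h_i^*(y) = \log \sum_s e^{y_s}$ the log-sum-exp and $H_i^{(t)} := \nabla^2 h_i^*(y_i^t) = \diag(p_i^t) - p_i^t (p_i^t)^\top$ for $p_i^t = \mathrm{softmax}(y_i^t)$.

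Next I would compute the Jacobian of this joint update. Ordering coordinates as $(X_1, X_2, y_1, y_2)$, $J$ is block upper triangular in the $(X,y)$-split with the identity on the $X$-diagonal, so the determinant reduces to that of the $y$-block, and a Schur-complement manipulation yields
\[
\det J = \det\bigl(I + \eta^2 A^\top H_1^{(t)} A\, H_2^{(t)}\bigr).
\]
The step-size condition $\eta < \min\{1, 1/\|A\|_2^2\}$ together with $\|H_i^{(t)}\|_2 \le 1$ ensures $\|\eta^2 A^\top H_1^{(t)} A H_2^{(t)}\|_2 \le \eta$, so the eigenvalues of the above matrix cluster near $1$, $\det J > 0$, and a first-order expansion gives
\[
\log \det J \;\ge\; \eta^2\, \Tr\bigl(H_1^{(t)} A H_2^{(t)} A^\top\bigr) - O(\eta^4).
\]
Both factors inside the trace are positive semi-definite, so the trace is nonnegative. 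Telescoping over $s = t_0, \ldots, t-1$ then yields $S(z^t) - S(z^{t_0}) \ge c(t - t_0)$ once we have a strictly positive, state-independent lower bound $c_A > 0$ on this trace.

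The \emph{main obstacle} is producing this uniform lower bound. Each $H_i^{(t)}$ is rank-deficient ($\mathbf{1}$ lies in its kernel due to the simplex structure of softmax), so the worst-case pointwise bound on $\Tr(H_1 A H_2 A^\top)$ is zero. My plan is to use the stability provided by $\eta < 1/\|A\|_2^2$ to argue that the iterates $p_i^t$ remain in a compact subset of the relative interior of the simplex bounded away from pure strategies, so that the smallest eigenvalue of $H_i^{(t)}$ restricted to the orthogonal complement of $\mathbf{1}$ admits a uniform positive lower bound; combined with the non-degeneracy of $A$ on this complement (from which $c_A$ inherits its dependence on the payoff matrix), the trace is uniformly bounded below by a positive constant, and telescoping delivers the claimed linear growth with $c = \eta^2 c_A$. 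To convert the joint-entropy bound into the stated per-agent inequality, I would use the chain rule $S(z^t) = S(X_i^t, y_i^t) + S(X_j^t, y_j^t \mid X_i^t, y_i^t)$ together with a uniform upper bound on the conditional entropy term, which stays bounded under a product-type initialization.
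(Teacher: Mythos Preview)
Your Jacobian calculation is essentially correct and, by Sylvester's identity, recovers the same determinant the paper obtains. However, the route through the \emph{joint} state $(X_1,X_2,y_1,y_2)$ is an unnecessary detour. The Hamiltonian formulation makes the per-agent pair $(X_i,y_i)$ a \emph{closed} dynamical system in $\mathbb{R}^{2n_i}$: by Lemma~\ref{cpayoff}, $y_j^t = y_j^0 + A^{(ji)}X_i^t$, so the Euler update of $(X_i,y_i)$ depends only on $(X_i,y_i)$ itself. The paper therefore applies the change-of-variables formula (Lemma~\ref{entropyevo}) directly in $\mathbb{R}^{2n_i}$ and obtains $\det J_\phi = \det\bigl(I+\eta^2\,\nabla^2 h_1^*\,(A^{(21)})^{\top}\nabla^2 h_2^*\,A^{(21)}\bigr)$ without any joint-to-marginal reduction. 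Your final conditional-entropy step is thus not needed, and would in fact be hard to justify: there is no a~priori reason the conditional term $S(X_j^t,y_j^t\mid X_i^t,y_i^t)$ stays uniformly bounded as $t\to\infty$.

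The more serious gap is your plan for the uniform lower bound. Arguing that the softmax probabilities $p_i^t$ remain in a compact subset of the relative interior of the simplex is \emph{false} for simultaneous MWU in zero-sum games: this is precisely the regime in which the strategy iterates diverge toward the boundary \citep{bailey2018multiplicative}, so no such confinement holds under the stated step-size condition, and the smallest nonzero eigenvalue of $H_i^{(t)}$ can tend to zero along the trajectory. The paper avoids trajectory-dependent bounds altogether via a dual argument: if $h_i$ has Lipschitz gradient, then by Fenchel duality (Lemma~\ref{dual}) the conjugate $h_i^*$ is strongly convex, yielding a \emph{state-independent} lower bound $\nabla^2 h_i^*\succcurlyeq \mu_i I$. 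Combined with Lemma~\ref{ABeig} on the spectrum of a product of a positive-definite and a positive-semidefinite matrix, this gives a uniform lower bound $1+c$ on the eigenvalues of $J_\phi$, hence linear entropy growth, with $c$ determined by the payoff matrix and the strong-convexity constants rather than by any trajectory-dependent quantity.
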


Different from the proof of \citep{CPT2022} for the differential entropy evolution of (\ref{2pMWUS}), which relies a detailed calculation of the Jacobin map of the dynamics, our proof of Proposition \ref{entropya} is established based on the relationship between Symplectic discretization and (\ref{2pMWUA}), as state in Proposition \ref{prop:primal-duel}. In fact, the evolution of differential entropy is determined by the determinant of the Jacobin matrix of  the update rule from $(X_i^t,y_i^t)$ to $(X_i^{t+1},y_i^{t+1})$, and in each update, the differential entropy is invariant if and only if the absolute value of this determinant equals to $1$. As shown in Proposition \ref{prop:primal-duel}, the update rule of $(X_i^t,y_i^t)$  in (AltMWU) constitutes a symplectic map, and the absolute value of the determinant of  Jacobin matrix for every symplectic map must equal to $1$, therefore we can conclude that the differential entropy in (AltMWU) keeps a constant. The detailed proofs of Propositions \ref{entropys} and \ref{entropya} are deferred to Appendix  \ref{Proof of Section DE in FTRL}, where we also provide numerical examples for these two propositions.

\section{Covariance in FTRL}\label{last}
In this section, we are presenting formally the evolution of covariances of cumulative strategies and payoffs. We start with continuous time FTRL with Euclidean regularizers, and proceed in considering Euler and Symplectic discretization of continuous time FTRL. In the end, for general FTRL, covariance evolution follows an inequality derived by using techniques of symplectic geometry.
\subsection{Covariance evolution in Euclidean regularizer.}The evolution of covariance matrix with continuous time FTRL can be deduced from the Hamiltonian formulation of learning dynamics. The Euler discretization of each agent's continuous time FTRL exponentially amplifies the covariance in the learning process. In contrast, symplectic discretization, which has been proven equivalent to alternating update in strategies, amplify covariance of cumulative strategies polynomially and keep that of cumulative payoffs bounded. In this section we will focus on the view point of
agent 1, and the same results also hold for agent 2 as they are symmetry.

\begin{thm}\label{Covariance Evolution in FTRL with Euclidean Regularizer}
In two-player zero-sum games, suppose both players use FTRL with Euclidean norm regularizers and unconstrained strategy sets $\mathcal{X}_i = \BR^{n_i}$. Suppose at time $t_0>0$ the random cumulative strategy and payoff form a random vector $\left(X_i^{t_0},y_i^{t_0}\right)$ with covariance matrix $P(t_0)\ne 0$. Then for all $t>t_0$ and $\alpha,\beta\in\{1,...,n_1 \}$, the covariance of $\left(X_i^t,y_i^t\right)$ evolves in continuous and discrete FTRL according to the following : 
\begin{enumerate}
\item In Euler discretization, for all $\alpha,\beta\in [n_1]$, it holds that $\Cov(X_{1,\alpha}^t,X_{1,\beta}^t)$, and $\Cov(y_{1,\alpha}^t,y_{1,\beta}^t)$  are of $\Theta \left( (1+\gamma \eta^2)^{2t}\right)$, where $\eta$ is the step size and $\gamma$ is the maximal eigenvalue of $AA^{\top}$.
\item In continuous time and symplectic discretization, for all $\alpha,\beta\in [n_1]$, it holds that
\begin{itemize}[leftmargin=*]
\item if $AA^{\top}$ is non-singular, then $\Cov(X_{1,\alpha}^t,X_{1,\beta}^t)$ and
$\Cov(y_{1,\alpha}^t,y_{1,\beta}^t)$ are of $\mathcal{O}(1)$.
\item if $AA^{\top}$ is singular, $\Cov(X_{1,\alpha}^t,X_{1,\beta}^t)$ is of $\Theta(t^2)$, $\Cov(y_{1,\alpha}^t,y_{1,\beta}^t)$ is of $\mathcal{O}(1)$. 
\end{itemize}
\end{enumerate}
\end{thm}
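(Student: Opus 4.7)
The plan is to exploit the fact that with Euclidean regularizers $h_i(x) = \|x\|^2/2$, the Hamiltonian from \eqref{HamiltonianFTRL} collapses to the quadratic
\[
H(X_1, y_1) = \tfrac12\|y_1\|^2 + \tfrac12\|y_2(0) - A^\top X_1\|^2,
\]
so every regime under consideration becomes a \emph{linear} dynamics in $(X_1, y_1)$. The proof then reduces to (i) writing down the explicit transfer operator in each case, (ii) reading off its spectrum, and (iii) propagating $P(t_0)$ through the linear map via $P(t) = \Phi(t)\,P(t_0)\,\Phi(t)^\top$.

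For continuous time, Hamilton's equations give $\dot X_1 = y_1$ and $\dot y_1 = A(y_2(0) - A^\top X_1)$, so $\ddot X_1 + AA^\top X_1 = A y_2(0)$. I would diagonalize $AA^\top = U\,\mathrm{diag}(\mu_1,\ldots,\mu_{n_1})\,U^\top$ and solve mode by mode. On each mode with $\mu_j > 0$ the solution is a bounded sinusoid, so the corresponding variances stay $\mathcal O(1)$. On $\ker(AA^\top)$ the forcing vanishes, because $\mathrm{range}(A) = \mathrm{range}(AA^\top)$ is orthogonal to $\ker(AA^\top)$; the mode is therefore \emph{affine} in $t$, yielding $\Theta(t^2)$ variance for $X_1$ and bounded variance for $y_1 = \dot X_1$.

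For Euler, I would unroll the discretization of the Hamiltonian system into a linear recurrence $Z^{t+1} = M Z^t + c$ with $Z^t = (X_1^t, y_1^t)$ and
\[
M = \begin{pmatrix} I & \eta I \\ -\eta AA^\top & I \end{pmatrix};
\]
the characteristic equation $(1-\lambda)^2 + \eta^2 \mu_j = 0$ forces every eigenvalue off the unit circle, so $P(t) = M^t P(t_0)(M^t)^\top$ grows at the exponential rate driven by $\gamma = \max_j \mu_j$. For the Symplectic case, the type~I step applied to the same separable Hamiltonian yields
\[
M_{\mathrm{sym}} = \begin{pmatrix} I - \eta^2 AA^\top & \eta I \\ -\eta AA^\top & I \end{pmatrix},
\]
whose eigenvalues on each $\mu_j$-eigenspace satisfy $\lambda^2 - (2-\eta^2\mu_j)\lambda + 1 = 0$ and therefore come in symplectic pairs $(\lambda, 1/\lambda)$. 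For $\mu_j > 0$ and $\eta$ small they lie on the unit circle (bounded covariance); for $\mu_j = 0$, the block reduces to a Jordan form $\bigl(\begin{smallmatrix} 1 & \eta \\ 0 & 1\end{smallmatrix}\bigr)$ whose $t$-th power has entry $t\eta$, yielding linear growth in $X_1^t$ and thus $\Theta(t^2)$ variance, while $y_1^t$ stays bounded. This reproduces the continuous-time dichotomy, consistent with Proposition~\ref{prop:primal-duel}.

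The main obstacle will be the bookkeeping needed to promote an $\mathcal O(\cdot)$ upper bound into the claimed $\Theta(\cdot)$ lower bound in the singular case: knowing merely $P(t_0) \neq 0$ is not enough, since a covariance supported on $\mu_j > 0$ modes would never excite the growing directions. I would need either a genericity assumption on $P(t_0)$ or a coupling argument showing that every index pair $(\alpha,\beta)$ inherits nonzero mass from the $\ker(AA^\top)$ block after the change of basis by $U$. Handling this, together with checking that the step-size range $\eta^2 \gamma < 4$ needed for the symplectic eigenvalues to sit on the unit circle is compatible with the regime in which Proposition~\ref{prop:primal-duel}'s primal--dual correspondence holds, is where the bulk of the detailed calculation will live.
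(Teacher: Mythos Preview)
Your approach is essentially the same as the paper's: reduce to a linear system in $(X_1,y_1)$, diagonalize $AA^\top$, and read off the growth rates mode by mode from the spectrum of the transfer operator in each regime. The paper organizes the continuous and symplectic cases around the real Jordan normal form of the coefficient matrix (bounding Jordan block sizes via the minimal polynomial and then tracking which rows of $e^{t\mathcal L}$ or $\mathcal M_\eta^t$ carry the $\Theta(t)$ entries), whereas you work directly on each $\mu_j$-eigenspace; these are the same computation in different clothing, and your characteristic equations $(1-\lambda)^2+\eta^2\mu_j=0$ and $\lambda^2-(2-\eta^2\mu_j)\lambda+1=0$ match the paper's exactly.

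Your stated ``main obstacle'' is well taken. The paper's proof of the $\Theta(t^2)$ lower bound in the singular case is no more careful than what you sketch: it observes that some entries of the $X$-block of the transfer matrix are $\Theta(t)$ and concludes $\Var(X_{1,\alpha}^t)\in\Theta(t^2)$ without isolating the genericity condition on $P(t_0)$ that actually makes this true. So you are not missing a trick here; if you want a rigorous $\Theta$ you will indeed need to assume $P(t_0)$ has nonzero mass on the $\ker(AA^\top)$ modes (and that the change of basis $U$ does not annihilate the $(\alpha,\beta)$ entry), which the paper leaves implicit. Your step-size caveat $\eta^2\gamma<4$ for the symplectic eigenvalues to stay on the unit circle is likewise correct and is handled in the paper only by the phrase ``for sufficiently small $\eta$.''
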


Theorem \ref{Covariance Evolution in FTRL with Euclidean Regularizer} distinguishes clearly between the covariance evolution on Euler discretization and Symplectic discretization: Euler discretization exhibits an exponential growth rate, while Symplectic discretization maintains a quadratic growth rate that matches the growth rate of the primary continuous dynamics.  Therefore, an observer can achieve higher prediction accuracy when players use Symplectic discretization. The proof of Theorem \ref{Covariance Evolution in FTRL with Euclidean Regularizer} are deferred to Appendix \ref{proofer}, and experimental results are presented in Section \ref{expp}.


The dependence on singularity of  $AA^{\top}$ can be intuitively explained in terms of the covariance evolution of the average strategies (i.e., $X^t/t$). In an unconstrained zero-sum game, the average strategies of (\ref{Alt GDAof1}) will converge to some equilibrium \citep{gidel2018variational}. When $AA^{\top}$ is non-singular, the only equilibrium is the all-zero vector. The average strategy converges to this point for all initial points, resulting in a small covariance of average strategy and implying slow cumulative strategy growth. However, when $AA^{\top}$ is singular, there exist multiple distinct equilibria and the average strategy will converge towards these different equilibrium. Consequently, throughout this process, the covariance of average strategy remains substantial which leads to a large covariance of cumulative strategy.

Theorem \ref{Covariance Evolution in FTRL with Euclidean Regularizer} also implies the following corollary regarding to the covariance evolution of primal space. The proof is directly as in this case $x^t_i = y^t_i$ holds. See Appendix \ref{Proof of Euclidean norm regularizers}.

\begin{cor}\label{E_a}
    Under the same conditions as in Theorem \ref{Covariance Evolution in FTRL with Euclidean Regularizer}, the covariance of the actual strategy $x^t_1$ evolves as following : for all $\alpha,\beta\in [n_1]$
\begin{enumerate}
\item In Euler discretization, it holds that $\Cov(x_{1,\alpha}^t,x_{1,\beta}^t)$ is of  $\Theta \left( (1+\gamma \eta^2)^{2t}\right)$.
\item In continuous time and symplectic discretization, it holds that
$\Cov(x_{1,\alpha}^t,x_{1,\beta}^t)$ is of $\mathcal{O}(1)$. 
\end{enumerate}
\end{cor}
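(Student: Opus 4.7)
The plan is to exploit the fact that, under the Euclidean regularizer on the unconstrained domain $\mathcal{X}_i=\mathbb{R}^{n_i}$, the primal strategy is merely a scalar multiple of the dual variable, so the corollary reduces directly to the dual (payoff) part of Theorem \ref{Covariance Evolution in FTRL with Euclidean Regularizer}.

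First I would invoke the primal-dual identity $x_i(t)=\nabla h_i^*(y_i(t))$ from equation (\ref{conv_conj}). For $h_i(x)=\|x\|^2$ the conjugate is $h_i^*(y)=\|y\|^2/4$, hence $\nabla h_i^*(y)=y/2$; more generally, any Euclidean quadratic regularizer yields $\nabla h_i^*(y)=c\,y$ for some fixed constant $c>0$. Combined with Proposition \ref{prop:primal-duel}, which ensures that the Euler and symplectic discretizations of the Hamiltonian system (\ref{HamiltonianFTRL}) reproduce precisely the GDA/AltGDA iterates $(X_i^t,y_i^t)$, this gives the pointwise identity $x_i^t = c\,y_i^t$ at every iterate $t$ (and likewise in continuous time).

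Next, I would use the bilinearity of covariance to transfer the conclusion to the primal coordinates:
\begin{equation*}
\Cov\bigl(x_{1,\alpha}^t,x_{1,\beta}^t\bigr)=c^2\,\Cov\bigl(y_{1,\alpha}^t,y_{1,\beta}^t\bigr),\qquad \alpha,\beta\in[n_1].
\end{equation*}
Applying Theorem \ref{Covariance Evolution in FTRL with Euclidean Regularizer} to the right-hand side then yields the two claims: in Euler discretization the rate is $\Theta((1+\gamma\eta^2)^{2t})$, while in continuous time and symplectic discretization the quantity is $\mathcal{O}(1)$. Notably, the latter bound is uniform across both the singular and non-singular subcases of Theorem \ref{Covariance Evolution in FTRL with Euclidean Regularizer}, because in both regimes the payoff covariances $\Cov(y_{1,\alpha}^t,y_{1,\beta}^t)$ are already bounded.

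Because the reduction is algebraic and leans entirely on an already-proved dynamical result, no real obstacle arises. The only point worth flagging is conceptual rather than technical: the $\Theta(t^2)$ growth of the cumulative strategy $X_1^t$ in the singular case does \emph{not} propagate to the instantaneous strategy $x_1^t$, precisely because $x_1^t$ is slaved via $\nabla h_1^*$ to the bounded dual coordinate $y_1^t$ and not to $X_1^t$. This explains why the dichotomy in singularity of $AA^{\top}$ visible in Theorem \ref{Covariance Evolution in FTRL with Euclidean Regularizer} collapses into a single $\mathcal{O}(1)$ statement at the level of primal strategies.
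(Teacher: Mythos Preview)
Your proposal is correct and is essentially the paper's own argument: the paper simply remarks that ``in this case $x_i^t = y_i^t$ holds'' and defers to Appendix~\ref{Proof of Euclidean norm regularizers}, where $\nabla h_i^*(y)=y$ for the Euclidean regularizer is recorded. The only cosmetic discrepancy is that in the symplectic case the identity is $x_1^t = \nabla h_1^*(y_1^{t+1})$ (a one-step shift, cf.\ equation~(\ref{sss})), but this is immaterial for the asymptotic $\mathcal{O}(1)$ and $\Theta((1+\gamma\eta^2)^{2t})$ statements.
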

An immediate application of Theorem 5.1 is to utilize Chebyshev's inequality for predicting the quantitative measure of the risk associated with random variables that deviate significantly from the expected value.
\begin{cor}\label{cor:Cheb}
Let $k$ be any constant that is greater than 1,  $X_{i,\alpha}(t)$ and $y_{i,\alpha}(t)$ are cumulative strategy and payoffs of the $i$th agent with respect to strategy $\alpha$ at time $t$. Then we have
\begin{align*}
\text{Pr}\{\abs{X_{i,\alpha}(t)-\mu_X(t)}>k\sqrt{c}t\}&<\frac{1}{k^2}
\\
\text{Pr}\{\abs{y_{i,\alpha}(t)-\mu_y(t)}>k\}&<\frac{c}{k^2}
\end{align*}
where $c$ is a constant, and $\mu_{X}(t)$ and $\mu_y(t)$ are expectations of cumulative strategy and payoffs up to time $t$.
\end{cor}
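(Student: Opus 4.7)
The plan is to apply Chebyshev's inequality coordinate-wise, feeding it the variance bounds produced by Theorem~\ref{Covariance Evolution in FTRL with Euclidean Regularizer}. Recall Chebyshev's inequality: for any real random variable $Z$ with mean $\mu_Z$ and finite variance $\sigma_Z^2$, one has $\text{Pr}\{|Z-\mu_Z|>a\}\le \sigma_Z^2/a^2$ for every $a>0$. I will apply this to $Z=X_{i,\alpha}(t)$ with threshold $a=k\sqrt{c}\,t$, and to $Z=y_{i,\alpha}(t)$ with threshold $a=k$.

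The first step is to extract scalar variance bounds from the covariance statement of Theorem~\ref{Covariance Evolution in FTRL with Euclidean Regularizer}. Setting $\alpha=\beta$ on the diagonal of the covariance matrix identifies $\Var(X_{i,\alpha}(t))=\Cov(X_{i,\alpha}(t),X_{i,\alpha}(t))$, and likewise for $y$. In the continuous-time and symplectic regimes, the theorem then yields $\Var(X_{i,\alpha}(t))\le c_1 t^2$ (this is the $\Theta(t^2)$ bound in the singular case of $AA^{\top}$; in the non-singular case the stronger $\mathcal{O}(1)$ conclusion implies the same inequality for all sufficiently large $t$) together with $\Var(y_{i,\alpha}(t))\le c_2$, for constants $c_1,c_2>0$ determined by the payoff matrix and by the initial covariance $P(t_0)$. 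Taking $c=\max(c_1,c_2)$ produces a single constant that serves both inequalities in the statement.

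Plugging these bounds into Chebyshev's inequality with $a=k\sqrt{c}\,t$ gives
\begin{align*}
\text{Pr}\{|X_{i,\alpha}(t)-\mu_X(t)|>k\sqrt{c}\,t\}\le \frac{\Var(X_{i,\alpha}(t))}{k^2 c\, t^2}\le \frac{c\,t^2}{k^2 c\, t^2}=\frac{1}{k^2},
\end{align*}
and with $a=k$ gives
\begin{align*}
\text{Pr}\{|y_{i,\alpha}(t)-\mu_y(t)|>k\}\le \frac{\Var(y_{i,\alpha}(t))}{k^2}\le \frac{c}{k^2},
\end{align*}
which are exactly the two inequalities claimed.

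There is no substantive obstacle here: the corollary is a direct statistical repackaging of Theorem~\ref{Covariance Evolution in FTRL with Euclidean Regularizer}, converting the $\Theta(t^2)$ and $\mathcal{O}(1)$ variance rates into explicit tail bounds. The only minor care needed is the consolidation of the two variance constants into a single $c$, and a brief note that the hypothesis $k>1$ is what ensures the right-hand sides lie in $(0,1)$ so the bounds carry nontrivial probabilistic content.
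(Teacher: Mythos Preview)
Your proposal is correct and matches the paper's own treatment: the paper presents Corollary~\ref{cor:Cheb} as an ``immediate application'' of Chebyshev's inequality to the variance bounds of Theorem~\ref{Covariance Evolution in FTRL with Euclidean Regularizer}, without spelling out further details. Your write-up does exactly this, and in fact supplies the careful bookkeeping (consolidating the two constants, noting the role of $k>1$) that the paper leaves implicit.
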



\subsection{Covariance evolution with general regularizer.} So far we have left the evolution of covariance in continuous time FTRL with general regularizers unaddressed. The challenge comes from the non-linearity of Hamiltonian system induced by continuous time FTRL algorithm. Suppose the integral flow of Hamiltonian system is $\phi_t(X_i(t_0),y_i(t_0))$. In the most general setting, we are able to provide a lower bound for the product of standard deviation of $X_{i,\alpha}(t)$ and $y_{i,\alpha}(t)$, i.e., $\Delta X_{i,\alpha}(t)\Delta y_{i,\alpha}(t)\ge\text{constant}$. We state the conditions and results formally in the following Theorem.

\begin{thm}\label{prop:nonlinear}
Let vector $X_{i}(t)$ and $y_{i}(t)$ be cumulative strategy and payoff, for $i\in[2]$ and $\alpha\in[n_i]$. Assume that the higher order differentials of $\phi_t(\cdot)$ are bounded by some constant $K$, and the standard deviations $\Delta X_{i,\alpha}(t_0) $ and $\Delta y_{i,\alpha}(t_0) $  at initial time $t_0$ are sufficient small,\footnote{"Sufficietly small" follows the convention in statistical modeling, e.g. p166 in \citep{benaroya2005probability}, refer to Appendix \ref{apd3} for more details.} 
then for $t> t_0$ it holds that 
\begin{align*}
\Delta X_{i,\alpha}(t)\Delta y_{i,\alpha}(t) \ge\frac{1}{\sqrt{2}}\frac{w_L (P(t_0))}{\pi },
\end{align*}
where $w_L(P(t_0))$ is the linear Gromov width of the ellipsoid defined by the initial covariance matrix $P(t_0)$.
\end{thm}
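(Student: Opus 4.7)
The plan is to combine a local linearization of the Hamiltonian flow near the initial mean with the linear version of Gromov's non-squeezing theorem. Write $\mu_0 = \mathbb{E}[(X_i(t_0),y_i(t_0))]$ and $z_0 = (X_i(t_0),y_i(t_0))$. Because the higher-order differentials of the flow $\phi_t$ are bounded by $K$, a Taylor expansion around $\mu_0$ gives
\begin{equation*}
\phi_t(z_0) = \phi_t(\mu_0) + J_t(z_0 - \mu_0) + R_t(z_0 - \mu_0), \qquad \|R_t(v)\|\le K\|v\|^2,
\end{equation*}
where $J_t = D\phi_t(\mu_0)$. Since $\phi_t$ is the flow of the Hamiltonian system \eqref{HamiltonianFTRL}, it is a symplectomorphism, so $J_t$ is a \emph{linear symplectic} transformation of the phase space equipped with the canonical form dual to the pairs $(X_{i,\alpha},y_{i,\alpha})$.

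Next I would propagate the covariance. A direct multivariate computation yields
\begin{equation*}
P(t) \;=\; J_t\,P(t_0)\,J_t^{\top} \;+\; E(t),
\end{equation*}
where $E(t)$ collects the second- and higher-order moments arising from $R_t$ and is controlled by $K$ together with the higher moments of $z_0 - \mu_0$. The hypothesis that the initial standard deviations are sufficiently small is used precisely here: it guarantees that $\|E(t)\|$ is dominated by the leading linear transport $J_t P(t_0) J_t^{\top}$, at the cost of a constant factor which will produce the $1/\sqrt{2}$ in the conclusion. This is the step I expect to be the main obstacle, since one has to give a quantitative meaning to ``sufficiently small'' in terms of the spectrum of $P(t_0)$ and of $K$, and to show that the nonlinear correction cannot shrink the symplectic capacity below half of its linear value.

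With the linearization in hand, I would invoke the linear non-squeezing theorem from symplectic geometry. The initial concentration ellipsoid $\mathcal{E}_0 = \{v:v^{\top}P(t_0)^{-1}v\le 1\}$ has, by definition, linear Gromov width $w_L(P(t_0))$, and this width is a linear symplectic invariant, so $J_t\mathcal{E}_0$ still has width $w_L(P(t_0))$. The conjugate pair $(X_{i,\alpha},y_{i,\alpha})$ spans a symplectic $2$-plane in phase space; non-squeezing (in its linear projection form) ensures that the orthogonal projection of $J_t\mathcal{E}_0$ onto this plane is an ellipse of area at least $w_L(P(t_0))$. The projected ellipse is exactly the concentration ellipse of the $2\times 2$ submatrix $Q(t)$ of $P(t)$ indexed by $(X_{i,\alpha},y_{i,\alpha})$, whose area equals $\pi\sqrt{\det Q(t)}$.

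Finally, I would convert the area bound into the stated uncertainty inequality. By Cauchy--Schwarz applied to the off-diagonal covariance, $\det Q(t)\le \Var(X_{i,\alpha}(t))\,\Var(y_{i,\alpha}(t)) = (\Delta X_{i,\alpha}(t))^2(\Delta y_{i,\alpha}(t))^2$. Combining this with $\pi\sqrt{\det Q(t)}\ge w_L(P(t_0))$ obtained from non-squeezing, and absorbing the perturbative loss from $E(t)$ into the factor $1/\sqrt{2}$, gives
\begin{equation*}
\Delta X_{i,\alpha}(t)\,\Delta y_{i,\alpha}(t) \;\ge\; \frac{1}{\sqrt{2}}\,\frac{w_L(P(t_0))}{\pi},
\end{equation*}
as required. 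The conceptual takeaway is that the symplecticity of the FTRL Hamiltonian flow forces a Heisenberg-type trade-off between accuracy in the cumulative strategy and in the cumulative payoff directions.
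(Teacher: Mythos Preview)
Your proposal is correct and follows essentially the same route as the paper: Taylor-expand the Hamiltonian flow about the mean so that the covariance is, to leading order, transported by the linear symplectic map $D\phi_t(\mu_0)$; apply the linear non-squeezing bound (the paper cites it as Theorem~3 of \citep{hsiao2006fundamental}, which gives $\det Q(t)\ge (w_L(P(t_0))/\pi)^2$ for the $2\times 2$ block indexed by the conjugate pair, exactly your projected-ellipse area inequality); and finally use $\det Q(t)\le (\Delta X_{i,\alpha}\Delta y_{i,\alpha})^2$ together with the ``sufficiently small'' hypothesis to absorb the nonlinear remainder into the factor $1/\sqrt{2}$. Your identification of the quantitative control of $E(t)$ as the delicate step is accurate---the paper handles it in the same non-rigorous way you anticipate, simply asserting that for small enough initial covariances the bound degrades by at most a factor of $1/2$ at the level of determinants.
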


The definition of  Gromov width can be found in Appendix \ref{SG}.  Note that the inequality holds trivially when there is no uncertainty, i.e., $P(t_0) = 0$, since in this case  $w_L^2(P(t_0)) = 0$. The necessary background and details of proof are left in Appendix \ref{Details of Section "last"}. 

\begin{figure}[h] 
\centering 
\includegraphics[width=0.45\textwidth]{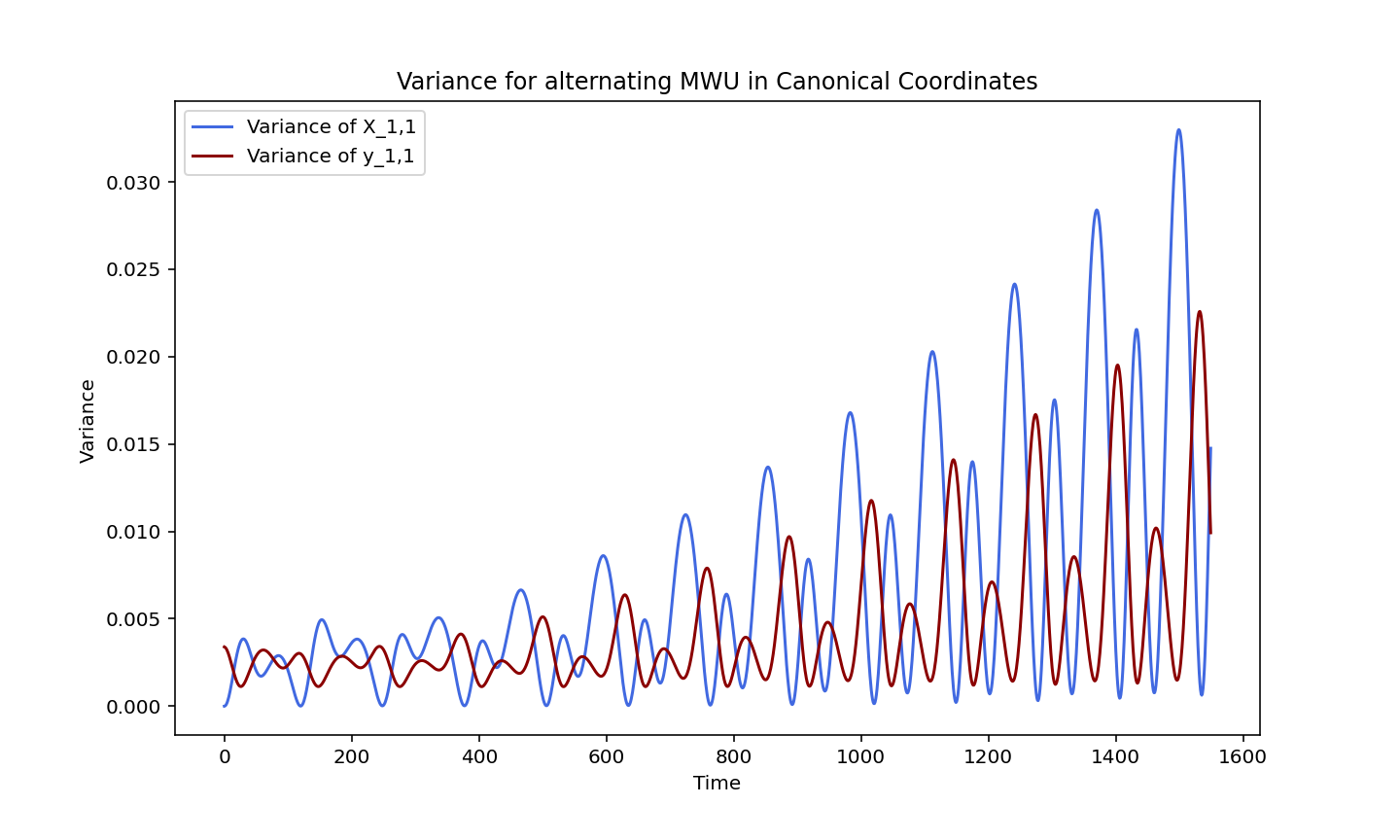} 
\caption{Covariance evolution of  $\Delta(X_{1,1}^t)$ and $\Delta(y_{1,1}^t)$ when two players use (\ref{2pMWUA}) in a randomly generated game. Covariance is calculate based on sample variance of 500 randomly generated initial conditions. }
\label{uncertaintyprinciple} 
\end{figure}

The significance of Theorem \ref{prop:nonlinear} lies not only in providing a lower bound on the covariance evolution of general FTRL dynamics. Similar to the \emph{Heisenberg Uncertainty Principle} in quantum mechanics, Theorem \ref{prop:nonlinear} implies that it is impossible for both $\Delta X_{i,\alpha}(t)$ and $\Delta y_{i,\alpha}(t)$ to be simultaneously small. A numerical experiment illustrating this point is presented in Figure \ref{uncertaintyprinciple}. In this figure, it can be observed that when the curve representing $\Delta X_{i,\alpha}(t)$ is on an increasing stage, the curve representing $\Delta y_{i,\alpha}(t)$ is on an decreasing stage, and vice versa. Moreover, the occurrence time of the local minima of $\Delta(y_{1,1}(t)) $ coincides with the local maxima of $\Delta(X_{1,1}(t))$. This implies  that $\Delta(X_{1,1}(t)) $ and $\Delta(y_{1,1}(t))$ cannot be small at the same time.

It is interesting to ask whether similar phenomena as in Theorem \ref{prop:nonlinear} occur in the primal space $(x_1(t), x_2(t))$. In Figure \ref{UIUI}, we demonstrate an experiment on the covariance evolution in the primal space, and observe that similar phenomena exist, at least when the number of pure strategies for players is small. Further discussions can be found in Appendix \ref{appdenix_actural_strategy}.

\begin{figure}[h]
    \centering
    \includegraphics[width=0.45\textwidth]{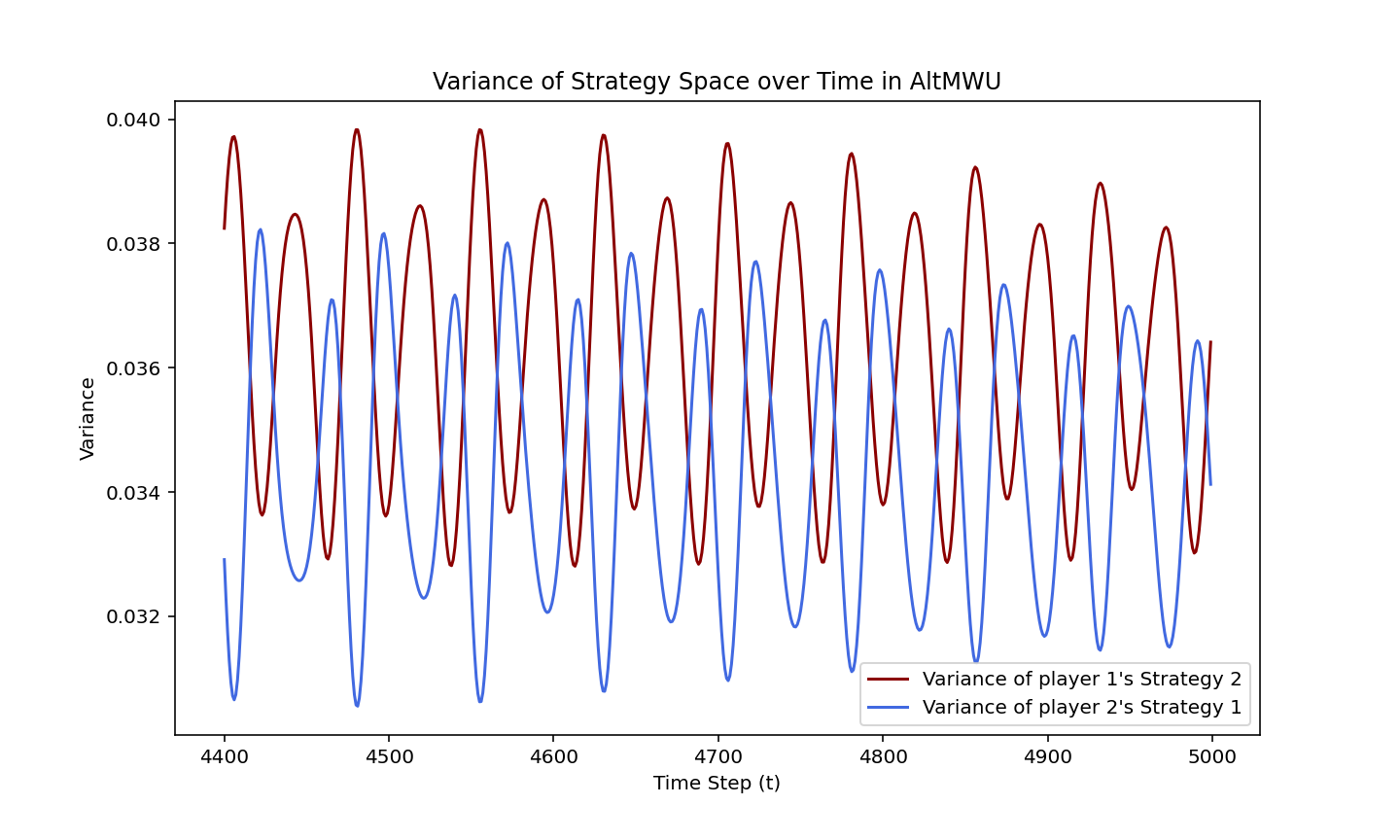}
    \caption{Covariance evolution on primal space. Calculate based on 100 random samples of the initial mixed strategies used by two players when they employ (\ref{2pMWUA}) in a randomly generated $3 \times 3$ game.}
    \label{UIUI}
\end{figure}

\section{Experiments}\label{expp}

In this section we provide numerical experiments illustrating the covariance evolution results proved for Euclidean norm regularized FTRL in Theorem \ref{Covariance Evolution in FTRL with Euclidean Regularizer}. 
More numerical experiments on the non-singular cases are presented in Appendix \ref{eeee1}. 

\textbf{Continuous time FTRL.} We illustrate how $\Var(X_{1,1}(t))$ and $\Var(y_{1,1}(t))$ evolve with continuous time FTRL with payoff matrices
\begin{align*}
    &\cdot A_1 = [[1,-1],[-1,1]], \ \cdot A_2 =[[1.2,-1.2],[-1,1]] \\
    \\
    &\cdot A_3 =[[1.5,-1.5],[-1,1]].
\end{align*}
See Figure \ref{continuous va}.
In (a), the $\Var(X_{1,1}(t))$ has a quadratic growth rate, 
and in (b) $\Var(y_{1,1}(t))$ is bounded, which support results of continuous time part in Theorem \ref{Covariance Evolution in FTRL with Euclidean Regularizer}.

\begin{figure}[h]
    \centering
    \subfigure[$\Var(X_{1,1}(t))$]{
        \includegraphics[width=1.5in]{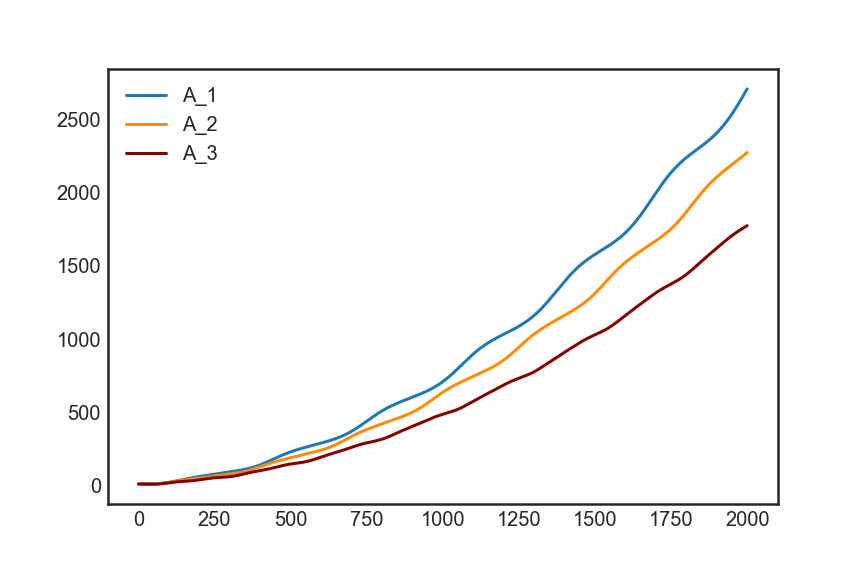}
        \label{}
    }
    \subfigure[$\Var(y_{1,1}(t))$]{
	\includegraphics[width=1.5in]{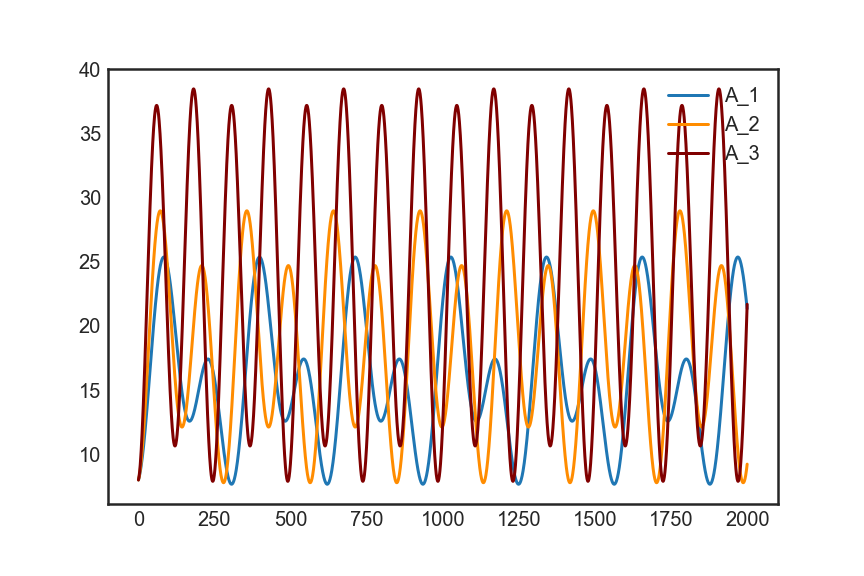}
        \label{}
    }
    \caption{Variance evolution of continuous FTRL, singular cases. }
    \label{continuous va}
\end{figure}

\textbf{Symplectic discretization.} We illustrate how $\Var(X^t_{1,1})$ and $\Var(y^t_{1,1})$ evolves with symplectic discretization, the payoff matrices are given as follows: 
\begin{align*}
    &\cdot B_1 = [[1,-1],[-1,1]],\ \cdot B_2 =[[1.2,-1.2],[-1,1]]\\ 
    \\
    &\cdot B_3 =[[1,-1.3],[-1,1.3]].
\end{align*}
See Figure \ref{sym va}. From the experimental results, we can see the variance behavior of symplectic discretization is same as continuous case, which support results of symplectic discretization part of Theorem \ref{Covariance Evolution in FTRL with Euclidean Regularizer}. 

\begin{figure}[h]
    \centering
    \subfigure[$\Var(X^t_{1,1})$]{
        \includegraphics[width=1.5in]{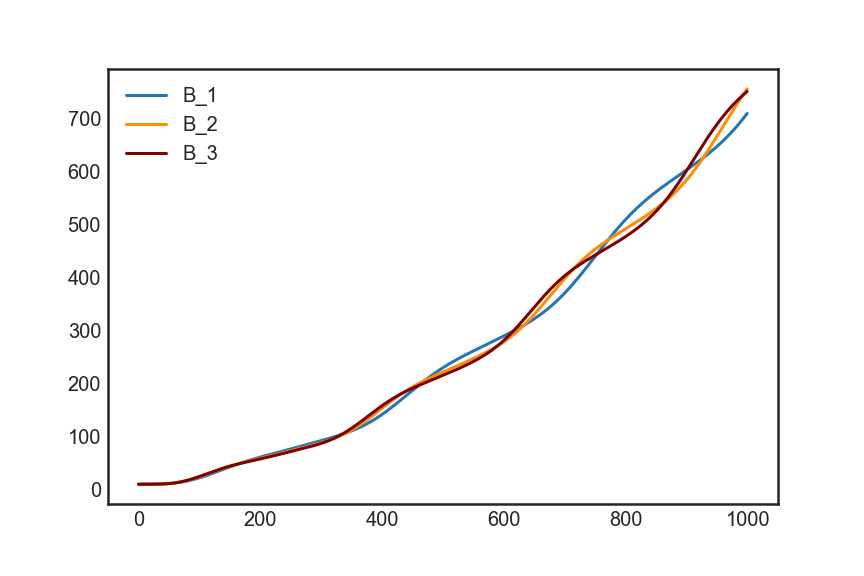}
        \label{}
    }
    \subfigure[$\Var(y^t_{1,1})$]{
	\includegraphics[width=1.5in]{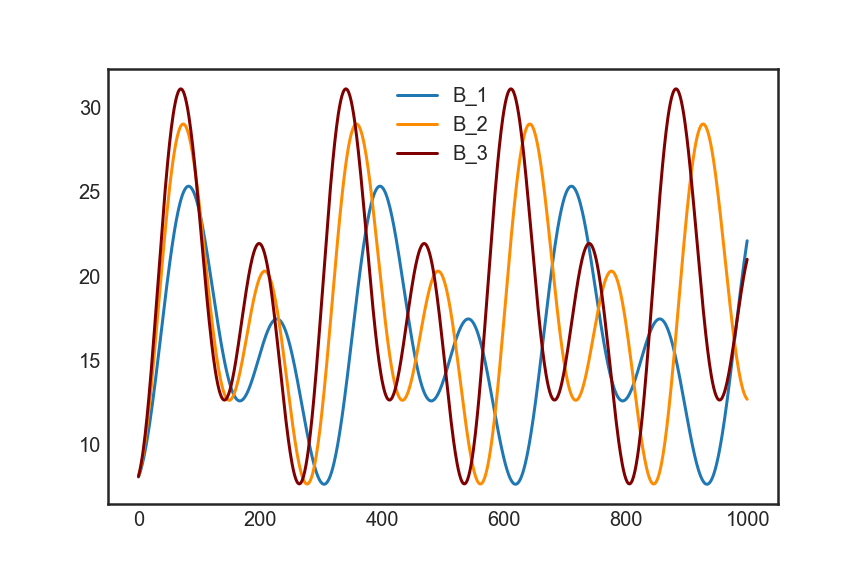}
        \label{}
    }
    \caption{Variance evolution of Symplectic discretization. }
    \label{sym va}
\end{figure}

\textbf{Euler discretization.}  We show experimental results on $\Var(X^t_{1,1}),\Var(y^t_{1,1})$ where $(X^t_1,y^t_1)$ evolve as Euler discretization and payoff matrices are given as follows:

\begin{itemize}
\item $C_1 = [[1,-1.31],[-1,1.31]]$ is  singular.
\item $C_2 = [[2,-1.7],[-1.7,1.5]]$ is non-singular.
\end{itemize}

\begin{figure}[h]
\centering
\subfigure[$C_1$, singular]{
\includegraphics[width=1.5in]{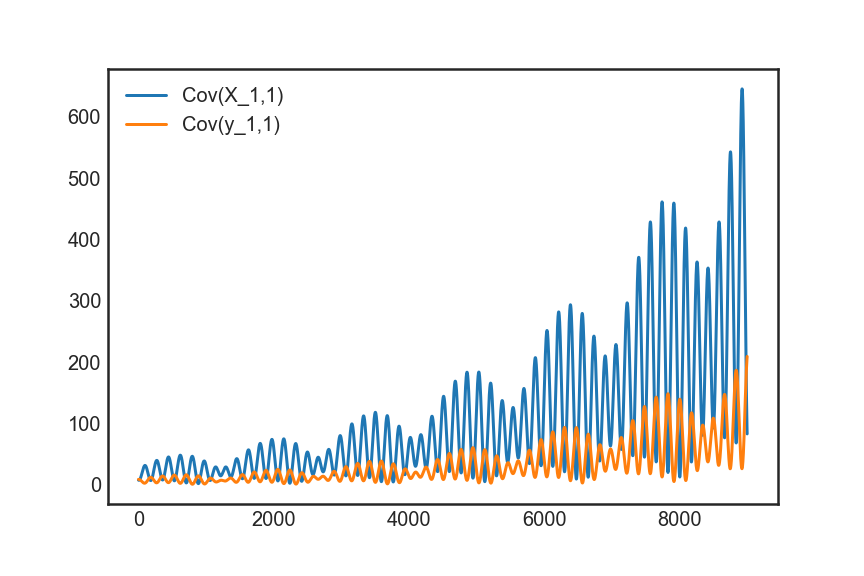}
}
\subfigure[$C_2$, non-singular  ]{
\includegraphics[width=1.5in]{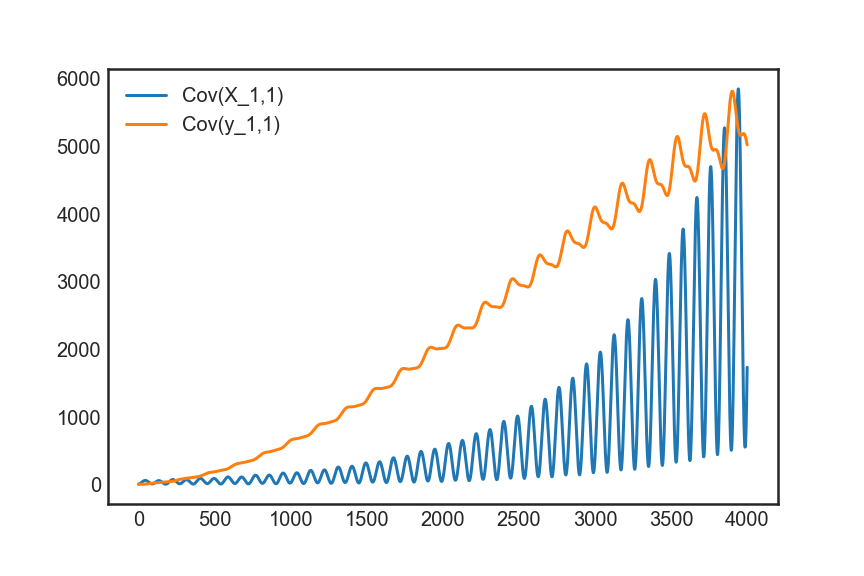}
}
\caption{Variance evolution of Euler discretization.}
\label{ned}
\end{figure}
In Figure \ref{ned} we can observe $\Var(X^t_{1,1})$ and $\Var(y^t_{1,1})$  exhibit an exponential growth rate which support the result of Euler discretization part in Theorem \ref{Covariance Evolution in FTRL with Euclidean Regularizer}.  As shown in Appendix \ref{Covariance Evolution in FTRL with Euclidean Regularizer}, the function of the covariance evolution process contains polynomials combinations of trigonometric functions, which cause the oscillations in Figure \ref{ned}.



\section{Conclusion}
In this paper we investigate the evolution of observer uncertainty in learning dynamics from a covariance perspective. We prove concrete rates of covariance evolution for different discretization schemes of FTRL dynamics and establish a Heisenberg-type uncertainty inequality that constrains the predictive ability of an observer. 
In our analysis, we leverage the techniques from symplectic geometry for analyzing the evolution of uncertainty, which to the best of our knowledge is the first of its kind. An interesting direction is to extend current results for different classes of games (e.g. potential games, multiplayer games).



\section*{Acknowledgments}

Xiao Wang acknowledges Grant 202110458 from Shanghai University of Finance and Economics and support from the Shanghai Research Center for Data Science and Decision Technology.

\section*{Impact Statement}
This paper presents work whose goal is to advance the field of Machine Learning. There are many potential societal consequences of our work, none which we feel must be specifically highlighted here.

\bibliography{Bibli}
\bibliographystyle{icml2024}

\newpage
\appendix
\onecolumn
\section{Proof of Proposition \ref{prop:primal-duel}}\label{Proof of Proposition prop:primal-duel}

\subsection{Hamiltonian formulation of FTRL}\label{Hamiltonian formulation of FTRL}
We first recall the Hamiltonian formulation of continuous FTRL in zero sum game from \citep{PB1}.

The Hamiltonian function $H(X_1,y_1)$ for agent $1$ is defined to be 
\begin{align}\label{ham1}
H(X_1,y_1) = h^*_1(y_1(t)) + h^*_2(y_2(0) + A^{(21)}X_1(t)),
\end{align}
and the Hamiltonian function $H(X_2,y_2)$ for agent $2$ is defined to be 
\begin{align}\label{ham2}
H(X_2,y_2) = h^*_2(y_2(t)) + h^*_1(y_1(0) + A^{(12)}X_2(t)),
\end{align}
where $h^*_i$ is the regularizer used by agent $i$, $i =1,2$.

Theorem 3.2 of  \citep{PB1} shows the dynamical behaviors of $(X_i(t),y_i(t))$ are completely determined by these two Hamiltonian functions. A similar non-canonical Hamiltonian system formulation (also known as a Possion system) for continuous time  mirror descent algorithms is also presented in \citep{wibisono2022alternating}. Theorem 5.1 of \citep{PB1} demonstrates that the Hamiltonian function defined here is inherently connected to the Bregman divergence, which is a commonly used concepts in optimization, plus a an additional term determined by the regularizers and equilibrium of the game.

More precisely, \citep{PB1} was shown that the cumulative strategies and payoffs of agent $1$, $(X_1,y_1)$, of continuous FTRL for agent 1 satisfies the following equations:
\begin{align}
&\frac{d}{dt} X_1(t) = \frac{\partial H}{\partial y_1}(X_1 ,y_1) = \nabla h^*_1(y_1(t)),\label{heq11} \\
&\frac{d}{dt} y_1(t) = - \frac{\partial H}{\partial X_1}(X_1, y_1) = A^{(12)} \nabla h^*_2(y_2(0) + A^{(21)}X_2(t)).\label{heq12}
\end{align}
Similarly results also hold for agent $2$, $(X_2,y_2)$, of continuous FTRL for agent 2 satisfies the following equations:
\begin{align}
&\frac{d}{dt} X_2(t) = \frac{\partial H}{\partial y_2}(X_2 ,y_2) = \nabla h^*_2(y_2(t)),\label{heq21}\\
&\frac{d}{dt} y_2(t) = - \frac{\partial H}{\partial X_2}(X_2, y_2) = A^{(21)} \nabla h^*_1(y_1(0) + A^{(12)}X_1(t)).\label{heq22}
\end{align}

The proof of Proposition \ref{prop:primal-duel} is divided into two parts : 
\begin{itemize}
\item The proof of entropy regularizers is presented  in Section \ref{Proof of Entropy regularizers},
\item The proof of Euclidean norm regularizers is presented  in Section \ref{Proof of Euclidean norm regularizers},
\end{itemize}
and in Section \ref{pdc}, we introduce the Euler and Symplectic discretization of FTRL.

\subsection{Euler and Symplectic discretization of FTRL}\label{pdc}

Both in Euler and Symplectic,  we denote the initial condition of the discrete equation on $(X^t_i,y^t_i),\ i =1,2$ to be $y^0_i = y_i(0)$ and $X^0_i = 0$.

\begin{lem}[Euler  discretization of FTRL]\label{Euler  discretization of FTRL}
Discretizing equation (\ref{HamiltonianFTRL})  with Euler method for both agent $i =1,2$ gives
\begin{align*}\label{agent 1 Euler discretize equation}
\tag{agent 1 Euler discretize equation}
&X^{t+1}_1 =X^{t}_1 + \eta  \frac{\partial H}{\partial y_1}(X^t_1,y^{t+1}_1) =X^{t}_1 + \eta \nabla h^*_1(y^{t}_1), \\
&y^{t+1}_1 =y^{t}_1 - \eta \frac{\partial H}{\partial X_1}(X^t_1,y^t_1) =y^{t}_1 + \eta  A^{(12)} \nabla h^*_2 (y^0_2 + A^{(21)} X^{t}_1),  
\end{align*}
 and 
\begin{align*}\label{agent 2 Euler discretize equation}
\tag{ agent 2 Euler discretize equation}
&X^{t+1}_2 =X^{t}_2 + \eta \frac{\partial H}{\partial y_2}(X^t_2,y^t_2) = X^t_2 + \eta \nabla h^*_2(y^{t}_2), \\
&y^{t+1}_2 =y^{t}_1 - \eta  \frac{\partial H}{\partial X_2} (X^{t+1}_2,y^{t}_2)=y^{t}_2 + \eta  A^{(21)} \nabla h^*_1 (y^0_1 + A^{(12)} X^{t}_2) . 
\end{align*}
\end{lem}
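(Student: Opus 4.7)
The plan is to apply the explicit Euler scheme directly to the Hamiltonian system \eqref{HamiltonianFTRL} and then specialize to the concrete separable Hamiltonians \eqref{ham1} and \eqref{ham2}. First I would record the partial derivatives of the Hamiltonian of agent $1$: since $H(X_1,y_1) = h_1^*(y_1) + h_2^*(y_2(0) + A^{(21)}X_1)$ splits as a function of $y_1$ alone plus a function of $X_1$ alone, the chain rule gives
\begin{align*}
\frac{\partial H}{\partial y_1}(X_1,y_1) &= \nabla h_1^*(y_1),\\
\frac{\partial H}{\partial X_1}(X_1,y_1) &= (A^{(21)})^{\top} \nabla h_2^*\bigl(y_2(0) + A^{(21)}X_1\bigr).
\end{align*}
In particular the first expression does not depend on $X_1$, and the second does not depend on $y_1$. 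These identities reproduce the continuous-time Hamilton equations \eqref{heq11}--\eqref{heq12} already recorded in Appendix \ref{Hamiltonian formulation of FTRL}.

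Next I would apply the standard explicit Euler update $z^{t+1} = z^t + \eta\,\dot z(t)$, evaluated at the current state, to each component of Hamilton's equations with initial data $X_1^0 = 0$ and $y_1^0 = y_1(0)$. This yields directly
\begin{align*}
X_1^{t+1} &= X_1^t + \eta\,\nabla h_1^*(y_1^t),\\
y_1^{t+1} &= y_1^t - \eta\,(A^{(21)})^{\top}\, \nabla h_2^*\bigl(y_2^0 + A^{(21)}X_1^t\bigr).
\end{align*}
To bring the $y_1$ update into the form stated in the lemma I would invoke the zero-sum relation $A^{(21)} = -(A^{(12)})^{\top}$, so $-(A^{(21)})^{\top} = A^{(12)}$, and the second line becomes $y_1^{t+1} = y_1^t + \eta\, A^{(12)} \nabla h_2^*(y_2^0 + A^{(21)}X_1^t)$, as required. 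The derivation for agent $2$ is identical after swapping the indices $1 \leftrightarrow 2$.

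A single notational remark resolves the only apparent oddity in the displayed equations, namely that $\partial H/\partial y_1$ is evaluated at $(X_1^t, y_1^{t+1})$ and $\partial H/\partial X_2$ at $(X_2^{t+1}, y_2^t)$ rather than at the current state. By the separability observed above, $\partial H/\partial y_1$ is independent of its first argument and $\partial H/\partial X_2$ is independent of its second argument, so these expressions coincide with the usual explicit Euler evaluations and the scheme remains explicit. There is no substantive technical obstacle here: the whole proof is an unpacking of definitions, and the only step that requires any care is bookkeeping the sign that arises from the zero-sum identity on the payoff matrices.
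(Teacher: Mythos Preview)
Your proposal is correct and follows the same approach as the paper: the paper's proof is a single sentence saying that the equations follow directly from applying Euler discretization to the continuous Hamilton equations \eqref{heq11}--\eqref{heq12} (and \eqref{heq21}--\eqref{heq22} for agent 2). You simply unpack that computation in detail, including the sign bookkeeping via $-(A^{(21)})^{\top}=A^{(12)}$ and the observation that separability of $H$ makes the evaluation point $(X_1^t,y_1^{t+1})$ versus $(X_1^t,y_1^t)$ immaterial; these details are implicit in the paper's one-line argument.
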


\begin{proof}Note that in Euler discretization, we use the derivative on point of $t$-th round  to find the point of $t+1$ round. Thus (\ref{agent 1 Euler discretize equation}) directly follows from applying Euler discretization to \ref{heq11} and \ref{heq12}. Similarly, (\ref{agent 2 Euler discretize equation}) directly follows from applying Euler discretization to \ref{heq21} and \ref{heq22}.
\end{proof}

\begin{lem}[Symplectic discretization of FTRL] \label{Symplectic discretization of FTRL}
Discretizing (\ref{HamiltonianFTRL}) for $i=1$ with \uppercase\expandafter{\romannumeral1}-type Euler symplectic method (\ref{type1}) gives

\begin{align*}\label{agent 1 Symplectic discretize equation}
\tag{agent 1 Symplectic discretize equation}
&y^{t+1}_1 =y^{t}_1 - \eta \frac{\partial H }{\partial X_1} (X^t_1,y^t_1)=y^{t}_1 + \eta  A^{(12)} \nabla h^*_2 (y^0_2 + A^{(21)} X^{t}_1)  \\
&X^{t+1}_1 =X^{t}_1 + \eta  \frac{\partial H }{\partial y_1}(X^t_1,y^{t+1}_1) =X^{t}_1 + \eta \nabla h^*_1(y^{t+1}_1) \\
\end{align*}
 and discrete  (\ref{HamiltonianFTRL}) for $i=2$  with  \uppercase\expandafter{\romannumeral2}-type Euler symplectic method (\ref{type2})gives
\begin{align*}\label{agent 2 Symplectic discretize equation}
\tag{agent 2 Symplectic discretize equation}
&X^{t+1}_2 =X^{t}_2 + \eta \frac{\partial H }{\partial y_2}(X^t_2,y^t_2) = X^t_2 + \eta \nabla h^*_2(y^{t}_2) \\
&y^{t+1}_2 =y^{t}_1 - \eta  \frac{\partial H}{\partial X_2}(X^{t+1}_2,y^{t}_2) =y^{t}_2 + \eta  A^{(21)} \nabla h^*_1 (y^0_1 + A^{(12)} X^{t+1}_2)  \\
\end{align*}
\end{lem}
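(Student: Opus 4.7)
The plan is to prove the lemma by direct verification: write down the Hamiltonians from (\ref{ham1})--(\ref{ham2}), take the partial derivatives demanded by the Type~I and Type~II schemes, and substitute. The single structural observation that makes the argument clean is that each Hamiltonian is \emph{separable}, i.e.\ $H(X_i,y_i)$ splits as a function of $y_i$ alone plus a function of $X_i$ alone. Consequently $\nabla_{y_i}H$ depends only on $y_i$ and $\nabla_{X_i}H$ depends only on $X_i$, which is exactly what collapses the a priori implicit Type~I / Type~II updates into explicit ones.

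First I would compute the gradients. For agent~1, from $H(X_1,y_1)=h_1^*(y_1)+h_2^*(y_2(0)+A^{(21)}X_1)$ one obtains $\nabla_{y_1}H=\nabla h_1^*(y_1)$ and, by the chain rule, $\nabla_{X_1}H=(A^{(21)})^\top\nabla h_2^*(y_2(0)+A^{(21)}X_1)$. The zero-sum identity $A^{(21)}=-(A^{(12)})^\top$ turns $(A^{(21)})^\top$ into $-A^{(12)}$, and this sign flip is what converts the $-\eta\,\partial H/\partial X_1$ on the left-hand side of the Type~I update into the $+\eta\,A^{(12)}(\cdots)$ appearing in the claim. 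Plugging into (\ref{type1}), the first line $y_1^{t+1}=y_1^t-\eta\,\partial H/\partial X_1(X_1^t,y_1^{t+1})$ is a priori implicit in $y_1^{t+1}$, but by separability evaluating $\partial H/\partial X_1$ at $y_1^{t+1}$ is the same as evaluating it at $y_1^t$, so the update collapses to the first displayed formula. The second line $X_1^{t+1}=X_1^t+\eta\,\partial H/\partial y_1(X_1^t,y_1^{t+1})=X_1^t+\eta\,\nabla h_1^*(y_1^{t+1})$ is then immediate. A symmetric calculation, applying Type~II (\ref{type2}) to agent~2's Hamiltonian $H(X_2,y_2)=h_2^*(y_2)+h_1^*(y_1(0)+A^{(12)}X_2)$, yields the second pair of equations: the $X_2$-update becomes explicit in $y_2^t$, and the $y_2$-update becomes explicit in the freshly updated $X_2^{t+1}$.

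I do not foresee a serious technical obstacle; the proof is essentially a bookkeeping exercise powered by separability and by the sign convention $A^{(21)}=-(A^{(12)})^\top$. The one conceptual point worth emphasizing is the deliberate asymmetry in the choice of scheme: using Type~I for agent~1 and Type~II for agent~2 means agent~1's new payoff $y_1^{t+1}$ is computed first and then the new cumulative strategy $X_1^{t+1}$ is computed from it, while on agent~2's side the order is reversed. This interleaving is precisely what allows the subsequent identification, in Proposition~\ref{prop:primal-duel}, of the combined symplectic scheme with AltGDA / AltMWU on the primal strategies.
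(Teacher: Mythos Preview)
Your proposal is correct and takes essentially the same approach as the paper, which simply states that the two displayed discretizations follow directly from applying (\ref{type1}) to (\ref{heq11})--(\ref{heq12}) and (\ref{type2}) to (\ref{heq21})--(\ref{heq22}). Your write-up is more explicit than the paper's one-line proof: you spell out the chain-rule computation, the role of the zero-sum identity $A^{(21)}=-(A^{(12)})^\top$ in producing the sign, and the separability of $H$ that collapses the nominally implicit Type~I/Type~II updates to explicit ones---all of which the paper either uses silently or mentions only in passing elsewhere.
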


\begin{proof}(\ref{agent 1 Symplectic discretize equation}) directly follows from applying  (\ref{type1}) to equation  \ref{heq11} and \ref{heq12}. Similarly, (\ref{agent 2 Symplectic discretize equation}) directly follows from applying  (\ref{type2}) to equation  \ref{heq21} and \ref{heq22}.
\end{proof}

We define $(x^t_1,x^t_2)$ to be 
\begin{align}\label{ss}
x^{t}_1  = \frac{X^{t+1}_1 - X^{t}_1 }{\eta} , \ x^{t}_2   = \frac{X^{t+1}_2 - X^{t}_2}{\eta}.
\end{align}

In the case of  Euler  discretization of FTRL (Lemma \ref{Euler  discretization of FTRL}), we have 
\begin{align}\label{sse}
x^{t}_1  = \nabla h^*_1(y^{t}_1), \ x^t_2 = \nabla h^*_2(y^{t}_2),
\end{align}
and in the case of Symplectic discretization of FTRL (Lemma \ref{Symplectic discretization of FTRL}), we have 
\begin{align}\label{sss}
x^{t}_1  = \nabla h^*_1(y^{t+1}_1), \ x^t_2 = \nabla h^*_2(y^{t}_2).
\end{align}
Note that in Symplectic method, $x^{t}_1$ is determined by $y_1^{t+1}$, but in Euler method, $x^{t}_1$ is determined by $y_1^{t}$.

In the following, we will show $(x^t_1,x^t_2)$ evolves as (\ref{2pMWUS}) under Euler method or (\ref{2pMWUA}) under Symplectic method on the strategy space if the regularizers $h_i(\cdot)$ are choose to be entropy functions, and the constrained sets $\mathcal{X}_i$ are chosen to be simplexes for $i=1,2$, this exactly the second part of Proposition \ref{prop:primal-duel}. 
\begin{lem}\label{cpayoff}
Both in Euler discretization of FTRL dynamics and  Symplectic discretization of FTRL dynamics, the equalities
\begin{align}
y^{n}_1 = y^0_1 + A^{(12)}X^{n}_2 \label{yX1} \\
y^{n}_2 = y^0_2 + A^{(21)}X^{n}_1 \label{yX2}
\end{align}
hold for any $n \ge 0$.
\end{lem}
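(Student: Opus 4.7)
The plan is to proceed by induction on $n$, noting that both equalities share the same structure and can be established simultaneously.

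The base case $n=0$ is immediate: by the convention $X^0_1 = X^0_2 = 0$ stated just before Lemma~\ref{Euler discretization of FTRL}, both identities collapse to $y^0_i = y^0_i$.

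For the inductive step, first I would handle the Euler case, where the two equalities decouple. Assuming $y^{n}_2 = y^0_2 + A^{(21)} X^{n}_1$, I substitute into the update rule
\[
y^{n+1}_1 = y^{n}_1 + \eta A^{(12)} \nabla h^*_2(y^0_2 + A^{(21)} X^{n}_1)
\]
to rewrite the argument of $\nabla h^*_2$ as $y^{n}_2$. Then using the inductive hypothesis $y^{n}_1 = y^0_1 + A^{(12)} X^{n}_2$ together with the update $X^{n+1}_2 = X^{n}_2 + \eta \nabla h^*_2(y^{n}_2)$, I can factor $A^{(12)}$ and recognize $y^0_1 + A^{(12)} X^{n+1}_2$. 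The argument for $y^{n+1}_2$ is completely symmetric.

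The only real subtlety is the Symplectic case, where the order of updates matters because $y^{n+1}_2$ is computed using $X^{n+1}_2$ rather than $X^{n}_2$. Here I would establish the two equalities in sequence. First, I prove $y^{n+1}_1 = y^0_1 + A^{(12)} X^{n+1}_2$ exactly as in the Euler case (this step is identical because the update for $y_1$ in the symplectic scheme is formally the same as in Euler, depending on $X^{n}_1$ rather than $X^{n+1}_1$). Then, having this identity in hand, I turn to the agent-2 update
\[
y^{n+1}_2 = y^{n}_2 + \eta A^{(21)} \nabla h^*_1(y^0_1 + A^{(12)} X^{n+1}_2),
\]
and replace the argument of $\nabla h^*_1$ by $y^{n+1}_1$ using what I just proved. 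Finally, combining with $X^{n+1}_1 = X^{n}_1 + \eta \nabla h^*_1(y^{n+1}_1)$ and the inductive hypothesis $y^{n}_2 = y^0_2 + A^{(21)} X^{n}_1$, I obtain $y^{n+1}_2 = y^0_2 + A^{(21)} X^{n+1}_1$.

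The main obstacle, if any, is just bookkeeping: one must make sure that the mismatched time-indices on $X$ in the symplectic update ($X^{n+1}_2$ in the $y^{n+1}_2$ equation versus $X^{n}_1$ in the $y^{n+1}_1$ equation) align correctly, which is exactly why the two equalities must be proved in the order above rather than independently. Once this ordering is respected, the computation is purely algebraic.
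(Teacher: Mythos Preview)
Your proposal is correct and follows essentially the same route as the paper: induction on $n$, the trivial base case from $X^0_1=X^0_2=0$, and in the symplectic case establishing the identity for $y^{n+1}_1$ first so that it can be fed into the update for $y^{n+1}_2$. The paper in fact only writes out the symplectic case and defers Euler as ``similar,'' so your explicit treatment of Euler (and your remark that there the two inductive steps are symmetric rather than sequentially dependent) is a small addition, not a deviation.
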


\begin{proof}Here we only prove the case of Symplectic discretization of FTRL dynamics, as the case of Euler discretization of FTRL dynamics is similar. We prove this by induction. For $n=0$, (\ref{yX1}) and (\ref{yX2}) are 
\begin{align}
&y^{0}_1 = y^0_1 + A^{(12)}X^{0}_2 \\
&y^{0}_2 = y^0_2 + A^{(21)}X^{0}_1,
\end{align}
which hold trivially since by definition $X^{0}_1 = X^{0}_2 = 0$. 

Now assume (\ref{yX1}) and (\ref{yX2}) hold for $n$, i.e,
\begin{align}
&y^{n}_1 = y^0_1 + A^{(12)}X^{n}_2 \label{yX1n}\\
&y^{n}_2 = y^0_2 + A^{(21)}X^{n}_1 \label{yX2n}.
\end{align}

Then, we have
\begin{align}
y^{n+1}_1 & = y^n_1 + \eta A^{(12)} \nabla h^*_2 (y^0_2 + A^{(21)} X^n_1)   \\
& \overset{(\ref{yX2n})}{= }y^n_1 + \eta A^{(12)} \nabla h^*_2 ( y^n_2) \\
& = y^n_1 + \eta A^{(12)} x^n_2 \\
& \overset{(\ref{yX1n})}{= } y^0_1 + A^{(12)} X^n_2 + \eta A^{(12)}x^n_2 \\
& = y^0_1 + A^{(12)} X^{n+1}_2. \label{yX1n+1}
\end{align}

Moreover, we have
\begin{align}
y^{n+1}_2 & = y^n_2 + \eta A^{(21)} \nabla h^*_1 (y^0_1 + A^{(12)} X^{n+1}_2)   \\
& \overset{(\ref{yX1n+1})}{= } y^n_2 + \eta  A^{(21)} \nabla h^*_1(y^{n+1}_1) \\
& = y^n_2 + \eta  A^{(21)}x^n_1 \\
& \overset{(\ref{yX2n})}{= } y^0_2 + A^{(21)} X^n_1 + \eta A^{(21)} x^n_1 \\
& = y^0_1 + A^{(21)} X^{n+1}_1
\end{align}
This finish the proof.
\end{proof}

\subsection{Proof of Entropy regularizers}\label{Proof of Entropy regularizers}

\begin{lem} For entropy regularizer $h(x) = \sum^n_{i=1}x_i \ln x_i$ with simplex constrain, i.e., $\Delta = \{  x \in \BR^n |  \sum^n_{i=1} x_i = 1, x_i \ge 0 \}$, we have
\begin{align}\label{nablaentropy}
\nabla h^*(y) = \left( \frac {e^{y_i} } { \sum^n_{s=1} e^{y_s} } \right)^n_{i=1}.
\end{align}
\end{lem}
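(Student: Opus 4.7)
The plan is to directly compute the maximizer in the definition of the convex conjugate and then either invoke the envelope theorem or simplify $h^*(y)$ to an explicit log-sum-exp and differentiate. Concretely, the convex conjugate is
\begin{equation*}
h^*(y) = \max_{x \in \Delta}\bigl\{\langle x, y\rangle - \textstyle\sum_{i=1}^n x_i \ln x_i\bigr\},
\end{equation*}
and the first step is to form the Lagrangian for the equality constraint $\sum_i x_i = 1$, namely $L(x,\lambda) = \langle x, y\rangle - \sum_i x_i \ln x_i - \lambda\bigl(\sum_i x_i - 1\bigr)$. Because $-x_i \ln x_i \to 0$ as $x_i \to 0^+$ with derivative $+\infty$, the maximizer lies in the relative interior of $\Delta$, so the non-negativity constraints are inactive and we may ignore them in the KKT analysis.

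Next I would solve the stationarity condition $\partial L/\partial x_i = y_i - \ln x_i - 1 - \lambda = 0$, which gives $x_i^* = \exp(y_i - 1 - \lambda)$. Enforcing $\sum_i x_i^* = 1$ yields $e^{1+\lambda} = \sum_s e^{y_s}$, and substituting back produces the softmax formula
\begin{equation*}
x_i^*(y) = \frac{e^{y_i}}{\sum_{s=1}^n e^{y_s}}.
\end{equation*}
Strong convexity of $h$ (hence strict concavity of the objective) guarantees uniqueness of this maximizer.

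From here I would finish in one of two equivalent ways. The cleaner route is to substitute $x^*(y)$ into the objective: using $\ln x_i^* = y_i - \ln\sum_s e^{y_s}$ and $\sum_i x_i^* = 1$, a short computation collapses $\langle x^*, y\rangle - \sum_i x_i^* \ln x_i^*$ to the log-sum-exp $h^*(y) = \ln\sum_{s=1}^n e^{y_s}$; differentiating this smooth function coordinate-wise gives $\partial h^*(y)/\partial y_i = e^{y_i}/\sum_s e^{y_s}$, which is exactly \eqref{nablaentropy}. Alternatively, one can cite Danskin's envelope theorem, which under the uniqueness of $x^*(y)$ asserts $\nabla h^*(y) = x^*(y)$ directly.

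There is no substantial obstacle here; the only point requiring modest care is justifying that the KKT maximizer is interior so the softmax form is valid (handled by the steepness of the negative entropy at the boundary) and confirming smoothness of $h^*$ so that either envelope-type argument applies. Both are standard and follow from strong convexity of $h$ on $\Delta$.
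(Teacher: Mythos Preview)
Your proposal is correct and follows essentially the same approach as the paper: set up the KKT conditions on the simplex, argue the maximizer is interior so the inequality multipliers vanish, and solve the resulting stationarity equations to obtain the softmax; the paper invokes $\nabla h^*(y)=\argmax_{x\in\Delta}\{\langle x,y\rangle-h(x)\}$ directly, which is your Danskin route. Your alternative of simplifying $h^*(y)$ to $\ln\sum_s e^{y_s}$ and differentiating is an equally valid finish that the paper does not spell out, but the core argument is the same.
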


\begin{proof} By the definition, 
$
h^*(y)=\max_{x\in\Delta} ( \langle x, y\rangle-h(x) ),
$
and
$
\nabla h^*(y)=\arg\max_{x\in\Delta} ( \langle x, y\rangle-h(x) ).
$ 

Denote $f(x)=\left(\langle x, y\rangle-h(x)\right)$, by the KKT condition,   if $x^*$ is the maximum of $f$, then there exist $\mu_i$ and $\lambda$ such that
\[
-\nabla f(x^*)+\sum_{i=1}^n\mu_i\nabla g_i(x^*)+\lambda\nabla h(x^*)=0
\]
and 
\begin{align*}
&g_i(x^*)=-x^*_i\le 0 \ \ \text{for all}\ \ i\in[n]
,\\ 
&h(x^*)=\sum_{i=1}^nx_i^*-1=0
,\\
&\mu_ig_i(x^*)=0\ \ \text{for all}\ \ i\in[n].
\end{align*} 
Since the gradient of $f$ can be computed to be
\[
\nabla f(x)=y-(\log x_1+1,...,\log x_n+1),
\]
the KKT condition becomes
\[
-y+(\log x_1+1,...,\log x_n+1)+\sum_{i=1}^n\mu_i(0,...,-1,...,0)+\lambda (1,...,1)=0.
\]
Suppose the feasible $x^*$ is interior point of $\Delta$, i.e., $x_i>0$, then we have for all $i\in[n]$, $\mu_i=0$. Then the KKT condition is reduced to the following equations
$
\log x_i+1+\lambda =y_i \ \ \text{for all}\ \ i\in[n]
,
\sum_{i=1}^nx_i =1.
$
This gives solution of $x_i$ and $\lambda$:
\begin{align*}
    x_i=\frac{e^{y_i}}{\sum_{s=1}^ne^{y_s}} \ \ \text{for all} \ \ i\in[n],\ \ 
\lambda=\log\left(\sum_{s=1}^ne^{y_s}\right)-1,
\end{align*}
thus we have completed the proof.
\end{proof}

\begin{lem}\label{mwu222}
The $(x^t_1, x^t_2)$ in (\ref{sse}) with entropy regularizer is the same as (MWU).
\end{lem}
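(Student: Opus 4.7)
The plan is to substitute the explicit entropy-regularizer conjugate into the Euler-discretized FTRL updates from Lemma~\ref{Euler  discretization of FTRL} and then recognize the result as the (\ref{2pMWUS}) recursion via a partition-function cancellation. I expect the argument to be purely algebraic, with no analytic obstacle.

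First, I would apply equation~(\ref{nablaentropy}) to the definition~(\ref{sse}) to obtain, for each agent $i\in\{1,2\}$ and each coordinate $s$,
\[
x^t_{i,s} \;=\; \bigl[\nabla h^*_i(y^t_i)\bigr]_s \;=\; \frac{e^{y^t_{i,s}}}{\sum_{j} e^{y^t_{i,j}}}.
\]
Writing $Z^t_i := \sum_j e^{y^t_{i,j}}$ for the partition function, this yields the key identity $e^{y^t_{i,s}} = x^t_{i,s}\, Z^t_i$, which will let me eliminate the cumulative payoff in favor of the current mixed strategy after the next step.

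Second, I would invoke Lemma~\ref{cpayoff} to simplify the Euler payoff update. Since Lemma~\ref{cpayoff} gives $y^0_2 + A^{(21)}X^t_1 = y^t_2$, the Euler update from Lemma~\ref{Euler  discretization of FTRL} collapses to the clean componentwise form $y^{t+1}_{1,s} = y^t_{1,s} + \eta (A^{(12)} x^t_2)_s$; symmetrically, using $A^{(21)} = -A^\top$, we get $y^{t+1}_{2,s} = y^t_{2,s} - \eta (A^\top x^t_1)_s$.

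Third, I would substitute this payoff update back into the softmax formula for $x^{t+1}_1$, factor $e^{y^t_{1,s}} = x^t_{1,s}\, Z^t_1$ in both numerator and denominator, and observe that the strategy-independent $Z^t_1$ cancels, leaving
\[
x^{t+1}_{1,s} \;=\; \frac{x^t_{1,s}\, e^{\eta (A x^t_2)_s}}{\sum_{j} x^t_{1,j}\, e^{\eta (A x^t_2)_j}},
\]
which is precisely the agent-$1$ rule in (\ref{2pMWUS}), once we recall the overloaded notation of Section~\ref{Preliminaries} where the symbol $y^t_1$ appearing inside (\ref{2pMWUS}) denotes the instantaneous payoff $A x^t_2$ rather than the cumulative payoff of the Hamiltonian formulation. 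The analogous computation for agent~$2$, with the sign change supplied by $A^{(21)}=-A^\top$, produces the second (minus-sign) line of (\ref{2pMWUS}).

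The sole place requiring care is the notational reconciliation: $y$ is overloaded (cumulative in the Hamiltonian setup, instantaneous in the (\ref{2pMWUS}) display), so one must verify that the expressions produced by the algebra match the corresponding terms in (\ref{2pMWUS}). No other obstacle arises; once the partition-function cancellation is made explicit, everything follows by inspection.
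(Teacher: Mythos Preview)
Your proposal is correct and follows essentially the same approach as the paper: both arguments combine equation~(\ref{nablaentropy}) with Lemma~\ref{cpayoff} and then cancel the partition function to recover the (\ref{2pMWUS}) recursion. The only cosmetic difference is that the paper expands $y^t_1 = y^0_1 + A^{(12)}X^t_2$ via (\ref{yX1}) and peels off the last summand of $X^t_2$, whereas you work one step at a time via the update $y^{t+1}_1 = y^t_1 + \eta A^{(12)}x^t_2$; the algebra and the key ingredients are identical.
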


\begin{proof}

In (\ref{sse}), we have $x^t_1  = \nabla h^*_1(y^{t}_1)$, thus
\begin{align}
x^t_1 & = \nabla h^*_1(y^{t}_1)\\
\\
& \overset{(\ref{nablaentropy})}{=}  \left ( \frac{e^{y^{t}_{1,s}}}{ \sum^{n_1}_{j=1} e^{y^{t}_{1,j}}  } \right )^{n_1}_{s = 1}\\
\\
& \overset{(\ref{yX1})}{=}  \left ( \frac{e^{y^0_{1,s} + (A^{(12)}X^{t}_2 )_s }}{    \sum^{n_1}_{j=1} e^{ y^0_{1,j} + (A^{(12)}X^{t}_2)_j  }}\right )^{n_1}_{s = 1} \\
\\
& =  \left ( \frac{e^{y^0_{1,s} + \eta (A^{(12)} \sum^{t-1}_{k=1}x^{k}_2 )_s }}{    \sum^{n_1}_{j=1}   e^{y^0_{1,j} + \eta (A^{(12)} \sum^{t-1}_{k=1}x^{k}_2 )_j }  }\right )^{n_1}_{s = 1} \\
\\
& = \left (\frac{ x^{t-1}_{1,s} e^{\eta ( A^{(12)} x^{t-1}_2   )_s}   }{ \sum^{n_1}_{j=1} x^{t-1}_{1,j} e^{\eta ( A^{(12)} x^{t-1}_2   )_j}} \right )^{n_1}_{s = 1}. 
\end{align}

The case of 2 agent is exactly same as  1 agent as they are symmetry, and we have
\begin{align}
x^t_2 &  =\left (\frac{ x^{t-1}_{2,s} e^{\eta ( A^{(21)} x^{t-1}_2   )_s}   }{ \sum^{n_2}_{j=1} x^{t-1}_{2,j} e^{\eta ( A^{(21)} x^{t-1}_2   )_j}} \right )^{n_2}_{s = 1}. 
\end{align}

That is same as (MWU) .
\end{proof}

\begin{lem}\label{mwu111}
The $(x^t_1, x^t_2)$ in (\ref{sss}) with entropy regularizer is the same as (AltMWU).
\end{lem}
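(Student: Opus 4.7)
The plan is to follow the same template as Lemma \ref{mwu222}, but using the symplectic timing $x^t_1 = \nabla h^*_1(y^{t+1}_1)$ from (\ref{sss}) rather than the Euler timing. The two main ingredients are the closed-form softmax $\nabla h^*(y)_s = e^{y_s}/\sum_j e^{y_j}$ from (\ref{nablaentropy}), and the telescoping identity $y^t_i = y^0_i + A^{(ij)} X^t_j$ established in Lemma \ref{cpayoff}.

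First I would treat agent 2. Because (\ref{sss}) still gives $x^t_2 = \nabla h^*_2(y^t_2)$, the argument parallels Lemma \ref{mwu222} exactly: apply (\ref{nablaentropy}), invoke (\ref{yX2}) to write $y^t_2 = y^{t-1}_2 + \eta A^{(21)} x^{t-1}_1$ by splitting off the most recent increment $X^t_1 - X^{t-1}_1 = \eta x^{t-1}_1$ coming from (\ref{ss}), and then divide numerator and denominator by $\sum_j e^{y^{t-1}_{2,j}}$ to recognise $x^{t-1}_{2,s} = e^{y^{t-1}_{2,s}}/\sum_j e^{y^{t-1}_{2,j}}$. Together with $A^{(21)} = -A^\top$ and the shorthand $y^{t-1}_2 = A^\top x^{t-1}_1$ used inside (AltMWU), this reproduces the first line of (AltMWU).

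The genuinely new step is agent 1. I would start from $x^t_1 = \nabla h^*_1(y^{t+1}_1)$, apply (\ref{nablaentropy}), and then use (\ref{yX1}) together with the symplectic update $X^{t+1}_2 = X^t_2 + \eta \nabla h^*_2(y^t_2) = X^t_2 + \eta x^t_2$ from Lemma \ref{Symplectic discretization of FTRL} to obtain
\[
y^{t+1}_1 \;=\; y^0_1 + A^{(12)} X^{t+1}_2 \;=\; y^t_1 + \eta A^{(12)} x^t_2.
\]
The crucial feature is that the increment contains $x^t_2$ and not $x^{t-1}_2$: agent 1 sees the \emph{already updated} strategy of agent 2, which is exactly what distinguishes the alternating from the simultaneous case. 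Factoring $\sum_j e^{y^t_{1,j}}$ out of the softmax and identifying $x^{t-1}_{1,s} = e^{y^t_{1,s}}/\sum_j e^{y^t_{1,j}}$, which is valid because (\ref{sss}) specifies $x^{t-1}_1 = \nabla h^*_1(y^t_1)$, yields the second line of (AltMWU) under the shorthand $y^t_1 = A x^t_2$.

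The only real obstacle is bookkeeping: in the symplectic scheme $x^t_1$ is paired with the shifted dual variable $y^{t+1}_1$ while $x^{t-1}_1$ is paired with $y^t_1$, so one must carefully track this half-step offset when reading $x^{t-1}_{1,s}$ out of the softmax. Once this offset is correctly aligned, each step is a routine rearrangement of the softmax identity parallel to Lemma \ref{mwu222}, and the signs fall into place automatically through $A^{(12)}=A$ and $A^{(21)}=-A^\top$.
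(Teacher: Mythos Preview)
Your proposal is correct and follows essentially the same approach as the paper's proof: both derive the AltMWU update by inserting the softmax form (\ref{nablaentropy}) and invoking the identity $y^t_i = y^0_i + A^{(ij)}X^t_j$ from Lemma~\ref{cpayoff}, then peeling off the most recent increment of the cumulative strategy to expose the one-step update. The only cosmetic difference is that the paper first writes out the full sum $X^{t+1}_2 = \eta\sum_{k=1}^t x^k_2$ before collapsing the last term, whereas you go directly to the one-step recursion $y^{t+1}_1 = y^t_1 + \eta A^{(12)} x^t_2$; your explicit observation that $x^{t-1}_1 = \nabla h^*_1(y^t_1)$ (from shifting (\ref{sss})) is exactly what the paper uses implicitly in its final equality.
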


\begin{proof}For 1 agent, from (\ref{sss}), we have $x^t_1  = \nabla h^*_1(y^{t+1}_1)$, thus
\begin{align}
x^t_1 & = \nabla h^*_1(y^{t+1}_1)\\
\\
& \overset{(\ref{nablaentropy})}{=}  \left ( \frac{e^{y^{t+1}_{1,s}}}{ \sum^{n_1}_{j=1} e^{y^{t+1}_{1,j}}  } \right )^{n_1}_{s = 1}\\
\\
& \overset{(\ref{yX1})}{=}  \left ( \frac{e^{y^0_{1,s} + (A^{(12)}X^{t+1}_2 )_s }}{    \sum^{n_1}_{j=1} e^{ y^0_{1,j} + (A^{(12)}X^{t+1}_2)_j  }}\right )^{n_1}_{s = 1} \\
\\
& =  \left ( \frac{e^{y^0_{1,s} + \eta (A^{(12)} \sum^t_{k=1}x^{k}_2 )_s }}{    \sum^{n_1}_{j=1}   e^{y^0_{1,j} + \eta (A^{(12)} \sum^t_{k=1}x^{k}_2 )_j }  }\right )^{n_1}_{s = 1} \\
\\
& = \left (\frac{ x^{t-1}_{1,s} e^{\eta ( A^{(12)} x^t_2   )_s}   }{ \sum^{n_1}_{j=1} x^{t-1}_{1,j} e^{\eta ( A^{(12)} x^t_2   )_j}} \right )^{n_1}_{s = 1}. \label{mwux1}
\end{align}

Note that the update rule of $x^t_1$ use $x^t_2$, this is a characteristic of (AltMWU).

For 2 agent, from (\ref{sss}) we have $x^{t}_2   =   \nabla h^*_2(y^{t}_2)$, thus
\begin{align}
x^t_2 & = \nabla h^*_2(y^{t}_2)\\
\\
& = \left ( \frac{e^{y^{t}_{2,s}}}{ \sum^{n_2}_{j=1} e^{y^{t}_{2,j}}  } \right )^{n_2}_{s = 1}\\
\\
& \overset{(\ref{yX2})}{=}  \left ( \frac{e^{y^0_{2,s} + (A^{(21)}X^{t}_1 )_s }}{    \sum^{n_2}_{j=1} e^{ y^0_{2,j} + (A^{(21)}X^{t}_1)_j  }}\right )^{n_2}_{s = 1} \\
\\
& =  \left ( \frac{e^{y^0_{2,s} + \eta (A^{(21)} \sum^{t-1}_{k=1}x^{k}_1 )_s }}{    \sum^{n_2}_{j=1}   e^{y^0_{2,j} + \eta (A^{(21)} \sum^{t-1}_{k=1}x^{k}_1 )_j }  }\right )^{n_2}_{s = 1} \\
\\
& =\left (\frac{ x^{t-1}_{2,s} e^{\eta ( A^{(21)} x^{t-1}_1   )_s}   }{ \sum^{n_2}_{j=1} x^{t-1}_{2,j} e^{\eta ( A^{(21)} x^{t-1}_1   )_j}} \right )^{n_2}_{s = 1}. \label{mwux2}
\end{align}

Combine (\ref{mwux2}) and (\ref{mwux1}),  we can see the update rule of $(x_1^t, x_2^t) $ is same as (AltMWU) .
\end{proof}

Combine Lemma \ref{mwu222} and Lemma \ref{mwu111}, we proved the second part of Proposition \ref{prop:primal-duel}. The first part is very similar, except the regularizers are changed to Euclidian norm. However, this change will not affect the proof,  so we omit it here.

\subsection{Proof of Euclidean norm regularizers}\label{Proof of Euclidean norm regularizers}

Note that for Euclidean norm regularizers, i.e., $h_i(x) = \lVert x \lVert^2 $, we have
\begin{align}\label{eunabla}
\nabla h^*_i(y) = \arg \max_{x  \in \BR^n} \{ \langle x,y \rangle -   \lVert x \lVert^2  \} = y.
\end{align}

\begin{lem}\label{3845-}
The $(x^t_1,x^t_2)$ in  (\ref{sse}) with Euclidean norm regularizer is the same as (GDA).
\end{lem}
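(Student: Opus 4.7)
My plan is to mirror the argument of Lemma \ref{mwu222}, replacing the softmax expression (\ref{nablaentropy}) with the trivial Euclidean-conjugate identity $\nabla h^*_i(y) = y$ given in (\ref{eunabla}). The first step is to apply (\ref{sse}) to write $x^t_i = \nabla h^*_i(y^t_i) = y^t_i$ for each agent $i \in \{1,2\}$. Second, I would invoke Lemma \ref{cpayoff} to substitute $y^t_1 = y^0_1 + A^{(12)} X^t_2$ and $y^t_2 = y^0_2 + A^{(21)} X^t_1$, and unfold the cumulative strategy as $X^t_j = \eta \sum_{k=0}^{t-1} x^k_j$ using (\ref{ss}) together with $X^0_j = 0$.

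The final step is a one-step differencing. Subtracting $x^{t-1}_1 = y^{t-1}_1$ from $x^t_1 = y^t_1$ gives
\[
x^t_1 - x^{t-1}_1 = A^{(12)}(X^t_2 - X^{t-1}_2) = \eta A^{(12)} x^{t-1}_2 = \eta A x^{t-1}_2,
\]
which combined with $y^{t-1}_1 = A x^{t-1}_2$ from the (GDA) setup yields $x^t_1 = x^{t-1}_1 + \eta y^{t-1}_1$. The identical manipulation for agent 2, now using $A^{(21)} = -A^{\top}$, produces $x^t_2 = x^{t-1}_2 - \eta A^{\top} x^{t-1}_1 = x^{t-1}_2 - \eta y^{t-1}_2$. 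Together these are precisely the (GDA) update rules.

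I do not anticipate any genuine obstacle here. The proof of Lemma \ref{mwu222} for the entropy regularizer had to carry along a softmax normalization and factor exponentials into a multiplicative ratio; in the Euclidean case the conjugate $\nabla h^*$ is simply the identity, so the derivation collapses to linear arithmetic and is essentially a one-line computation once Lemma \ref{cpayoff} has been applied. The only mild care needed is keeping the sign conventions of $A^{(12)} = A$ and $A^{(21)} = -A^{\top}$ straight, so that the agent-1 update picks up a $+\eta$ and the agent-2 update picks up a $-\eta$, matching the maximizer/minimizer roles in the (GDA) definition.
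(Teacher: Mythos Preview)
Your proposal is correct and follows essentially the same route as the paper: both use $x^t_i = \nabla h^*_i(y^t_i) = y^t_i$ from (\ref{eunabla}) and then recover the one-step GDA recursion. The paper's proof is slightly terser, reading off $y^t_1 = x^{t-1}_1 + \eta A^{(12)} x^{t-1}_2$ directly from the Euler update for $y_1$, whereas you pass through Lemma~\ref{cpayoff} and an explicit differencing, but the content is identical.
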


\begin{proof}

For agent 1, we have
\begin{align}
x^t_1  \overset{(\ref{sse})}{=} \nabla h^*_1(y_1^{t}) \overset{(\ref{eunabla})}{=} y^{t}_1  = x_1^{t-1} + \eta \cdot A^{(12)}x_2^{t-1}.
\end{align}

Agent 2 is exactly same as agent 1 since in Euler method, two agent are symmetry. Thus we have shown the update rule of $(x^t_1,x^t_2)$ is same as (GDA).
\end{proof}

\begin{lem}\label{3845-1}
The $(x^t_1,x^t_2)$ in  (\ref{sss}) with Euclidean norm regularizer is the same as (AltGDA).
\end{lem}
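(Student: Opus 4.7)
The plan is to mirror the proof of Lemma~\ref{3845-} (the Euler/GDA case), but substituting the Symplectic update identities (\ref{sss}) together with the Symplectic discretizations (\ref{agent 1 Symplectic discretize equation}) and (\ref{agent 2 Symplectic discretize equation}) derived in Lemma~\ref{Symplectic discretization of FTRL}. The additional bookkeeping is to keep track of which time index $y$ is evaluated at in the Type~\uppercase\expandafter{\romannumeral1}/Type~\uppercase\expandafter{\romannumeral2} schemes, since this is exactly what distinguishes the alternating from the simultaneous update.

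First, note that for the Euclidean regularizer (\ref{eunabla}) gives $\nabla h^*_i(y) = y$, so under the Symplectic discretization (\ref{sss}) we have the identifications $x^{t}_1 = y^{t+1}_1$ and $x^{t}_2 = y^{t}_2$. Using Lemma~\ref{cpayoff} (equations (\ref{yX1}) and (\ref{yX2})), the Symplectic update for agent~1 becomes
\begin{align*}
y^{t+1}_1 \;=\; y^{t}_1 + \eta\,A^{(12)}\nabla h^*_2\bigl(y^0_2 + A^{(21)}X^{t}_1\bigr) \;=\; y^{t}_1 + \eta\,A^{(12)} y^{t}_2,
\end{align*}
and the Symplectic update for agent~2 becomes
\begin{align*}
y^{t+1}_2 \;=\; y^{t}_2 + \eta\,A^{(21)}\nabla h^*_1\bigl(y^0_1 + A^{(12)}X^{t+1}_2\bigr) \;=\; y^{t}_2 + \eta\,A^{(21)} y^{t+1}_1.
\end{align*}

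Next, I would translate these identities back into the strategy variables. Substituting $x^{t-1}_1 = y^{t}_1$, $x^{t-1}_2 = y^{t-1}_2$, $x^{t}_2 = y^{t}_2$, and $x^{t}_1 = y^{t+1}_1$, the first identity yields $x^{t}_1 = x^{t-1}_1 + \eta\,A^{(12)} x^{t}_2$, and the second identity (after shifting $t \mapsto t-1$) yields $x^{t}_2 = x^{t-1}_2 + \eta\,A^{(21)} x^{t-1}_1$. Because $A^{(21)} = -A^{\top}$ and $y^{t-1}_2 = A^{\top} x^{t-1}_1$, while $y^{t}_1 = A x^{t}_2$, these are exactly
\begin{align*}
x^{t}_2 \;=\; x^{t-1}_2 - \eta\, y^{t-1}_2, \qquad x^{t}_1 \;=\; x^{t-1}_1 + \eta\, y^{t}_1,
\end{align*}
which is the update rule (\ref{Alt GDAof1}).

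The conceptually delicate step—and the one I would emphasize in the write-up—is the order of updates: in (\ref{agent 2 Symplectic discretize equation}) the Type~\uppercase\expandafter{\romannumeral2} scheme evaluates $\nabla h^*_1$ at $X^{t+1}_2$ (so via Lemma~\ref{cpayoff} at $y^{t+1}_1$, i.e.\ using Player~1's \emph{current} strategy $x^{t}_1$), while in (\ref{agent 1 Symplectic discretize equation}) the Type~\uppercase\expandafter{\romannumeral1} scheme evaluates $\nabla h^*_2$ at $X^{t}_1$ (i.e.\ at $y^{t}_2$, corresponding to Player~2's \emph{just updated} strategy $x^{t}_2$). This asymmetric pairing of Type~\uppercase\expandafter{\romannumeral1} and Type~\uppercase\expandafter{\romannumeral2} methods on the two Hamiltonian systems is precisely what produces the alternating order ``Player~2 first, then Player~1'' of AltGDA; the remaining computation is a direct substitution parallel to Lemma~\ref{3845-}.
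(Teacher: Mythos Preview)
Your proof is correct and follows essentially the same approach as the paper's: identify $x^t_1 = y^{t+1}_1$, $x^t_2 = y^t_2$ via (\ref{eunabla}) and (\ref{sss}), then unwind the Symplectic updates using Lemma~\ref{cpayoff} to recover the (\ref{Alt GDAof1}) recursion. Your version is in fact more explicit than the paper's (which compresses the Lemma~\ref{cpayoff} substitution into a single line and contains a couple of typos in the agent~2 computation), and your closing paragraph on the Type~\uppercase\expandafter{\romannumeral1}/Type~\uppercase\expandafter{\romannumeral2} asymmetry is a helpful addition not present in the original.
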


\begin{proof}

For agent 1, we have
\begin{align}
x^t_1  \overset{(\ref{sss})}{=} \nabla h^*_1(y_1^{t+1}) \overset{(\ref{eunabla})}{=} y^{t+1}_1  = x_1^{t-1} + \eta \cdot A^{(12)}x_2^t.
\end{align}

For agent 2, we have

\begin{align}
x^t_2  \overset{(\ref{sss})}{=} \nabla h^*_1(y_2^{t}) \overset{(\ref{eunabla})}{=} y^{t}_2  = x_2^{t-1} + \eta \cdot A^{(21)}x_2^{t-1}.
\end{align}
Thus we have shown the update rule of $(x^t_1,x^t_2)$ is same as (AltGDA).

\end{proof}

\section{Proof of Section \ref{DE in FTRL}}\label{Proof of Section DE in FTRL}
In this appendix we prove results in Section \ref{DE in FTRL}. Proposition \ref{entropys} is proved in Section \ref{Proof of Proposition entropys}, and Proposition \ref{entropya} is proved
in Section \ref{Proof of Proposition entropya}. In fact, we prove a more general result, which states that when two players choose arbitrary regularizers that satisfies strongly convex and Lipschitz gradient condition except a bounded region on the domain, then the differential entropy of Euler discretization has linear growth rate, while  differential entropy of Symplectic discretization keeps constant. Note that both Euclidian norm regularizer and entropy regularizer satisfy these conditions, for example, entropy regularizer is 1-strongly convex on the interior points of simplex and has Lipschitz gradient except an arbitrary small neighbourhood of zero point.

The main technical lemma for proving Proposition \ref{entropys}  is Lemma \ref{smallstep}, which states for sufficient small step size, the update rule of Euler discretization of FTRL is an injective map. This injective property is necessary for calculating the evolution of differential entropy, see Lemma \ref{entropyevo}. The proof of Proposition \ref{entropya} is easier, as the symplectic discretization is naturally an injective map.

\subsection{Evolution of differential entropy under diffeomorphism}
The following result and its proof are informally stated in \citep{CPT2022}, for convenience of applying their statement later, we formulate it into a lemma as follows.
\begin{lem}\label{entropyevo}
Let $X \in \BR^d$ be a random vector with probability density function $g(x)$ and the support set of  $g(x)$ is $\mathcal{X}$. Assume $f : \BR^d \to \BR^d$ be a diffeomorphism, thus $f(X)$ is a random vector. Then we have
\begin{align}\label{formulaent}
S(f(X)) = S(X) + \int_{\mathcal{X}} g(x) \log \left( \lvert \det J_f(x) \lvert \right) dx
\end{align}
where $J_f(x)$ is the Jacobian matrix of $f$ at point $x \in \BR^d$.
\end{lem}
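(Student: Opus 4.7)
The plan is to apply the standard change-of-variables formula to obtain the density of $Y = f(X)$, and then unwind the definition of differential entropy via a back-substitution. Concretely, since $f$ is a diffeomorphism, the pushforward density of $Y$, supported on $\mathcal{Y} = f(\mathcal{X})$, is
\begin{equation*}
h(y) \;=\; g(f^{-1}(y)) \,\bigl|\det J_{f^{-1}}(y)\bigr| \;=\; \frac{g(f^{-1}(y))}{\bigl|\det J_f(f^{-1}(y))\bigr|},
\end{equation*}
where the second equality uses $J_{f^{-1}}(y) = J_f(f^{-1}(y))^{-1}$.

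Next I would plug $h$ into the definition of differential entropy, $S(f(X)) = -\int_{\mathcal{Y}} h(y) \log h(y)\, dy$, and then perform the substitution $y = f(x)$, noting that $dy = |\det J_f(x)|\, dx$ and that as $y$ ranges over $\mathcal{Y}$, $x$ ranges over $\mathcal{X}$. Substituting yields
\begin{equation*}
S(f(X)) \;=\; -\int_{\mathcal{X}} \frac{g(x)}{|\det J_f(x)|}\, \log\!\left(\frac{g(x)}{|\det J_f(x)|}\right) |\det J_f(x)|\, dx,
\end{equation*}
and the two Jacobian factors outside the logarithm cancel.

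Finally, I would split the logarithm as $\log g(x) - \log|\det J_f(x)|$ and distribute the integral. The first piece is exactly $-\int_{\mathcal{X}} g(x)\log g(x)\, dx = S(X)$, while the second is $\int_{\mathcal{X}} g(x) \log|\det J_f(x)|\, dx$, delivering the claimed identity. I do not anticipate a real obstacle, since $f$ being a diffeomorphism guarantees both the validity of the substitution and the invertibility of $J_f$ so that $\log|\det J_f(x)|$ is well defined wherever $g(x) > 0$; the only minor technical care is to verify that the integrals on both sides are absolutely convergent (which is implicit in the hypothesis that $S(X)$ and $S(f(X))$ exist) so that Fubini/linearity of the integral applies to the split.
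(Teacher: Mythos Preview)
Your proposal is correct and follows essentially the same route as the paper: write the pushforward density via the change-of-variables formula, substitute $y=f(x)$ back into the entropy integral, use the inverse function theorem to cancel the Jacobian factors outside the logarithm, and split the log into the $S(X)$ term and the Jacobian term. The only cosmetic difference is that you rewrite $|\det J_{f^{-1}}(y)|$ as $1/|\det J_f(f^{-1}(y))|$ up front, whereas the paper carries $|\det J_{f^{-1}}|$ through the computation and invokes the inverse function theorem at the end.
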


\begin{proof}
Denote $Y = f(X)$, and let $\widehat{g}(Y)$ represent the probability density function of $Y$, and $\mathcal{Y}$ be the support set of $Y$. 

Then we have
\begin{align}
S(Y) & = S(f(X)) = - \int_{\mathcal{Y}} \widehat{g}(y) \cdot \log \left( \widehat{g}(y) \right) dy \\
& = - \int_{\mathcal{Y}} g \left( f^{-1}(y) \right) \lvert \det J_{f^{-1}} (y) \lvert \cdot \log \left(  g(f^{-1}(y) ) \lvert \det J_{f^{-1}} (y) \lvert \right) dy \\
& = - \int_{\mathcal{X}} g(x) \lvert \det J_{f^{-1}} \left(  f(x) \right) \lvert \cdot \log \left(  g( x ) \lvert \det J_{f^{-1}} ( f(x) ) \lvert \right) \cdot  \lvert \det J_f(x) \lvert dx \\
& = - \int_{\mathcal{X}} g(x) \cdot \log \left(  g( x ) \lvert \det J_{f^{-1}} ( f(x) ) \lvert \right)  dx \label{ddddd} \\
& =  - \int_{\mathcal{X}} g(x) \log \left(  g(x)  \right) dx -  \int_{\mathcal{X}} g(x) \log \left(  \lvert   \det J_{f^{-1}} \left( f(x) \right)  \lvert  \right) dx \\
& = S(X)  + \int_{\mathcal{X}} g(x) \log \left(   \lvert \det J_f(x) \lvert  \right) dx,
\end{align}
where (\ref{ddddd}) comes from the inverse function theorem, which states 
\begin{align}
J_{f^{-1}} \left( f(x) \right) = \left( J_f(x) \right)^{-1}.
\end{align}
\end{proof}

\subsection{Technical lemmas for Proposition \ref{entropys}}
We first present several lemmas used later. 

\begin{lem}[Corollary 2.2 and 2.3 of \citep{hong1991jordan}]\label{ABeig}
Let $A,B \in \BR^{n \times n}$ be symmetry and positive semidefinite. Then $AB$ is diagonalizable and has nonnegative eigenvalues. Moreover, if $A$ is positive definite, then
the number of positive eigenvalues, negative eigenvalues, and $0$ eigenvalues of $AB$ are the same as $B$.
\end{lem}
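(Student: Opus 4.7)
The plan is to exploit the standard similarity trick $AB \sim A^{1/2} B A^{1/2}$. Since $A$ is symmetric and PSD, it admits a symmetric PSD square root $A^{1/2}$. Writing $AB = A^{1/2}(A^{1/2} B)$, the non-zero spectrum of $AB$ agrees with that of $(A^{1/2} B) A^{1/2} = A^{1/2} B A^{1/2}$ by the well known fact that $MN$ and $NM$ share non-zero eigenvalues with multiplicity. The latter matrix is visibly symmetric, and for any $v\in\BR^n$ one has $v^{\top} A^{1/2} B A^{1/2} v = (A^{1/2} v)^{\top} B (A^{1/2} v) \ge 0$, so it is PSD. Thus all eigenvalues of $AB$ are real and non-negative.

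When $A$ is positive definite, $A^{1/2}$ is invertible and the factorization $AB = A^{1/2}\bigl(A^{1/2} B A^{1/2}\bigr) A^{-1/2}$ is an honest similarity. Similarity preserves both diagonalizability and the full list of eigenvalues with multiplicities, so $AB$ is diagonalizable and has the same eigenvalues as the symmetric PSD matrix $A^{1/2} B A^{1/2}$. The inertia claim then follows from Sylvester's law of inertia applied to the congruence $B \mapsto A^{1/2} B A^{1/2}$ via the nonsingular matrix $A^{1/2}$: the counts of positive, negative, and zero eigenvalues of $A^{1/2} B A^{1/2}$ coincide with those of $B$, and therefore with those of $AB$.

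The main obstacle is the diagonalizability statement when $A$ is only PSD (not PD), since then $A^{1/2}$ is non-invertible and the similarity argument collapses. I would handle this via an orthogonal block decomposition $A = U\,\diag(\Lambda_{11},0)\,U^{\top}$ with $\Lambda_{11}$ positive definite of size $r=\mathrm{rank}(A)$; conjugating, the problem reduces to analyzing the block-upper-triangular matrix $\bigl(\begin{smallmatrix}\Lambda_{11}C_{11} & \Lambda_{11}C_{12}\\ 0 & 0\end{smallmatrix}\bigr)$, where $C = U^{\top} B U$ is PSD and so $C_{11}$ is PSD. The PD case already gives $\Lambda_{11}C_{11}$ diagonalizable with nonnegative spectrum, so one only needs $n-r$ further linearly independent eigenvectors for the eigenvalue $0$; these can be exhibited explicitly as vectors of the form $(-\Lambda_{11}^{-1}C_{12}w,\,w)$ for $w\in\BR^{n-r}$, together with the zero-eigenspace of $\Lambda_{11}C_{11}$ embedded in the top block.

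Since the statement is quoted verbatim as Corollaries~2.2 and~2.3 of \citep{hong1991jordan}, the cleanest presentation is to defer the full argument to that reference and retain here only the similarity reduction sketched above as the conceptual core.
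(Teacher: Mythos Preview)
The paper does not prove this lemma; it simply cites Corollaries 2.2 and 2.3 of \citep{hong1991jordan}, as you correctly observe at the end. Your similarity reduction $AB=A^{1/2}(A^{1/2}BA^{1/2})A^{-1/2}$ in the positive-definite case, together with Sylvester's law of inertia applied to the congruence $B\mapsto A^{1/2}BA^{1/2}$, is the standard and correct argument for that part, and the $MN$--$NM$ spectrum trick cleanly handles nonnegativity of the eigenvalues in the general PSD case.

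Your sketch for diagonalizability when $A$ is merely PSD has a concrete bug, however. The proposed null vectors $(-\Lambda_{11}^{-1}C_{12}w,\,w)$ are \emph{not} in the kernel of
\[
\begin{pmatrix}\Lambda_{11}C_{11} & \Lambda_{11}C_{12}\\ 0 & 0\end{pmatrix}:
\]
applying the matrix yields first block $\Lambda_{11}\bigl(C_{12}w - C_{11}\Lambda_{11}^{-1}C_{12}w\bigr)$, which vanishes only when $C_{11}\Lambda_{11}^{-1}=I$. (Take $\Lambda_{11}=\diag(2,1)$, $C_{11}=I_2$, $C_{12}=(1,0)^{\top}$ for a two-line counterexample.) The correct ansatz is $(-C_{11}^{+}C_{12}w,\,w)$ with $C_{11}^{+}$ the Moore--Penrose pseudoinverse, and for this to lie in the kernel you need $C_{12}w\in\mathrm{range}(C_{11})$ for every $w$. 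That range inclusion is precisely where the positive-semidefiniteness of the full block matrix $C$ is used: for any PSD block matrix one has $\ker C_{11}\subseteq\ker C_{12}^{\top}$, hence $\mathrm{range}(C_{12})\subseteq\mathrm{range}(C_{11})$. Since you ultimately defer to the cited reference this gap is inconsequential for the paper, but if you keep the sketch you should replace $\Lambda_{11}^{-1}$ by $C_{11}^{+}$ and insert the range-containment argument.
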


\begin{lem}\label{AAtop}
If $A \in \BR^{n \times n}$ is a symmetry matrix, and $\lambda$ is an eigenvalue of $A$, then $\lambda^2$ is an eigenvalue of $A^2$.
\end{lem}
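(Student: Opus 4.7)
The plan is to give the direct eigenvector-based proof. This is a standard and elementary linear algebra fact, and symmetry of $A$ is not actually required (only that $\lambda$ is an eigenvalue in the usual sense, i.e., there exists a nonzero vector $v$ with $Av = \lambda v$); the symmetry hypothesis is presumably stated because of how the lemma will be used later in combination with Lemma \ref{ABeig}.

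First I would take a nonzero eigenvector $v \in \mathbb{R}^n$ of $A$ associated to $\lambda$, so that $Av = \lambda v$. Then I would apply $A$ to both sides of this relation, obtaining
\[
A^2 v \;=\; A(Av) \;=\; A(\lambda v) \;=\; \lambda (Av) \;=\; \lambda^2 v.
\]
Since $v \neq 0$, this exhibits $v$ as an eigenvector of $A^2$ with eigenvalue $\lambda^2$, proving the claim.

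There is essentially no obstacle here: the argument is a one-line computation. If one wanted to present the lemma slightly more robustly, I would add a short remark that the same reasoning shows more generally that for any polynomial $p(t) \in \mathbb{R}[t]$, $p(\lambda)$ is an eigenvalue of $p(A)$, with the same eigenvector $v$; this is the fact that will actually get used in the downstream covariance arguments, where powers of the transition matrix (or of $AA^\top$) appear. No new notation or macros are needed, and no case analysis on singularity of $A$ is required, since the argument is purely at the level of a single eigenpair.
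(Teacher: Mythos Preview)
Your proof is correct. It differs from the paper's argument: the paper uses the symmetry hypothesis to diagonalize, writing $PAP^{-1} = \mathrm{diag}(\lambda_1,\dots,\lambda_n)$ and then squaring to get $PA^2P^{-1} = \mathrm{diag}(\lambda_1^2,\dots,\lambda_n^2)$, which exhibits the full spectrum of $A^2$ at once. Your eigenvector argument is more elementary and, as you note, does not use symmetry at all; it proves exactly the stated claim with a one-line computation. The paper's route has the mild advantage of simultaneously describing \emph{all} eigenvalues of $A^2$ with multiplicity, which is what is actually used downstream when bounding the spectral norm of $J_\phi - I$; your approach gives the same conclusion one eigenvalue at a time, which is equally sufficient for that application.
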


\begin{proof}
Since $A$ is a symmetry matrix, there is an invertible matrix $P$ makes 
\begin{align}
P \cdot A \cdot P^{-1} = 
\begin{bmatrix}
\lambda_1 & & \\
& \ddots & \\
& & \lambda_n
\end{bmatrix},
\end{align}
where $\lambda_1 ,..., \lambda_n$ are eigenvalues of $A$. Thus
\begin{align}
P \cdot (A)^2 \cdot P^{-1} & = \left(    P \cdot A \cdot P^{-1} \right) \cdot \left(    P \cdot A \cdot P^{-1} \right) \\
& = 
\begin{bmatrix}
(\lambda_1)^2 & & \\
& \ddots & \\
& & (\lambda_n)^2
\end{bmatrix},
\end{align}
this implies $\{ \lambda^2_i \}$ are eigenvalues of $A^2$.
\end{proof}

The following lemma is the standard Fenchel duality property, a proof can be found in  Theorem 1 \citep{zhou2018fenchel}.
\begin{lem}\label{dual}
Let $h : \mathcal{X} \to \BR$ be a $\mu$-strongly convex function with $L$-Lipschitz continuous gradient, let
\begin{align}
h^* (y) = \max_{x \in \mathcal{X}} \{ \langle x,y \rangle - h(x) \} 
\end{align}
be the convex conjugate of $h$, then we have 
\begin{itemize}
\item[(1)] $h^*$ is a $\frac{1}{L}$-strongly convex function.
\item[(2)] $h^*$ has  $\frac{1}{\mu}$-Lipschitz continuous gradient.
\end{itemize}
\end{lem}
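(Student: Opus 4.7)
My plan is to prove both parts by exploiting the standard duality between first-order properties of $h$ and $h^*$, with the key bridge being that $\nabla h^*$ is the inverse map of $\nabla h$ on the range of $\nabla h$. Concretely, under the stated assumptions $h$ is strictly convex and the $\arg\max$ defining $h^*(y)$ is attained uniquely; Danskin's theorem then gives $\nabla h^*(y) = \arg\max_{x \in \mathcal{X}} \{\langle x,y\rangle - h(x)\}$, and the first-order optimality condition $y = \nabla h(\nabla h^*(y))$ tells us $\nabla h^* = (\nabla h)^{-1}$. This reduces both claims to translating monotonicity properties of $\nabla h$ into monotonicity/Lipschitz properties of its inverse $\nabla h^*$.

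For part (2) (Lipschitz gradient of $h^*$), I would use the fact that $\mu$-strong convexity of $h$ is equivalent to $\mu$-strong monotonicity of $\nabla h$, namely $\langle \nabla h(x) - \nabla h(x'), x - x'\rangle \ge \mu \|x - x'\|^2$. Given $y,y'$, set $x = \nabla h^*(y)$ and $x' = \nabla h^*(y')$, so that $y = \nabla h(x)$ and $y' = \nabla h(x')$. Substituting and applying Cauchy--Schwarz to the left-hand side, I obtain
\begin{equation*}
\mu \|\nabla h^*(y) - \nabla h^*(y')\|^2 \le \langle y - y', \nabla h^*(y) - \nabla h^*(y')\rangle \le \|y - y'\|\cdot\|\nabla h^*(y) - \nabla h^*(y')\|,
\end{equation*}
which immediately yields $\|\nabla h^*(y) - \nabla h^*(y')\| \le \tfrac{1}{\mu}\|y - y'\|$.

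For part (1) ($\tfrac{1}{L}$-strong convexity of $h^*$), I would invoke the Baillon--Haddad co-coercivity property: $L$-Lipschitzness of $\nabla h$ together with convexity of $h$ implies $\langle \nabla h(x) - \nabla h(x'), x - x'\rangle \ge \tfrac{1}{L}\|\nabla h(x) - \nabla h(x')\|^2$. Using the same substitution $x = \nabla h^*(y)$, $x' = \nabla h^*(y')$ turns this into $\langle y - y', \nabla h^*(y) - \nabla h^*(y')\rangle \ge \tfrac{1}{L}\|y - y'\|^2$, i.e.\ $\tfrac{1}{L}$-strong monotonicity of $\nabla h^*$, which is equivalent to $\tfrac{1}{L}$-strong convexity of $h^*$. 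To recover a clean statement from strong monotonicity of the gradient, I would integrate along the segment $y_t = y' + t(y - y')$ to obtain the quadratic lower bound $h^*(y) \ge h^*(y') + \langle \nabla h^*(y'), y - y'\rangle + \tfrac{1}{2L}\|y - y'\|^2$.

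The main obstacle I foresee is a technical one rather than conceptual: the lemma is stated for $h$ defined on a possibly proper subset $\mathcal{X}$, and if $\mathcal{X}$ is bounded (as for the simplex in the entropy case), $h^*$ is finite on all of $\BR^n$ but the naive Baillon--Haddad application requires $\nabla h$ to be defined globally. The standard workaround is to extend $h$ by $+\infty$ outside $\mathcal{X}$ (making $h^*$ the conjugate of a proper closed convex function) and to restrict the analysis to $y,y'$ for which the $\arg\max$ lies in the interior of $\mathcal{X}$, so that the optimality condition $y = \nabla h(\nabla h^*(y))$ is valid. Under the paper's working hypotheses on entropy/Euclidean regularizers (where the relevant iterates stay in the relative interior), this substitution argument goes through and both parts follow cleanly.
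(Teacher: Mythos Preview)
Your proof outline is correct and follows the standard route via the inverse relationship $\nabla h^* = (\nabla h)^{-1}$, strong monotonicity for part (2), and Baillon--Haddad co-coercivity for part (1). The paper itself does not give a proof of this lemma at all; it simply states it as ``the standard Fenchel duality property'' and refers the reader to Theorem~1 of \citep{zhou2018fenchel}. So you have supplied more than the paper does, and what you have supplied is the usual argument one finds in that reference.
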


\begin{lem}\label{inj}
Let $f : \BR^n \to \BR^n$ be a differentiable function on a convex set $U \subset \BR^n$, and 
\begin{align}
\lVert J_f (x) - I  \lVert < 1
\end{align}
 for any $x \in U$, where $\lVert \cdot \lVert$ is the $L^2$-operator norm, then $f$ is an injective map.
\end{lem}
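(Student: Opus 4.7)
The natural strategy is contradiction together with the fundamental theorem of calculus applied to the displacement map $g(x) := f(x) - x$. Assume for contradiction that there exist distinct points $x_1, x_2 \in U$ with $f(x_1) = f(x_2)$. Then
\begin{align*}
g(x_2) - g(x_1) = (f(x_2) - f(x_1)) - (x_2 - x_1) = -(x_2 - x_1),
\end{align*}
so $\|g(x_2) - g(x_1)\| = \|x_2 - x_1\|$. The plan is to show that the hypothesis $\|J_f(x) - I\| < 1$ forces the strict opposite inequality, which is the desired contradiction.

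To establish the upper bound, since $U$ is convex, the segment $\gamma(t) = x_1 + t(x_2-x_1)$ for $t \in [0,1]$ lies in $U$. Because $J_g(x) = J_f(x) - I$, the fundamental theorem of calculus along this segment gives
\begin{align*}
g(x_2) - g(x_1) = \int_0^1 J_g(\gamma(t))\,(x_2 - x_1)\,dt,
\end{align*}
and passing to operator norms yields
\begin{align*}
\|g(x_2) - g(x_1)\| \le \left(\int_0^1 \|J_f(\gamma(t)) - I\|\,dt\right)\|x_2 - x_1\|.
\end{align*}
Combining with the identity $\|g(x_2)-g(x_1)\| = \|x_2-x_1\|$ (which would hold if $f$ were not injective) gives $\int_0^1 \|J_f(\gamma(t)) - I\|\,dt \ge 1$, contradicting the strict pointwise bound.

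\textbf{Main obstacle.} The delicate step is promoting the pointwise strict inequality $\|J_f(x) - I\| < 1$ into a strict inequality for the integral above. Pointwise strict inequality of a function by $1$ does not in general prevent the integral from equalling $1$ if the function approaches $1$ almost everywhere. The way to overcome this is to use continuity of $J_f$ (which I would either assume as implicit in the applications of this lemma in Section~\ref{Proof of Section DE in FTRL}, or impose here) together with compactness of $\gamma([0,1])$: the continuous function $t \mapsto \|J_f(\gamma(t)) - I\|$ attains its maximum at some $t^\ast \in [0,1]$, and this maximum is strictly less than $1$, so the integral is bounded by a constant $c < 1$, completing the contradiction. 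If continuity of $J_f$ is not assumed, one can instead observe that on a ball around any point in the segment where $\|J_f(x)-I\|$ is, say, at most $1-\varepsilon$ for some small $\varepsilon$, the bound $c<1$ on the whole compact segment follows by a covering argument, which is the minor technical point to iron out in the final write-up.
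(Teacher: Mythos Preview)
Your approach is essentially the same as the paper's: introduce $g(x)=f(x)-x$, observe that $J_g=J_f-I$ has operator norm strictly below $1$ on $U$, and deduce $\|g(x_2)-g(x_1)\|<\|x_2-x_1\|$, which is incompatible with $f(x_1)=f(x_2)$. The paper reaches the strict bound via the mean value inequality: there exists $\zeta$ on the segment with $\|g(x_2)-g(x_1)\|\le\|J_g(\zeta)\|\,\|x_2-x_1\|<\|x_2-x_1\|$, so the strict inequality comes for free at a single point and no integral enters.

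Your worry about promoting the pointwise bound to a strict bound on $\int_0^1\|J_f(\gamma(t))-I\|\,dt$ is not actually an obstacle, and you do not need continuity of $J_f$. Since $f$ is differentiable, $J_f$ is a pointwise limit of difference quotients and hence Borel measurable; the composition $h(t):=\|J_f(\gamma(t))-I\|$ is therefore measurable on $[0,1]$ with $h(t)<1$ everywhere. For any such $h$, setting $A_n=\{t:h(t)\le 1-1/n\}$ gives $\bigcup_n A_n=[0,1]$, so some $A_N$ has positive measure and
\[
\int_0^1 h\le (1-\tfrac{1}{N})|A_N|+(1-|A_N|)=1-\tfrac{|A_N|}{N}<1.
\]
So your integral argument already yields the strict inequality without any extra regularity assumption; the compactness/covering discussion can be dropped.
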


\begin{proof}
Let $g(x) = f(x) - x$. Then for any $x \ne y$, we have
\begin{align}
\lVert f(x) - f(y) + y -x \lVert  & = \lVert g(x) - g(y) \lVert \\
& = \lVert J_g (\zeta) (x-y) \lVert \label{meanvalue} \\
& \le \lVert J_g (\zeta) \lVert \cdot \lVert x-y \lVert \\
& < \lVert x - y \lVert \label{mean1} ,
\end{align}
where (\ref{meanvalue}) use the mean value theorem, and (\ref{mean1}) is due to the fact that $\lVert J_g(\zeta) \lVert < 1$ for any $\zeta \in U$.  Thus $f(x) \ne f(y)$.
\end{proof}

\begin{lem}\label{smallstep}
 If the step size $\eta < \min \{\mu_1, \mu_2/ \lVert A^{(21)} \lVert^2 \}$, then the iterate map 
\begin{align}
\phi : (X_1^n, y_1^n) \to (X_1^{n+1}, y_1^{n+1})
\end{align} 
of Euler discretization of FTRL in Lemma \ref{Euler  discretization of FTRL} is an injective function.
\end{lem}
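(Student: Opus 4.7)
The plan is to reduce injectivity to a Jacobian-norm bound via Lemma \ref{inj}. First I would read off the Jacobian of $\phi$ from the update rule in Lemma \ref{Euler discretization of FTRL}: setting $z := y_2^0 + A^{(21)} X_1$, one has
\begin{equation*}
J_\phi(X_1, y_1) = \begin{pmatrix} I & \eta\, \nabla^2 h_1^*(y_1) \\ \eta\, A^{(12)} \nabla^2 h_2^*(z)\, A^{(21)} & I \end{pmatrix},
\end{equation*}
so $J_\phi - I$ is block anti-diagonal with off-diagonal blocks $B := \eta\, \nabla^2 h_1^*(y_1)$ and $C := \eta\, A^{(12)} \nabla^2 h_2^*(z)\, A^{(21)}$. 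For any $v=(v_1,v_2)$, a direct computation gives $\|(J_\phi-I)v\|^2 = \|Bv_2\|^2 + \|Cv_1\|^2 \le \max\{\|B\|,\|C\|\}^2\|v\|^2$, so $\|J_\phi - I\| \le \max\{\|B\|,\|C\|\}$ in the $L^2$ operator norm.

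Next I would bound each off-diagonal block using Lemma \ref{dual}: since $h_i$ is $\mu_i$-strongly convex, $\nabla h_i^*$ is $1/\mu_i$-Lipschitz, and hence $\|\nabla^2 h_i^*\|\le 1/\mu_i$ wherever the Hessian is defined. Together with the elementary identity $\|A^{(12)}\|=\|A^{(21)}\|$ (the two matrices are transposes up to sign, so share the largest singular value), this yields
$$\|B\|\le \frac{\eta}{\mu_1}, \qquad \|C\|\le \frac{\eta\,\|A^{(21)}\|^2}{\mu_2}.$$
Under the step-size assumption $\eta<\min\{\mu_1,\,\mu_2/\|A^{(21)}\|^2\}$, both quantities are strictly less than $1$, so $\|J_\phi - I\|<1$ uniformly on the convex domain where $\nabla h_i^*$ is smooth. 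Applying Lemma \ref{inj} on that domain then produces the injectivity claim.

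The main obstacle I anticipate is not the norm computation itself but the fact that the standard regularizers need not satisfy the Hessian bound globally: for the negative entropy, $\|\nabla^2 h_i^*\|$ blows up near the simplex boundary, so Lemma \ref{inj} cannot be applied on all of $\BR^{n}$. The remedy is to restrict $\phi$ to the open convex region on which $h_i$ is $\mu_i$-strongly convex with Lipschitz gradient (this is the ``bounded exceptional region'' mentioned in the setup); FTRL iterates starting from an interior sample remain in this region, and this is exactly the regime needed when invoking the change-of-variables formula of Lemma \ref{entropyevo} to propagate differential entropy through $\phi$.
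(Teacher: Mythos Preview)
Your proposal is correct and follows essentially the same route as the paper: compute the Jacobian of $\phi$, observe that $J_\phi - I$ is block anti-diagonal, bound its operator norm via the Hessian bound $\|\nabla^2 h_i^*\|\le 1/\mu_i$ from Lemma~\ref{dual}, and then invoke Lemma~\ref{inj}. The only difference is cosmetic: the paper bounds $\|J_\phi - I\|$ by forming $(J_\phi - I)^\top(J_\phi - I)$ and analysing its eigenvalues via Lemma~\ref{AAtop}, whereas your direct computation $\|(J_\phi-I)v\|^2=\|Bv_2\|^2+\|Cv_1\|^2$ reaches the same bound $\max\{\|B\|,\|C\|\}$ more quickly and without the auxiliary eigenvalue lemma.
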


\begin{proof}Recall the iterate map $\phi : (X_1^n, y_1^n) \to (X_1^{n+1}, y_1^{n+1}) $ can be written as an Euler discretization with the following form
\begin{align}
y^{n+1}_1 =  y^n_1 - \eta \cdot \frac{\partial H}{\partial X_1} (X^n_1,y^n_1) \\
X^{n+1}_1 = X^n_1 + \eta \cdot \frac{\partial H}{\partial y_1}(X^n_1,y^n_1)
\end{align}
and the Hamiltonian function has form
\begin{align}
H(X_1,y_1) = h^*_1(y_1)+h^*_2(y_2(0) + A^{(21)} X_1).
\end{align}

Note that $H_1(X_1,y_1)$ is separable, i.e., $h^*_1( \cdot )$ is independent with $X_1$ and $h^*_2( \cdot )$ is independent with $y_1$, thus we have

\begin{align}
\frac{\partial^2 H}{\partial X_1 \partial y_1} = 0,\ \frac{\partial^2 H}{\partial y_1 \partial X_1}= 0.
\end{align}

Next we calculate the Jacobin matrix of $\phi$,

\begin{align}
J_{\phi} = 
\begin{bmatrix}
\frac{\partial y^{n+1}_1}{\partial{y^n_1}} & \frac{\partial y^{n+1}_1}{\partial{X^n_1}}    \\
\\
\\
\frac{\partial X^{n+1}_1}{\partial{y^n_1}} & \frac{\partial X^{n+1}_1}{\partial{X^n_1}}
\end{bmatrix}
& =
\begin{bmatrix}
I  - \eta \frac{\partial^2 H}{\partial y_1 \partial X_1} (X^n_1,y^n_1)    & - \eta  \frac{\partial^2 H}{ \partial^2 X_1} (X^n_1,y^n_1)   \\
\\
\\
\\
\eta  \frac{\partial^2 H}{ \partial^2 y_1} (X^n_1,y^n_1) & I + \eta \frac{\partial^2 H}{\partial X_1 \partial y_1} (X^n_1,y^n_1) 
\end{bmatrix} \\
\nonumber \\
& =
\begin{bmatrix}
I & - \eta (A^{(21)})^{\top} \cdot  \nabla^2 h^*_2 \cdot A^{(21)}     \\
\\
\\
\eta \nabla^2 h^*_1 & I
\end{bmatrix} \label{jacphi},
\end{align}

and 

\begin{align}
J_{\phi} - I= 
\begin{bmatrix}
0 & - \eta (A^{(21)})^{\top}\cdot \nabla^2 h^*_2 \cdot A^{(21)} \\
\\
\\
\eta \nabla^2 h^*_1 & 0
\end{bmatrix}.
\end{align}
Since $h_i$ is $\mu_i$-strongly convex, by Lemma \ref{dual}, $h^*_i$ has $\frac{1}{\mu_i}$-Lipschitz continuous gradient, thus we have
\begin{align}\label{normnabla}
\lVert \nabla^2 h^*_i \lVert \le \frac{1}{\mu_i}
\end{align}
holds at arbitrary points within the domain of $h^*_i$.

Next we estimate the $L^2$-operator norm of the matrix $J_{\phi} - I$, since the $L^2$-operator norm is
equivalent to the spectral norm, we have

\begin{align}
\lVert J_{\phi} - I \lVert & = \sqrt{\lambda_{\max} \left(  (J_{\phi} - I)^{\top} \cdot  (J_{\phi} - I)  \right)},
\end{align}

and
\begin{align}
 (J_{\phi} - I)^{\top} \cdot  (J_{\phi} - I) & =
\eta^2 \cdot
\begin{bmatrix}
\left( \nabla^2 h^*_1 \right)^2 & 0 \\
\\
0 & \left( (A^{(21)})^{\top}\cdot \nabla^2 h^*_2 \cdot A^{(21)} \right)^2
\end{bmatrix}.
\end{align}

Since both $\nabla^2 h^*_1$ and $ (A^{(21)})^{\top}\cdot \nabla^2 h^*_2 \cdot A^{(21)}$ are symmetry matrix, thus by Lemma \ref{AAtop}, eigenvalues
of $(J_{\phi} - I)^{\top} \cdot  (J_{\phi} - I)$ has form $\eta^2 \lambda^2$, where $\lambda$ is an eigenvalue of
$\nabla^2 h^*_1$ or $ (A^{(21)})^{\top}\cdot \nabla^2 h^*_2 \cdot A^{(21)}$.

Note that $h^*_i$ is a function of $X_1,y_1$, thus $\lambda$ is also a function of $X_1,y_1$, and it is not clear whether there is an upper bound on $\lambda$ without more information on the Hessian matrix of $h^*_i$.  However, as we have shown in (\ref{normnabla}), the $L^2$-operator norm of $\nabla^2 h^*_i$ has an upper bound  $\frac{1}{\mu_i}$, thus we have
\begin{align}
 0 \le \lambda < \max \left\{ \frac{1}{\mu_1},   \frac{ \lVert A^{(21)} \lVert^2 }{\mu_2}   \right\}.
\end{align}
Thus we can choose  $\eta < \min \left\{ \mu_1,   \frac{ \mu_2}{\lVert A^{(21)} \lVert^2}   \right\}$ to make 
\begin{align}
\lVert  J_{\phi} - I  \lVert  < 1,
\end{align}
and by Lemma \ref{inj}, $\phi$ is an injective map.
\end{proof}

\subsection{Proof of Proposition \ref{entropys}}\label{Proof of Proposition entropys}
\begin{proof}
By Lemma \ref{smallstep}, the iterate map of simultaneous FTRL is an diffeomorphism, thus we can use Lemma \ref{entropyevo} to calculate the evolution of differential entropy in simultaneous FTRL. We firstly prove differential entropy is a non-decrease function.

Recall (\ref{ham1}), the Hamiltonian function of FTRL is
\begin{align}
H(X^t_1,y^t_1 ) = h^*_1(y^t_1) + h^*_2( y^0_2 + A^{(21)}X^t_1 ),
\end{align}
and the iterate map $\phi : (y_1^n, X^1_n) \to (y_1^{n+1}, X^{n+1}_1) $ of simultaneous FTRL can be written as Euler discretization of continuous FTRL, i.e., 
\begin{align}
y^{t+1}_1 &= y^t_1 - \eta \cdot \frac{\partial H}{\partial X_1}( X^t_1,y^t_1  )\\
X^{t+1}_1&= X^t_1 + \eta \cdot \frac{\partial H}{\partial y_1}( X^t_1,y^t_1  ).
\end{align}

Recall from (\ref{jacphi}), the jacobian map of  $\phi$ is
\begin{align}
J_{\phi} (X,y)= 
\begin{bmatrix}
I & - \eta (A^{(21)})^{\top} \cdot  \nabla^2 h^*_2(X,y) \cdot A^{(21)}     \\
\\
\\
\eta \nabla^2 h^*_1(X,y) & I
\end{bmatrix},
\end{align}

thus

\begin{align}
\det \left( J_{\phi}(X,y) \right) = \det \left(I + \eta^2 \cdot \nabla^2 h^*_1(X,y)  \cdot (A^{(21)})^{\top} \cdot  \nabla^2 h^*_2(X,y) \cdot A^{(21)} \right).
\end{align}

Since  $ \nabla^2 h^*_1 $ and  $(A^{(21)})^{\top} \cdot  \nabla^2 h^*_2 \cdot A^{(21)}$ are both symmetry and positive semidefinite matrix, by Lemma \ref{ABeig}, their product is diagonalizable and has non-negative eigenvalues. Thus we have
\begin{align}
\det \left(J_{\phi}(X,y) \right) \ge 1.
\end{align} 
Combine this with (\ref{formulaent}), we have
\begin{align}
S(X^{t+1}_1,y^{t+1}_1) & = S\left(  \phi (X^{t+1}_1,y^{t+1}_1 ) \right) \\
& = S(X^{t}_1,y^{t}_1) + \int_{\mathcal{X}} g^t(X^{t}_1,y^{t}_1) \log \left( \lvert \det J_{\phi} ( X^{t}_1,y^{t}_1) \lvert \right) dX^t_1dy^t_1 \label{detja}\\
&\ge S(X^{t}_1,y^{t}_1) + \int_{\mathcal{X}} g^t(X^{t}_1,y^{t}_1) \log ( 1 ) dX^t_1dy^t_1 \label{sffhneikrjnvfg}\\
& = S(X^{t}_1,y^{t}_1),
\end{align}
where $g^t (\cdot)$ is the probability density function of $(X^{t}_1,y^{t}_1)$, and (\ref{sffhneikrjnvfg}) comes from  $\det \left(J_{\phi}(X,y) \right) \ge 1$. Thus $S(X^{t}_1,y^{t}_1)$ is a non-decreasing function.

Moreover, as $\log(1+x) > x/(1+x)$ for $x > 0$, thus from (\ref{detja}), to prove a linear growth rate of differential entropy, it is sufficient to prove a uniform lower bound on $\det \left(J_{\phi}(X,y) \right) $. With the assumption that $h_i(\cdot)$ has Lipschitz continuous gradient, by Lemma  \ref{dual},  $h_i^*(\cdot)$ is $\mu_i$-strongly convex, thus 
$\nabla^2 h^*_i(X,y) $ is positive definite, i.e., 
\begin{align}
\nabla^2 h^*_i(X,y) \succcurlyeq \mu_i I
\end{align}
for some $\mu_i > 0$. From Lemma \ref{ABeig}, with $A = \nabla^2 h^*_1(X,y) $ and $B = (A^{(21)})^{\top} \cdot  \nabla^2 h^*_2(X,y) \cdot A^{(21)}  $, we have 
\begin{align}
\eta^2 \cdot \nabla^2 h^*_1(X,y)  \cdot (A^{(21)})^{\top} \cdot  \nabla^2 h^*_2(X,y) \cdot A^{(21)} 
\end{align}
is a diagonalizable matrix, and there are all eigenvalues of $J_{\phi}(X,y)$ are real number and larger than $1$. Moreover, these eigenvalues has a uniform lower bound $1+c$, where $c$ is determined
by the strongly convex coefficients $\mu_i$ and the payoff matrix $A^{(21)}$.
\end{proof}

\subsection{Proof of Proposition \ref{entropya}}\label{Proof of Proposition entropya}

\begin{lem} \label{injinj}The update map $$(X_1^{t},y_1^{t}) \to  (X_1^{t+1},y_1^{t+1}) $$ from (\ref{agent 1 Symplectic discretize equation}) is an injective map.
\end{lem}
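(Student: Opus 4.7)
The plan is to prove injectivity by exhibiting an explicit left inverse of the update map. Looking at the Symplectic scheme in (\ref{agent 1 Symplectic discretize equation}),
\begin{align*}
y^{t+1}_1 &= y^{t}_1 + \eta A^{(12)} \nabla h^*_2 (y^0_2 + A^{(21)} X^{t}_1), \\
X^{t+1}_1 &= X^{t}_1 + \eta \nabla h^*_1(y^{t+1}_1),
\end{align*}
notice that the two equations have a triangular structure: the first equation determines $y_1^{t+1}$ from $(X_1^t, y_1^t)$, while the second then determines $X_1^{t+1}$ from $X_1^t$ and the already-computed $y_1^{t+1}$. The key observation is that this triangular structure can be inverted in reverse order.

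Concretely, given any image point $(X_1^{t+1}, y_1^{t+1})$, I would first rearrange the second equation to recover
\begin{equation*}
X_1^{t} = X_1^{t+1} - \eta \nabla h^*_1(y_1^{t+1}),
\end{equation*}
which is well defined since the right-hand side only involves the known image data. Having uniquely determined $X_1^t$, I would then rearrange the first equation to recover
\begin{equation*}
y_1^{t} = y_1^{t+1} - \eta A^{(12)} \nabla h^*_2\bigl(y^0_2 + A^{(21)} X^{t}_1\bigr),
\end{equation*}
which is again uniquely determined since $X_1^t$ is now known. This procedure produces a single preimage $(X_1^t, y_1^t)$ for every $(X_1^{t+1}, y_1^{t+1})$, so if two preimages mapped to the same image they would coincide, proving injectivity.

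I do not expect any real obstacle here: the argument relies only on the separability of the Hamiltonian $H(X_1,y_1) = h_1^*(y_1) + h_2^*(y_2(0)+A^{(21)}X_1)$, which ensures that each stage of the Symplectic update depends on only one of the two new variables. This separability is exactly what makes Type I Symplectic discretization explicit (rather than implicit) for FTRL, and the same structure makes its inverse explicit as well. An alternative route would be to compute the Jacobian, observe it has block-triangular form with identity diagonal blocks (hence determinant one), and invoke the inverse function theorem; but this would only yield local injectivity, whereas the explicit inversion above gives global injectivity in a single line of algebra, which is what subsequent arguments about volume preservation and differential entropy will require.
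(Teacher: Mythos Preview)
Your proposal is correct and follows essentially the same approach as the paper: exploit the triangular structure of the symplectic update to recover first $X_1^t = X_1^{t+1} - \eta \nabla h^*_1(y_1^{t+1})$ from the second equation, and then $y_1^t$ from the first equation using the already-recovered $X_1^t$. Your additional remarks on separability and the Jacobian alternative are accurate context, though the paper simply gives the explicit inversion without further commentary.
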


\begin{proof}Recall the update rule can be written as
\begin{align}
&y^{t+1}_1 =y^{t}_1 - \eta \frac{\partial H }{\partial X_1} (X^t_1,y^t_1)=y^{t}_1 + \eta  A^{(12)} \nabla h^*_2 (y^0_2 + A^{(21)} X^{t}_1) , \label{tt11}\\
&X^{t+1}_1 =X^{t}_1 + \eta  \frac{\partial H}{\partial y_1} (X^t_1,y^{t+1}_1)=X^{t}_1 + \eta \nabla h^*_1(y^{t+1}_1). \label{tt22}\\
\end{align}
Thus given  $(X_1^{t+1},y_1^{t+1}) $, we can directly find an unique $X^{t}_1 $ from (\ref{tt22}), i.e., $$X^{t}_1= X^{t+1}_1 - \eta \nabla h^*_1(y^{t+1}_1) .$$ Then use this $X^{t}_1$, we can also determine a unique $y^t_1$ from (\ref{tt11}), i.e., $$y^t_1 = y^{t+1}_1 - \eta  A^{(12)} \nabla h^*_2 (y^0_2 + A^{(21)} X^{t}_1).$$ This finish the proof.
\end{proof}
Now we are ready to prove  Proposition \ref{entropya}.
\begin{proof} Since the iterate map $\psi : (X_1^n,y_1^n) \to (X_1^{n+1},y_1^{n+1})$ in Symplectic discretization of FTRL defined \ref{Symplectic discretization of FTRL} in is naturally an injective map from Lemma \ref{injinj} , we can directly use lemma \ref{entropyevo}. We have
\begin{align}\label{fasdfa}
S(X_1^{n+1},y_1^{n+1}) - S(X_1^{n},y_1^{n})  = \int_{\mathcal{X}} g^n(X_1^n,y_1^n) \log \left( \lvert \det J_{\phi}(X_1^n,y_1^n) \lvert \right) dx
\end{align}
where $g^n( \cdot )$ is the probability density function of random vector $(X_1^n,y_1^n)$. 

Moreover, since $\psi$ is a symplectomorphism, we have
$
 \lvert \det J_{\psi}(X_1^n,y_1^n) \lvert = 1.
$
Thus  the right hand side of (\ref{fasdfa}) equals to $0$, and this implies $S(X_1^{n+1},y_1^{n+1}) = S(X_1^{n},y_1^{n}) $.
\end{proof}

\subsection{Numerical examples of Proposition Propositions \ref{entropys} and \ref{entropya}}

Although differential entropy plays important roles in several subjects, estimating the value of differential entropy under transformations is generally a challenging task. Even in the one-dimensional case, special methods need to be designed for calculating differential entropy \citep{hyvarinen1997new}. A recent review of this topic can be found in \citep{feutrill2021review}.

However, for the case of gradient descent, it is possible to calculate the variation of differential entropy due to the linear structure of the algorithm and the equality
\begin{align}
    S(AX) - S(X) = \log( \lvert \det(A) \lvert ).
\end{align}

In Figure 2, we present numerical experiments on the variation of differential entropy using game defined by
$A = [[1,-1],[-1,1]]$. Numerical results show differential entropy has a linear growth rate in simultaneous case and keeps invariant in alternating case, which support Propositions \ref{entropys} and \ref{entropya}.

\begin{figure}[H] 
\centering 
\includegraphics[width=0.7\textwidth]{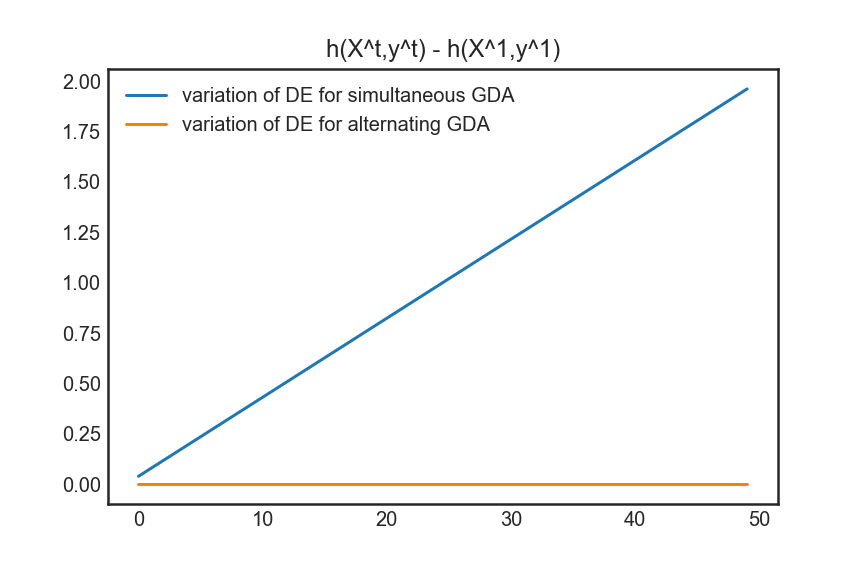} 
\caption{Variation of differential entropy for simultaneous and alternating GDA. The growth rate of the variation in DE is linear for simultaneous GDA, while it is $0$ for alternating GDA.} 
\label{f2} 
\end{figure}

\section{Proof of Theorem \ref{Covariance Evolution in FTRL with Euclidean Regularizer}}\label{proofer}

This appendix is divided into two parts. In Section \ref{Additional Backgrounds}, we presented necessary backgrounds from linear algebra, differential equation, and difference equation. In Section
\ref{ajfhaeblnljfrv}, we provide detailed prove of Theorem \ref{Covariance Evolution in FTRL with Euclidean Regularizer}.

\subsection{Additional Backgrounds }\label{Additional Backgrounds}
\subsubsection{Complex Jordan normal form}

We will consider the complex Jordan normal form of a real square matrix $A \in \BR^{d \times d}$. Let  $\Spec(A)$ be the set of eigenvalues of $A $. Consider $A$ acts on vector space $\BC^d$ as a linear operator. 

\begin{defn}[\textbf{Generalized Eigenvector}] \label{GE}
A vector  $v_m \in \BC^d$  is called a generalized eigenvector of \textbf{type} $m$ corresponding to the  eigenvalue $\mu$ if $$(A - \lambda I)^m v_m = 0$$ but  $$(A - \mu I)^{m-1} v_m \ne 0.$$
\end{defn}

\begin{defn}[\textbf{Jordan Chain}] \label{Jordanchian}
let $v_m$ be a generalized eigenvector of type m corresponding to the matrix $A$ and the eigenvalue $\mu$. The Jordan chain generated 
by $v_m$ is a set of m vectors $\{v_m, v_{m-1}, ..., v_1\}$ given by 
$$\begin{aligned}
v_{m-1} &= (A - \mu I)v_{m} \\
v_{m-2} &= (A - \mu I)^2v_{m}  =(A - \mu I)v_{m-1} \\
& ... \\
v_{1} &= (A - \mu I)^{m-1}v_{m} =(A - \mu I)v_{2}  \\
\end{aligned}$$
\end{defn}

\begin{rem}
If $\mu\in \BR$ is a real  eigenvalue of $A$, then the generalized eigenvectors of $\mu$ are also vectors over real numbers, and the Jordan chain are also made up by vectors over real numbers.
\end{rem}

\begin{prop}\label{Jordanchianlinear} A Jordan chain is a linearly independent set of vectors.
\end{prop}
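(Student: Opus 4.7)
The plan is to prove linear independence by a descending induction that successively peels off the top-index coefficient using powers of $A - \mu I$. Suppose a dependence relation
\[
c_1 v_1 + c_2 v_2 + \cdots + c_m v_m = 0
\]
holds with $c_k \in \mathbb{C}$. The key observation is that from the definition of the Jordan chain we have $v_k = (A - \mu I)^{m-k} v_m$ for each $k \in \{1, \ldots, m\}$, so applying $(A-\mu I)^{j}$ to $v_k$ produces $(A-\mu I)^{m-k+j} v_m$, which is zero whenever $m - k + j \ge m$, i.e.\ whenever $j \ge k$, and equals $v_1 \neq 0$ exactly when $m-k+j = m-1$.

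First I would apply $(A - \mu I)^{m-1}$ to the dependence relation. By the observation above, every term with $k < m$ becomes $(A-\mu I)^{\,2m-k-1} v_m = 0$ since $2m-k-1 \ge m$, while the $k=m$ term becomes $(A-\mu I)^{m-1} v_m = v_1 \neq 0$. This forces $c_m = 0$. Next I would apply $(A - \mu I)^{m-2}$ to the truncated relation $\sum_{k=1}^{m-1} c_k v_k = 0$: terms with $k \le m-2$ vanish for the same reason, and the $k = m-1$ term becomes $v_1 \neq 0$, forcing $c_{m-1} = 0$. Iterating this procedure (formally, a downward induction on the largest index $k$ with $c_k \neq 0$) kills the coefficients $c_m, c_{m-1}, \ldots, c_1$ one at a time.

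The proof is essentially a bookkeeping argument, and the only thing that requires care is making sure the exponents line up correctly so that $v_1$ (which is guaranteed to be nonzero by the type-$m$ condition $(A - \mu I)^{m-1} v_m \neq 0$) appears as the surviving term at each stage; this is exactly why the descending order of application is necessary. I do not anticipate any genuine obstacle—the nontrivial content lives entirely in the defining relation $(A-\mu I)^{m-1} v_m \neq 0$, which prevents the induction from collapsing at the last step.
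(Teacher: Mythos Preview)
Your proposal is correct and follows essentially the same approach as the paper: both arguments apply $(A-\mu I)^{m-1}$ to the dependence relation to kill all terms except the one involving $c_m$, deduce $c_m=0$ from $(A-\mu I)^{m-1}v_m\neq 0$, and then iterate downward. The only cosmetic difference is that you track exponents via $v_k=(A-\mu I)^{m-k}v_m$ whereas the paper uses the equivalent observation $(A-\mu I)^j v_j=0$ for $j\le m-1$; the logic is identical.
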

\begin{proof} Let $\{v_m,v_{m-1},...,v_{1}\}$ be a Jordan chain generated by a type $m$ generalized eigenvector $v_m$ corresponding to
an eigenvalue $\lambda$ of $A$, and consider the equation
\begin{align}\label{ld}
c_m v_m + c_{m-1}v_{m-1} + ... + c_1v_1 = 0 
\end{align}
We will show $c_m = c_{m-1} = ... = c_1 = 0$. 

Multiply \eqref{ld} by $(A - \mu I)^{m-1}$, and note that for $j \le m-1$

$$\begin{aligned}
(A - \mu I)^{m-1} c_j v_j & = c_j (A - \mu I)^{m-j-1}(A - \mu I)^j x_j \\
& = 0
\end{aligned}$$

Thus \eqref{ld} becomes to be $c_m (A - \mu I)^{m-1}v_m = 0$. However, since $v_m$ is a type $m$ generalized eigenvector, we have
$$(A - \mu I)^{m-1}v_m  \ne 0,$$ 
thus  $c_m = 0$. Continuing this process, we will finally obtain 
$c_m = c_{m-1} = ... = c_1 = 0$. 
\end{proof}

\begin{prop}[page 366 of \citep{bronson1991matrix} ] Every $d \times d$ matrix has $d$ linearly independent generalized eigenvectors. 
\end{prop}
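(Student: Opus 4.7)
The plan is to establish this proposition via the primary decomposition theorem applied to $A$ acting on $\mathbb{C}^d$, which is essentially a reformulation of the existence of the complex Jordan normal form. First, I would factor the characteristic polynomial over $\mathbb{C}$ as $p_A(x) = \prod_{i=1}^k (x - \mu_i)^{m_i}$, where $\mu_1,\ldots,\mu_k \in \Spec(A)$ are the distinct eigenvalues with algebraic multiplicities $m_i$ satisfying $\sum_i m_i = d$. For each $\mu_i$, define the generalized eigenspace $V_i := \ker\bigl((A - \mu_i I)^{m_i}\bigr)$; by Definition \ref{GE}, every nonzero element of $V_i$ is a generalized eigenvector of type at most $m_i$ corresponding to $\mu_i$.

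The core of the proof is the decomposition $\mathbb{C}^d = \bigoplus_{i=1}^k V_i$. Since the factors $(x-\mu_i)^{m_i}$ are pairwise coprime, a partial-fraction decomposition yields polynomials $q_i(x)$ with $\sum_i q_i(x) \prod_{j \neq i} (x-\mu_j)^{m_j} = 1$. Evaluating at $A$ and invoking Cayley--Hamilton, $p_A(A) = 0$, produces idempotent operators $\Pi_i := q_i(A)\prod_{j\neq i}(A-\mu_j I)^{m_j}$ that sum to the identity, commute with $A$, and satisfy $\mathrm{Im}(\Pi_i) \subseteq V_i$. This shows every $v \in \mathbb{C}^d$ admits a unique decomposition $v = \sum_i v_i$ with $v_i \in V_i$.

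To finish, I would argue that $\dim V_i = m_i$. Each $V_i$ is $A$-invariant (since $\Pi_i$ commutes with $A$), so the restriction $A|_{V_i}$ has a characteristic polynomial whose only root is $\mu_i$ (because $A - \mu_j I$ is invertible on $V_i$ for $j \neq i$, by the coprimality argument above). Thus this characteristic polynomial equals $(x-\mu_i)^{\dim V_i}$, and since the characteristic polynomial of $A$ is the product over $i$ of those of $A|_{V_i}$, matching degrees gives $\dim V_i = m_i$. Concatenating a basis of each $V_i$ then produces $\sum_i m_i = d$ linearly independent generalized eigenvectors, as desired.

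The main obstacle is the dimension count $\dim V_i = m_i$. The existence of the abstract direct-sum decomposition is relatively painless once one has Cayley--Hamilton and Bezout, but pinning down the precise dimension of each generalized eigenspace requires either the invariance-plus-characteristic-polynomial bookkeeping sketched above, or an independent construction of Jordan chains within each $V_i$ (which itself invokes Proposition \ref{Jordanchianlinear} to guarantee linear independence). In the latter approach, one proves inductively that for each $i$ one can exhibit $m_i$ linearly independent generalized eigenvectors by taking unions of Jordan chains whose top vectors lie in $V_i \setminus \ker((A-\mu_i I)^{m_i - 1})$; verifying that the union across all $i$ remains linearly independent uses the fact that generalized eigenvectors corresponding to distinct eigenvalues are linearly independent, a corollary of the projection argument above.
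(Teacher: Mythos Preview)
The paper does not supply its own proof of this proposition; it is stated with a citation to page~366 of Bronson's \emph{Matrix Methods} and used as a black box. Your proposed argument via the primary decomposition theorem (Cayley--Hamilton plus B\'ezout to build the spectral projections, then the dimension count $\dim V_i = m_i$ by restricting the characteristic polynomial) is a correct and standard route to the result, so there is no gap to point out. If anything, your write-up is more self-contained than what the paper requires, since the paper is content to quote the fact and move on.
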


Given a Jordan chain $\{v_1, v_2, ... , v_m\}$ of length $m$, by Proposition \ref{Jordanchianlinear}, we will get a subspace spans by $\{v_1, v_2, ... , v_m\}$. The linear operator $A : \BC^d \to \BC^d$ can acts on vectors' set  $\{v_1, v_2, ... , v_m\}$. 

Denote $ [v_1, v_2 , ...., v_m]$ to be the matrix consists of column vectors $\{v_1, v_2, ... , v_m\}$, then $A$ acts on $ [v_1, v_2 , ...., v_m]$ as

$$\begin{aligned}
A [v_1, v_2 , ...., v_m] & = [Av_1 ,Av_2, ..., Av_m]\label{rb1} \\
\\
& = [\mu v_1 , v_1 + \mu v_2 ,..., v_{m-1} + \mu v_m] \\
\\
& = [v_1,v_2 , ..., v_m] 
\begin{bmatrix}
\mu& 1 & \\
&\mu&1 & \\
& & \ddots& \ddots \\
& & & \mu & 1\\
& & & & \mu\\
\end{bmatrix}.\label{rb2}
\end{aligned}$$

Thus we have

\begin{align}\label{jjj}
 [v_1, v_2 , ...., v_m]^{-1} A [v_1, v_2 , ...., v_m]
=
\begin{bmatrix}
\mu& 1 & \\
&\mu&1 & \\
& & \ddots& \ddots \\
& & & \mu & 1\\
& & & & \mu\\
\end{bmatrix}
\end{align}

which is a $m \times m$ upper triangular matrix, with eigenvalue $\mu$ on diagonal and  each non-zero off-diagonal entry equal to 1. Such an 
 upper triangular matrix is called a size m \textbf{Jordan block} of $A$ corresponding to eigenvalue $\mu$. 
  
 \begin{prop}[Page 367 of \citep{bronson1991matrix}]\label{cbs}
 Every $d \times d$ matrix $A$ has a set of $d \times d$ linearly independent generalized eigenvectors composed entirely of Jordan chains,  such a set of generalized eigenvectors  is a called a \textbf{canonical bases} of $A$.
 \end{prop}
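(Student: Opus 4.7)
The plan is to establish Proposition~\ref{cbs} in two stages: first reduce the general case to that of a nilpotent operator via the generalized eigenspace decomposition, then explicitly construct a canonical basis for the nilpotent case and paste together the results for each eigenvalue.

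For the reduction, I would invoke the decomposition $\mathbb{C}^d = \bigoplus_{\mu \in \Spec(A)} V_\mu$, where $V_\mu = \ker((A-\mu I)^d)$ is the generalized eigenspace attached to $\mu$. This follows from the Cayley--Hamilton theorem together with the Chinese Remainder Theorem applied to the coprime factorization of the characteristic polynomial $\prod_\mu (x-\mu)^{m_\mu}$ over $\mathbb{C}[x]$. Each $V_\mu$ is $A$-invariant, so stacking canonical bases across eigenvalues will give a canonical basis of $\mathbb{C}^d$ (linear independence across blocks is automatic from the direct sum). Thus it suffices to show that on each $V_\mu$ the nilpotent operator $N_\mu := (A - \mu I)|_{V_\mu}$ admits a basis consisting of Jordan chains for $N_\mu$; such a basis is then automatically a union of Jordan chains for $A$ corresponding to $\mu$, by Definition~\ref{GE} and Definition~\ref{Jordanchian}.

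For the nilpotent step, fix $N$ nilpotent on $V$ and let $k$ be minimal with $N^k = 0$. I would work with the ascending flag $\{0\} = \ker N^0 \subsetneq \ker N \subsetneq \cdots \subsetneq \ker N^k = V$ and build the basis top-down. Pick a vector-space complement $W_k$ of $\ker N^{k-1}$ in $\ker N^k$; any basis of $W_k$ consists of type-$k$ generalized eigenvectors, and each starts a Jordan chain of length $k$ under repeated application of $N$. The key lemma I would prove is that $N$ induces an injection
\begin{equation*}
\overline{N}_j:\ \ker N^{j}/\ker N^{j-1}\ \hookrightarrow\ \ker N^{j-1}/\ker N^{j-2},
\end{equation*}
whose verification is short: if $Nv \in \ker N^{j-2}$ then $N^{j-1}v = 0$, so $v \in \ker N^{j-1}$. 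Using this, the image of $W_k$ under $N$ is linearly independent modulo $\ker N^{k-2}$; extend it to a complement $W_{k-1}$ of $\ker N^{k-2}$ in $\ker N^{k-1}$, with the new vectors starting fresh Jordan chains of length $k-1$. Iterating down through $W_{k-2}, \ldots, W_1$ yields a disjoint union of Jordan chains. Proposition~\ref{Jordanchianlinear} guarantees internal linear independence of each chain, and a dimension count $\sum_{j=1}^{k} j \cdot \dim W_j = \dim V$ (since $\dim W_j$ equals the multiplicity of Jordan blocks of size $\geq j$ minus those of size $\geq j+1$, summed via telescoping through the flag) shows the vectors exhaust $V$.

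The main obstacle is precisely the injectivity statement $\overline{N}_j$ together with the compatibility of the top-down extension: at each step one must check that the newly chosen vectors, taken jointly with the $N$-images of vectors from higher levels, are linearly independent modulo the next lower kernel. I expect the cleanest way to bookkeep this is to phrase the construction as successively choosing $W_j$ so that $\ker N^{j} = \ker N^{j-1} \oplus N(W_{j+1}) \oplus W_j'$, where $W_j'$ captures the Jordan chains that first appear at length $j$. Once this decomposition is produced by reverse induction on $j$, the canonical basis of Proposition~\ref{cbs} is assembled by concatenating all the Jordan chains generated from the vectors in $\bigcup_j W_j'$.
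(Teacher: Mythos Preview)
The paper does not supply its own proof of Proposition~\ref{cbs}; it merely cites the result from Bronson's textbook. Your proposal is therefore not being compared against an in-paper argument but stands on its own, and the two-stage strategy you outline (primary decomposition into generalized eigenspaces, then a top-down construction of Jordan chains for each nilpotent restriction via the kernel flag) is the standard and correct proof.

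One small bookkeeping point: your displayed dimension count $\sum_{j=1}^{k} j \cdot \dim W_j = \dim V$ is not right with $W_j$ defined as the full complement of $\ker N^{j-1}$ in $\ker N^{j}$, since then $\sum_j \dim W_j = \dim V$ already by telescoping. The correct count is $\sum_j j \cdot \dim W_j' = \dim V$, where $W_j'$ is the subspace of \emph{new} chain heads at level $j$ (those not in the image of $N$ from level $j{+}1$). You essentially say this in your final paragraph with the decomposition $\ker N^{j} = \ker N^{j-1} \oplus N(W_{j+1}) \oplus W_j'$, so the fix is just to harmonize the notation between the two paragraphs. With that adjustment, the argument goes through cleanly.
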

 
\begin{defn}[\textbf{Generalized modal matrix}] \label{gmm}
Let $A$ be an $d \times d$ matrix. A generalized modal matrix $M$ for $A$ 
is a $d \times d$ matrix whose columns, considered as vectors, form a canonical basis for $A$ and appear in $M$ according to the 
following rules:
\begin{itemize}
\item[(1)] All vectors of the same chain appear together in adjacent columns of $M$.
 \item[(2)] Each chain appears in $M$ in order of increasing type.
\end{itemize}
\end{defn}

\begin{rem}
The Jordan chain corresponding to a real eigenvalue $\mu \in \BR$ of $A$ will composed by vectors in $\BR^d$, but if $\mu \in \BC$ is a complex eigenvalue of $A$, then the Jordan chain corresponding to $\mu$ will contain
vectors in $\BC^d$.
\end{rem}

Combine \eqref{jjj} and Proposition \ref{cbs}, we have following proposition.
\begin{prop}\label{CJF}Any matrix $A \in \BR^{d \times d}$ is similar to a matrix in Jordan normal form under the similarity transformation of a generalized modal matrix $M$ of $A$, i.e., 
$$J^{\BC} = M^{-1}AM$$
has block diagonal form $J^{\BC} = \oplus_{i} J_i  $ with Jordan blocks $J_i$ given with $\mu \in \Spec(A)$  by
$$J_{i} = \begin{bmatrix}
\mu & 1 & \\
&\mu&1 & \\
& & \ddots& \ddots \\
& & & \mu & 1\\
& & & & \mu\\
\end{bmatrix}.$$

The Jordan normal form is unique up to the order of the Jordan blocks.
\end{prop}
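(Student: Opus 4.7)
The plan is to split the proof into an existence part, where the similarity transformation is constructed explicitly from a canonical basis, and a uniqueness part, where each Jordan block size is identified as a similarity invariant of $A$. Because Proposition \ref{cbs} already guarantees a canonical basis and equation (\ref{jjj}) already describes how $A$ acts on a single Jordan chain, the whole argument reduces to careful bookkeeping of how these chains assemble into a block-diagonal matrix.

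For existence, I would first invoke Proposition \ref{cbs} to extract a canonical basis of $\BC^d$ consisting of $d$ linearly independent generalized eigenvectors that partition into Jordan chains $\CC_1, \ldots, \CC_k$, with $\CC_i = \{v_1^{(i)}, \ldots, v_{m_i}^{(i)}\}$ corresponding to some eigenvalue $\mu_i \in \Spec(A)$ and having length $m_i$, where $\sum_{i=1}^{k} m_i = d$. I then form the generalized modal matrix $M$ exactly as in Definition \ref{gmm}: place the vectors of each $\CC_i$ in adjacent columns ordered by increasing type, and list the chains in any fixed order. Linear independence of the canonical basis makes $M$ invertible, and applying (\ref{jjj}) one chain at a time yields
\begin{equation*}
M^{-1} A M \;=\; J_1 \oplus J_2 \oplus \cdots \oplus J_k,
\end{equation*}
where each $J_i$ is a size-$m_i$ Jordan block with eigenvalue $\mu_i$. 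This establishes the existence part.

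For uniqueness, the key observation is that the number of Jordan blocks of size exactly $\ell$ attached to an eigenvalue $\mu$ can be recovered from the dimensions of the iterated kernels of $A - \mu I$. Setting $n_\ell(\mu) := \dim \ker (A-\mu I)^{\ell}$, the count of size-$\ell$ Jordan blocks for $\mu$ equals $n_{\ell-1}(\mu) - 2\, n_\ell(\mu) + n_{\ell+1}(\mu)$, as one sees by inspecting how iterated powers act on a single Jordan block. Since $\dim \ker (M B M^{-1})^{k} = \dim \ker B^{k}$ for any invertible $M$, these block counts depend only on the similarity class of $A$, so any two Jordan normal forms of $A$ must agree as unordered multisets of blocks.

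The main obstacle is notational rather than conceptual: one must keep the ordering convention inside each chain consistent so that $A v_{j}^{(i)} = \mu_i v_{j}^{(i)} + v_{j-1}^{(i)}$ (with $v_0^{(i)} := 0$), which is what produces the superdiagonal $1$'s in the Jordan blocks rather than subdiagonal ones; once this convention from Definition \ref{gmm} is fixed, both halves of the proposition follow directly from the prior results with essentially no additional calculation.
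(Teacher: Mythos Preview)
Your proposal is correct and follows essentially the same approach as the paper, which simply states that the proposition follows by combining equation~(\ref{jjj}) with Proposition~\ref{cbs}. You supply more detail than the paper does, and in particular you add a uniqueness argument via the kernel-dimension formula $n_{\ell-1}(\mu) - 2n_\ell(\mu) + n_{\ell+1}(\mu)$ that the paper omits entirely; this is a standard and valid way to handle the uniqueness claim.
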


We have the following proposition that determines the number of a particular size of Jordan blocks in Jordan normal form corresponding to an eigenvalue $\lambda$ .  

\begin{prop}[Page 368 of \citep{bronson1991matrix}]\label{ker}Let $\lambda$ be an eigenvalue of $A$, and denote
$$ m = \max\{ i | \ \ker (A - \mu I)^i  \varsupsetneqq \ker (A - \mu I)^{i-1}  \}. $$

Denote $\rho_k$ as the number of linear independent generalized eigenvectors of type $k$ corresponding to the eigenvalue $\mu$
that appear in a canonical basis for $A$, then

$$ \rho_k = \dim \ker(A - \mu I)^k - \dim \ker(A - \mu I)^{k-1}\  (k = 1,2,...,m). $$

Since every Jordan chains of length $k$ in a canonical basis gives a size $k$ Jordan block, $ \rho_k $ is also the number of size $k$ Jordan blocks in the complex Jordan normal form corresponding to $\lambda$.
\end{prop}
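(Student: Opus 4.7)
The plan is to reduce the statement to a dimension count organized by Jordan chains. Set $N_k := \ker(A-\mu I)^k$, so the hypothesis on $m$ gives a strict flag $\{0\}=N_0\subsetneq N_1\subsetneq\cdots\subsetneq N_m=N_{m+1}$. By Definition \ref{GE}, a vector is a generalized eigenvector of type exactly $k$ iff it lies in $N_k\setminus N_{k-1}$, so $\rho_k$ counts the canonical-basis elements living in $N_k\setminus N_{k-1}$. By Proposition \ref{cbs}, the canonical basis for the generalized eigenspace at $\mu$ is a disjoint union of Jordan chains; let $p_l$ denote the number of chains of length exactly $l$ (so $\sum_l p_l$ equals the total number of chains and $\sum_l l\,p_l$ equals the multiplicity of $\mu$).

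First, I would show that a single Jordan chain $\{v_1,v_2,\ldots,v_l\}$ meets the filtration in a predictable way: by construction of Definition \ref{Jordanchian}, $(A-\mu I)^i v_i = 0$ while $(A-\mu I)^{i-1}v_i\neq 0$, so $v_i\in N_i\setminus N_{i-1}$. Consequently the chain contributes exactly $\min(l,k)$ vectors (namely $v_1,\ldots,v_{\min(l,k)}$) to $N_k$, and it contributes exactly one vector of type $k$ to the canonical basis precisely when $l\geq k$.

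The crucial step is to promote this chain-by-chain count into the equality $\dim N_k = \sum_{l}\min(l,k)\,p_l$. The containment $\supseteq$ is immediate. For $\subseteq$, take any $v\in N_k$ and expand it in the full canonical basis as $v=\sum_{l,j,i}c_{l,j,i}\,v^{(l,j)}_i$, where $(l,j)$ indexes the $j$-th chain of length $l$ and $i\in\{1,\ldots,l\}$ indexes the position in that chain. Apply $(A-\mu I)^k$ to both sides: the left side is $0$, and by the shift relations in Definition \ref{Jordanchian} the right side becomes $\sum_{l,j,\,i>k}c_{l,j,i}\,v^{(l,j)}_{i-k}$, a linear combination of distinct canonical-basis vectors. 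Linear independence of the canonical basis forces $c_{l,j,i}=0$ for all $i>k$, so $v$ actually lies in the span of $\{v^{(l,j)}_i : i\leq k\}$. This proves $\dim N_k=\sum_l\min(l,k)\,p_l$.

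Finally, telescoping gives
\begin{equation*}
\dim N_k-\dim N_{k-1}=\sum_l\bigl[\min(l,k)-\min(l,k-1)\bigr]p_l=\sum_{l\geq k}p_l,
\end{equation*}
and the right-hand side is precisely the number of Jordan chains long enough to contain a type-$k$ vector, i.e.\ exactly $\rho_k$. This completes the identification $\rho_k=\dim\ker(A-\mu I)^k-\dim\ker(A-\mu I)^{k-1}$ for $k=1,\ldots,m$. The concluding sentence of the proposition follows by noting that each Jordan chain in the canonical basis corresponds one-to-one with a Jordan block in the normal form of Proposition \ref{CJF}, of the same size, so counting chains of length $\geq k$ amounts to counting blocks of size $\geq k$. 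The main obstacle is the span step in paragraph three; it is the one nontrivial ingredient and is exactly where the linear independence of the canonical basis has to be leveraged via the shift action of $(A-\mu I)^k$.
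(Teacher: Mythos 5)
The paper itself gives no proof of this proposition at all: it is quoted as a textbook fact from \citep{bronson1991matrix} (page 368) and then used as a black box in Proposition \ref{mpL} and Lemmas \ref{mu=1} and \ref{imvalue}, so there is no internal argument to compare yours against and your proof has to stand on its own. On its own terms the outline is the standard one and is essentially sound: the identity $\dim\ker(A-\mu I)^k=\sum_{l}\min(l,k)\,p_l$, the telescoping, and the identification $\rho_k=\sum_{l\ge k}p_l$ are exactly the right dimension count.

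Two points need repair before it is complete. First, the step you yourself flag as the crux is stated too loosely: you expand $v\in\ker(A-\mu I)^k$ ``in the full canonical basis'' and claim that applying $(A-\mu I)^k$ turns the expansion into the pure shift $\sum_{i>k}c_{l,j,i}\,v^{(l,j)}_{i-k}$. That computation is valid only for the chains attached to $\mu$; the canonical basis of Proposition \ref{cbs} also contains chains for the other eigenvalues $\nu\neq\mu$, on whose span $(A-\mu I)$ is not nilpotent but invertible (in that chain basis it is upper triangular with diagonal entries $\nu-\mu\neq0$). So you must either first argue that $\ker(A-\mu I)^k$ lies in the span of the $\mu$-chains (primary decomposition), or split $v$ into its $\mu$-chain component plus the rest, note the two spans are complementary invariant subspaces, and use the invertibility of $(A-\mu I)^k$ on the complement to force the second component to vanish. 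This is a one- or two-line fix, but as written the step does not go through. Second, your concluding paragraph actually establishes that $\rho_k$ equals the number of Jordan blocks of size at least $k$ (chains of length $\ge k$), not of size exactly $k$; this is the mathematically correct statement (the count of blocks of size exactly $k$ is $\rho_k-\rho_{k+1}$), and it is the reading the paper in fact relies on later (e.g.\ in Lemma \ref{mu=1}, where $\rho_2\neq0$ and $\rho_3=0$ are used to cap the largest block at size $2$), but it does not literally match the proposition's final sentence, so you should state explicitly that you are proving the ``size at least $k$'' version rather than asserting that the sentence ``follows''.
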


There is another characterization on the size of the largest Jordan block corresponding eigenvalue based on the minimal polynomial of $A$: 

\begin{prop}[Theorem 3.3.6 in \citep{horn2012matrix}]\label{minipoly}  Let $m$ be the size of the largest Jordan block corresponding to the eigenvalue $\mu$ of a matrix $A$, then $m$ equals 
to the degree of the factor $(x - \mu)$ in the minimal polynomial of $A$.
\end{prop}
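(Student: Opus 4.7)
The plan is to reduce to the Jordan normal form provided by Proposition \ref{CJF} and then analyze the minimal polynomial block-by-block. First, I would note that similar matrices have the same minimal polynomial, since if $p(A) = 0$ and $J = M^{-1}AM$, then $p(J) = M^{-1}p(A)M = 0$, and conversely. Hence it suffices to determine the minimal polynomial of the Jordan form $J = \bigoplus_i J_i$.

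Second, I would use the block-diagonal structure: for any polynomial $p$, we have $p(J) = \bigoplus_i p(J_i)$, so $p$ annihilates $J$ if and only if $p$ annihilates every block $J_i$. Therefore the minimal polynomial of $J$ is the least common multiple of the minimal polynomials of its blocks.

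Third, I would compute the minimal polynomial of a single Jordan block $J_i$ of size $k$ with eigenvalue $\mu$, showing it equals $(x-\mu)^k$. The key observation is that $N := J_i - \mu I$ is the nilpotent matrix with $1$'s on the superdiagonal and zeros elsewhere. A direct induction shows that $N^j$ has $1$'s on the $j$-th superdiagonal for $j < k$, so in particular $N^{k-1} \ne 0$ (the $(1,k)$ entry equals $1$), while $N^k = 0$. Thus no polynomial of degree less than $k$ of the form $(x-\mu)^{\ell}$ kills $J_i$, and by considering that $(J_i - \lambda I)$ for $\lambda \ne \mu$ is invertible (hence contributes nothing necessary), the minimal polynomial of $J_i$ is exactly $(x-\mu)^k$.

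Finally, combining the previous steps: for each eigenvalue $\mu$ of $A$, the factor $(x-\mu)$ appears in $\mathrm{lcm}_i \{\text{min poly of } J_i\}$ with exponent equal to the maximum size of any Jordan block associated to $\mu$, which by definition is $m$. The main obstacle in this plan is the direct verification that powers of the nilpotent shift behave as claimed, though this reduces to a straightforward induction on $j$ tracking the position of the nonzero superdiagonal; once that is in hand, the LCM decomposition makes the conclusion immediate.
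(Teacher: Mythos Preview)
The paper does not supply its own proof of this proposition; it is stated as a citation to Theorem 3.3.6 of Horn and Johnson's \emph{Matrix Analysis} and used as a black box in the subsequent analysis of the Jordan structure of $\mathcal{L}$. Your argument is the standard textbook proof and is correct: reduce to Jordan form by similarity invariance, use that the minimal polynomial of a block-diagonal matrix is the LCM of the block minimal polynomials, and verify directly that a single $k\times k$ Jordan block with eigenvalue $\mu$ has minimal polynomial $(x-\mu)^k$ via the nilpotent shift $N = J_i - \mu I$. The only place you could tighten the exposition is the remark that ``$(J_i - \lambda I)$ for $\lambda \ne \mu$ is invertible, hence contributes nothing necessary''; a cleaner way to phrase this is that the characteristic polynomial of $J_i$ is $(x-\mu)^k$, so by Cayley--Hamilton the minimal polynomial of $J_i$ divides $(x-\mu)^k$ and must therefore be $(x-\mu)^\ell$ for some $\ell \le k$, whence $N^{k-1} \ne 0$ forces $\ell = k$.
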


\subsubsection{Real Jordan normal form}
The real Jordan normal is important for computing exponential function of a matrix $A \in \BR^{d \times d}$. In the complex Jordan form, Jordan blocks may contain elements in $\BC- \BR$. Thus for a matrix $A \in \BR^{d \times d}$,  we cannot use its complex Jordan normal form to calculate exponential functions of $A$ directly. We need to define a standard form of $A$, which should only contain real numbers and keep the shape as a diagonal block matrix. This motive the definition of real Jordan normal form.

Let $\mu = \Re(\mu)+ i \Im(\mu), \Im(\mu) \ne 0 $ be an eigenvalue of $A$, and $v_m \in \BC^d$ is a generalized eigenvalue of type $m$ corresponding to $\mu$. Then the complex conjugate of $\mu$, denote by $\bar{\mu}$, is also an eigenvalue of $A$, and the complex conjugate of $v_m$, denote by $\bar{v}_m \in \BC^d$ is a generalized eigenvalue of type $m$ corresponding to $\bar{\mu}$. Let 
$$\{ v_{m-i} |\  v_{m-i} = (A - \mu I)^i v_m,\  i = 0,1,...,m-1   \}$$
be a Jordan chain of length $m$ corresponding to $\lambda$, then it gives a complex Jordan block of size $m$ in the complex Jordan normal form of $A$ as in \eqref{jjj}.  

The $2m$ vectors $\{ \Re(v_1),\Im(v_1), \Re(v_2), \Im(v_2), ... , \Re(v_m), \Im(v_m) \} \subset \BR^d$ will play the role of  complex generalized vectors of eigenvalues $\mu, \bar{\mu}$. It is directly to check $A$ acts on $[\Re(v_1),\Im(v_1), \Re(v_2), \Im(v_2), ... , \Re(v_m), \Im(v_m)]$ gives the following
matrix representation :

$$\begin{aligned}
&A[\Re(v_1),\Im(v_1), \Re(v_2), \Im(v_2), ... , \Re(v_m), \Im(v_m)]\label{cb1} \\
\\
& = [A\Re(v_1),A\Im(v_1), A\Re(v_2), A\Im(v_2), ... , A\Re(v_m), A\Im(v_m)]\\
\\
& = [\Re(\mu v_1) , \Im(\mu v_1), \Re(v_1) + \Re(\mu v_2), \Im(v_1)+\Im(\mu v_2), ... , \Re(v_{m-1}) + \Re(\mu v_m), \Im(v_{m-1})+\Im(\mu v_m) ]   \\
\\
& = [\Re(v_1),\Im(v_1), \Re(v_2), \Im(v_2), ... , \Re(v_m), \Im(v_m)] 
\begin{bmatrix}
D & I \\
   &  \cdot & \cdot \\
   & & \cdot & \cdot & \\
   & & & \cdot & \cdot & \\
   & & & & \cdot & I \\
      & & & &  & D \\
 \end{bmatrix}\label{cb2}
\end{aligned}$$
where $D = 
\begin{bmatrix}
\Re(\mu) & \Im(\mu) \\
\\
-\Im(\mu)& \Re(\mu)
\end{bmatrix}$
and
$I_2 = 
\begin{bmatrix}
1 & 0 \\
0 &1
\end{bmatrix}.$

The matrix

\begin{align}\label{rjbi}
\begin{bmatrix}
D & I \\
   &  \cdot & \cdot \\
   & & \cdot & \cdot & \\
   & & & \cdot & \cdot & \\
   & & & & \cdot & I \\
      & & & &  & D \\
 \end{bmatrix}
 \end{align}

is called the real Jordan blocks corresponding to a conjugate pair of image eigenvalues $\mu, \bar{\mu}$. 

As in the case of complex Jordan normal form, we need to  define the real generalized modal matrix, and under the similar transformation by real generalized modal matrix, the original matrix will be transformed into a real Jordan normal form.

\begin{defn}[\textbf{Real generalized modal matrix}] \label{rgmm}
Let $A$ be an $d \times d$ matrix. A real generalized modal matrix $M$ for $A$ 
is a $d \times d$ matrix whose columns, considered as vectors, are real or image parts of a complex generalized eigenvectors in a canonical basis for $A$ and appear in $M$ according to the 
following rules:
\begin{itemize}
\item[(1)] Real and image parts of a generalized eigenvectors corresponding to same image eigenvalues appear in the first columns of $M$
\item[(2)] If $\Re(v), \Im(v)$ appear in the real generalized modal matrix, $(\Re(v), \Im(v))$ appear in adjacent columns of $M$
\item[(3)] All vectors of the same chain appear together in adjacent columns of $M$.
 \item[(4)] Each chain appears in $M$ in order of increasing type.
\end{itemize}
\end{defn}

Note that comparing to definition \ref{gmm}, the new requirements are $(1)$ and $(2)$. $(1)$ will make the Jordan blocks as in \eqref{rjbi} appear firstly in the real Jordan normal form, and the necessary of $(2)$ can be seen from the derivation of \eqref{rjbi}.

\begin{prop}[ Theorem 1.2.3 in \citep{colonius2014dynamical}]\label{tr} Any matrix $A \in \BR^{d \times d}$ is similar to a matrix in the real Jordan normal from via a similarity transformation by $A$'s real generalized modal matrix. That is, $\exists M \in \BR^{d \times d}$ be $A$'s real generalized modal matrix to make

$$J^{\BR} = M^{-1} A M$$

where $J^{\BR} = (J_{\mu_{1}, \bar{\mu}_{1}} \oplus ... \oplus J_{\mu_{k}, \bar{\mu}_{k}} ) \oplus^l_{i=1}(J_{\mu_{k+i}}) $ with real Jordan blocks given for $\mu\in \Spec(\CL) \cap \BR$ by
$$J_{\mu} = \begin{bmatrix}
\mu & 1 & \\
&\mu&1 & \\
& & \ddots& \ddots \\
& & & \mu& 1\\
& & & & \mu\\
\end{bmatrix}$$
and for $\mu, \bar{\mu} = \lambda \pm  i \nu \in \Spec(\CL), \nu > 0 $, by 
$$J_{\mu, \bar{\mu}} = \begin{bmatrix}
D & I_2 & \cdot & \cdot & 0\\
0 & D & & & . \\
\cdot & \ddots & \ddots & .\\
\cdot & & & D & I_2\\
0 & & & 0 & D
\end{bmatrix}$$
where $D = 
\begin{bmatrix}
\lambda& - \nu \\
\nu & \lambda
\end{bmatrix}$
and
$I_2 = 
\begin{bmatrix}
1 & 0 \\
0 &1
\end{bmatrix}$.
\end{prop}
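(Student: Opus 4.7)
The plan is to bootstrap the real Jordan normal form directly from the complex one established in Proposition \ref{CJF}. Applying Proposition \ref{CJF} to $A$ gives a complex generalized modal matrix $N\in\BC^{d\times d}$ with $N^{-1}AN=J^{\BC}$ in complex Jordan form. Since $A$ has real entries, complex conjugation commutes with multiplication by $A$: if $v_m$ is a type-$m$ generalized eigenvector for an eigenvalue $\mu\in\BC\setminus\BR$, then $\bar v_m$ is a type-$m$ generalized eigenvector for $\bar\mu$ (both satisfy Definition \ref{GE}), and the Jordan chain generated by $\bar v_m$ is the conjugate of the chain generated by $v_m$. Thus eigenvalues (with their chains) come in pairs: real eigenvalues contribute real chains, and non-real eigenvalues come in conjugate pairs whose chains are conjugates of each other.

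First I would handle the real eigenvalues: for $\mu\in\Spec(A)\cap\BR$, the operator $(A-\mu I)^k$ is a real matrix, so its kernel admits a basis of vectors in $\BR^d$. By the standard construction underlying Proposition \ref{cbs} (applied in $\BR^d$ rather than $\BC^d$), one builds canonical Jordan chains of real vectors, and the action of $A$ on each such chain is exactly the Jordan block $J_\mu$ via the computation in equations (\ref{rb1})--(\ref{rb2}). Next I would handle the pairs: given a Jordan chain $\{v_1,\dots,v_m\}$ for $\mu=\lambda+i\nu$ with $\nu>0$, I replace the $2m$ complex basis vectors $\{v_1,\dots,v_m,\bar v_1,\dots,\bar v_m\}$ by the $2m$ real vectors $\{\Re(v_1),\Im(v_1),\dots,\Re(v_m),\Im(v_m)\}$. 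The key algebraic identity is that the change-of-basis matrix $\tfrac12\bigl(\begin{smallmatrix}I & -iI\\ I & iI\end{smallmatrix}\bigr)$ (acting block-wise on pairs of columns) is invertible, so linear independence of $\{v_j,\bar v_j\}_j$ over $\BC$ transfers to linear independence of the real/imaginary parts over $\BR$; concatenating across all pairs and all real chains yields $d$ linearly independent vectors in $\BR^d$, which form the columns of the desired real generalized modal matrix $M$ ordered as prescribed in Definition \ref{rgmm}.

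The main computational step is to verify that on each real pair-block the action of $A$ is precisely $J_{\mu,\bar\mu}$: using $A v_j = \mu v_j + v_{j-1}$ (with $v_0:=0$) and separating real and imaginary parts yields
\begin{align*}
A\,\Re(v_j) &= \lambda\Re(v_j)-\nu\Im(v_j)+\Re(v_{j-1}),\\
A\,\Im(v_j) &= \nu\Re(v_j)+\lambda\Im(v_j)+\Im(v_{j-1}),
\end{align*}
which is exactly the column-wise action described in equations (\ref{cb1})--(\ref{cb2}) and gives the block $J_{\mu,\bar\mu}$ with $2\times 2$ diagonal blocks $D=\bigl(\begin{smallmatrix}\lambda&-\nu\\ \nu&\lambda\end{smallmatrix}\bigr)$ and superdiagonal $2\times 2$ identities $I_2$. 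Assembling the real blocks from conjugate pairs first (placed in the leading columns per rule (1) of Definition \ref{rgmm}), followed by real-eigenvalue blocks, yields $M^{-1}AM=J^{\BR}$ of the stated form.

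The main obstacle I anticipate is bookkeeping rather than a deep step: keeping the ordering of columns inside each length-$m$ chain consistent so that the $2\times 2$ identity superdiagonal pattern (as opposed to a diagonal-of-$1$'s pattern) emerges correctly. This is where rules (2) and (3) of Definition \ref{rgmm} matter—pairs $(\Re(v_j),\Im(v_j))$ must be kept adjacent and the whole chain must be ordered by increasing type, otherwise the $2\times 2$ identity blocks above the diagonal get scrambled. Uniqueness up to block ordering then follows from the invariance of the numbers $\rho_k=\dim\ker(A-\mu I)^k-\dim\ker(A-\mu I)^{k-1}$ (Proposition \ref{ker}), since these dimensions—computed over $\BC$—determine the multiset of block sizes attached to each eigenvalue and hence to each conjugate pair.
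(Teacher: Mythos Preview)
Your proposal is correct and follows the standard construction. Note that the paper does not actually prove this proposition---it is cited from \citep{colonius2014dynamical}---but the paragraphs immediately preceding the statement sketch exactly the same construction you carry out: replace each complex Jordan chain $\{v_1,\dots,v_m\}$ for a nonreal eigenvalue $\mu$ (together with its conjugate chain) by the real vectors $\{\Re(v_1),\Im(v_1),\dots,\Re(v_m),\Im(v_m)\}$, and verify by separating real and imaginary parts of $Av_j=\mu v_j+v_{j-1}$ that the matrix representation in this basis is the displayed $2m\times 2m$ block $J_{\mu,\bar\mu}$. Your additional remarks on linear independence (via the invertible change of basis between $\{v_j,\bar v_j\}$ and $\{\Re v_j,\Im v_j\}$) and on reading off uniqueness from Proposition~\ref{ker} fill in details the paper omits, but the route is the same.
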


Note that the difference between complex and real Jordan form, complex and real generalized modal matrix are only in the Jordan blocks corresponding to a eigenvalue whose image part
is not $0$. Thus the number of a given size Jordan blocks corresponding to a real eigenvalue is same in complex and real Jordan form. For a pair of conjugate complex eigenvalues with nonzero
image part, every real Jordan block is a combination of two complex Jordan blocks.

\subsubsection{Solution formula for linear differential equation}

For a linear differential equation with constant coefficients $A \in \BR^{d \times d}$
$$\begin{aligned}
\frac{dx}{dt} (t) = A x
\end{aligned}$$
with initial condition $x(0)$, it's solution formula is given by $x(t) = e^{t A} x(0)$. Thus to understand the dynamic behavior of $x(t)$, it is necessary to calculate the matrix exponential matrix $e^{t A}$. This can be done by using the real Jordan normal form of $A$. Let $J_{A}$ be $A$'s real Jordan normal form, and $J_{A} =M A M^{-1} $, then $e^{t A} = M e^{t J_{A}} M^{-1}$. Thus calculation of $e^{tA}$ can be reduced to calculation of $e^{t J_{A}}$ and the real generalized modal matrix of $A$. The real generalized modal matrix of $A$ is a combination of real and image parts of $A$'s generalized eigenvectors, and in this section we consider calculation of $e^{t J_{A}}$. 

\begin{prop} (page 12 in \citep{colonius2014dynamical})\label{ExpJordan} Let $J_{\mu}$ be a real Jordan block of size $m \times m$ associated with the real eigenvalue $\mu$ of a matrix $A \in \BR^{d \times d}$. Then 
\begin{align}
J_{\mu} = \begin{bmatrix}
\mu & 1 & \\
&\mu&1 & \\
& & \ddots& \ddots \\
& & & \mu & 1\\
& & & & \mu\\
\end{bmatrix} \in \BR^{m \times m}
\end{align}
and 
\begin{align}\label{expr}
e^{J_{\mu}t} = e^{\mu t} \begin{bmatrix}
1 & t & \frac{t^2}{2!} & \cdot &  \cdot & \frac{t^{m-1}}{(m-1)!} \\
 &\cdot &\cdot & \cdot & \cdot & \cdot  \\
& & \cdot & \cdot &  \cdot & \cdot \\
& &  & \cdot &\cdot & \frac{t^2}{2!} \\
& & & &  \cdot& t \\
& & & & & 1
\end{bmatrix} \in \BR^{m \times m}.
\end{align}

 Let $J_{\mu,\bar{\mu}}$ be a real Jordan block of size $2m \times 2m$ associated with the real eigenvalue $\mu, \bar{\mu} = \lambda \pm i \nu, \nu > 0$ of a matrix $A \in \BR^{d \times d}$. With 

 $$D = 
\begin{bmatrix}
\lambda& - \nu \\
\nu & \lambda
\end{bmatrix},\ R : = R(t) = 
  \begin{bmatrix}
\cos(\nu t)& - \sin(\nu t) \\
\\
 \sin(\nu t) & \cos(\nu t)
 \end{bmatrix},\ 
 I_2 = \begin{bmatrix}
1& 0\\
0 & 1
 \end{bmatrix}, 
 $$
one obtains for 
\begin{align}
J_{\mu,\bar{\mu}} = 
\begin{bmatrix}
D & I_2 \\
   &  \cdot & \cdot \\
   & & \cdot & \cdot & \\
   & & & \cdot & \cdot & \\
   & & & & \cdot & I_2 \\
      & & & &  & D \\
 \end{bmatrix} \in \BR^{2m \times 2m},
 \end{align}
 and
\begin{align}\label{expc}
e^{J_{\mu,\bar{\mu}}t } = e^{\lambda t} 
 \begin{bmatrix}
R & tR & \frac{t^2}{2!}R & \cdot & \cdot & \frac{t^{m-1}}{(m-1)!} R\\
\\
 &\cdot &\cdot & \cdot &\cdot & \cdot  \\
& & \cdot & \cdot & \cdot & \cdot \\
& &  & \cdot & \cdot & \frac{t^2}{2!}R \\
& & & & \cdot & tR \\
& & & & & R
\end{bmatrix} \in \BR^{2m \times 2m}.
\end{align}
\end{prop}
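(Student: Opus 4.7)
The plan is to exploit the decomposition of each Jordan block into a commuting sum of a diagonal/block-diagonal piece and a nilpotent piece, then apply the power-series definition of the matrix exponential.

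For the real eigenvalue case, I would write $J_\mu = \mu I_m + N$, where $N\in\mathbb{R}^{m\times m}$ is the standard nilpotent shift matrix with ones on the first superdiagonal and zeros elsewhere. Since $\mu I_m$ commutes with $N$, the multiplicative property of the exponential on commuting matrices gives $e^{J_\mu t}=e^{\mu t I_m}\,e^{Nt}=e^{\mu t}\,e^{Nt}$. A direct induction shows that $N^k$ has ones on the $k$-th superdiagonal for $0\le k\le m-1$ and $N^m=0$, so the power series $e^{Nt}=\sum_{k=0}^{m-1}\frac{t^k}{k!}N^k$ terminates and produces exactly the upper-triangular matrix with entries $t^{k}/k!$ on the $k$-th superdiagonal. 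Multiplying by the scalar $e^{\mu t}$ yields formula \eqref{expr}.

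For the complex conjugate pair case, I would apply the same strategy at the block level. Write $J_{\mu,\bar\mu}=\tilde D+\tilde N$, where $\tilde D$ is block-diagonal with $m$ copies of the $2\times 2$ block $D$ on the diagonal, and $\tilde N$ is block-nilpotent with copies of $I_2$ on the first block superdiagonal. These two matrices commute (because $D\cdot I_2=I_2\cdot D$), so $e^{J_{\mu,\bar\mu}t}=e^{\tilde D t}\,e^{\tilde N t}$. To compute $e^{Dt}$, I decompose $D=\lambda I_2+\nu J$ with $J=\bigl(\begin{smallmatrix}0&-1\\1&0\end{smallmatrix}\bigr)$; again $\lambda I_2$ and $\nu J$ commute, and $J^2=-I_2$ implies $e^{\nu t J}=\cos(\nu t)I_2+\sin(\nu t)J=R(t)$. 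Hence $e^{Dt}=e^{\lambda t}R(t)$, and $e^{\tilde Dt}$ is block-diagonal with $m$ copies of $e^{\lambda t}R(t)$. The nilpotent part is handled identically to the real case: $\tilde N^k$ has $I_2$ on the $k$-th block superdiagonal and vanishes for $k\ge m$, so $e^{\tilde N t}=\sum_{k=0}^{m-1}\frac{t^k}{k!}\tilde N^k$ has $\frac{t^k}{k!}I_2$ on the $k$-th block superdiagonal. Multiplying block by block, the $(i,j)$ block of the product equals $\frac{t^{j-i}}{(j-i)!}\,e^{\lambda t}R(t)$ for $j\ge i$ and zero otherwise, which is exactly formula \eqref{expc}.

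I expect no serious obstacle: the proof reduces to the standard fact that $e^{A+B}=e^A e^B$ when $AB=BA$, the truncation of the nilpotent power series, and the identification of $\exp(\nu t J)$ with the planar rotation $R(t)$. The only mild bookkeeping step is verifying the block-structure identity for $\tilde N^k$, which follows by a short induction on $k$ using the fact that $I_2$ in the $k$-th block superdiagonal times $I_2$ in the first block superdiagonal lands $I_2$ in the $(k+1)$-st block superdiagonal.
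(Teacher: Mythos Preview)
Your argument is correct and is the standard proof of this classical fact. Note, however, that the paper does not actually prove this proposition: it is stated as a background result with a citation to page~12 of \citep{colonius2014dynamical}, so there is no ``paper's own proof'' to compare against. Your decomposition into a commuting diagonal (or block-diagonal) part plus a nilpotent shift, followed by the finite power series for the nilpotent piece and the identification $e^{\nu t J}=R(t)$ via $J^2=-I_2$, is exactly the textbook route and would be accepted without reservation.
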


\subsubsection{Solution formula for linear difference equation}

For a matrix $A \in \BR^{d \times d}$, a linear difference equation with coefficient matrix $A$ has form
\begin{align}\label{lde}
x_{n+1} = A x_n
\end{align}
By induction, \eqref{lde} has solution formula
\begin{align}\label{lde}
x_{n} = A^n x_0,\  x_0 \in \BR^{d}.
\end{align}
If $J_A =M^{-1} A M$ be the real Jordan normal form of $A$, then $x_n = M (J_A)^n M^{-1}x_0.$ Thus to solve \eqref{lde}, it is sufficient to know the formula for $ (J_A )^n$ and the real generalized modal matrix $M$. Moreover, if
$
J_{A} = \oplus_{i} J_i, 
$
then
$
(J_A)^n = \oplus_{i} (J_i)^n, 
$
thus we only need to consider the power of real Jordan blocks.

\begin{prop}[Page 19 in \citep{colonius2014dynamical}] \label{differencereal}
Let $J$ be a real Jordan block of size $m \times m$
 associated with a real eigenvalue $\mu$ of $A \in \BR^{d \times d}$. Then
 
\begin{align}
J_{\mu} &= \begin{bmatrix}
\mu &  & \\
&\mu& & \\
& & \ddots&  \\
& & & \mu & \\
& & & & \mu \\
\end{bmatrix} 
+ 
\begin{bmatrix}
0 & 1 & \\
&0&1 & \\
& & \ddots& \ddots \\
& & & 0& 1\\
& & & & 0\\
\end{bmatrix} \\
& = \mu I +N
\end{align}
with $N^m = 0$. Thus we have

\begin{align}
J_{\mu}^n = (\mu I +N)^n = \sum^{m-1}_{i=0} \binom{n}{i} \mu^{n-i}N^i.
\end{align}

Note that if $\mu > 1$, elements in $J_{\mu}^n$ will have an exponential growth rate as $\mu^{n}$. If $\mu = 1$, 
elements in $J_{\mu}^n$  have a polynomial growth rate. If $\mu < 1$, elements in $J_{\mu}^n$ tends to $0$ as $n$ growth.
\end{prop}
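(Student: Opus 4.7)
The plan is to establish the decomposition $J_\mu = \mu I + N$ first, then exploit commutativity to apply the binomial theorem, and finally observe that nilpotency of $N$ truncates the sum.

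First I would write $J_\mu$ explicitly as the sum of the scalar matrix $\mu I$ and the shift matrix $N$ with $1$'s on the first superdiagonal and $0$'s elsewhere; this decomposition is immediate from the definition of a real Jordan block given in Proposition~\ref{tr}. The crucial structural observation is that $\mu I$ commutes with every $m\times m$ matrix, in particular with $N$, so the standard binomial identity
\begin{equation*}
(\mu I + N)^n = \sum_{i=0}^{n}\binom{n}{i}(\mu I)^{n-i} N^i = \sum_{i=0}^{n}\binom{n}{i}\mu^{n-i} N^i
\end{equation*}
is valid without any reordering issues.

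Next I would verify the nilpotency claim $N^m = 0$. The cleanest route is to check by induction on $k$ that $N^k$ is the matrix whose $(i,j)$-entry equals $1$ if $j = i+k$ and $0$ otherwise; equivalently, each multiplication by $N$ shifts the band of $1$'s one step further into the upper-right corner. Since the matrix is $m\times m$, the band exits the matrix when $k = m$, giving $N^m = 0$, and consequently $N^i = 0$ for all $i \ge m$. Plugging this into the binomial expansion truncates the sum at $i = m-1$, yielding the stated formula
\begin{equation*}
J_\mu^n = \sum_{i=0}^{m-1}\binom{n}{i}\mu^{n-i} N^i.
\end{equation*}

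I expect no substantial obstacle here, since the argument is purely a standard calculation in linear algebra. The only mild subtlety is the convention on binomial coefficients when $n < m-1$: one should interpret $\binom{n}{i} = 0$ for $i > n$, so that the formula correctly reduces to the usual finite binomial expansion in that regime. The closing qualitative remarks about the three growth regimes ($\mu > 1$, $\mu = 1$, $\mu < 1$) then follow directly: the dominant factor $\mu^{n-i}$ dictates exponential growth or decay when $|\mu| \ne 1$, while for $|\mu| = 1$ the polynomial factors $\binom{n}{i}$ govern a growth of order $n^{m-1}$.
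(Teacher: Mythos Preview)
Your proof is correct and follows the standard argument. Note that the paper does not actually prove this proposition; it is stated as a cited background result from \citep{colonius2014dynamical}, so there is no in-paper proof to compare against. Your argument via the commuting decomposition $\mu I + N$, the binomial theorem, and nilpotency of the shift matrix is exactly the canonical one and would serve as a complete justification.
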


\begin{prop}[Page 20 in \citep{colonius2014dynamical}] \label{differenceim}
Let $J$ be a real Jordan block of size $2m \times 2m$
 associated with a pair of conjugate complex eigenvalue $\mu = \lambda + i \nu, \bar{\mu} = \lambda - i \nu$ of $A \in \BR^{d \times d}$. With
 $D = \begin{bmatrix}
 \lambda & - \nu \\
 \nu & \lambda
 \end{bmatrix}$
 and 
 $I_2 = \begin{bmatrix}
1 & 0 \\
0 & 1
 \end{bmatrix}$, one obtains that 
\begin{align}
J_{\mu,\bar{\mu}}
& =\begin{bmatrix}
D & 0\\
   &  \cdot & \cdot \\
   & & \cdot & \cdot & \\
   & & & \cdot & \cdot & \\
   & & & & \cdot & 0 \\
      & & & &  & D \\
 \end{bmatrix}
+
\begin{bmatrix}
0& I_2 \\
   &  \cdot & \cdot \\
   & & \cdot & \cdot & \\
   & & & \cdot & \cdot & \\
   & & & & \cdot & I_2 \\
      & & & &  & 0 \\
 \end{bmatrix}\\
 & = \tilde{D} + N
\end{align}
with $N^{m} = 0$. Moreover, since $\mu =\lvert \mu \lvert e^{i \phi} $ for some $\phi \in [0, 2 \pi)$, one can write  \begin{align}D = \begin{bmatrix}
 \lambda & - \nu \\
 \nu & \lambda
 \end{bmatrix}= \lvert \mu \lvert R,\ R =  \begin{bmatrix}
 \cos \phi & - \sin \phi \\
 \sin \phi & \cos \phi
 \end{bmatrix}.
 \end{align}

Thus 
\begin{align}
J_{\mu,\bar{\mu}}^n = (\tilde{D} +N)^n & = \sum^{m-1}_{i=0} \binom{n}{i} \tilde{D}^{n-i}N^i = \sum^{m-1}_{i=0} \binom{n}{i}\lvert \mu \lvert^{n-i} \tilde{R}^{n-i}N^i.
\end{align}
where $ \tilde{R}$ is a block diagonal matrix with matrix block $R$. Note that if $\lvert \mu \lvert > 1$, elements in $J_{\mu,\bar{\mu}}^n$ have exponential growth rate as
$\CO(\lvert \mu \lvert^n)$. If $\lvert \mu \lvert = 1$, elements in $J_{\mu,\bar{\mu}}^n$ have polynomial growth rate.  If $\lvert \mu \lvert < 1$, elements in $J_{\mu,\bar{\mu}}^n$ tends to $0$ as $n$ growth. 
\end{prop}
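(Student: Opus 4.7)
The plan is to mirror the structure of the companion Proposition \ref{differencereal} for real eigenvalues. First I split $J_{\mu,\bar\mu} = \tilde D + N$ exactly as in the statement: $\tilde D$ is the $2m \times 2m$ block-diagonal matrix whose every diagonal block equals $D$, and $N$ carries $I_2$ on the first block-superdiagonal and zeros elsewhere. Then I would verify by block multiplication that $\tilde D N = N \tilde D$: both products place a single $D$ on the block-superdiagonal and zeros elsewhere, because every diagonal block of $\tilde D$ is the same matrix $D$ and $D \cdot I_2 = I_2 \cdot D$. The equality of the two diagonal blocks of $\tilde D$ is essential; if the blocks were different, commutativity would fail.

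Next, nilpotency $N^m = 0$ follows by an easy induction on $k$: $N^k$ places $I_2$ on the $k$-th block-superdiagonal for $1 \le k \le m-1$, and no $m$-th block-superdiagonal fits inside a matrix with $m$ block rows. Combining commutativity with $N^i = 0$ for $i \ge m$, the binomial theorem gives
\begin{align*}
J_{\mu,\bar\mu}^n = (\tilde D + N)^n = \sum_{i=0}^{n} \binom{n}{i} \tilde D^{n-i} N^i = \sum_{i=0}^{m-1} \binom{n}{i} \tilde D^{n-i} N^i,
\end{align*}
where the upper limit collapses to $m-1$ by nilpotency.

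To reach the final form stated in the proposition I rewrite $D$ as a scaled rotation. Using $\mu = |\mu| e^{i\phi}$, one has $\lambda = |\mu|\cos\phi$ and $\nu = |\mu|\sin\phi$, so $D = |\mu| R$ with $R$ the planar rotation by angle $\phi$. Because taking powers of a block-diagonal matrix acts block-wise, $\tilde D^{n-i} = |\mu|^{n-i} \tilde R^{n-i}$, which when substituted into the binomial expansion yields precisely the claimed identity
\begin{align*}
J_{\mu,\bar\mu}^n = \sum_{i=0}^{m-1} \binom{n}{i} |\mu|^{n-i} \tilde R^{n-i} N^i.
\end{align*}

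For the growth-rate dichotomy I would read off three observations directly from this formula: $\tilde R^{n-i}$ is orthogonal (a block rotation by angle $(n-i)\phi$) so its entries are bounded by $1$; $N^i$ has entries in $\{0,1\}$; and $\binom{n}{i}$ is polynomial in $n$ of degree $i \le m-1$. Hence each summand's entries are of order $n^i |\mu|^{n-i}$. If $|\mu| > 1$ the exponential factor dominates the polynomial factor and the entries grow like $O(|\mu|^n)$; if $|\mu| = 1$ the exponential disappears and entries grow at most polynomially of degree $m-1$; if $|\mu| < 1$ each term decays geometrically and entries tend to $0$. The only mildly delicate point in the whole argument is pairing the commutativity of $\tilde D$ and $N$ with the rotation identification $D = |\mu| R$; once those are in place, everything else is bookkeeping strictly parallel to the real-eigenvalue case of Proposition \ref{differencereal}.
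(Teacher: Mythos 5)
Your proposal is correct and follows essentially the same route the paper takes: the paper states this result by citing Colonius--Kliemann and sketches exactly the decomposition $J_{\mu,\bar\mu}=\tilde D+N$ with $N^{m}=0$, the binomial expansion, and the identification $D=\lvert\mu\rvert R$, which is what you carry out. Your added details (the commutativity check $\tilde D N=N\tilde D$ via $D\cdot I_2=I_2\cdot D$, the induction for nilpotency, and bounding the rotation and binomial factors to read off the three growth regimes) are the right fillers and introduce no gap.
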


\subsection{Proof of Theorem \ref{Covariance Evolution in FTRL with Euclidean Regularizer}}\label{ajfhaeblnljfrv}

Now we are ready to prove Theorem \ref{Covariance Evolution in FTRL with Euclidean Regularizer}. The proof is divided into three parts :
\begin{itemize}
\item covariance evolution of continuous time equation, proved in \ref{proof3}.
\item covariance evolution of Euler discretization, proved in \ref{proof4}.
\item covariance evolution of Symplectic discretization, proved in \ref{proof5}.
\end{itemize}

\subsubsection{Covariance evolution of continuous equation}\label{proof3}

We firstly give a  summary of the proof. The continuous time equation of FTRL with Euclidean norm for agent 1 is written as:

\begin{align}
\frac{dy_1}{dt}&=-\frac{\partial H(X_1,y_1)}{\partial X_1}=-AA^{\top}X_1(t)+Ay_2(0)\label{e11}
\\
\frac{dX_1}{dt}&=\frac{\partial H(X_1,y_1)}{\partial y_1}=y_1(t).\label{e12}
\end{align}
Similarly for agent 2 we have
\begin{align}
\frac{dy_2}{dt}&=-\frac{\partial H(X_2,y_2)}{\partial X_2}=-A^{\top}AX_2(t)-A^{\top}y_1(0)
\\
\frac{dX_2}{dt}&=\frac{\partial H(X_2,y_2)}{\partial y_2}=y_2(t).
\end{align}

In the following, we will focus on the viewpoint of agent 1, thus we will consider equation (\ref{e11}) and (\ref{e12}). 

\begin{lem}The solution of the linear differential system consisted by equation  (\ref{e11}) and (\ref{e12}) and initial condition    $(y_1(t_0),X_1(t_0))$   can be written as
\begin{align}\label{solution111}
\begin{bmatrix}
y_1(t+t_0) \\
\\
X_1(t+t_0)
\end{bmatrix}=
e^{\CL t} 
\cdot
\begin{bmatrix}
y_1(t_0) \\
\\
X_1(t_0)
\end{bmatrix}
+
e^{\CL t}  \cdot
\int^{t+t_0}_{t_0} e^{-\CL s}A y_2(0)ds 
\end{align}
\end{lem}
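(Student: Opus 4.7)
The plan is to recognize the coupled first--order system (\ref{e11})--(\ref{e12}) as an inhomogeneous linear ODE with constant coefficients in the extended state $z(t) = \bigl(y_1(t)^\top, X_1(t)^\top\bigr)^\top$ and then apply the standard variation--of--parameters formula, treating the non-homogeneous term $Ay_2(0)$ as a constant forcing (it does not depend on $t$ because $y_2(0)$ is the initial payoff at time $0$, not a dynamical variable inside the agent--1 Hamiltonian subsystem).

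First I would stack (\ref{e11}) and (\ref{e12}) into the block form
\[
\frac{d}{dt}\begin{bmatrix} y_1(t) \\ X_1(t) \end{bmatrix}
\;=\;
\underbrace{\begin{bmatrix} 0 & -AA^\top \\ I & 0 \end{bmatrix}}_{\mathcal{L}}
\begin{bmatrix} y_1(t) \\ X_1(t) \end{bmatrix}
\;+\;
\underbrace{\begin{bmatrix} Ay_2(0) \\ 0 \end{bmatrix}}_{c},
\]
so that the system reads $\dot z = \mathcal{L}z + c$ with a time--independent $c$. Once in this form, the solution theory is completely classical: multiply both sides by the integrating factor $e^{-\mathcal{L}t}$, observe that $\frac{d}{dt}\bigl(e^{-\mathcal{L}t} z(t)\bigr) = e^{-\mathcal{L}t} c$, and integrate from $t_0$ to $t_0+t$. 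Rearranging and using the initial data $z(t_0) = (y_1(t_0)^\top, X_1(t_0)^\top)^\top$ yields precisely the claimed expression (\ref{solution111}).

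As a sanity check I would then differentiate the right--hand side of (\ref{solution111}) with respect to $t$ and verify it reproduces $\mathcal{L}z + c$ (using $\frac{d}{dt}e^{\mathcal{L}t} = \mathcal{L}e^{\mathcal{L}t}$ and the Leibniz rule on the upper integration limit), together with the trivial check that evaluation at $t=0$ returns $z(t_0)$; uniqueness then follows from the Picard--Lindel\"of theorem since $\mathcal{L}$ is Lipschitz.

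There is essentially no \emph{hard} step here: the proof is a direct application of the matrix integrating factor and the only care needed is bookkeeping around the shift of origin from $0$ to $t_0$ (i.e., the appearance of $e^{\mathcal{L}t}$ versus $e^{\mathcal{L}(t-t_0)}$ in the variation--of--parameters integral), which is what the factor $e^{\mathcal{L}t}$ in front of the integral in (\ref{solution111}) is absorbing. This closed form is the crucial ingredient for the subsequent covariance computations, since pushing the (random) initial condition $(y_1(t_0),X_1(t_0))$ through the affine map $z \mapsto e^{\mathcal{L}t} z + (\text{deterministic term})$ immediately gives $P(t+t_0) = e^{\mathcal{L}t} P(t_0) e^{\mathcal{L}^\top t}$, after which the growth rates in Theorem \ref{Covariance Evolution in FTRL with Euclidean Regularizer} can be read off from the real Jordan form of $\mathcal{L}$ via Proposition \ref{ExpJordan}.
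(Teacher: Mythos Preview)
Your approach is essentially the same as the paper's: the paper simply says to verify by differentiation that the right--hand side satisfies (\ref{e11})--(\ref{e12}) and the initial condition, then invokes uniqueness for linear ODEs---which is exactly your ``sanity check'' step. Your derivation via the integrating factor is a helpful constructive addition, but the underlying argument (verify $+$ uniqueness) is identical.
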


\begin{proof}
It is directly to verify the derivate of right hind side satisfies  equation  (\ref{e11}) and (\ref{e12}), and the solution satisfies initial condition $(y_1(t_0),X_1(t_0))$. Due to the uniqueness of the solution  of linear differential equation, we can conclude this is the solution of equation  (\ref{e11}) and (\ref{e12}).
\end{proof}

Form (\ref{solution111}) we can also see that if uncertainty are introduced to the system at some initial time $t_0$, i.e., $(y_1(t_0), X_1(t_0))$ is a random variable, then the evolution of covariance of random variable  $(y_1(t+t_0), X_1(t+t_0))$ will not be affected by the term $\int^{t+t_0}_{t_0} e^{-\CL s}A y_2(0)ds $ since this is a determined quantity, thus will only affect the expectation of $(y_1(t+t_0), X_1(t+t_0))$ . Therefore, without loss of generality, we will let $y_2(0) = 0$  in the following.

Thus the continuous equation of agent 1 is
\begin{align}\label{Maineq}
\begin{bmatrix}
\frac{dy_1}{dt} \\
\\
\frac{dX_1}{dt}
\end{bmatrix}
= \CL 
\begin{bmatrix}
y_1(t) \\
\\
X_1(t)
\end{bmatrix}
\end{align}
where 
$$\CL = 
\begin{bmatrix}
0 & -AA^{\top} \\
\\
I & 0
\end{bmatrix} \in \BR^{2n \times 2n}.
$$

Since the solution of  (\ref{Maineq}) is 
$
\begin{bmatrix}
y_1(t)\\
\\
X_1(t)
\end{bmatrix}
=
e^{t \CL}
\begin{bmatrix}
y_1(0)\\
\\
X_1(0)
\end{bmatrix},
$
we have
$$\begin{aligned}
P(t + t_0) = 
e^{t \CL} P(t_0) (e^{t \CL})^{\top}
\end{aligned}.$$

Thus to analysis the behavior of $\Var(X_1(t)), \Var(y_1(t))$, and $\Cov(X_1(t),y_1(t))$, we need to understand the behavior of $e^{t \CL}$. A standard method to calculate the matrix exponential of a matrix with elements in $\BR$ is though the matrix's real Jordan normal form and real generalized modal matrix, see Proposition \ref{ExpJordan}. For our purpose, the most important question about the Jordan form of $\CL$ is : 
$$
\mbox{What is the size of the largest Jordan blocks corresponding to}\ \CL's \ \mbox{eigenvalues ?}
$$
Because this number will determine the growth rate of elements in $e^{t \CL}$.  In Proposition \ref{mpL}, we will determine the minimal polynomial of $\CL$, combine this result with Proposition \ref{minipoly}, we can get elements in $e^{t \CL}$ that will at most have a linear growth rate. However as we see in Theorem \ref{Covariance Evolution in FTRL with Euclidean Regularizer} there is a difference between $\Var(X_1(t))$ and $\Var(y_1(t))$, $\Var(X_1(t))$ may have quadratic growth and $\Var(y_1(t))$ will always be bounded, only the growth rate of $e^{t \CL}$ is not sufficient for one to explain this difference. To show that $\Var(y_1(t))$ is always bounded, we need a more detailed analysis of the real generalized modal matrix of $\CL$. We will see that the first $n$ rows of the real generalized modal matrix has many $0$ elements, and these $0$ elements will make the first $n$ rows of $e^{t \CL}$ not to have linear growth rate. This will be shown in Proposition \ref{etl}.

\begin{lem}\label{eigL} Let $$\CL = 
\begin{bmatrix}
0 & -AA^{\top} \\
\\
I & 0
\end{bmatrix}
$$ be the coefficient matrix of (\ref{Maineq}), then the eigenvalues of $\CL$ are pure imaginary numbers or $0$. Moreover, if $0$ is an eigenvalue of $\CL$, then its multiplicity is an even number.
\end{lem}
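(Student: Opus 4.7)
The plan is to derive both claims from an explicit factorization of the characteristic polynomial $\det(\CL - \lambda I)$ obtained via a block-determinant identity. Writing the $2n\times 2n$ matrix $\CL - \lambda I$ in block form with blocks $P = -\lambda I_n$, $Q = -AA^{\top}$, $R = I_n$, $S = -\lambda I_n$, I observe that the bottom blocks $R$ and $S$ are both scalar multiples of the identity, hence commute. The standard identity $\det\begin{bmatrix} P & Q \\ R & S \end{bmatrix} = \det(PS - QR)$, valid whenever $R$ and $S$ commute, therefore yields
\[
\det(\CL - \lambda I) \;=\; \det\!\bigl((-\lambda I)(-\lambda I) - (-AA^{\top})\,I\bigr) \;=\; \det(\lambda^2 I + AA^{\top}).
\]

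Because $AA^{\top}$ is symmetric and positive semi-definite, its eigenvalues $\mu_1,\ldots,\mu_n$ are non-negative real numbers, so the characteristic polynomial factors as $\det(\CL - \lambda I) = \prod_{i=1}^n (\lambda^2 + \mu_i)$. Reading off the roots, each factor with $\mu_i > 0$ contributes the conjugate pair $\lambda = \pm i\sqrt{\mu_i}$ of purely imaginary eigenvalues, while each factor with $\mu_i = 0$ contributes a root $\lambda = 0$ of algebraic multiplicity $2$. Consequently every eigenvalue of $\CL$ is either purely imaginary or zero, and the algebraic multiplicity of $0$ as an eigenvalue of $\CL$ equals $2\cdot\dim\ker(AA^{\top})$, which is necessarily even. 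This establishes both assertions simultaneously.

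The only subtle point is verifying the hypotheses of the block-determinant identity, which is immediate here since $R$ and $S$ are scalar multiples of the identity; beyond that the argument is purely symbolic and carries no real obstacle. An alternative route would exploit the Hamiltonian structure $\CL = J H$ with $J$ the standard symplectic form and $H = \mathrm{diag}(I, AA^{\top})$ positive semi-definite, together with the classical fact that such ``PSD-Hamiltonian'' matrices have purely imaginary spectrum; however, the direct characteristic-polynomial computation above is shorter and produces the even-multiplicity statement as a free byproduct, so I would prefer it.
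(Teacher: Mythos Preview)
Your proof is correct and follows essentially the same route as the paper: both compute the characteristic polynomial of $\CL$ via a block-determinant reduction to $\det(\lambda^2 I + AA^{\top})$, then read off the spectrum from the nonnegative eigenvalues of $AA^{\top}$. Your version is slightly more explicit in justifying the block-determinant identity and in quantifying the multiplicity of $0$ as $2\dim\ker(AA^{\top})$, but the argument is the same.
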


\begin{proof} Let $f(x)$ be the character polynomial of $-AA^{\top}$, thus 
\begin{align}\label{cpoly}
f(x) = \det (x I_{n \times n} + AA^{\top})
\end{align}
Firstly, every eigenvalue of $-AA^{\top}$ is a negative number or $0$. That is because if $\mu$ is an eigenvalue of $-AA^{\top}$ and $v$ is an eigenvector of $\mu$, then
$$\begin{aligned}
\mu \langle v,  v\rangle & =  \langle -AA^{\top} v,  v\rangle \\
& = - \langle A^{\top} v, A^{\top} v\rangle \\
& = -\norm { A^{\top} v}^2 \le 0,
 \end{aligned}$$
since $ \langle v,  v\rangle \ge 0$, thus we have $\mu \le 0$. This implies the zeros of (\ref{cpoly}) are $0$ or negative.

The character polynomial of $\CL$ is

\begin{align}\label{charpoly}
\det (\lambda I_{2n \times 2n} - \CL ) & = \det (\lambda^2 I_{n \times n} + AA^{\top}) \\
& = f(\lambda^2)
\end{align}
Thus the eigenvalues of $\CL$ are square roots of  zeros of (\ref{cpoly}). Since the zeros of (\ref{cpoly}) are $0$ or negative, thus the eigenvalues of $\CL$ are $0$ or pure imaginary numbers. Moreover, since (\ref{charpoly}) is an even polynomial, it means that the polynomial only have terms of even degree, thus if $0$ is a zero of (\ref{charpoly}), then its multiplicity is at least 2. 
\end{proof}

Next we will calculate the minimal polynomial of $\CL$. Combining the calculated results with Proposition \ref{minipoly}, we will get the size of the biggest Jordan blocks in the Jordan normal form of $\CL$. Before that, we need the following lemma.

\begin{lem}(Corollary 3.3.10. in \citep{horn2012matrix})\label{minilinear}Let $M \in \BR^{d \times d}$ and $f_M(x)$ be its minimal polynomial. Then $M$ is diagonalizable if and only if every eigenvalue of M has multiplicity 1 as a root of $f_M(x) = 0$.
\end{lem}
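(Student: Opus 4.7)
The plan is to translate the statement into the language of Jordan blocks and invoke Proposition \ref{minipoly}, which is already available in the appendix. The key observation is that a matrix $M$ is diagonalizable (over $\BC$, viewing the given real matrix as a complex matrix) if and only if every Jordan block in its complex Jordan normal form, produced by Proposition \ref{CJF}, has size $1 \times 1$.

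First I would handle the forward direction: assume $M$ is diagonalizable. Then its complex Jordan normal form consists entirely of $1 \times 1$ blocks, so the largest Jordan block associated with every eigenvalue $\mu \in \Spec(M)$ has size $1$. By Proposition \ref{minipoly}, this size equals the degree of the factor $(x-\mu)$ in the minimal polynomial $f_M(x)$. Hence each eigenvalue appears with multiplicity exactly $1$ as a root of $f_M$. For the reverse direction, suppose each eigenvalue $\mu \in \Spec(M)$ is a simple root of $f_M(x)$. Again by Proposition \ref{minipoly}, the largest Jordan block corresponding to $\mu$ has size $1$, and since every Jordan block at $\mu$ is no larger than the largest, every block in the Jordan decomposition has size $1 \times 1$. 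Consequently the Jordan normal form of $M$ is diagonal, so $M$ is diagonalizable.

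There is no genuine obstacle once Proposition \ref{minipoly} is in hand; the argument is essentially a dictionary between the factorization of the minimal polynomial and the block structure of the Jordan form. A self-contained alternative would note that for a diagonalizable $M$ with distinct eigenvalues $\mu_1, \ldots, \mu_k$, the polynomial $g(x) = \prod_i (x-\mu_i)$ annihilates the diagonal form and hence annihilates $M$, so $f_M \mid g$ and $f_M$ is squarefree; conversely, squarefreeness of $f_M$ combined with a primary decomposition (via coprime factors $x-\mu_i$) yields $\BR^d = \bigoplus_i \ker(M - \mu_i I)$, exhibiting a basis of eigenvectors. Leaning on Proposition \ref{minipoly}, however, is the cleanest route given the tools already developed.
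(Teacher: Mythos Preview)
Your proof is correct. Note, however, that the paper does not actually prove this lemma: it is stated with a citation to Corollary 3.3.10 of Horn's \emph{Matrix Analysis} and used as a black box. So there is no ``paper's own proof'' to compare against. Your argument via Proposition \ref{minipoly} is a natural and clean derivation given the tools already quoted in the appendix, and your self-contained alternative (squarefree annihilator plus primary decomposition) is essentially how the result is established in the cited reference.
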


\begin{prop}[Minimal polynomial of $\CL$]\label{mpL} Let $\lambda_1 \ne \lambda_2 \ne ... \ne \lambda_l$ be the distinct eigenvalues of $-AA^{\top}$, thus $\forall i \in [l], \lambda_i \in \BR,\  \lambda_i \le 0$. Then the minimal polynomial of $\CL$ is :
$$f_{\CL}(x) =(x - \sqrt{\lambda_1} i)(x + \sqrt{\lambda_1} i) ... (x - \sqrt{\lambda_l} i)(x + \sqrt{\lambda_l} i) $$
Note that this implies that eigenvalues of $\CL$ are purely imaginary or  $0$, moreover, if $ \lambda_i = 0$ for some $i$, then the factor $x$ in $f_{\CL}(x)$ has degree $2$. 
\end{prop}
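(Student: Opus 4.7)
The plan is to first exhibit a polynomial that annihilates $\CL$ using its block structure, and then prove minimality by a Jordan-block analysis at the eigenvalue $0$ (the only subtle case). Write $M:=-AA^{\top}$, which is real symmetric (hence diagonalizable by the spectral theorem) with distinct eigenvalues $\lambda_{1},\dots,\lambda_{l}\le 0$.

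First, I would compute the block square
\[
\CL^{2}=\begin{bmatrix}0 & -AA^{\top}\\ I & 0\end{bmatrix}^{2}=\begin{bmatrix}M & 0\\ 0 & M\end{bmatrix},
\]
so that $g(\CL^{2})=\mathrm{blockdiag}(g(M),g(M))$ for any polynomial $g$. Diagonalizability of $M$ together with Lemma~\ref{minilinear} implies its minimal polynomial is $g(x)=\prod_{i}(x-\lambda_{i})$. Setting $p(x):=g(x^{2})=\prod_{i}(x^{2}-\lambda_{i})$, we obtain $p(\CL)=0$, so the minimal polynomial of $\CL$ divides $p$. This $p$ is precisely the polynomial in the statement, since $(x-\sqrt{\lambda_{i}}\,i)(x+\sqrt{\lambda_{i}}\,i)=x^{2}-\lambda_{i}$.

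For minimality I would argue factor by factor. If $\lambda_{j}<0$, then $\pm\sqrt{\lambda_{j}}\,i$ are two distinct nonzero eigenvalues of $\CL$ (cf.\ Lemma~\ref{eigL}), so both linear factors are forced into the minimal polynomial, together contributing the full quadratic $(x^{2}-\lambda_{j})$. The delicate case is $\lambda_{j}=0$: by Proposition~\ref{minipoly}, the multiplicity of $x$ in the minimal polynomial equals the size of the largest Jordan block of $\CL$ at $0$, which in turn equals the smallest $k$ for which $\ker\CL^{k}=\ker\CL^{k+1}$. Setting $r:=\dim\ker(AA^{\top})$, a direct block computation gives $\ker\CL=\{0\}\times\ker(AA^{\top})$ (dimension $r$) and $\ker\CL^{2}=\ker(AA^{\top})\times\ker(AA^{\top})$ (dimension $2r$). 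Since $AA^{\top}$ is symmetric positive semidefinite, $\ker(AA^{\top})^{k}=\ker(AA^{\top})$ for every $k\ge 1$, and a similar block computation using $\CL^{3}=\bigl[\begin{smallmatrix}0 & (AA^{\top})^{2}\\ -AA^{\top} & 0\end{smallmatrix}\bigr]$ yields $\ker\CL^{3}=\ker\CL^{2}$. Thus $\ker\CL\subsetneq\ker\CL^{2}=\ker\CL^{3}$, so the largest Jordan block at $0$ has size exactly $2$, and $x$ appears with multiplicity $2$ in the minimal polynomial.

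The annihilation direction reduces to a one-line block computation; the real work lies in nailing down the Jordan structure at $0$. One must rule out both semisimplicity at $0$ (which would drop the multiplicity of $x$ to $1$) and Jordan blocks of size three or more (which would raise it above $2$). Both bounds rest on the identity $\ker(AA^{\top})^{k}=\ker(AA^{\top})$, a direct consequence of the symmetry and positive semidefiniteness of $AA^{\top}$.
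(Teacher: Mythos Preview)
Your proof is correct and follows essentially the same route as the paper: compute $\CL^{2}=\mathrm{blockdiag}(M,M)$, use the minimal polynomial of $M=-AA^{\top}$ to produce an annihilating polynomial $p(x)=g(x^{2})$ for $\CL$, then handle the eigenvalue $0$ by a kernel computation. The one minor difference is that your explicit computation of $\ker\CL^{3}$ is redundant: once you have established $f_{\CL}\mid p$, the multiplicity of $x$ in $f_{\CL}$ is already capped at $2$, so (as the paper does) it suffices to show $\ker\CL\subsetneq\ker\CL^{2}$ to force it up to exactly $2$.
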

\begin{proof}
Let $f_{-AA^{\top}} (x)$ be the minimal polynomial of $-AA^{\top}$. Since $-AA^{\top}$ is symmetric, thus diagonalizable, by Lemma \ref{minilinear}, $f_{-AA^{\top}} (x)$ only contains linear factor of $(x - \lambda_i)$,where $\lambda_i$ is an eigenvalue of $-AA^{\top}$. Thus we have
$$f_{-AA^{\top}} (x) =  \prod _{i} (x - \lambda_i) $$

We claim $f_{-AA^{\top}}(\CL^2) = 0$, since:
$$ \CL^2 = \begin{bmatrix}
-AA^{\top} &0\\
\\
\\
 0 & -AA^{\top}
\end{bmatrix},$$
therefore
$$f_{-AA^{\top}}(\CL^2) = 
 \begin{bmatrix}
 f_{-AA^{\top}} (-AA^{\top}) &0\\
 \\
 \\
 0 &  f_{-AA^{\top}} (-AA^{\top})
 \end{bmatrix}
= 0.
$$

Thus if $f_{\CL}(x)$ is the minimal polynomial of $\CL$, we have 
\begin{align}\label{divide}
f_{\CL}(x) \  | \  f_{-AA^{\top}}(x^2) =  \prod _{j} (x - \sqrt{\lambda_j} i) (x + \sqrt{\lambda_j} i) 
\end{align}
Moreover, since every eigenvalue of $\CL$ is also a root of $ f_{\CL}(x) $, from (\ref{divide}), we have : 
\begin{itemize}
\item[(1)] If $\pm \sqrt{\lambda_j} i \ne 0$ is an eigenvalue of $\CL$, then the degree of $(x \pm \sqrt{\lambda_j} i)$ in $f_{\CL}(x)$ must be 1.
\item[(2)] If $\lambda_j = 0$ is an eigenvalue of $\CL$, then the degree of $x$ in $f_{\CL}(x)$ must be $1$ or $2$.
\end{itemize}

In the following arguments, we will prove that there is at least a real Jordan block corresponding to eigenvalue $0$ has size $2$. We have
$$
 \CL = \begin{bmatrix}
0 & & -AA^{\top} \\
\\
I & & 0
\end{bmatrix} 
,\  \CL^2 = \begin{bmatrix}
-AA^{\top} & &0\\
\\
 0 & & -AA^{\top}
\end{bmatrix}.
$$
Thus for some $x = (x_1,x_2) \in \Ker(\CL^2)$,where $ x_1,x_2 \in \BR^n$. Then we have
$$\begin{aligned}
\CL^2 \cdot
\begin{bmatrix}
x_1\\
\\
x_2
\end{bmatrix}
& = 
\begin{bmatrix}
-AA^{\top} \cdot x_1\\
\\
-AA^{\top} \cdot x_2
\end{bmatrix}
= 0 .
\end{aligned}$$
This implies $x_1,x_2 \in \Ker -AA^{\top}$. Since $-AA^{\top}$ has eigenvalue 0, $x_1,x_2$ may not equal to $0$. But

$$\begin{aligned}
\CL \cdot
\begin{bmatrix}
x_1\\
\\
x_2
\end{bmatrix}
& = 
\begin{bmatrix}
-AA^{\top} \cdot x_2\\
\\
 x_1
\end{bmatrix}
\end{aligned},$$
so if $(x_1,x_2) \in \Ker \CL$, $x_1$ must be $0$. This completes the proof of $\Ker(\CL) \ne \Ker(\CL^2)$. By Proposition \ref{ker}, there exists Jordan block of size $2$. 
\end{proof}

\begin{lem}\label{ljbJ} Let $J_0$ be the largest Jordan blocks of $\CL$ corresponding to $0$ eigenvalues, then elements in $e^{J_0 t}$ are of $\CO(t)$. Let $J_{\mu,\bar{\mu}}$ be the largest Jordan blocks of $\CL$ corresponding to  a pair of conjugate purely imaginary eigenvalues $(\mu,\bar{\mu})$, then elements in $e^{J_{\mu,\bar{\mu}} t} \in \CO(1)$.
\end{lem}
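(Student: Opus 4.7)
The plan is to reduce the growth bounds on the matrix exponentials to the sizes of the maximal Jordan blocks of $\CL$ at the relevant eigenvalues, and then read off those sizes from the minimal polynomial already computed in Proposition \ref{mpL}. The key input is Proposition \ref{minipoly}: the size of the largest complex Jordan block attached to an eigenvalue $\mu$ equals the multiplicity of the factor $(x-\mu)$ in the minimal polynomial of $\CL$.

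First I would apply this to the non-zero purely imaginary eigenvalues. By Proposition \ref{mpL}, each factor $(x \pm \sqrt{\lambda_i}\,i)$ with $\lambda_i \ne 0$ appears with multiplicity $1$ in $f_{\CL}$, so the largest complex Jordan block at any such eigenvalue has size $1$. Passing to the real Jordan form via Proposition \ref{tr}, a pair of conjugate complex blocks of size $1$ glues into a single real block $J_{\mu,\bar{\mu}}$ of size $2$ consisting of just the block $D$, with $\lambda=0$ and $\nu=\sqrt{-\lambda_i}>0$. Substituting $m=1$ and $\lambda=0$ into formula (\ref{expc}) then gives
\[
e^{J_{\mu,\bar\mu} t} \;=\; R(t) \;=\; \begin{bmatrix} \cos(\nu t) & -\sin(\nu t) \\ \sin(\nu t) & \cos(\nu t) \end{bmatrix},
\]
whose entries are uniformly bounded in $t$, so $e^{J_{\mu,\bar\mu} t} \in \CO(1)$, as claimed.

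For the zero eigenvalue, Proposition \ref{mpL} shows that the factor $x$ in $f_{\CL}$ has degree exactly $2$ whenever $0$ is an eigenvalue. Hence the largest Jordan block $J_0$ at $0$ has size at most $2$. In size $1$ there is nothing to prove, and in size $2$ the formula (\ref{expr}) with $\mu=0$ and $m=2$ yields
\[
e^{J_0 t} \;=\; \begin{bmatrix} 1 & t \\ 0 & 1 \end{bmatrix},
\]
whose entries are $\CO(t)$.

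No deep obstacle remains: all the structural work (identifying the minimal polynomial and in particular bounding the degree of $x$ by $2$) was already carried out in Proposition \ref{mpL}, and the explicit matrix-exponential formulas of Proposition \ref{ExpJordan} finish the job. The one point that requires care is the bookkeeping between complex and real Jordan blocks for conjugate imaginary pairs, namely that a complex block of size $m$ becomes a real block of size $2m$; this is exactly why the imaginary-eigenvalue case still yields a $2\times 2$ exponential despite the complex Jordan size being $1$.
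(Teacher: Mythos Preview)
Your proof is correct and follows essentially the same route as the paper: invoke Proposition \ref{mpL} (together with Proposition \ref{minipoly}) to pin down the Jordan block sizes, then plug $\mu=0$, $m=2$ into (\ref{expr}) and $\lambda=0$, $m=1$ into (\ref{expc}). One cosmetic point: since the degree of $x$ in $f_{\CL}$ is exactly $2$ whenever $0$ is an eigenvalue, the largest block at $0$ has size exactly $2$, so the size-$1$ case you mention never arises.
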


\begin{proof}From Proposition \ref{mpL}, the size of the largest Jordan blocks of $\CL$ corresponding to the $0$ eigenvalues are $2$. Thus the corollary follows from  (\ref{expr}) with $\mu = 0$ and $m=2$. From Proposition \ref{mpL}, the size of the largest Jordan blocks of $\CL$ corresponding to a pair of conjugate imaginary eigenvalues $(\mu,\bar{\mu})$ are $2$. So the corollary follows from  (\ref{expc}) with $\lambda = 0$ and $m=1$.
\end{proof}

Next we consider the set of real generalized vectors of $\CL$. Let $S_{\CL}$ be the set of vectors that appear in the real generalized  modal matrix $M_{\CL}$ . Then as shown in Proposition \ref{mpL}, these generalized eigenvectors corresponds to an imaginary eigenvalues or $0$ eigenvalues. If $ v \in S_{\CL}$ corresponds to $0$, $v$ may in a Jordan chain with length $1$ or length $2$. If $ v \in S_{\CL}$ corresponds to a purely imaginary number, then there exists some $w$ be the eigenvector of an imaginary eigenvalue and $v = \Re(w)$ or $v = \Im(w)$.
 
Thus we have
 
$$ S_{\CL} = (\cup_{(\mu, \bar{\mu})}  S_{\CL}(\mu, \bar{\mu}) )\cup   S_{\CL}(0) $$

where $$S_{\CL}(\mu, \bar{\mu})  = \{v \in \BR^{2n} |\ \exists \  w \ \mbox{be an eigenvector for}\  \mu\ \mbox{or}\ \bar{\mu} \in \BC - \BR \ , v = \Re(w) \ \mbox{or}\ v = \Im(w) \} ,$$ and
$$S_{\CL}(0)  = \{v \in \BR^{2n} |\ v \mbox{ is a generalized eigenvector for}\ 0 \} .$$
  
So as in definition \ref{rgmm},  $M_{\CL}$ has the following form:
 
\begin{align}
M_{\CL} =
\begin{bmatrix} 
 \coolover{\mbox{$m_1$ columns}}{v_1, & ... & ,v_{m_1} } ,&    \coolover{\mbox{$m_2$ columns}}{v_{1+m_1}, & ... &, v_{m_1 + m_2} }, &  \coolover{\mbox{$m_3$ columns}}{v_{m_1 + m_2 + 1} ,& ... & ,v_{m_1 +m_2 + m_3} } 
\end{bmatrix}
\end{align}
 where 
\begin{itemize}
\item $\{v_1, ...., v_{m_1} \} \subset \cup_{(\lambda, \bar{\lambda})}  S_{\CL}(\mu, \bar{\mu}) $ .
\item $\{v_{1+m_1} , ... , v_{m_1 + m_2} \} \subset  S_{\CL}(0) $, and each each $v_i$ is a Jordan chain of length $1$.
\item $\{ v_{m_1 + m_2+1} , ... , v_{m_1 +m_2 + m_3} \} \subset  S_{\CL}(0)$, $(v_i, v_{i+1})$ is a Jordan chain of length $2$, and $v_i = \CL v_{i+1}$.
\item $m_1 + m_2 + m_3 = 2n.$
\end{itemize}
   
\begin{lem}\label{firstmrow}
If $ w = (w_1 , w_2 ,..., w_{2n}) ^{\top} \ne 0$ is a type 1 generalized eigenvectors of $\CL$ corresponding to $0$ eigenvalue , then the first n components $(w_1 , w_2 , ... , w_{n} )^{\top}= 0$, and the the last n components $(w_{n+1}, ..., w_{2n})^{\top} \in \ker (-AA^{\top})$ . 

If $ s = (s_1 , s_2 ,..., s_{2n})^{\top} \ne 0$ is a type 2 generalized eigenvectors of $\CL$ corresponding to $0$ eigenvalue , then $(s_1, ..., s_n)^{\top} \ne 0$ and 
 $(s_1, ..., s_n)^{\top}, (s_{n+1}, ... , s_{2n})^{\top} \in \ker(-AA^{\top})$. 
\end{lem}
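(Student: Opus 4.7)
The plan is to unpack the definitions of type $1$ and type $2$ generalized eigenvectors (Definition \ref{GE}) directly, exploiting the block structure
\[
\CL = \begin{bmatrix} 0 & -AA^{\top} \\ I & 0 \end{bmatrix}, \qquad \CL^2 = \begin{bmatrix} -AA^{\top} & 0 \\ 0 & -AA^{\top} \end{bmatrix}.
\]
The key observation is that when we split any vector $z\in\BR^{2n}$ into its first and last $n$ coordinates as $z=(u^{\top},v^{\top})^{\top}$, the action of $\CL$ becomes $\CL z=(-AA^{\top}v,\ u)^{\top}$, which cleanly separates the two halves.

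First I would handle the type $1$ case. By definition, $w\neq 0$ satisfies $\CL w=0$. Writing $w=(w_{1:n}^{\top},w_{n+1:2n}^{\top})^{\top}$, the identity $\CL w=(-AA^{\top}w_{n+1:2n},\ w_{1:n})^{\top}=0$ forces $w_{1:n}=0$ and $w_{n+1:2n}\in\ker(-AA^{\top})$. This is exactly the claim about type $1$ vectors.

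Next I would handle the type $2$ case. Write $s=(s_{1:n}^{\top},s_{n+1:2n}^{\top})^{\top}$. From $\CL^2 s=0$ together with the block-diagonal form of $\CL^2$, both $s_{1:n}$ and $s_{n+1:2n}$ lie in $\ker(-AA^{\top})$. In particular $-AA^{\top}s_{n+1:2n}=0$, so
\[
\CL s = \begin{bmatrix} -AA^{\top} s_{n+1:2n} \\ s_{1:n} \end{bmatrix} = \begin{bmatrix} 0 \\ s_{1:n} \end{bmatrix}.
\]
Since the type $2$ condition requires $\CL s\neq 0$, we must have $s_{1:n}\neq 0$, completing the proof.

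There is no real obstacle here: the statement is essentially a direct consequence of the block form of $\CL$ and $\CL^2$, and the proof reduces to a two-line calculation in each of the two cases. The only care needed is to keep track of which half of the vector is forced to vanish and which half is forced into $\ker(-AA^{\top})$; this is automatic once the block multiplication is written out.
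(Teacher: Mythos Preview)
Your proof is correct and follows essentially the same approach as the paper: both arguments exploit the block form of $\CL$ and $\CL^2$, read off the kernel conditions for each half of the vector, and then use $\CL s\neq 0$ to force $s_{1:n}\neq 0$ in the type~2 case. Your version is in fact slightly more explicit in that last step, computing $\CL s=(0,s_{1:n})^{\top}$ rather than simply asserting the conclusion.
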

 
 \begin{proof} Firstly, note that  type 1 generalized eigenvectors of $\CL$ corresponding to $0$  are just vectors in $\ker(\CL)$. Thus if $w$ is a type 1 generalized eigenvectors of $\CL$ corresponding to $0$, we have
 
 $$\begin{aligned}
 \CL \cdot w & = 
 \begin{bmatrix}
 0 & -AA^{\top} \\
 \\
 I & 0
 \end{bmatrix} \cdot
 (w_1 , w_2 ,..., w_{2n}) ^{\top} \\
 & = 
  \begin{bmatrix}
 -AA^{\top} \cdot (w_{n+1}, ..., w_{2n})^{\top} \\
 \\
 \\
(w_1 ,..., w_{n})^{\top}
 \end{bmatrix} 
 = 0
 \end{aligned}$$
 Thus $(w_1 , w_2 , ... , w_{n} )^{\top}= 0$ and  $(w_{n+1}, ..., w_{2n})^{\top} \in \ker (-AA^{\top})$ . 
 
 Secondly, if $ s = (s_1 , s_2 ,..., s_{2n})^{\top} \ne 0$ is a type 2 generalized eigenvectors of $\CL$ corresponding to $0$ eigenvalue, then we have
 $$\CL \cdot s \ne 0\  \  \mbox{and} \ \ \CL^2 \cdot s = 0.$$
 
Since 
 $$\begin{aligned}
 \CL^2 \cdot s & = 
 \begin{bmatrix}
 -AA^{\top} & 0 \\
 \\
 0 & -AA^{\top}
 \end{bmatrix} \cdot
 (s_1 , s_2 ,..., s_{2n})^{\top} \\
 & = 
  \begin{bmatrix}
 -AA^{\top} \cdot (s_{n+1}, ..., s_{2n})^{\top} \\
 \\
 \\
 -AA^{\top} (s_1 ,..., s_{n})^{\top}
 \end{bmatrix} 
  = 0
 \end{aligned}$$
This implies $(s_1 ,..., s_{n})^{\top}, (s_{n+1}, ..., s_{2n})^{\top} \in \ker(-AA^{\top})$. Moreover, since $s$ is a type $2$ generalized eigenvector, we have $(s_1 ,..., s_{n})^{\top} \ne 0 $.
 \end{proof}

 From the correspondence between real generalized vectors and real Jordan blocks as described in Proposition \ref{tr}, the real Jordan form $J_{\CL}$ of $\CL$ under a similar transformation by real generalized modal matrix $M_{\CL}$ has form

 \begin{align}\label{J_CL}
J_{\CL}=\begin{bmatrix} 
 \coolover{\mbox{$m_1$ columns}}{J_{\mu_1, \bar{\mu}_1}& & & & & & }  \\ 
 & \ddots & & \\
& & J_{\mu_{m_1/2}, \bar{\mu}_{m_1/2}}  &\\
& & &\coolover{\mbox{$m_2$ columns}}{0 & & & &  }\\
& & & & \ddots \\
& & & & & 0 \\
& & & & & &\coolover{\mbox{$m_3$ columns}}{ J^2_{0} & & &}\\
& & & & & & & \ddots \\
& & & & & & & &  J^2_{0}\\
\end{bmatrix}
\end{align}

$J_{\mu_k, \bar{\mu}_k} = 
\begin{bmatrix}
\Re(\mu_k) & - \Im(\mu_k)\\
\\
\Im(\mu_k) & \Re(\mu_k)
\end{bmatrix} 
$ are real Jordan blocks corresponding to conjugate eigenvalue $(\mu_k, \bar{\mu}_k)$
 and
 $
 J^2_{0} = 
 \begin{bmatrix}
 0 & 1\\
 0 & 0
 \end{bmatrix}
 $ is the real Jordan block corresponding to the type $2$ generalized eigenvectors of $0$ eigenvalue. Thus we have
\begin{align}
e^{t J_{\CL}} = \oplus^{m_1}_{i = 1} (e^{t J_{\mu_i, \bar{\mu}_i}}) \oplus^{m_2}_{i=1} I_1 \oplus^{m_3}_{i=1} e^{t J^2_{0}}
\end{align}
where 
\begin{itemize}
\item $e^{t J_{\mu_i, \bar{\mu}_i}}$ is defined in (\ref{expc}) with $\lambda=0,\ m=1$. Thus elements in $e^{t J_{\mu_i, \bar{\mu}_i}}$ are in $\CO(1)$.
\item $I_1 = [1] \in \BR^{1 \times 1}$.
\item $e^{t J^2_{0}}$ is defined in (\ref{expr}) with $\mu = 0, m =2$. Matrix elements in $e^{t J^2_{0}}$ are in $\CO(t)$.
\end{itemize}

More precisely, we have
\begin{align}\label{etJ_L}
\begin{tikzpicture}
  \draw(0,0)node
  {$e^{t J_{\CL}}  =
  \left[
    \begin{array}{cccccccc}
\ddots & & & & & &\\ 
 & \ddots & & & & &\\
& & 1 & t & & \\
& & & 1& &  \\
 & & && \ddots &  \\
 & & &  && 1 &t & \\
 & & &  &&  &   1 \\
    \end{array}
  \right]
$};
\draw(-1,1.8)node[above]{$\overbrace{\rule{18mm}{0mm}}$}(-1,2.3)node[above]{$m_1 + m_2$};
\draw(1.2,0.5)node[above]{$\overbrace{\rule{25mm}{0mm}}$}(1.2,0.8)node[above]{$m_3$ \mbox{columns}};
\end{tikzpicture}
\end{align}
and only elements in the upper diagonal of the last $m_3$ columns belongs to $\CO(t)$, other elements are all bounded function of $t$.

\begin{lem}\label{etl}
 Let $(e^{t \CL})_{i}$ denote the $i$-th row of $e^{t \CL}$. Then we have 
 \begin{itemize}
 \item If $AA^{\top}$ has $0$ eigenvalues, then for $i \in \{1,2,...,n\}$, elements in $(e^{t \CL})_{i}$ are bounded and for $i \in \{n+1,n+2,...,2n\}$, $(e^{t \CL})_{i}$ 
belongs to $\CO(t)$.
\item If $AA^{\top}$ doesn't have $0$ eigenvalues, all elements in $e^{t \CL}$ are bounded.
\end{itemize}
\end{lem}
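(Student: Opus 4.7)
The plan is to compute $e^{t\CL}$ via the real Jordan decomposition $e^{t\CL} = M_{\CL} e^{t J_{\CL}} M_{\CL}^{-1}$ and then combine two pieces of information already established in the excerpt: (i) the explicit description of $e^{tJ_{\CL}}$ in display (\ref{etJ_L}), which tells us exactly which entries grow with $t$; (ii) the structural description of the generalized eigenvectors of $\CL$ attached to the eigenvalue $0$ given by Lemma \ref{firstmrow}, which pins down what those $t$-growing entries actually multiply in $M_{\CL}$.

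First I would handle the non-singular case, as a warm-up. If $AA^{\top}$ is non-singular then $0$ is not an eigenvalue of $-AA^{\top}$, so by Proposition \ref{mpL} all eigenvalues of $\CL$ are purely imaginary and each complex factor in the minimal polynomial appears with degree $1$. Hence in (\ref{J_CL}) we have $m_2 = m_3 = 0$, and $J_{\CL}$ consists only of blocks $J_{\mu,\bar{\mu}}$ of size $2\times 2$. By (\ref{expc}) with $\lambda = 0$ and $m = 1$, each $e^{t J_{\mu,\bar{\mu}}}$ is a pure rotation, hence bounded. Therefore $e^{t\CL}$ itself is bounded, establishing the second bullet.

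For the singular case, the key point is that, by (\ref{etJ_L}), every entry of $e^{tJ_{\CL}}$ is in $\CO(1)$ except the off-diagonal entries equal to $t$ sitting in the last $m_3$ columns (one per length-$2$ Jordan chain at $0$). When we form $M_{\CL}\,e^{tJ_{\CL}}$, the only way an entry of $M_{\CL}\,e^{tJ_{\CL}}$ grows linearly in $t$ is through those $t$-entries, and they always multiply the column of $M_{\CL}$ that sits immediately to the \emph{left} of the corresponding type-$2$ column, i.e. the type-$1$ generalized eigenvector $v_i = \CL v_{i+1}$ starting the Jordan chain. By Lemma \ref{firstmrow}, every type-$1$ generalized eigenvector of $\CL$ attached to the eigenvalue $0$ has its first $n$ components equal to $0$. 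Consequently every $t$-growing contribution to $M_{\CL}\,e^{tJ_{\CL}}$ lives entirely in the last $n$ rows. All other contributions are $\CO(1)$ because the corresponding entries of $e^{tJ_{\CL}}$ are $\CO(1)$. Hence the first $n$ rows of $M_{\CL}\,e^{tJ_{\CL}}$ are bounded and the last $n$ rows are of order $\CO(t)$.

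Finally I would pass this row-wise bound through the right multiplication by the constant matrix $M_{\CL}^{-1}$. Right multiplication by a fixed invertible matrix is a finite linear combination of columns, so it preserves the row-wise order of growth: if a row of $M_{\CL}\,e^{tJ_{\CL}}$ is bounded (resp.\ $\CO(t)$), then the corresponding row of $e^{t\CL} = (M_{\CL}\,e^{tJ_{\CL}})\,M_{\CL}^{-1}$ is bounded (resp.\ $\CO(t)$). This gives exactly the two bullets claimed. The main technical obstacle I anticipate is just bookkeeping: making sure that the columns of $M_{\CL}$ we select as the targets of the $t$-growing entries are indeed the type-$1$ vectors in the length-$2$ Jordan chains (and not, say, the type-$2$ vectors themselves, which do not have zero top half), so that Lemma \ref{firstmrow} applies in the form we need.
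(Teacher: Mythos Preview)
Your proposal is correct and follows essentially the same approach as the paper: both arguments use the explicit form of $e^{tJ_{\CL}}$ from (\ref{etJ_L}) together with Lemma \ref{firstmrow} to see that the $t$-growing entries of $e^{tJ_{\CL}}$ always pair with the type-$1$ columns of $M_{\CL}$, whose top $n$ entries vanish. The only cosmetic difference is associativity: you compute $(M_{\CL}\,e^{tJ_{\CL}})\,M_{\CL}^{-1}$ whereas the paper computes $M_{\CL}\,(e^{tJ_{\CL}}\,M_{\CL}^{-1})$; the cancellation mechanism is identical.
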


\begin{proof}
Let $(M_{\CL})^{-1} = [p_1,p_2, ..., p_{2n}], \ p_i \in \BR^{2n}$ be the inverse of the real generalized modal matrix of $\CL$, then we have
\begin{align}
e^{t \CL} 
= M_{\CL}  (e^{t J_{\CL}} [w_1, w_2 , ..., w_{2n}]) .
\end{align}
If   $AA^{\top}$ has $0$ eigenvalues, from (\ref{etJ_L}), we have $ e^{t J_{\CL}} w_i$ has form
\begin{align}\label{m3row}
\begin{tikzpicture}
  \draw(5,0)node
  {$
  \left[
    \begin{array}{c}
    \vdots \\
    \vdots \\
    \vdots \\
    \hline 
    t \\
   \CO(1) \\
    \vdots \\
    t \\
    \CO(1)
       \end{array}
  \right]
$};
\draw(5.5,1.2)node[right]{$\left.\rule{0mm}{10mm}\right\}$  $\in \CO(1) $, $m_1 + m_2$ rows};
\draw(5.5,-1)node[right]{$\left.\rule{0mm}{12mm}\right\}$ $m_3$ \mbox{ rows}};
\end{tikzpicture}
.\end{align}

From Lemma \ref{firstmrow}, the first $n$ rows of $M_{\CL}$ has form
 \begin{align}\label{nrowM}
\begin{tikzpicture}
  \draw(0,0)node
  {$
  \left[
    \begin{array}{cc|cccccc}
&   &   0 &               & 0 &           &      0 & \\
&   &   0 &            & 0 &                &      0 &  \\
&  &  0 &                     & 0 &      & 0       &  \\
&   &   \vdots &                                      & \vdots &                &       \vdots &  \\
&  &  0 &                     & 0 &      & 0       &  \\
\cdots & \cdots & \vdots &      v_{m+2}               &\vdots &  \cdots      & \vdots & v_{2n} \\
&   &    &                                      &  &                &        &  \\
&   &   0 &                                      & 0 &                &      0 &  \\
&   &    &                                      &  &                &        &  \\
&   &   \vdots &                                      & \vdots &                &       \vdots &  \\
    \end{array}
  \right]
$};
\draw(-2.3,2.4)node[above]{$\overbrace{\rule{15mm}{0mm}}$}(-2.2,2.8)node[above]{$m_1 + m_2$};
\draw(1,2.4)node[above]{$\overbrace{\rule{40mm}{0mm}}$}(0.9,2.8)node[above]{$m_3$ \mbox{columns}};
\draw(3.5,0)node[right]{$\left.\rule{0mm}{25mm}\right\}$ \mbox{n rows}};
\end{tikzpicture}.
\end{align}\label{54}

The first $n$ rows of $ M_{\CL} e^{t J_{\CL}} p_i $ is the matrix-vector product of (\ref{nrowM}) and (\ref{m3row}), we can see the term $t$ in (\ref{m3row}) will product
with term $0$ in (\ref{nrowM}), thus the first $n$ rows of $ M_{\CL} e^{t J_{\CL}} p_i $ belong to $\CO(1)$. The last $n$ rows of $ M_{\CL} e^{t J_{\CL}} p_i $ belong to $\CO(t)$ because there may exist nonzero elements in the last $n$ rows of $ M_{\CL}$ that can product with $t$ in (\ref{m3row}).

If  $AA^{\top}$ doesn't have $0$ eigenvalues, from Lemma \ref{ljbJ}, all elements in $e^{J_{\CL} t} $ are bounded and since $e^{t \CL} = M_{\CL}  e^{t J_{\CL}} (M_{\CL})^{-1}$, thus all elements in $e^{t \CL}$ are bounded.
\end{proof}

The covariance evolution directly follows from above calculation.

\begin{proof}[Proof of covariance in continuous time equation] Denote $$P(t_0) = 
\begin{bmatrix}
\Var(y_1(t_0)) & \Cov(y_1(t_0), X_1(t_0)) \\
\\
 \Cov(y_1(t_0), X_1(t_0)) & \Var(X_1(t_0))
\end{bmatrix}$$ and
$e^{t \CL} = \begin{bmatrix}
A(t) & B(t) \\
\\
C(t) & D(t)
\end{bmatrix}$. Since $P(t + t_0)  = e^{t \CL} P(t_0) (e^{t \CL})^{\top} $, we have
$$\begin{aligned}
& P(t + t_0)   = 
\begin{bmatrix}
A(t) & B(t) \\
\\
C(t) & D(t)
\end{bmatrix}
\begin{bmatrix}
\Var(y_1(t_0)) & \Cov(y_1(t_0), X_1(t_0)) \\
\\
 \Cov(y_1(t_0), X_1(t_0)) & \Var(X_1(t_0))
\end{bmatrix}
 \begin{bmatrix}
(A(t))^{\top} & (C(t) )^{\top}\\
\\
(B(t) )^{\top}& (D(t))^{\top}
\end{bmatrix} \\
\\
& = \begin{bmatrix}
A(t) \Var(y_1(t_0)) + B(t)  \Cov(y_1(t_0), X_1(t_0)) & A(t) \Cov(y_1(t_0), X_1(t_0)) + B(t)  \Var(X_1(t_0)) \\
\\
C(t) \Var(y_1(t_0))  + D(t) \Cov(y_1(t_0), X_1(t_0))  & C(t)\Cov(y_1(t_0), X_1(t_0)) + D(t)  \Var(X_1(t_0)) 
\end{bmatrix}
 \begin{bmatrix}
(A(t))^{\top} & (C(t) )^{\top}\\
\\
(B(t) )^{\top}& (D(t))^{\top}
\end{bmatrix} \\
\\
& =
\begin{bmatrix}
 (A\Var(y_1) + B  \Cov )A^{\top}  +( A \Cov + B \Var(X_1)) B^{\top} &  (A\Var(y_1) + B  \Cov )C^{\top} + ( A \Cov + B \Var(X_1)) D^{\top}\\
 \\
(C \Var(y_1) + D\Cov) A^{\top} + ( C\Cov+ D \Var(X_1)) B^{\top} & (C \Var(y_1) + D\Cov ) C^{\top} + ( C\Cov+ D \Var(X_1)) D^{\top}
 \end{bmatrix} \\
 \\
 & = \begin{bmatrix}
 \Var(y_1(t_0 + t)) & \Cov(y_1(t_0 + t), X_1(t_0 + t)) \\
 \\
 \Cov(y_1(t_0 + t), X_1(t_0 + t)) & \Var(X_1(t_0 + t))
 \end{bmatrix}
\end{aligned}$$ 

From Lemma \ref{etl}, if $AA^{\top}$ has $0$ eigenvalue, elements in $A(t),B(t) $ are bounded and elements in  $C(t),D(t) $ are in $\CO(t)$ and there exist elements in $C(t),D(t) $ belongs to $\Theta(t)$ , thus from above equation we have  $\Var(y_1(t_0 + t))  \in \CO(1)$, $\Cov(y_1(t_0), X_1(t_0 + t)) \in \Theta(t)$,$ \Var(X_1(t_0 + t)) \in \Theta(t^2)$. Moreover, if  $AA^{\top}$ doesn't have $0$ eigenvalue, Lemma \ref{etl} implies all elements in $A(t), B(t), C(t), D(t)$ are bounded, thus all elements in $P(t + t_0)$ are bounded.
\end{proof}

\subsubsection{Covariance evolution of Euler discretization}\label{proof4}

\begin{proof}
The Euler discretization of (\ref{Maineq}) with step size $\eta$ can be written as 
\begin{align}
\begin{bmatrix}
y^t_1\\
\\
X^t_1 
\end{bmatrix}
& =
\begin{bmatrix}
y^{t-1}_1\\
\\
X^{t-1}_1 
\end{bmatrix}
+
\begin{bmatrix}
0 & -\eta AA^T \\
\\
\eta I &  0
\end{bmatrix}
\cdot
\begin{bmatrix}
y^{t-1}_1\\
\\
X^{t-1}_1 
\end{bmatrix}
& =
\begin{bmatrix}
I & -\eta AA^T \\
\\
\eta I & I 
\end{bmatrix}
\cdot
\begin{bmatrix}
y^{t-1}_1\\
\\
X^{t-1}_1 
\end{bmatrix}.
\end{align}
We want to prove if $(y^t_1, X^t_1)$ evolve as this equation, then $\Cov(y^t_1),\Cov(X^t_1)$ have exponential growth rate. 
This can be done by calculating the real Jordan normal form of 
$\begin{bmatrix}
I & -\eta AA^T \\
\\
\eta I & I 
\end{bmatrix}$. Since $AA^{\top} \in \BR^{n \times n}$ is a diagonalizable matrix, there exists a matrix $P$, such that

\begin{align}
P AA^{\top} P^{-1} =
\begin{bmatrix}
\gamma_1 & & & \\
& \gamma_2 & & \\
& & \ddots & \\
& & & \gamma_n
\end{bmatrix}
\end{align}

where $\gamma_1 ,..., \gamma_n$ are eigenvalues of $AA^{\top}$. Moreover, since $AA^{\top}$ is a positive semidefinite matrix, we have $\gamma_i \ge 0, i \in [n]$. 

Then we have
\begin{align}\label{matrix_sim}
\begin{bmatrix}
P & &\\
\\
&& P
\end{bmatrix}
\begin{bmatrix}
I & -\eta AA^T \\
\\
\eta I & I 
\end{bmatrix}
\begin{bmatrix}
P^{-1} & \\
\\
& P^{-1}
\end{bmatrix}
= 
\begin{bmatrix}
\begin{array}{cccc|cccc}
1 & & & & -\gamma_1 \eta & & & \\
& 1 & & & &-\gamma_2 \eta & & \\
& & \ddots & & & & \ddots & \\
& & & 1& & & & - \gamma_n \eta \\ \hline
\eta & & & &1& & & \\
& \eta & & & &1 & & \\
& & \ddots & & & & \ddots & \\
& & &\eta  & & & &1
\end{array}
\end{bmatrix}.
\end{align}
This matrix is similar to $\begin{bmatrix}
I & -\eta AA^T \\
\\
\eta I & I 
\end{bmatrix}$, thus they have same character polynomial defined by

\begin{align}
\det (\mu I - & \begin{bmatrix}
\begin{array}{cccc|cccc}
1 & & & & -\gamma_1 \eta & & & \\
& 1 & & & &-\gamma_2 \eta & & \\
& & \ddots & & & & \ddots & \\
& & & 1& & & & - \gamma_n \eta \\ \hline
\eta & & & &1& & & \\
& \eta & & & &1 & & \\
& & \ddots & & & & \ddots & \\
& & &\eta  & & & &1
\end{array}
\end{bmatrix} ) \\
& = \det ( \begin{bmatrix}
\begin{array}{cccc|cccc}
\mu - 1 & & & & \gamma_1 \eta & & & \\
& \mu - 1 & & & &\gamma_2 \eta & & \\
& & \ddots & & & & \ddots & \\
& & &\mu - 1& & & & -\gamma_n \eta \\ \hline
- \eta & & & &\mu-1& & & \\
&- \eta & & & &\mu-1 & & \\
& & \ddots & & & & \ddots & \\
& & &- \eta  & & & &\mu-1
\end{array}
\end{bmatrix} ) \\
& = \det(
\begin{bmatrix}
(\mu-1)^2 + \gamma_1 \eta^2 & & & \\
& (\mu-1)^2 + \gamma_2 \eta^2 & & \\
& & \ddots & \\ 
&&&(\mu-1)^2 + \gamma_n \eta^2 
\end{bmatrix}
)\\
& = \prod^n_{i =1}( (\mu-1)^2 + \gamma_i \eta^2 )
\end{align} 

Recall we have $\gamma_i \ge 0, i \in [n]$, thus the root of character polynomial is
$
\mu_j, \bar{\mu}_j = 1 \pm i \sqrt{\gamma_j} \eta
,\ j \in [n]$. Moreover, we have $\lvert  \mu_j \lvert= 1 + \gamma_j \eta^2 \ge 1$. Thus by Proposition \ref{differenceim}, elements in
$\begin{bmatrix}
I & -\eta AA^T \\
\\
\eta I & I 
\end{bmatrix}^t$ have an exponential growth rate as $\Theta(\lvert  \mu \lvert^t )$, where $\mu$ is the eigenvalue of $AA^{\top}$ which has  largest norm.  

Since 
\begin{align} P(t)& = 
\begin{bmatrix}
\Var(y^t_1) & \Cov(y^t_1, X^t_1)\\
\Cov(X^t_1, y^t_1) & \Var(X^t_1)
\end{bmatrix} \\
&=
\begin{bmatrix}
I & -\eta AA^T \\
\\
\eta I & I 
\end{bmatrix}^{t-t_0}
\begin{bmatrix}
\Var(y^{t_0}_1) & \Cov(y^{t_0}_1, X^{t_0}_1)\\
\\
\Cov(X^{t_0}_1, y^{t_0}_1) & \Var(X^{t_0}_1)
\end{bmatrix}
(\begin{bmatrix}
I & -\eta AA^T \\
\\
\eta I & I 
\end{bmatrix}^{t-t_0})^{\top}
\end{align}

Since elements in matrix $\begin{bmatrix}
I & -\eta AA^T \\
\\
\eta I & I 
\end{bmatrix}^{t-t_0}
$ has growth rate $\Theta(\lvert  \mu \lvert^{t} )$, thus elements in $P(t)$ has growth rate $\Theta(\lvert  \mu \lvert^{2t} )$.
\end{proof}

\subsubsection{Covariance evolution of Symplectic discretization}\label{proof5}

We firstly determine the Symplectic discretization of continuous FTRL (\ref{Maineq}) with Euclidian norm regularizer. 

\begin{lem}
Discrete continuous FTRL (\ref{Maineq}) with (\ref{type1}), we get

\begin{align}
\begin{bmatrix}
y_1^{n+1}\\
\\
X_1^{n+1}
\end{bmatrix}
= 
\begin{bmatrix}
I & -\eta AA^{\top} \\
\\
\eta & I - \eta^2 AA^{\top}
\end{bmatrix}
\cdot
\begin{bmatrix}
y_1^{n}\\
\\
X_1^{n}
\end{bmatrix} .
\end{align}
\end{lem}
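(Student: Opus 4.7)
The plan is to apply the Type I symplectic method (\ref{type1}) directly to the Hamiltonian system (\ref{HamiltonianFTRL}) for agent $1$ with Euclidean regularizer, using the explicit form of the Hamiltonian and its partial derivatives, and then eliminate the implicitness that arises from evaluating $\nabla_X H$ at the updated $y$.

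First, I would specialize the Hamiltonian (\ref{ham1}) to the Euclidean regularizer case. Since $h_i(x) = \|x\|^2$ gives $\nabla h_i^*(y) = y$ by (\ref{eunabla}), up to an additive constant the Hamiltonian satisfies $\nabla_{y_1} H = y_1$ and $\nabla_{X_1} H = (A^{(21)})^\top\nabla h_2^*\bigl(y_2(0)+A^{(21)}X_1\bigr) = AA^\top X_1 - A y_2(0)$, using $A^{(21)} = -A^\top$ and $A^{(12)} = A$. As already justified in Section~\ref{proof3}, the constant term involving $y_2(0)$ can be set to zero without loss of generality for the covariance analysis, so $\nabla_{X_1} H = AA^\top X_1$. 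Note that $H$ is separable, so $\nabla_{X_1} H$ depends only on $X_1$ and $\nabla_{y_1} H$ depends only on $y_1$, which will make the Type I scheme explicit.

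Next I would substitute these gradients into (\ref{type1}) applied to $(X_1,y_1)$. The first equation becomes
\begin{align*}
y_1^{n+1} = y_1^n - \eta \,\nabla_{X_1} H(X_1^n, y_1^{n+1}) = y_1^n - \eta\, AA^\top X_1^n,
\end{align*}
where the dependence on $y_1^{n+1}$ disappears because $\nabla_{X_1} H$ is independent of $y_1$. The second equation is
\begin{align*}
X_1^{n+1} = X_1^n + \eta\, \nabla_{y_1} H(X_1^n, y_1^{n+1}) = X_1^n + \eta\, y_1^{n+1}.
\end{align*}
Substituting the already-derived expression for $y_1^{n+1}$ into this gives $X_1^{n+1} = \eta\, y_1^n + (I - \eta^2 AA^\top) X_1^n$.

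Finally I would collect the two relations into block-matrix form, yielding exactly the claimed update operator. There is no real obstacle here: the separability of the Hamiltonian renders Type I explicit, so the proof amounts to a careful substitution and a sign check on the payoff matrix conventions ($A^{(12)} = A$, $A^{(21)} = -A^\top$). The only subtlety worth flagging is that, although (\ref{type1}) looks implicit in $y^{n+1}$, the separability collapses it to an explicit one-step map, which is what allows the closed-form linear representation used throughout Section~\ref{proof4} and Section~\ref{proof5}.
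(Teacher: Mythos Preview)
Your proposal is correct and follows essentially the same approach as the paper: apply the Type~I scheme to the separable Euclidean Hamiltonian, obtain $y_1^{n+1}=y_1^n-\eta AA^\top X_1^n$ and $X_1^{n+1}=X_1^n+\eta y_1^{n+1}$, then substitute and collect into block-matrix form. The paper's proof is just a terse version of the same computation; your added remarks about separability making the scheme explicit are helpful context but not a different route.
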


\begin{proof} Directly calculate gives
\begin{align}
y_1^{t+1}& = y^t_1 - \eta AA^{\top}X^t_1, \\
X_1^{t+1}& = X^t_1 + \eta y_1^{t+1} \\
& = X^t_1 + \eta y^t_1 - \eta^2 AA^{\top}X^t_1.
\end{align}
Combine above gives
\begin{align}
\begin{bmatrix}
y_1^{n+1}\\
\\
X_1^{n+1}
\end{bmatrix}
= 
\begin{bmatrix}
I & -\eta AA^{\top} \\
\\
\eta & I - \eta^2 AA^{\top}
\end{bmatrix}
\cdot
\begin{bmatrix}
y_1^{n}\\
\\
X_1^{n}
\end{bmatrix} .
\end{align}
This finish the proof.
\end{proof}

Let $\CM_{\eta}  = \begin{bmatrix}
I & -AA^{\top} \cdot \eta\\
\\
\eta & I - AA^{\top} \cdot \eta^2 
\end{bmatrix}$. Since $AA^{\top}$ is diagonalizable, and $AA^{\top}$'s eigenvalues are all non-negative, there exists a matrix $P \in \BR^{n \times n}$ and $P$ invertible, such that

\begin{align}\label{diagAAtop}
P^{-1} AA^{\top}P & = 
\begin{bmatrix}
\gamma_1 & & & \\
& \gamma_2 & & \\
& & \ddots & \\
& & & \gamma_n
\end{bmatrix}
\end{align}
where $\gamma_1 , ..., \gamma_n \ge 0$ are eigenvalues of $AA^{\top}$.

\begin{lem}\label{M_eta}
$\CM_{\eta}$ has real eigenvalues if and only if $AA^{\top}$ has $0$ eigenvalue, and in this case the only real eigenvalue of $\CM_{\eta}$  equals to $1$.
Moreover, every image eigenvalue of $\CM_{\eta}$ has norm equals to $1$.
\end{lem}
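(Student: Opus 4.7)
The plan is to reduce $\CM_\eta$ to a block-diagonal form consisting of decoupled $2\times 2$ blocks by conjugating with the diagonalizer of $AA^{\top}$. Specifically, using the $P$ from (\ref{diagAAtop}), I would conjugate $\CM_\eta$ by $\mathrm{diag}(P,P)$ to obtain
\[
\begin{bmatrix} I & -\eta D \\ \eta I & I - \eta^2 D \end{bmatrix}, \qquad D = \diag(\gamma_1,\ldots,\gamma_n),
\]
and then apply the obvious row/column permutation that interleaves coordinates $(i, n+i)$. This permutation similarity produces an honest block-diagonal matrix whose blocks are
\[
B_i \;=\; \begin{bmatrix} 1 & -\eta\gamma_i \\ \eta & 1 - \eta^2\gamma_i \end{bmatrix}, \qquad i=1,\ldots,n,
\]
so the spectrum of $\CM_\eta$ is the disjoint union of the spectra of the $B_i$.

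Next I would read off the eigenvalues of each $B_i$ directly from its trace and determinant. A one-line computation gives $\det B_i = 1$ and $\tr B_i = 2 - \eta^2\gamma_i$, hence the characteristic polynomial is $\mu^2 - (2-\eta^2\gamma_i)\mu + 1$, with discriminant $\eta^2\gamma_i(\eta^2\gamma_i - 4)$. Assuming the step size is small enough that $\eta^2\gamma_i < 4$ for every $i$ (the standard regime for a symplectic integrator, implicit in the paper's setup), the discriminant is strictly negative precisely when $\gamma_i > 0$, giving a pair of complex-conjugate eigenvalues, and equal to zero when $\gamma_i = 0$, giving a repeated eigenvalue $\mu = 1$. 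This proves both directions of the equivalence: $\CM_\eta$ admits a real eigenvalue iff some $\gamma_i = 0$ iff $AA^{\top}$ is singular, and in that case every real eigenvalue of $\CM_\eta$ equals $1$.

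For the second assertion, the conjugate pair $\mu, \bar\mu$ arising from any block $B_i$ with $\gamma_i > 0$ satisfies $\mu\bar\mu = \det B_i = 1$, so $\lvert \mu\rvert^2 = 1$ and therefore $\lvert\mu\rvert = 1$, as claimed. The only conceptual obstacle is verifying that the conjugation by $\mathrm{diag}(P,P)$ really does yield a matrix that is block-diagonal in the interleaved coordinates (the off-diagonal $-\eta D$ and $\eta I$ both being diagonal matrices in the eigenbasis of $AA^{\top}$ makes this immediate); everything else reduces to a transparent $2\times 2$ trace-determinant computation, and the only quantitative ingredient needed from the setup is the small-step-size assumption $\eta^2\gamma_i < 4$ that is standard for symplectic schemes.
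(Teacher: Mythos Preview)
Your proposal is correct and follows essentially the same route as the paper: both conjugate $\CM_\eta$ by $\diag(P,P)$ using the diagonalizer of $AA^{\top}$, factor the characteristic polynomial as $\prod_i\big(\mu^2-(2-\eta^2\gamma_i)\mu+1\big)$, and analyze each quadratic under a small-step-size hypothesis. Your extra permutation to expose the $2\times 2$ blocks $B_i$ explicitly, and your use of $\mu\bar\mu=\det B_i=1$ to get $\lvert\mu\rvert=1$, are cosmetic refinements of the paper's direct computation of $\lVert\mu\rVert^2$; the substance is the same.
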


\begin{proof}

With (\ref{diagAAtop}), we have
\begin{align}
\det (\mu I- \CM_{\eta}) & = 
\det (
\begin{bmatrix}
\mu - I & AA^{\top} \eta\\
\\
- \eta & \mu - I + AA^{\top} \eta^2
\end{bmatrix}
)\\
& = \det( (\mu - I )(\mu - I + AA^{\top}\eta^2) + AA^{\top}\eta^2    )\\
& = \det (\mu^2 - 2 \mu + I + \mu AA^{\top} \eta^2) \\
& \overset{(\ref{diagAAtop})}{=} \det (
\begin{bmatrix}
\mu^2 - 2 \mu + 1 + \mu \eta^2 \gamma_1 & & & \\
& \mu^2 - 2 \mu + 1 + \mu \eta^2 \gamma_2 & & \\
&&\ddots & \\
&&& \mu^2 - 2 \mu + 1 + \mu \eta^2 \gamma_n
\end{bmatrix}) \\
& = \prod^n_{i = 1} (\mu^2 - 2 \mu + 1 + \mu \eta^2 \gamma_i)
\end{align}

From above, we can see if $\mu$ makes $\det(\mu I - \CM_{\eta}) = 0$, then there exists some $i \in [n]$, such that
\begin{align}\label{quadratic}
\mu^2 - 2 \mu + 1 + \mu \eta^2 \gamma_i = 0.
\end{align}

(\ref{quadratic}) is a quadratic function about $\mu$, and has solution $\mu = \frac{(2 - \eta^2 \gamma_i) \pm \sqrt{(\eta^2 \gamma_i - 2)^2 - 4}}{2}$.

To make $\mu \in \BR$, we need $(\eta^2 \gamma_i - 2)^2 - 4 \ge 0$. For sufficient small $\eta$ ($\eta^2 \gamma_i \le 2$), that can only happen when $\gamma_i = 0$, and in this case we have $\mu = 1$.
Moreover, if  $(\eta^2 \gamma_i - 2)^2 - 4 < 0$, we have
\begin{align}
\lVert\mu \lVert^2& =\lVert \frac{(2 - \eta^2 \gamma_i) \pm i \sqrt{ 4 - (\eta^2 \gamma_i - 2)^2}}{2} \lVert ^2\\
& = \frac{(2 - \eta^2 \gamma_i)^2 + 4 - (\eta^2 \gamma_i - 2)^2 }{4}\\
& = 1
\end{align}

Thus we have
\begin{itemize}
\item $\CM_{\eta}$ has a real eigenvalue if and only if $\gamma = 0$ is an eigenvalue of $AA^{\top}$, and in this case the real eigenvalue of $\CM_{\eta}$ equals to $1$.
\item  Every image eigenvalue of $\CM_{\eta}$  has norm equals to $1$.
\end{itemize}
\end{proof}

\begin{lem}\label{mu=1}
The largest Jordan blocks corresponding to $\mu =1$ have size $2$.
\end{lem}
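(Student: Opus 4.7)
The plan is to apply Proposition~\ref{ker} by explicitly computing the ascending chain of kernels $\ker(\CM_\eta - I)^k$ and showing that it stabilizes at $k=2$, while strictly increasing from $k=1$ to $k=2$. This will force every Jordan block attached to $\mu=1$ to have size exactly $2$, which in particular establishes that the largest such block has size $2$.

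Write $N = AA^{\top}$ and a general vector as $v=(u,w)^{\top}$ with $u,w\in\BR^n$. A direct computation gives
\[
(\CM_\eta - I)\begin{bmatrix}u\\ w\end{bmatrix}=\begin{bmatrix}-\eta N w\\ \eta u-\eta^2 N w\end{bmatrix},
\]
so $v\in\ker(\CM_\eta - I)$ forces $Nw=0$ and then $u=0$; hence $\ker(\CM_\eta - I)=\{(0,w):Nw=0\}$, of dimension $\dim\ker N$. Iterating once more yields
\[
(\CM_\eta - I)^2\begin{bmatrix}u\\ w\end{bmatrix}=\begin{bmatrix}-\eta^2 N u+\eta^3 N^2 w\\ -\eta^2 N w-\eta^3 N u+\eta^4 N^2 w\end{bmatrix},
\]
and substituting the first relation into the second (or simply using symmetry of $N$) forces $Nw=0$ and $Nu=0$. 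Thus $\ker(\CM_\eta - I)^2=\{(u,w):Nu=Nw=0\}$, of dimension $2\dim\ker N$. By Proposition~\ref{ker} this shows that the number of Jordan blocks of size $\ge 1$ equals the number of size $\ge 2$, so \emph{every} Jordan block at $\mu=1$ has size at least $2$.

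The main remaining step is to show $\ker(\CM_\eta - I)^3 = \ker(\CM_\eta - I)^2$, which will pin the size down to exactly $2$. Since $N=AA^{\top}$ is symmetric positive semidefinite, it is diagonalizable and we have the orthogonal decomposition $\BR^n=\ker N\oplus(\ker N)^{\perp}$, with $N$ acting as an invertible operator $\widetilde N$ on the second summand. Splitting $u=u_0+u_1$ and $w=w_0+w_1$ accordingly, I would expand $(\CM_\eta - I)^3 v$ (obtained by one more application of $\CM_\eta-I$ to the expression for $(\CM_\eta-I)^2 v$ above) and project onto $(\ker N)^{\perp}$. On this invertible block the two equations in $(u_1,w_1)$ collapse, after cancelling a common power of $\widetilde N$, to the relation $u_1=(\eta \widetilde N-\eta^{-1}I)w_1$ together with a consequence that reduces to $\eta^2 w_1=0$. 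Hence $w_1=0$ and then $u_1=0$, giving $v\in\ker(\CM_\eta-I)^2$.

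The main obstacle is the slightly tedious bookkeeping in computing $(\CM_\eta-I)^3$ and carrying out the substitution that forces $w_1=0$; the symmetry of $AA^{\top}$ (equivalently, $\ker N=\ker N^k$ for all $k\ge 1$) is the crucial ingredient that prevents a size-$3$ Jordan block from appearing. Combining the two kernel computations with Proposition~\ref{ker} gives $\rho_1=\rho_2=\dim\ker N$ and $\rho_k=0$ for $k\ge 3$, so the largest Jordan block at $\mu=1$ has size exactly $2$, as claimed.
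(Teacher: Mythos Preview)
Your proposal is correct and follows essentially the same approach as the paper: both compute the ascending kernel chain $\ker(\CM_\eta-I)^k$ for $k=1,2,3$, find dimensions $\dim\ker N,\ 2\dim\ker N,\ 2\dim\ker N$, and invoke Proposition~\ref{ker} to conclude. The only difference is presentational---the paper first conjugates by $\diag(P,P)$ to diagonalize $AA^{\top}$ and then reads off ranks from the explicit block matrices (\ref{bigmatrix})--(\ref{bigmatrix3}), whereas you work directly with the block structure and the orthogonal splitting $\BR^n=\ker N\oplus(\ker N)^{\perp}$; these are equivalent computations, and your version arguably isolates the key mechanism ($\ker N=\ker N^k$ for symmetric PSD $N$) more transparently.
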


\begin{proof}As in proposition \ref{ker}, the number of size $k$ Jordan blocks corresponding to eigenvalue $1$ is determined by the dimension of linear space $\ker(\CM_{\eta} - I)^{k}$ and 
$\ker(\CM_{\eta} - I)^{k-1}$. Since similar matrixes have same minimal polynomial and same kernel space, thus we can change $\CM_{\eta}$ to any similar matrix. Thus by (\ref{diagAAtop}), we only need to consider 
\begin{align}
\begin{bmatrix}
P & 0\\
0 & P
\end{bmatrix}
\CM_{\eta}
\begin{bmatrix}
P^{-1} & 0\\
0 & P^{-1}
\end{bmatrix}
- I 
= 
\begin{bmatrix}
\begin{array}{cccc|cccc}
0 & & & & -\gamma_1 \eta & & & \\
& 0 & & & &-\gamma_2 \eta & & \\
& & \ddots & & & & \ddots & \\
& & & 0 & & & & - \gamma_n \eta \\ \hline
\eta & & & &-\gamma_1 \eta^2 & & & \\
& \eta & & & &-\gamma_2 \eta^2 & & \\
& & \ddots & & & & \ddots & \\
& & & \eta & & & & - \gamma_n \eta^2
\end{array}
\end{bmatrix}  \label{bigmatrix}
\end{align} 

Our claim is : $k = 3$ is the smallest number to make $\rho_k = 0$ in  proposition \ref{ker}, thus by proposition \ref{ker} the largest Jordan block 
corresponding to eigenvalue $1$ have size $2$ . 

As shown in lemma \ref{M_eta}, $1$ is an eigenvalue of $\CM_{\eta}$ if and only if $0$ is an eigenvalue of $AA^{\top}$.
Assume $0$ is an eigenvalue of $AA^{\top}$ with multiplicity $m$, then it is easy to see (\ref{bigmatrix}) has rank $2n - m$.
A direct calculate show the square of (\ref{bigmatrix}) equals to

\begin{align}
\begin{bmatrix}
\begin{array}{cccc|cccc}
-\gamma_1 \eta^2 & & & & -\gamma_1 \eta^3 & & & \\
& -\gamma_2 \eta^2 & & & &-\gamma_2 \eta^3 & & \\
& & \ddots & & & & \ddots & \\
& & & -\gamma_n \eta^2 & & & & - \gamma_n \eta^3 \\ \hline
-\gamma_1 \eta^3 & & & &-\gamma_1 \eta^2 + \gamma^2_1 \eta^4& & & \\
&  -\gamma_2 \eta^3 & & & &-\gamma_2 \eta^2 + \gamma^2_2 \eta^4 & & \\
& & \ddots & & & & \ddots & \\
& & & -\gamma_n \eta^3  & & & & -\gamma_n \eta^2 + \gamma^2_n \eta^4
\end{array}
\end{bmatrix}  \label{bigmatrix2}
\end{align} 
Thus if $0$ is an eigenvalue of $AA^{\top}$ with multiplicity $m$, then there are $2m$ rows in (\ref{bigmatrix2}) be $0$, thus  (\ref{bigmatrix2}) has rank $2n - 2m$, this shows the $\rho_2$ in 
proposition \ref{ker} is not zero, which implies there exists Jordan blocks with size $2 \times 2$ corresponding to eigenvalue $1$.

Moreover, the cubic of (\ref{bigmatrix}) equals to

\begin{align}
\begin{bmatrix}
\begin{array}{cccc|cccc}
\gamma^2_1 \eta^4 & & & & \gamma_1 \eta^3 - \gamma^3_1 \eta^5& & & \\
& \gamma^2_2 \eta^4 & & & & \gamma_2 \eta^3 - \gamma^3_2 \eta^5& & \\
& & \ddots & & & & \ddots & \\
& & & \gamma^2_n \eta^4 & & & & \gamma_n \eta^3 - \gamma^3_n \eta^5 \\ \hline
-\gamma_1 \eta^3 & & & &2 \gamma^2_1 \eta^4 - \gamma^3_1 \eta^6& & & \\
&  -\gamma_2 \eta^3 & & & &2 \gamma^2_2 \eta^4 - \gamma^3_2 \eta^6 & & \\
& & \ddots & & & & \ddots & \\
& & & -\gamma_n \eta^3  & & & &2 \gamma^2_n \eta^4 - \gamma^3_n \eta^6
\end{array}
\end{bmatrix}  \label{bigmatrix3}
\end{align} 
We can see the rank of (\ref{bigmatrix3}) is also $2n - 2m$, thus $\rho_3$ in 
proposition \ref{ker} is zero. This finish the proof of our claim.
\end{proof}

The following lemma determines $\CM_{\eta}$'s  type $1$ and type $2$ generalized eigenvectors  of $\CM_{\eta}$ corresponding to eigenvalue $\mu = 1$. Recall $v \in \BR^{2n}$ is a type $2$ generalized  eigenvectors  of $\CM_{\eta}$ corresponding to eigenvalue $\mu = 1$ if $v \in \ker(\CM_{\eta} - I)^2$ but  $v \notin \ker(\CM_{\eta} - I)$. 
\begin{lem}\label{genvecMeta}
 Let $x,y \in \BR^{n}$, then
\begin{itemize}
\item[(1)] If $(x,y) \in \BR^{2n}$ is a type $2$ generalized eigenvectors  of $\CM_{\eta}$ corresponding to eigenvalue $\mu = 1$, then $x,y \in \ker(AA^{\top})$ and $x \ne 0$.
\item[(2)] If $(x,y) \in \BR^{2n}$ is a type $1$ generalized eigenvectors  of $\CM_{\eta}$ corresponding to eigenvalue $\mu = 1$, then $y \in \ker(AA^{\top})$ and $x = 0$.
\end{itemize}
\end{lem}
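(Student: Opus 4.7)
The plan is to exploit the block structure of $\CM_{\eta} - I = \begin{bmatrix} 0 & -\eta AA^{\top} \\ \eta I & -\eta^2 AA^{\top} \end{bmatrix}$ and handle the two cases in the order (2), then (1), since part (2) characterizes $\ker(\CM_{\eta} - I)$ and this description is exactly what we need to analyze $\ker(\CM_{\eta} - I)^2$.

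For part (2), I would write out $(\CM_{\eta} - I)(x,y)^{\top} = 0$ componentwise. The top block reads $-\eta AA^{\top} y = 0$, which immediately yields $y \in \ker(AA^{\top})$. The bottom block reads $\eta x - \eta^2 AA^{\top} y = 0$; substituting $AA^{\top} y = 0$ collapses this to $\eta x = 0$, hence $x = 0$. This gives the full characterization: type 1 generalized eigenvectors for $\mu = 1$ are precisely the vectors $(0, y)$ with $y \in \ker(AA^{\top})$.

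For part (1), the key observation is that $v = (x,y)$ is a type 2 generalized eigenvector iff $(\CM_{\eta} - I) v \in \ker(\CM_{\eta} - I) \setminus \{0\}$; that is, $(\CM_{\eta} - I)v$ is a \emph{nonzero} type 1 eigenvector. Compute
\begin{equation*}
(\CM_{\eta} - I)\begin{bmatrix} x \\ y \end{bmatrix} = \begin{bmatrix} -\eta AA^{\top} y \\ \eta x - \eta^2 AA^{\top} y \end{bmatrix}.
\end{equation*}
Applying part (2) to this output, the top component must vanish, which forces $y \in \ker(AA^{\top})$. Using $AA^{\top} y = 0$, the bottom component simplifies to $\eta x$, which by part (2) must itself lie in $\ker(AA^{\top})$, so $x \in \ker(AA^{\top})$. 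Finally, nontriviality of $(\CM_{\eta} - I)v$ (i.e., the requirement that $v$ is type 2, not type 1) forces $\eta x \neq 0$, i.e., $x \neq 0$.

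I don't anticipate a real obstacle here beyond careful bookkeeping of the two block components; the argument is a short reduction of part (1) to part (2). One minor subtlety is making sure we use the type 2 condition in the strict sense (nonzero image under $\CM_{\eta} - I$) to extract $x \neq 0$, rather than just asserting it, and checking that the converse inclusions hold (so that the characterization is tight and matches the multiplicity analysis already carried out in Lemma \ref{mu=1}).
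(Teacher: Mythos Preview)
Your proposal is correct and follows essentially the same block-computation approach as the paper. The only difference is organizational: the paper computes $(\CM_{\eta}-I)^2$ explicitly and reads off the kernel conditions, whereas you handle (2) first and then reduce (1) to it via the observation that $(\CM_{\eta}-I)v$ must be a nonzero type~1 eigenvector, which spares you the explicit square; both routes arrive at the same equations and the same conclusions.
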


\begin{proof}We have
\begin{align}
(\CM_{\eta} - I)^2 = 
\begin{bmatrix}
-AA^{\top} \eta^2 & (AA^{\top} )^2 \eta^3 \\
-AA^{\top} \eta^3 & -AA^{\top} \eta^2 +  (AA^{\top} )^2 \eta^4
\end{bmatrix}.
\end{align}
Thus if $(x,y) \in \ker (\CM_{\eta} - I)^2$, we have
\begin{align}
-AA^{\top} \eta^2 x + (AA^{\top} )^2 \eta^3 y = 0 \label{gv1}\\
-AA^{\top} \eta^3 x +  (-AA^{\top} \eta^2 +  (AA^{\top} )^2 \eta^4)y = 0 \label{gv2}
\end{align}
$(\ref{gv2}) - \eta \cdot (\ref{gv1})$ gives
\begin{align}
-AA^{\top} \eta^2 y = 0.
\end{align}
Thus we have $y \in \ker (AA^{\top})$, and take this back to (\ref{gv2}), we get $x \in \ker(AA^{\top})$. 

Moreover, it is directly to verify if $(x,y) \in \ker(\CM_{\eta} - I)$, then $x = 0$ and $y \in \ker(AA^{\top})$. Thus if $(x,y) $ is a type $2$ generalized eigenvector, $x \in \ker(AA^{\top}) ,\ x \ne 0$
and $ y \in \ker (AA^{\top})$.

\end{proof}

\begin{lem}\label{imvalue}
For an image eigenvalue $\mu \ne 1$ of $\CM_{\eta}$,  the largest real Jordan blocks corresponding to $\mu$ have size $2$.
\end{lem}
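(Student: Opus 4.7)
The plan is to exploit the fact that $\mathcal{M}_\eta$ has a tensor-product structure with respect to the spectral decomposition of $AA^\top$, which will reduce the question to an eigenvalue analysis of independent $2\times 2$ blocks. Since $AA^\top$ is symmetric, there exists an orthogonal $P$ with $P^{-1} AA^\top P = \Gamma = \mathrm{diag}(\gamma_1,\ldots,\gamma_n)$ with $\gamma_i \geq 0$. Conjugating $\mathcal{M}_\eta$ by $\mathrm{diag}(P,P)$ yields
\[
\widetilde{\mathcal{M}}_\eta = \begin{bmatrix} I & -\eta \Gamma \\ \eta I & I - \eta^2 \Gamma \end{bmatrix},
\]
and after permuting coordinates so as to pair the $i$th component of $y_1$ with the $i$th component of $X_1$, the matrix $\widetilde{\mathcal{M}}_\eta$ becomes block-diagonal with $2\times 2$ blocks $M_i = \begin{bmatrix} 1 & -\eta \gamma_i \\ \eta & 1-\eta^2 \gamma_i \end{bmatrix}$ for $i=1,\ldots,n$. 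Since the Jordan structure is preserved under similarity transformations and block decompositions, it suffices to analyze each $M_i$ individually.

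Next, I would invoke the computation already carried out inside the proof of Lemma~\ref{M_eta}: the characteristic polynomial of $M_i$ is $\mu^2 - 2\mu + 1 + \mu\eta^2\gamma_i$, whose roots are $\mu_i^{\pm} = \tfrac{(2-\eta^2\gamma_i) \pm i\sqrt{4-(\eta^2\gamma_i-2)^2}}{2}$. An image eigenvalue $\mu \neq 1$ of $\mathcal{M}_\eta$ must arise from some $M_i$ with $\gamma_i > 0$ (and $\eta$ small enough so that the discriminant is negative). In that regime $\mu_i^{+} \neq \mu_i^{-}$, so $M_i$ has two distinct complex eigenvalues and is therefore diagonalizable over $\mathbb{C}$. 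Hence each $M_i$ contributes only complex Jordan blocks of size $1$ for its eigenvalues.

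To conclude, I would observe that distinct values of $\gamma$ produce distinct values of $\mathrm{Re}(\mu)=(2-\eta^2\gamma)/2$, so a given image eigenvalue $\mu$ of $\mathcal{M}_\eta$ is produced by exactly one value $\gamma$ of the spectrum of $AA^\top$; if this $\gamma$ has multiplicity $k$, then $k$ identical copies of $M_i$ each contribute a size-$1$ complex Jordan block for $\mu$ (and one for $\bar\mu$). Thus the largest complex Jordan block attached to $\mu$ has size $1$, which by the correspondence between complex blocks for a conjugate pair $(\mu,\bar\mu)$ and real Jordan blocks (as recorded in Proposition~\ref{tr}) translates into a largest \emph{real} Jordan block of size $2$, proving the claim.

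The argument is essentially bookkeeping once the block decomposition is in hand, and I do not anticipate a substantive obstacle; the only point that warrants care is making the reduction to $2\times 2$ blocks explicit (the permutation that separates the pairs $(y_{1,i}, X_{1,i})$) and checking that the map $\gamma \mapsto \mu$ is injective on the positive part of $\mathrm{Spec}(AA^\top)$ so that multiplicities in $\mathcal{M}_\eta$ do not secretly conceal a larger Jordan block. An alternative, more in line stylistically with Lemma~\ref{mu=1}, would be to compute $\ker(\mathcal{M}_\eta - \mu I)$ and $\ker(\mathcal{M}_\eta - \mu I)^2$ directly and invoke Proposition~\ref{ker}, but the block-decomposition route seems cleaner.
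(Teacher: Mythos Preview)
Your proof is correct but takes a genuinely different route from the paper. The paper follows the template of Lemma~\ref{mu=1} exactly as you anticipated in your final remark: it works with the conjugated matrix $\widetilde{\mathcal M}_\eta$, writes out $\mu I-\widetilde{\mathcal M}_\eta$ and its square explicitly, argues by inspection that both have rank $2n-m$ (where $m$ is the multiplicity of the relevant $\gamma_i$), and then invokes Proposition~\ref{ker} to get $\rho_2=0$, hence complex Jordan blocks of size $1$ and real Jordan blocks of size $2$.

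Your block-diagonalization argument is cleaner and more conceptual: once the permutation exhibits $\widetilde{\mathcal M}_\eta$ as a direct sum of the $2\times 2$ blocks $M_i$, diagonalizability of each $M_i$ (two distinct eigenvalues) immediately gives diagonalizability over $\mathbb C$ on the imaginary part of the spectrum, with no rank computations needed. One small remark: your injectivity check on $\gamma\mapsto\mu$ is not actually needed for the Jordan-block-size conclusion, since in a block-diagonal matrix the Jordan blocks of the individual summands never coalesce even when eigenvalues coincide; it only matters for counting multiplicities, which the lemma does not assert. The paper's rank computation buys uniformity with the proof of Lemma~\ref{mu=1}, while your approach buys transparency and avoids the somewhat delicate ``linear dependence of rows'' inspection the paper relies on.
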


\begin{proof} The proof is similar to lemma \ref{mu=1}. Let $\mu \ne 1$ be an image eigenvalue of $\CM_{\eta}$, by (\ref{quadratic}), there exists an eigenvalue
$\gamma_i$ of $AA^{\top}$ to make 

\begin{align}
\mu^2 - 2 \mu + 1 + \mu \eta^2 \gamma_i = 0.
\end{align}
Moreover, the algebraic multiplicity of $\mu$ as an eigenvalue of $\CM_{\eta}$ is same as the algebraic multiplicity of $\gamma$ as an eigenvalue of $AA^{\top}$. We assume $\mu \ne 1$
has algebraic multiplicity $m$.

Consider the matrix
\begin{align}
\mu -
\begin{bmatrix}
P & 0\\
0 & P
\end{bmatrix}
\CM_{\eta}
\begin{bmatrix}
P^{-1} & 0\\
0 & P^{-1}
\end{bmatrix}
= 
\begin{bmatrix}
\begin{array}{cccc|cccc}
\mu - 1 & & & & \gamma_1 \eta & & & \\
& \mu-1 & & & &\gamma_2 \eta & & \\
& & \ddots & & & & \ddots & \\
& & & \mu-1 & & & &  \gamma_n \eta \\ \hline
-\eta & & & &\mu - 1 + \gamma_1 \eta^2 & & & \\
& -\eta & & & &\mu - 1 +\gamma_2 \eta^2 & & \\
& & \ddots & & & & \ddots & \\
& & & -\eta & & & & \mu - 1 + \gamma_n \eta^2
\end{array}
\end{bmatrix}  \label{bigimagematrix}
\end{align} 

It is easy to see if 

\begin{align}\label{ewrwer}
\mu^2 - 2 \mu + 1 + \mu \eta^2 \gamma_i = 0
\end{align}
the $i$-th and $n+i$-th row of (\ref{bigimagematrix})  are linearly dependent. Moreover, if  if $\mu$ as rank $m$, the multiplicity  of $\gamma_i$ as an eigenvalue of $AA^{\top}$
is also $m$. So there are $m$ rows in the first $n$ rows of (\ref{bigimagematrix}) linearly dependent on  $m$ rows in the last $n$ rows of (\ref{bigimagematrix}). Thus if $\mu$ as rank $m$,  (\ref{bigimagematrix}) has rank $2n-m$. 

A directly calculate shows the square of (\ref{bigimagematrix})
equals to 
\begin{align}
\begin{bmatrix}
\begin{array}{ccc|ccc}
(\mu - 1)^2 - \gamma_1 \eta^2 & &              & \gamma_1 \eta (2 \mu -2 + \gamma_1 \eta^2)& &  \\
& \ddots &                                                       & & \ddots & \\
& & (\mu - 1)^2 - \gamma_n \eta^2              & & &  \gamma_n \eta (2 \mu -2 + \gamma_n \eta^2) \\ \hline
-\eta & &                                                          & -\gamma_1 \eta^2 + (\mu-1 + \gamma_1 \eta)^2 & &  \\
& \ddots &                                                        & & \ddots & \\
& &  -\eta                                                          & & & -\gamma_n \eta^2 + (\mu-1 + \gamma_n \eta)^2 
\end{array}
\end{bmatrix}  \label{bigimagematrix2}
\end{align} 
Moreover, if $\mu$ and $\gamma_i$ satisfy (\ref{ewrwer}), the $i$-th row and $n+i$-th row are linearly dependent, thus same as matrix in (\ref{bigimagematrix}),  \label{bigimagematrix2} has rank $2n-m$ if $\mu$ is an eigenvalue of $\CM_{\eta}$ with multiplicity $m$. 

Thus we have $\rho_2 = 0$ in proposition \ref{ker}, this implies the real Jordan blocks of a pair of conjugate eigenvalues $(\mu,\bar{\mu})$ have size $2$.
\end{proof}

\begin{lem}\label{i-th row of the matrix} Let $t$ be a positive integer, and $(\CM^t_{\eta} )_i$ be the  i-th row of the matrix  $\CM^t_{\eta}$ for $i = 1,2,...,2n$. We have
\begin{itemize}
\item If $AA^{\top}$ is singular, then for $1 \le i \le n$, elements in  $(\CM^t_{\eta} )_i$ is bounded, for $n+1 \le i \le 2n$, elements in  $(\CM^t_{\eta} )_i$ have growth rate $\CO(t)$.
\item If $AA^{\top}$ is non-singular, then for all $1 \le i \le 2n$, elements in  $(\CM^t_{\eta} )_i$ is bounded.
\end{itemize}
\end{lem}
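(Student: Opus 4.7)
The plan is to leverage the Jordan decomposition $\CM_\eta = M J_{\CM_\eta} M^{-1}$ produced by the preceding lemmas, so that $\CM_\eta^t = M J_{\CM_\eta}^t M^{-1}$, and then localize the $t$-growth of $J_{\CM_\eta}^t$ within the block structure. First I would combine Lemmas \ref{M_eta}, \ref{mu=1}, and \ref{imvalue} with the power formulas from Propositions \ref{differencereal} and \ref{differenceim}. Every eigenvalue of $\CM_\eta$ lies on the unit circle, the only possible real eigenvalue is $\mu=1$, and every Jordan block has size at most $2$. Hence for each image eigenvalue pair $(\mu,\bar\mu)$ with $|\mu|=1$ the corresponding block in $J_{\CM_\eta}^t$ has all entries in $\CO(1)$, and for a size-$2$ real Jordan block associated with $\mu=1$ the $t$-th power is $\left(\begin{smallmatrix} 1 & t \\ 0 & 1 \end{smallmatrix}\right)$.

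When $AA^\top$ is non-singular the eigenvalue $1$ does not occur, so every entry of $J_{\CM_\eta}^t$ is $\CO(1)$ and therefore every entry of $\CM_\eta^t=M J_{\CM_\eta}^t M^{-1}$ is bounded; this disposes of the second bullet. For the singular case, all the $t$-growth in $J_{\CM_\eta}^t$ is concentrated on the upper-right entries of the size-$2$ blocks attached to $\mu=1$. If $v_1,v_2$ are the columns of $M$ carrying such a block (with $v_1$ of type $1$ and $v_2$ of type $2$), then in $M J_{\CM_\eta}^t$ those two columns become $v_1$ and $t v_1 + v_2$ respectively, while all other columns of $M J_{\CM_\eta}^t$ remain bounded in $t$.

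The key step, which parallels the continuous-time argument of Lemma \ref{etl}, is to invoke Lemma \ref{genvecMeta}(2): the type $1$ generalized eigenvectors $v_1$ corresponding to $\mu=1$ have their first $n$ coordinates equal to $0$. Therefore the contribution $t v_1$ vanishes identically in the first $n$ entries, so the first $n$ rows of $M J_{\CM_\eta}^t$ stay bounded in $t$; on the other hand the last $n$ entries of $v_1$ lie in $\ker(AA^\top)$ and are generically nonzero, producing $\CO(t)$ growth in the last $n$ rows of $M J_{\CM_\eta}^t$. Since $M^{-1}$ is a constant matrix, right-multiplication by $M^{-1}$ preserves the per-row growth rate in $t$, so the same dichotomy transfers to $\CM_\eta^t$, yielding the claim.

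The main obstacle is purely bookkeeping: one must arrange the columns of the real generalized modal matrix according to Definition \ref{rgmm} so that the size-$2$ blocks at $\mu=1$ appear last, and then write out the block multiplication $M J_{\CM_\eta}^t M^{-1}$ carefully enough to verify that no stray $t$-term leaks into the first $n$ rows. Once the vanishing first-$n$-coordinate property from Lemma \ref{genvecMeta}(2) is substituted into this block expansion, the row-wise conclusion follows by direct inspection, with no further estimates required.
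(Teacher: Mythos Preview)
Your proposal is correct and follows essentially the same route as the paper: decompose $\CM_\eta^t = M J_{\CM_\eta}^t M^{-1}$, use Lemmas \ref{M_eta}, \ref{mu=1}, \ref{imvalue} to pin down the block structure of $J_{\CM_\eta}$, and then invoke Lemma \ref{genvecMeta}(2) so that the $t$-growing column $t v_1$ has vanishing first $n$ coordinates, exactly as in the continuous-time argument of Lemma \ref{etl}. The paper's own proof is terser (it simply says the modal matrix has the same structure as in (\ref{nrowM}) and refers back to Proposition \ref{etl}), but your spelled-out version of the block bookkeeping, including the observation that right-multiplication by $M^{-1}$ preserves per-row growth, is the same argument made explicit.
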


\begin{proof} Denote the real generalized modal matrix of $\CM_{\eta}$ by $M$, then we have  $\CM^t_{\eta} = M J^t_{\CM_{\eta}} M^{-1} $, where $J_{\CM_{\eta}}$ is the real Jordan normal form of $\CM_{\eta}$.  If  $AA^{\top}$ has $0$ eigenvalue, then from Lemma \ref{M_eta} and Lemma \ref{mu=1}, $1$ is the only possible real eigenvalue of $\CM_{\eta}$, and the largest size real Jordan block have size $2$. Moreover in this case, form Lemma \ref{genvecMeta} the real generalized modal matrix of $\CM_{\eta}$ has same structure as the real generalized modal matrix of $\CL$ as in (\ref{nrowM}), thus the proof follows from a same argument as in Proposition \ref{etl}.  

If  $AA^{\top}$ doesn't have $0$ eigenvalue, all eigenvalues of $\CM_{\eta}$ are image numbers and form Lemma \ref{M_eta} these eigenvalues have norm $1$. From Lemma \ref{imvalue} the largest size real Jordan blocks of $\CM_{\eta}$  is $2 \times 2$, thus from Proposition \ref{differenceim} with $m=1$, all elements in $J^t_{\CM_{\eta}}$ are bounded, thus all elements in  $\CM^t_{\eta} $ are bounded.
\end{proof}

The covariance evolution of Symplectic discretization directly follows from above calculations. Denote the covariance matrix  at time $t+t_0$ by $P(t+t_0)$ and $\CM^t_{\eta} = 
\begin{bmatrix}
A^t & B^t\\
C^t & D^t
\end{bmatrix}$, then we have
\begin{align} 
& P(t + t_0)  = 
\begin{bmatrix}
\Var(y^{t+t_0}_1) & \Cov(y^{t+t_0}_1, X^{t+t_0}_1)\\
\\
\Cov(X^{t+t_0}_1, y^{t+t_0}_1) & \Var(X^{t+t_0}_1)
\end{bmatrix} \\
&=
\CM^t_{\eta}
\begin{bmatrix}
\Var(y^{t_0}_1) & \Cov(y^{t_0}_1, X^{t_0}_1)\\
\\
\Cov(X^{t_0}_1, y^{t_0}_1) & \Var(X^{t_0}_1)
\end{bmatrix}
 (\CM^t_{\eta})^{\top} \\
 & = 
 \begin{bmatrix}
A^t & B^t\\
C^t & D^t
\end{bmatrix}
\begin{bmatrix}
\Var(y^{t_0}_1) & \Cov(y^{t_0}_1, X^{t_0}_1)\\
\\
\Cov(X^{t_0}_1, y^{t_0}_1) & \Var(X^{t_0}_1)
\end{bmatrix}
 (\begin{bmatrix}
A^t & B^t\\
\\
C^t & D^t
\end{bmatrix})^{\top}
\\
& = \begin{bmatrix}
A^t \Var(y^{t_0}_1)  + B^t \Cov(X^{t_0}_1, y^{t_0}_1)  & A^t \Cov(y^{t_0}_1, X^{t_0}_1)  + B^t  \Var(X^{t_0}_1) \\
\\
C^t \Var(y^{t_0}_1)  + D^t \Cov(X^{t_0}_1, y^{t_0}_1)  & C^t \Cov(y^{t_0}_1, X^{t_0}_1)  + D^t  \Var(X^{t_0}_1) 
\end{bmatrix}
 (\begin{bmatrix}
(A^t)^{\top} & (C^t)^{\top}\\
\\
(B^t)^{\top} & (D^t)^{\top}
\end{bmatrix})^{\top} \label{ahfbeouahl}
 \end{align}
The finial equation (\ref{ahfbeouahl}) equals to
\begin{align}
\begin{bmatrix}
P^t_1 & P^t_2 \\
\\
P^t_3 & P^t_4
\end{bmatrix},
\end{align}
where
\begin{align}
&P^t_1 = (A^t \Var(y^{t_0}_1)  + B^t \Cov)(A^t)^{\top} + (A^t \Cov  + B^t  \Var(X^{t_0}_1) ) (B^t)^{\top}, \\
&P^t_2 = (A^t \Var(y^{t_0}_1)  + B^t \Cov)(C^t)^{\top} + (A^t \Cov  + B^t  \Var(X^{t_0}_1) ) (D^t)^{\top}, \\
&P^t_3 = (C^t \Var(y^{t_0}_1)  + D^t \Cov)(A^t)^{\top} + (C^t \Cov  + D^t  \Var(X^{t_0}_1) ) (B^t)^{\top}, \\
&P^t_4 = (C^t \Var(y^{t_0}_1)  + D^t \Cov)(C^t)^{\top} + (C^t \Cov  + D^t  \Var(X^{t_0}_1) ) (D^t)^{\top}.
\end{align}
From Lemma \ref{i-th row of the matrix}, when $AA^{\top}$ is non-singular, all elements in $\CM^t_{\eta}$ are bounded, thus elements in $P(t + t_0)$ is bounded. When
 $AA^{\top}$ is singular, elements in $A^t,B^t$ are bounded and elements in $C^t,D^t$ has linear growth rate, thus $\Var(y^t) \in \CO(1)$, $\Cov(X^t,y^t) \in \Theta(t)$ and $\Cov(X^t_i,X^t_j) \in \Theta(t^2)$.

\section{Proof of Section \ref{last}}\label{Details of Section "last"}
\subsection{Riemannian Game Dynamics}
We collect minimum amount of terminologies on Riemannian game dynamics, for a complete treatment on this topic, we refer to \citep{MertiSandholm}. For the case of population game $\mathcal{G}(\mathcal{A},v)$ where $\mathcal{A}$ is the strategy set and $v$ is the set of utilities, the \emph{gain from motion} from state $x\in\mathcal{X}$ along $z\in\mathbb{R}^{\mathcal{A}}$ is defined as
\[
G^v(x;z)=\sum_{\alpha\in\mathcal{A}}v_{\alpha}(x)z_{\alpha}.
\]
The \emph{cost of motion} $C(x;z)$ represents the intrisic difficulty of moving from state $x$ along a given displacement vector $z$, and it is defined to be
\[
C(x;z)=\frac{1}{2}g_x(z,z)
\]
where $g$ is a smooth assignment of symmetric positive definite matrices $g_x$ to each state $x\in\mathcal{X}$. The vector of motion from state $x$ is required to maximize the difference between the gain of motion $G^v(x;z)$ and the cost of motion $C(x;z)$ subject to
\begin{equation}\label{RGameDyn}
\dot{x}=\argmax_{z\in T_{x}\mathcal{X}}\{G^v(x;z)-C(x;z)\}.
\end{equation}
The dynamics \eqref{RGameDyn} is called \emph{Riemannian game dynamics}.

\subsection{Symplectic Geometry}\label{SG}
\paragraph{Symplectic form.} In order to present Gromov's non-squeezing theorem and its implication in uncertainty principle, we need some terminology of Symplectic Geometry. Roughly speaking, symplectic geometry studies the geometry of the space (of even dimension) equipped with the symplectic form. Take Euclidean geometry for example, it studies the vector space $\mathbb{R}^n$ with an inner product structure $\langle,\rangle:\mathbb{R}^n\times\mathbb{R}^n\rightarrow\mathbb{R}$ called the Euclidean structure. A symplectic form on a even dimensional space $\mathbb{R}^n$ is a \emph{skew-symmetric} bilinear map $\omega(\cdot,\cdot):\mathbb{R}^n\times\mathbb{R}^n\rightarrow\mathbb{R}$, satisfying
\begin{itemize}
\item $\omega(u,v)=-\omega(v,u)$ for all $u,v\in\mathbb{R}^n$.
\item $\omega(v,v)=0$ for all $v\in\mathbb{R}^n$.
\item $\omega(u,v)=0$ for all $v\in\mathbb{R}^n$ implies that $u=0$.
\end{itemize}
A typical symplectic form is the bilinear map defined by matrix 
$
J=\left(\begin{array}{cc} 0&I_n\\-I_n&0\end{array}\right)
$,
where $I_n$ denotes the identity matrix on $\mathbb{R}^n$. A basis $\{u_1,...,u_n,v_1,...,v_n\}$ of $\BR^{2n}$ is called $\omega$-standard if $\omega(u_j,u_k)=-\omega(v_j,v_k) = 0$ and $\omega(u_j,v_k)=\delta_{jk}$.

\paragraph{Symplectomorphism.} A symplectomorphism $\varphi$ between symplectic vector spaces $(\mathbb{R}^n,\omega)$ and $(\mathbb{R}^n,\omega')$ is a linear isomorphism $\varphi:\mathbb{R}^n\rightarrow\mathbb{R}^n$ such that $\varphi^*\omega'=\omega$, where $(\varphi^*\omega')(u,v):=\omega'(\varphi(u),\varphi(v))$. More generally, let $f:(\mathbb{R}^n,\omega)\rightarrow(\mathbb{R}^n,\omega')$ be a diffeomorphism. Then $f$ is a symplectomorphism if $f^*\omega'=\omega$. These definitions generalize to manifold settings, but further discussion is beyond the scope of current paper. In the plane, symplectic form represents area, so a symplectic mapping is equivalent to an area preserving mapping.

\paragraph{Linear symplectic width.}The main technique in the analysis of general regularizers is to leverage the power of symplectic geometry, especially a classic work of Gromov in 1980's \citep{G85}, to obtain a lower bound for the covariance of the conjugate coordinates. It is known as "Gromov's Non-squeezing Theorem", 
\begin{theorem}[Gromov, 1985.]
If $R<r$, there does not exist Hamiltonian map $\varphi:\mathbb{R}^{2n}\rightarrow\mathbb{R}^{2n}$ such that $\varphi(B(r))\subset Z(R)$, where $B(r)=\{(x,y)\in\mathbb{R}^{2n}:\norm{x}^2+\norm{y}^2\le r^2\}$ and $Z(R)=\{(x,y)\in\mathbb{R}^{2n}:x_i^2+y_i^2\le R^2\}$ for any $i\in[n]$. 
\end{theorem}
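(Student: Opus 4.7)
The plan is to linearize the nonlinear Hamiltonian flow near the initial mean using the uniform bound $K$ on higher-order derivatives, apply the linear version of Gromov's non-squeezing theorem to the linearized image of the initial covariance ellipsoid, and then project onto the canonical conjugate 2-plane $(X_{i,\alpha}, y_{i,\alpha})$ to recover the product of marginal standard deviations.

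\textbf{Step 1 (Linearization of the symplectic flow).} Write $Z(t)=(X_i(t),y_i(t))^{\top}$, $\mu_0=\mathbb{E}[Z(t_0)]$, and $\delta_0=Z(t_0)-\mu_0$. Taylor-expanding the Hamiltonian flow $\phi_t$ of Section~\ref{setup} at $\mu_0$ gives $Z(t)=\phi_t(\mu_0)+M_t\delta_0+\tfrac{1}{2}R_t(\delta_0)$, with $M_t:=D\phi_t(\mu_0)$ and $\|R_t(\delta_0)\|\le K\|\delta_0\|^2$. Taking second moments yields
\[
P(t)=M_t\,P(t_0)\,M_t^{\top}+\mathcal{E}_t,
\]
where $\mathcal{E}_t$ collects contributions controlled by $K$ times third and fourth moments of $\delta_0$. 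Under the hypothesis that the initial standard deviations are sufficiently small relative to $K$, $\mathcal{E}_t$ is of lower order than the leading quadratic term, and perturbs the relevant symplectic spectral quantities by at most a factor $\sqrt{2}$.

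\textbf{Step 2 (Symplectic invariance and non-squeezing on the conjugate plane).} Because $\phi_t$ is a symplectomorphism, its Jacobian $M_t$ is a linear symplectic map on $(\mathbb{R}^{2n_i},\omega)$, hence $w_L(M_tP(t_0)M_t^{\top})=w_L(P(t_0))$. The canonical coordinates $(X_{i,\alpha},y_{i,\alpha})$ span a symplectic $2$-plane $\Pi_{\alpha}$ with area form $dX_{i,\alpha}\wedge dy_{i,\alpha}$, by the Hamiltonian structure of Appendix~\ref{Hamiltonian formulation of FTRL}. The linear Gromov non-squeezing theorem applied to ellipsoids states that the symplectic projection of any ellipsoid $E$ onto a symplectic 2-plane has area at least $w_L(E)$. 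Since the projection of $E(t):=\{v:v^{\top}P(t)^{-1}v\le 1\}$ onto $\Pi_\alpha$ is itself an ellipse with area $\pi\sqrt{\det P_{i,\alpha}(t)}$, where $P_{i,\alpha}(t)$ is the $2\times 2$ covariance block of $(X_{i,\alpha}(t),y_{i,\alpha}(t))$, combining this with Step 1 gives
\[
\pi\sqrt{\det P_{i,\alpha}(t)}\;\ge\;w_L(P(t))\;\ge\;\tfrac{1}{\sqrt{2}}\,w_L(P(t_0)).
\]

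\textbf{Step 3 (Marginal bound) and main obstacle.} By Cauchy--Schwarz, $\det P_{i,\alpha}(t)=\Delta X_{i,\alpha}(t)^2\Delta y_{i,\alpha}(t)^2-\mathrm{Cov}(X_{i,\alpha}(t),y_{i,\alpha}(t))^2\le\Delta X_{i,\alpha}(t)^2\Delta y_{i,\alpha}(t)^2$, so
\[
\Delta X_{i,\alpha}(t)\,\Delta y_{i,\alpha}(t)\;\ge\;\sqrt{\det P_{i,\alpha}(t)}\;\ge\;\frac{1}{\sqrt{2}}\,\frac{w_L(P(t_0))}{\pi}.
\]
The main obstacle is Step 1: genuinely reducing $\mathcal{E}_t$ to a lower-order perturbation of $M_tP(t_0)M_t^{\top}$ uniformly in $t$, and tracking how this perturbation affects the linear Gromov width. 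This is where both the uniform bound $K$ on higher differentials of $\phi_t$ and the smallness of the initial standard deviations are essential; the $\sqrt{2}$ factor is the quantitative cost of this perturbation. The remaining items---linear symplectic invariance of $w_L$, the 2-plane projection formula, and Cauchy--Schwarz---are standard, so the entire argument is driven by the interaction between the boundedness hypothesis and the symplectic non-squeezing principle.
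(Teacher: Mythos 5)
There is a fundamental mismatch here: the statement you were asked to prove is Gromov's non-squeezing theorem itself (no Hamiltonian map carries the ball $B(r)$ into the cylinder $Z(R)$ when $R<r$), which the paper cites as a classical background result from symplectic geometry and does not prove. Your proposal does not prove this statement; instead it reproduces, almost step by step, the paper's proof of Theorem \ref{prop:nonlinear} (the Heisenberg-type inequality $\Delta X_{i,\alpha}(t)\Delta y_{i,\alpha}(t)\ge \frac{1}{\sqrt 2}\frac{w_L(P(t_0))}{\pi}$): Taylor expansion of the flow around the mean, approximation of the covariance by its linearized image, and an appeal to the linear width/non-squeezing result (the paper uses Theorem 3 of the Hsiao--Scheeres reference) on the conjugate $2$-plane, followed by Cauchy--Schwarz. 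As an argument for the stated theorem this is circular: in Step 2 you explicitly invoke ``the linear Gromov non-squeezing theorem'' and the symplectic invariance of $w_L$, which is precisely (the linear case of) what you are supposed to establish.

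Concretely, the gap is that nothing in your proposal engages with why a symplectomorphism cannot squeeze $B(r)$ into $Z(R)$. The covariance/Taylor machinery concerns second moments of a random initial condition pushed through a fixed flow, whereas the non-squeezing theorem is a statement about \emph{all} Hamiltonian diffeomorphisms acting on sets, and its proof requires genuinely different tools: in the nonlinear case Gromov's pseudo-holomorphic curve argument (or capacity constructions such as the Hofer--Zehnder or Ekeland--Hofer capacities), and even in the linear case an argument via the symplectic spectrum of positive-definite forms, e.g.\ Williamson normal form, showing that the linear symplectic invariant $\pi r^2$ of $B(r)$ obstructs embedding into $Z(R)$ with $R<r$. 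None of these ingredients appear, so the proposal cannot be repaired into a proof of the statement by tightening Step 1; it is a (reasonable) sketch of the wrong theorem. If your target had been Theorem \ref{prop:nonlinear}, your outline would essentially match the paper's Appendix D.3 argument, including the role of the smallness assumption and the $\sqrt 2$ slack absorbing the Taylor remainder.
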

Gromov's non-squeezing theorem asserts the following fact: Let $B(r)$ be a ball in the phase space $\mathbb{R}^n\times\mathbb{R}^n$, with center $(a,b)$ and radius $r$:
$
B(r):\norm{x-a}^2+\norm{y-b}^2\le r^2
$.
The orthogonal projection of this ball on any plane of coordinates  alway contains a disc of radius $r$. Now suppose that the ball is moved by a Hamiltonian flow $\varphi(t,\cdot)$, i.e., each point of $B(r)$ serves as an initial condition of a system of Hamiltonian equations. By Liouville's Theorem, the image $\varphi(t,B(r))$ at any moment $t$ has the volume the same as the initial shape ball $B(r)$, but the shape is distorted. If we pair the conjugate coordinates $x_i$ and $y_i$, then the projection of the deformed ball on any $(x_i,y_i)$-plane will never decrease below its original value $\pi r^2$.  The FTRL algorithm induces a linear Hamiltonian system whose solution is a linear symplectic mapping on phase space $\mathbb{R}^{n_1}\times\mathbb{R}^{n_1}$. It suffices to consider a narrowed concept called ``Linear symplectic width" which is defined below.
\begin{definition}[Linear symplectic width, \citep{hsiao2006fundamental}]
The linear symplectic width of an arbitrary subset $A\subset\mathbb{R}^{2n}$, denoted as $w_L(A)$, is defined as:
\[
w_L(A)=\sup_{r\in\mathbb{R}^+}\left\{\pi r^2:\phi(B^{2n}(r))\subset A\ \text{for some }\phi\in AS_p(\mathbb{R}^{2n})\right\},
\]
where $AS_p(\mathbb{R}^{2n})$ denotes the group of affine symplectomorphisms, i.e., linear map followed by translation.
\end{definition}

\subsection{Proof of Theorem \ref{prop:nonlinear}}\label{apd3}
The first ingredient in proving Proposition \ref{prop:nonlinear} is based on standard results of Taylor series method in \citep{benaroya2005probability}. Suppose $Y=g(X)$ where $X$ is a random variable with $\mu_X$ the mean value. A full Taylor expansion of $Y=g(X)$ about the mean value yields
\[
Y=g(X)|_{X=\mu_X}+(X-\mu_X)\frac{dg}{dx}|_{X=\mu_X}+\frac{1}{2!}(X-\mu_X)^2\frac{d^2g}{dx^2}|_{X=\mu_X}+...
\]
and the expectation is given by
\[
\mu_Y=g(\mu_X)+\frac{\sigma^2_X}{2}g''(\mu_X)+...
\]
which holds due to $E[X-\mu_X]$=0. Furthermore, the variance of $Y$ can be estimated as follows.
\begin{align}
\sigma_Y^2=E[Y^2]-\mu_X^2\backsimeq g^2(\mu_X)+\sigma_X^2\left([g'(\mu_X)]^2+g(\mu_X)g''(\mu_X)\right)-\left(g(\mu_X)+\frac{\sigma_X^2}{2}g''(\mu_X)\right)^2.
\end{align}
Therefore, if we assume that $\sigma_X^4\ll \sigma_X^2$ which can be implied by $\sigma_X\ll 1$, the approximation to order $\sigma_X^2$ for the variance is 
\[
\sigma_Y^2\backsimeq\sigma_X^2\left(g'(\mu_X)\right)^2.
\] 
This estimate enables one to focus only on the linearization of a general differentiable map on the multivariable case, with inevitably some extra assumption on higher order derivatives. In general, suppose $Y=g(X_1,...,X_n)$, the Taylor series expansion about the mean value of each variable yield
\[
\Var[Y]=\sum_{i=1}^n\left(\frac{\partial g(\mu_1,...,\mu_n)}{\partial X_i}\right)^2\sigma_i^2+\sum_{i=1}^n\sum_{j=1,j\ne i}^n\left(\frac{\partial g(\mu_1,...,\mu_n)}{\partial X_i}\right)\left(\frac{\partial g(\mu_1,...,\mu_n)}{\partial X_j}\right)\rho_{ij}\sigma_i\sigma_j+...
\]
where $\rho_{ij}=\frac{\Cov(X_i,X_j)}{\sigma_i\sigma_j}$. 

The other ingredient we need is the linear symplectic width evolving under a time-dependent linear Hamiltonian flow, which is the result of \citep{hsiao2006fundamental}. In the setting of classic mechanics, the position and momentum $(\vec{q},\vec{p})$ in a Hamiltonian system, the covariance matrix of $X=(q_1,...,q_n,p_1,...,p_n)$ is given by 
\[
P=E[XX^{\top}]
\]
where we assume the system is zero-mean for convenience. If the system is linear
\[
\dot{X}=A(t)X
\]
with $X(t)=\Phi(t,t_0)X_0$ its solution, then the covariance is mapped as $P=\Phi P_0 \Phi^{\top}$. Furthermore, if we partition $P$ into blocks such that
\[
P=\left[\begin{array}{ccc}
P_{11}&\dots&P_{1n}
\\
\vdots&\ddots&\vdots
\\
P_{n1}&\dots&P_{nn}
\end{array}\right]
\]
where 
\[
P_{ij}=\left[\begin{array}{cc}
E[q_iq_j^{\top}]&E[q_ip_j^{\top}]
\\
\\
E[p_iq_j^{\top}]&E[p_ip_j^{\top}]
\end{array}\right].
\]
Then Theorem 3 of \citep{hsiao2006fundamental} asserts that
\[
\abs{P_{ii}(t)}\ge\left(\frac{w_L(P_0)}{\pi}\right)^2\ \ \text{for all}\ \ i=1,...,n.
\]

The third ingredient is the Taylor expansion of differential mapping $f:V\rightarrow W$ between higher dimensional spaces \citep{Conrad}. In general suppose $V=\mathbb{R}^n$ and $W=\mathbb{R}^m$. Let $U\subset V$ be open and let $f_i:U\rightarrow\mathbb{R}$ denote the $i$th component of $f$, so f is described as a map $f=(f_1,...,f_m)$. Let $p\ge 0$ be a non-negative integer. Then $f$ is $C^p$ map if and only if all $p$-fold iterated partial derivatives of the $f_i$'s exist and are continuous on $U$. Suppose $\text{Hom}(V,W)$ be the space of linear mappings from $V$ to $W$. Then the higher derivative $D^pf$ is a multi-linear mapping from $V^p\rightarrow W$, i.e.,
\[
D^pf:U\rightarrow\text{Mult}(V^p,W),
\]
where $V^p=V\times...\times V$ is the $p$-th fold of Cartesian product of $V$. Choose $a\in U$ and $r>0$ such that a small neighborhood $b_r(a)\subset U$ for a choice of norm on $V$. Choose $h=\sum h_je_j\in V$ with $\norm{h}<r$. For non-negative integer $k\le p$, the higher order derivative as multi-linear mapping acting on $T_aU$ is given by the following expression,
\[
\frac{(D^kf)(a)}{k!}(h^{(k)})=\sum_{i_1+...+i_n=k}\frac{1}{i_1!...i_n!}h_1^{i_n}...h_n^{i_n}\frac{\partial^kf}{\partial x_1^{i_1}...\partial x_n^{i_n}}(a)
\]
where $h^{(k)}=(h,...,h)\in V^k$ and the sum is taken overa ll ordered $n$-tuples $(i_1,...,i_n)$ of non-negative integer whose sum is $k$. To be more concrete
\[
\frac{\partial^kf}{\partial x_1^{i_1}...\partial x_n^{i_n}}(a)=\left(\frac{\partial^kf_1}{\partial x_1^{i_1}...\partial x_n^{i_n}}(a),...,\frac{\partial^kf_m}{\partial x_1^{i_1}...\partial x_n^{i_n}}(a)\right)
\]
which is a vector in $W$. The Taylor formula for differentiable mapping $f:V\rightarrow W$ is given as follows,
\[
f(a+h)=\sum_{j=0}^p\frac{(D^jf)(a)}{j!}(h^{(j)})+R_{p,a}(h)
\]
in $W$, where
\[
R_{p,a}(h)=\int_0^1\frac{(1-t)^{p-1}}{(p-1)!}((D^pf)(a+th)-(D^pf)(a))(h^{(p)})dt
\]
satisfies
\[
\norm{R_{p,a}(h)}\le C_{p,h,a}\norm{h}^p,\ \ \lim_{h\rightarrow0}C_{p,h,a}=0
\]
with
\[
C_{p,h,a}=\sup_{t\in[0,1]}\frac{\norm{(D^pf)(a+th)-(D^pf)(a)}}{p!}.
\]
With above settings, we are ready to prove the theorem.

\begin{proof}
Denote $X=(X_i^{t_0},y_i^{t_0})$ for short, and let $\phi_t(X)$ be the flow of Hamiltonian system of $X$. The Taylor expansion with respect to mean of $X$, say $\mu$, is computed as follows,
\[
\phi_t(X)=\phi_t(\mu)+D\phi_t(\mu)(X-\mu)+\frac{1}{2}D^2\phi_t(X)(X-\mu)^{(2)}+...
\]
Since the covariance matrix of the random vector given $X$ as a random vector is encoded in the covariances of $\phi_t^i(X)$ and $\phi_t^j(X)$ for $i,j\in[n]$, i.e., $\Cov(\phi_t^i(X),\phi_t^j(X))$. The fundamental property of covariance implies that for each pair $(i,j)$, $\Cov(\phi_t^i(X),\phi_t^j(X))$ is an infinite sum of covariances given by the Taylor formular. By assumption on the covariance of the initial input $X$ such that the covariance is small, it suffices to use covariance matrix of $D\phi_t(\mu)(X-\mu)$ to approximate the covariance matrix of $\phi_t(X)$, since all the terms other than the linear ones in $\Cov(\phi_t^i(X),\phi_t^j(X))$ is of the higher power of the entries of $X-\mu$. Formally we have
\[
\Cov(\phi_t(X))\approx\Cov(D\phi_t(\mu)(X-\mu)).
\]
Since we further assume that the Hamiltonian flow $\phi_t(\cdot)$ has Lipschitz derivatives of arbitrary order, uniformly, the remainder in the approximation is bounded by a constant multiplied by higher power of entries in the covariance matrix of $X$. Recall that we use notation $X=(X_i^{t_0},y_i^{t_0})$, and apply Theorem 3 of \citep{hsiao2006fundamental} to the linear part $D\phi_t(\mu)(X-\mu)$ in the Taylor formula, we have 
\[
(\Delta X_{i,\alpha}^t\Delta y_{i,\alpha}^t)^2-(\Cov(X_{i,\alpha}^t,y_{i,\alpha}^t))^2\ge\frac{w_L^2(P_0)}{\pi^2}
\]
provided the remainder in approximation is zero. Thus for any number strictly less than $\frac{w_L^2(P_0)}{\pi^2}$, say $\frac{1}{2}\frac{w_L^2(P_0)}{\pi^2}$, as long as the covariance entries are small enough, we can have
\[
(\Delta X_{i,\alpha}^t\Delta y_{i,\alpha}^t)^2-(\Cov(X_{i,\alpha}^t,y_{i,\alpha}^t))^2\ge\frac{1}{2}\frac{w_L^2(P_0)}{\pi^2}.
\]
The proof completes.
\end{proof}

\subsection{Experiments on primal space}\label{appdenix_actural_strategy}

Note that primal space are closely related to canonical coordinates $(X_1(t),y_1(t))$ by
\begin{align*}
    &x_1(t) = \nabla h^*_1(y_1(t)) \\
    &x_2(t) = \nabla h^*_2(A^{(21)}X_1(t) + y_2(0)).
\end{align*}
It is hoped that Theorem \ref{prop:nonlinear} has its impact on primal space.

We firstly do experiments on how the operator $ \nabla h^*(\cdot)$ and the matrix-vector $A^{(21)}X_1(t)$ can affect the wave shape of the canonical coordinates $(y_1(t),X_1(t))$. Our observation can be summerized as 
\begin{observation}\label{ob1}
    For different  $i_1,i_2 \in [n_1]$, the local minima of $\Delta (x_{1,i_1}(t))$ is close to the local minima of $\Delta (y_{1,i_2}(t))$. See (a) in Figure \ref{Variance_nablah_y}.
\end{observation}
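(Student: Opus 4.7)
The plan is to leverage the smooth map $\nabla h_1^*(\cdot)$ connecting the dual coordinate $y_1(t)$ to the primal coordinate $x_1(t)$, and propagate variance information via a delta-method (Taylor expansion) argument analogous to the one used in the proof of Theorem \ref{prop:nonlinear}. First I would write $x_{1,i_1}(t) = \nabla h_1^*(y_1(t))_{i_1}$, expand around the mean $\mu_{y_1}(t) := \mathbb{E}[y_1(t)]$, and obtain the first-order approximation
\[
\Var(x_{1,i_1}(t)) \;\approx\; \sum_{j,k} J_{i_1,j}(t)\, J_{i_1,k}(t)\, \Cov(y_{1,j}(t), y_{1,k}(t)),
\]
where $J_{i,j}(t)=\partial_{y_{1,j}} \nabla h_1^*(\mu_{y_1}(t))_i$ is the Jacobian of $\nabla h_1^*$ evaluated at the mean. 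For the entropy regularizer (MWU), $J$ is the softmax Jacobian, a low-rank perturbation of a diagonal matrix whose rows differ only by an $x_{1,i_1}$-dependent scale factor. Hence $\Var(x_{1,i_1}(t))$ is, up to a prefactor depending on $i_1$, a common quadratic form in the dual covariance matrix, so its local minima are governed mostly by that common form.

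The second step would be to argue the analogous statement on the $y$-side. The simplex constraint on $x_1$ together with the coupling encoded in the Hamiltonian system (\ref{HamiltonianFTRL}) forces the components of $y_1(t)$ to evolve in a correlated fashion; concretely, the variance expressions $\Var(y_{1,i_2}(t))$ extracted from the covariance matrix $P(t)$ (whose dynamics were analyzed in Theorem \ref{Covariance Evolution in FTRL with Euclidean Regularizer} in the linear case and governed by $D\phi_t(\mu)$ in general) share the same oscillatory factors, differing only in their coefficients against the basis generated by the eigenstructure of the flow's Jacobian. Combining the two steps one obtains that the local minima of $\Var(x_{1,i_1}(t))$ and of $\Var(y_{1,i_2}(t))$ are, to first order in the initial covariance, both controlled by the same underlying scalar oscillation, which forces their minimizing times to be close uniformly in $i_1,i_2$.

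The main obstacle is making the phrase \emph{close to} quantitative: the $i_1$-dependent prefactors, and similarly the $i_2$-dependent weights on the $y$ side, do shift the local-minimum times by an amount that depends on how anisotropic the covariance matrix is and on the condition number of $J(t)$. Two simplifying regimes naturally give a clean statement: (i) the initial covariance $P(t_0)$ is approximately isotropic, in which case the quadratic form collapses to a trace and the prefactors cancel; or (ii) $n_1$ is small, consistent with the $3\times 3$ setting of Figure \ref{UIUI}. In the fully general setting one would only expect a bound on the maximum separation between the two local-minimum times in terms of the spectral gap of $P(t_0)$, which I would leave for future work.
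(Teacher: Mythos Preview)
The paper does not prove this statement. Observation~\ref{ob1} is presented purely as an \emph{empirical} finding: the surrounding text in Appendix~\ref{appdenix_actural_strategy} says explicitly that ``the above two observations are less rigid'' and supports them only with the numerical experiment in Figure~\ref{Variance_nablah_y}(a). There is no theorem, lemma, or proof attached to it; the paper uses the observation heuristically, in combination with Theorem~\ref{prop:nonlinear}, to motivate the primal-space picture in Figure~\ref{UIUI}.

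Your proposal therefore goes well beyond what the paper attempts. As a heuristic explanation, your delta-method outline is reasonable and in the same spirit as the Taylor-series machinery already used in Appendix~\ref{apd3}: writing $\Var(x_{1,i_1}(t))$ as a quadratic form in the dual covariance via the softmax Jacobian is the natural first step, and your remark that the rows of the softmax Jacobian differ only by an $x_{1,i_1}$-dependent scale is correct. However, you correctly identify the gap yourself: the $i_1$- and $i_2$-dependent prefactors and the off-diagonal covariance terms can shift the locations of local extrema, and you do not control this shift quantitatively outside the isotropic or low-dimensional regimes. So what you have is a plausibility argument rather than a proof---which, to be fair, is exactly the status the paper assigns to the statement. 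If you want to match the paper, the appropriate ``proof'' is simply to point to the experiment; if you want to go further, turning your sketch into a theorem would require an explicit bound on the displacement of the local minima in terms of the anisotropy of $P(t_0)$ and the conditioning of the softmax Jacobian, which neither you nor the paper supply.
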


\begin{observation}\label{ob2}
    For different $i_1 \in [n_1]$ and $i_2 \in [n_2]$, if $\ \exists \ \epsilon > 0$ such that $x_{1,i_1}(t) > \epsilon$, then the local minima of $\Delta\left(X_{1,i_1}(t)\right)$, is close to local minima of $\Delta\left((A^{(21)}X_{1}(t) + y_2(0))_{i_2}\right)$. See (b) in Figure \ref{Variance_nablah_y}.
\end{observation}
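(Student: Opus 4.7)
The plan is to exploit the identity $(A^{(21)}X_1(t) + y_2(0))_{i_2} = y_{2,i_2}(t)$, which follows from Lemma \ref{cpayoff} applied to agent 2's cumulative payoff. This immediately reduces the claim to showing that the local minima in time of $\Delta(X_{1,i_1}(t))$ approximately coincide with the local minima of $\Delta(y_{2,i_2}(t))$ whenever $x_{1,i_1}(t) > \epsilon$ uniformly along the flow.

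First I would expand the variance of $y_{2,i_2}(t)$ as a quadratic form in the covariance matrix of the cumulative strategy of agent 1:
\[
\Var(y_{2,i_2}(t)) = \sum_{k,\ell=1}^{n_1} A^{(21)}_{i_2,k}\, A^{(21)}_{i_2,\ell}\, \Cov(X_{1,k}(t), X_{1,\ell}(t)).
\]
Since $X_{1,k}(t) = \int_0^t x_{1,k}(s)\,ds$ and by hypothesis $x_{1,i_1}(s) > \epsilon$ along every realization, the mean of $X_{1,i_1}(t)$ grows at least linearly in $t$, while components $X_{1,k}(t)$ associated to actions that are frequently suppressed (i.e.\ where $x_{1,k}(s)$ is small) grow more slowly. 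Combining this growth with the Taylor-expansion / delta-method approach used in the proof of Theorem \ref{prop:nonlinear}, I would then argue that $\Var(X_{1,i_1}(t))$ dominates the diagonal of the quadratic form above, so that $\Var(y_{2,i_2}(t)) \approx C(t)\,\Var(X_{1,i_1}(t))$ where $C(t) = (A^{(21)}_{i_2,i_1})^2$ up to slowly varying corrections coming from the off-diagonal covariances.

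Given this approximate proportionality, the times at which $\Delta(y_{2,i_2}(t))$ attains local minima must coincide, up to small perturbations, with the times at which $\Delta(X_{1,i_1}(t))$ attains local minima. More precisely, at a critical point $t^{\ast}$ of $\Delta(X_{1,i_1}(t))$, the derivative $\frac{d}{dt}\Delta(y_{2,i_2}(t))$ evaluates to a quantity of the same order as the derivative of the off-diagonal correction terms, so by an implicit function / continuity argument the nearest critical time of $\Delta(y_{2,i_2}(t))$ lies within a controlled neighborhood of $t^{\ast}$.

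The hard part will be rigorously establishing the dominance of $\Var(X_{1,i_1}(t))$ over the off-diagonal covariance terms in the general (non-linear regularizer) setting. The cumulative strategies $\{X_{1,k}(t)\}_{k=1}^{n_1}$ are coupled through the simplex constraint $\sum_k X_{1,k}(t) = t$ (in the \ref{2pMWUA} setting), which induces a rigid correlation structure that is not obviously dominated by any single diagonal entry. A fully rigorous argument may require either a spectral analysis of the linearized Hamiltonian flow around the mean trajectory (in the spirit of the Jordan-form computations in Appendix \ref{proofer}) or a restriction to the two-action case, where $X_{1,1}(t) + X_{1,2}(t) = t$ forces perfect anti-correlation and makes the proportionality exact. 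The fact that the statement is phrased as an empirical observation (supported by Figure \ref{UIUI}) suggests these technical obstructions are what currently prevent a theorem-level guarantee in full generality.
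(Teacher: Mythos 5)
You should first note that the paper does not prove this statement at all: it is stated as an (empirical) observation in Appendix \ref{appdenix_actural_strategy}, supported only by the numerical experiment in Figure \ref{Variance_nablah_y}(b), and the authors explicitly describe Observations \ref{ob1}--\ref{ob2} as ``less rigid'' heuristics that, combined with Theorem \ref{prop:nonlinear}, give insight into the primal space. Your opening reduction is the same one the paper makes implicitly: by Lemma \ref{cpayoff}, $(A^{(21)}X_1(t)+y_2(0))_{i_2}=y_{2,i_2}(t)$, so the claim is about the critical times of $\Delta(X_{1,i_1}(t))$ versus those of $\Delta(y_{2,i_2}(t))$. Up to that point you are aligned with the paper; beyond it, you are attempting something the paper does not attempt.

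The analytic argument you sketch, however, has a genuine gap at its central step. From $x_{1,i_1}(s)>\epsilon$ you infer that the \emph{mean} of $X_{1,i_1}(t)$ grows linearly, and from this you conclude that $\Var(X_{1,i_1}(t))$ dominates the quadratic form $\sum_{k,\ell}A^{(21)}_{i_2,k}A^{(21)}_{i_2,\ell}\Cov(X_{1,k}(t),X_{1,\ell}(t))$. That is a non sequitur: a lower bound on the trajectory (or its mean) says nothing about the size of its fluctuations across random initializations, and components $X_{1,k}(t)$ belonging to ``suppressed'' actions can easily carry variance of the same order as, or larger than, $\Var(X_{1,i_1}(t))$ (an action played under some initializations and abandoned under others is precisely one with large cumulative-strategy variance). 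The paper's own stated reason for the $\epsilon$-condition is different and weaker --- if $x_{1,i_1}(t)\to 0$ the variance of that strategy degenerates and the tradeoff phenomenon is vacuous --- not that it forces diagonal dominance. Moreover, under the simplex coupling $\sum_k X_{1,k}(t)=t$ the off-diagonal covariances are of the same order as the variances, so they cannot be treated as ``slowly varying corrections''; and even granting an approximate proportionality $\Var(y_{2,i_2}(t))\approx C(t)\Var(X_{1,i_1}(t))$, matching of local minima requires quantitative control (a nearly constant $C(t)$, nondegenerate critical points, bounds on the derivative of the correction) that is not established. You correctly flag this obstruction in your final paragraph, but as written the proposal is a heuristic consistent with the figure rather than a proof --- which is exactly why the paper leaves the statement at the level of an empirical observation rather than a theorem.
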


Observation \ref{ob1} can be explained as viewing the function graph of $\Delta (x_{1,i_1}(t))$ and $\Delta (y_{1,i_2}(t))$ as waves, then the peaks(troughs) of wave of $\Delta (x_{1,i_1}(t))$ are close to the peaks(troughs) of wave of $\Delta (y_{1,i_2}(t))$. Observation \ref{ob2} has the same meaning. Note that the condition $\ \exists \epsilon > 0$ such that $x_{1,i_1}(t), x_{1,i_2}(t)> \epsilon$ is necessary because otherwise $\lim_{t \to \infty} x_{1,i_1}(t) = \lim_{t \to \infty} x_{1,i_2}(t) = 0$, thus the player will not use these strategies as time process, and the variance of these two strategies will equal to $0$, thus the uncertainty inequality will not hold for these strategies.

 Although the above two observations are less rigid, however, combine with  Theorem \ref{prop:nonlinear}, they provides insights into the covariance evolution of primal space. Note that the second term implies that the positions of the local minima/maxima of $\Delta y_{2}(t) = \Delta(A^{(21)}X_1(t) + y_2(0))$ are close to those of $\Delta(X_1(t))$. Due to the symmetry between the two players, combining this with the first term allow us to conclude that the position of the local minima/maxima of $\Delta x_{2}(t)$ are close to those of $\Delta X_{1}(t)$. Furthermore, based on the first term, we can also infer that the position of the local minima/maxima of $\Delta x_{1}(t)$ are close those of $\Delta y_{1}(t)$. Therefore, by combining these two observations with Proposition \ref{prop:nonlinear}, we can conclude that the positions of local minima in $\Delta x_{1}(t)$ are close to those of local maxima in $\Delta x_{2}(t)$, and vice versa.

\begin{figure}[h]
\centering
\subfigure[Functions graph of $\Delta (x_{1,i_1}(t))$ and $\Delta (y_{1,i_2}(t))$.]{
\includegraphics[clip,width=0.45\columnwidth]{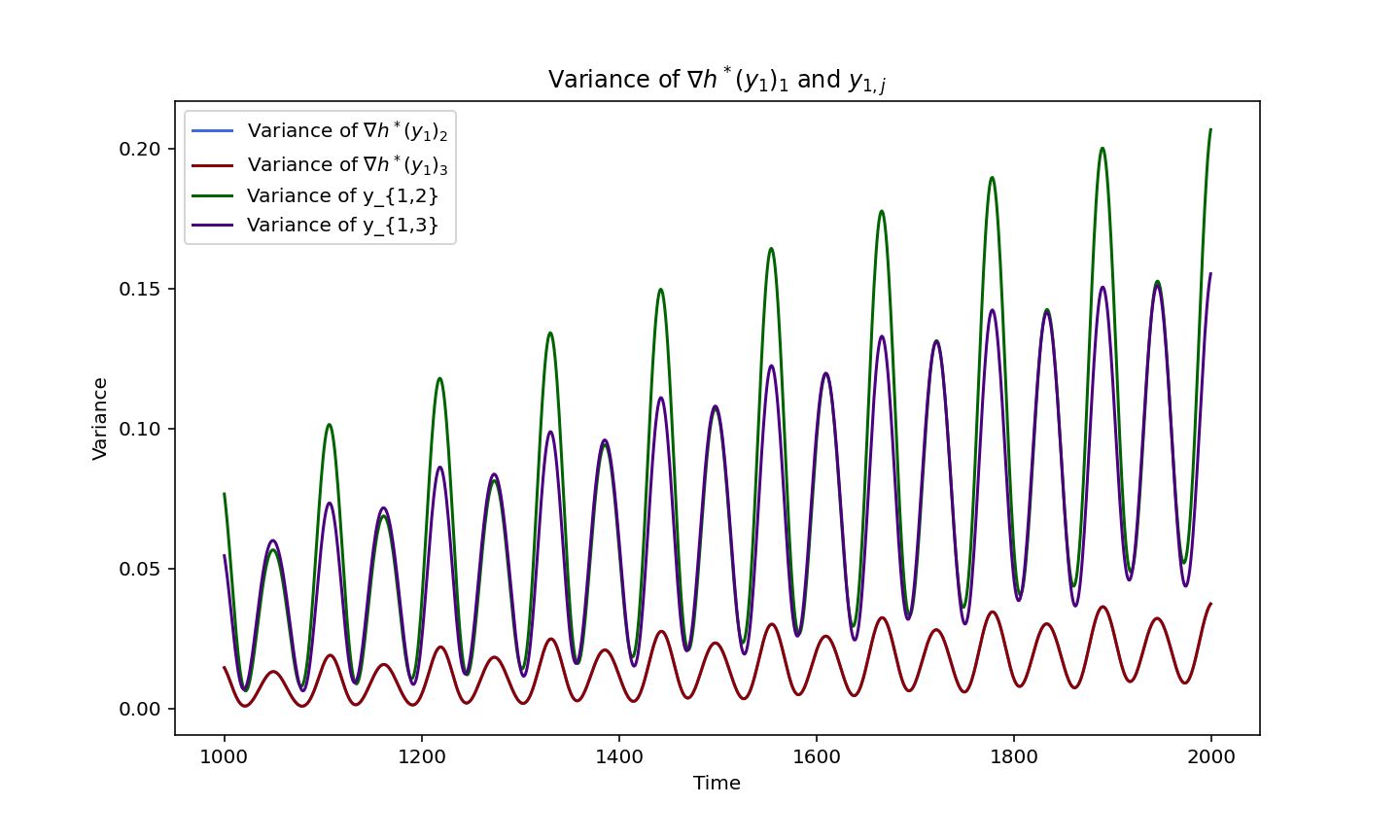}
}
\subfigure[Functions graph of $\Delta(A^{(21)}X_1(t) + y_2(0))$ and  $\Delta(X_1(t))$.]{
\includegraphics[clip,width=0.45\columnwidth]{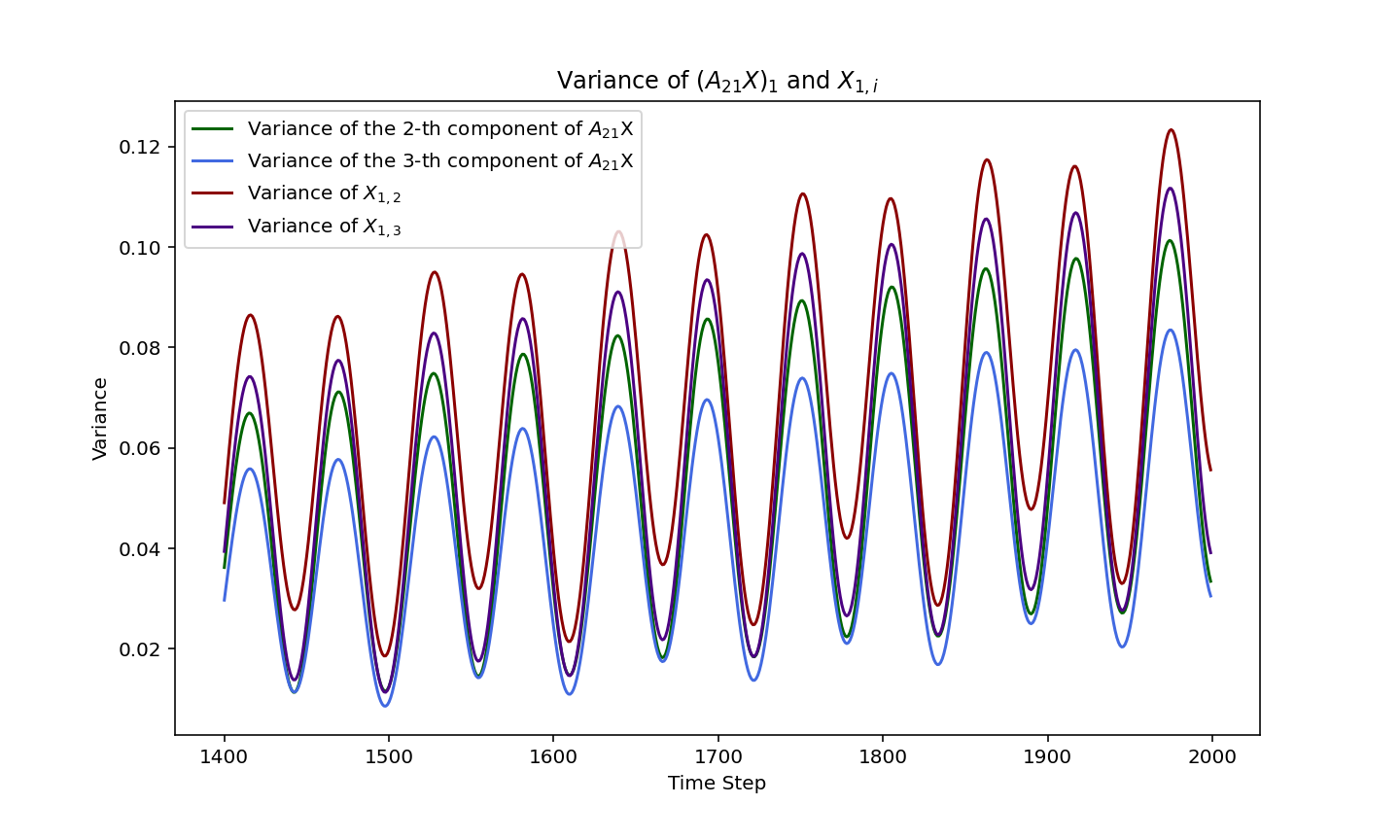}
}
\caption{Variance evolution of classic Euler discretization}
\label{Variance_nablah_y}
\end{figure}

In Figure \ref{Variance_nablah_y},  experiments are presented in the case of MWU algorithms and a randomly generated $3*3$ game. The regularizer is chosen as $h(x) = \sum^{n_i}_{i=1}x_i \ln x_i $ and the constraint is chosen as simplex constrain $\mathcal{X} = \{ x_{i}\ | \ \sum^{n}_{i=1} x_{i} = 1, x_{i} > 0\}$. Note that in this case, we have
\begin{align*}
    \nabla h^*(y) = \left( \frac{e^{y_i}}{\sum^n_{s=1} e^{y_s}} \right)^n_{i=1}.
\end{align*}
In Figure \ref{Variance_nablah_y} (a), we can see although variance of $\nabla h^*(y_1)_1$ has smaller value than variance of $y_{1,i}$, but the positions of local minima/maxima of these curves are very close. Similarly, in Figure \ref{Variance_nablah_y} (b), but the positions of local minima/maxima of these curves are also very close. 

In the following, we provide additional experiments on the evolution of variance in the primal space. We observe that if the dimension of the game is low, for example, less than 10, we can see a similar covariance evolution in the primal space as shown in Figure \ref{uncertaintyprinciple}. However, when the dimension becomes very high, this pattern can disappear.

\begin{figure}[h]
\centering
\subfigure[]{
\includegraphics[clip,width=0.45\columnwidth]{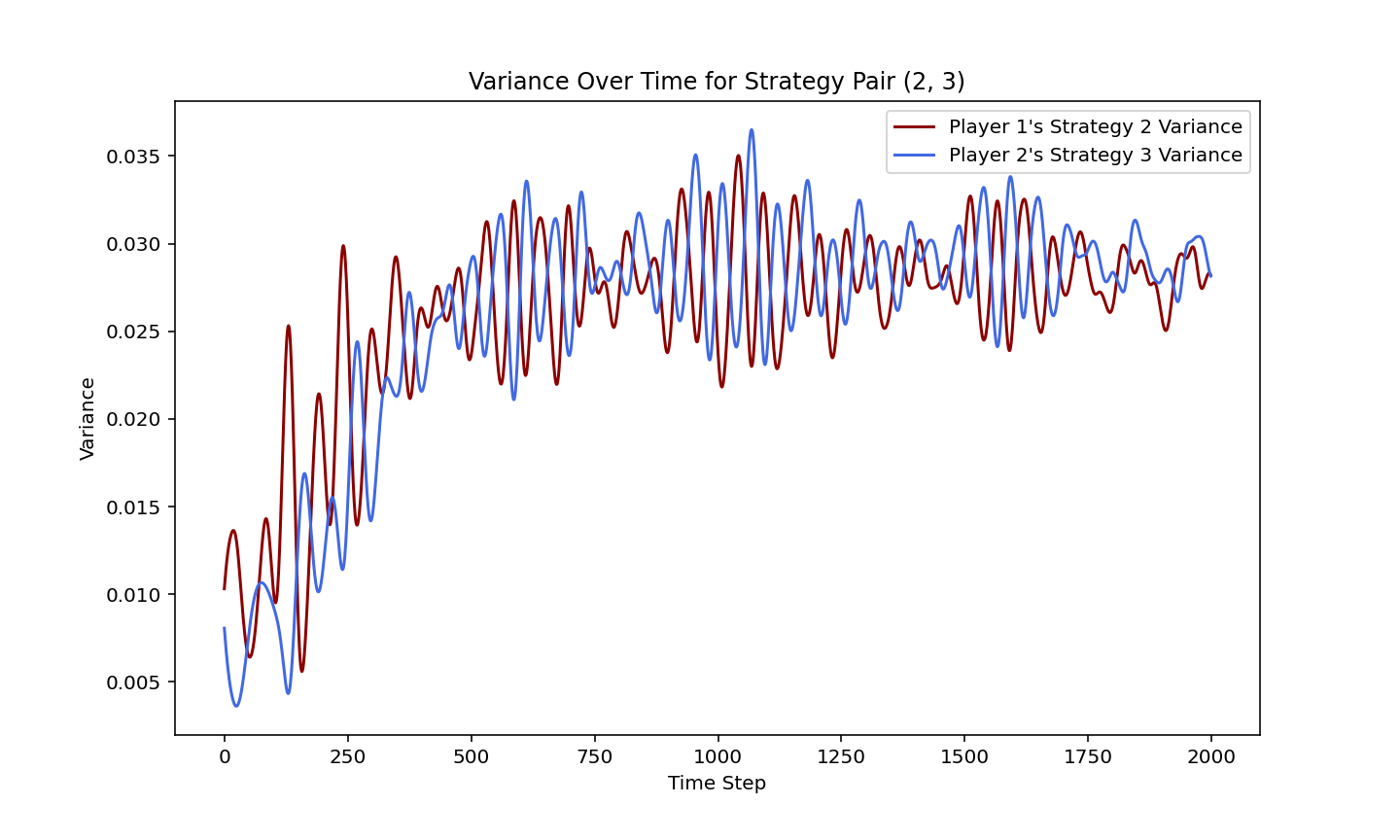}
}
\subfigure[]{
\includegraphics[clip,width=0.45\columnwidth]{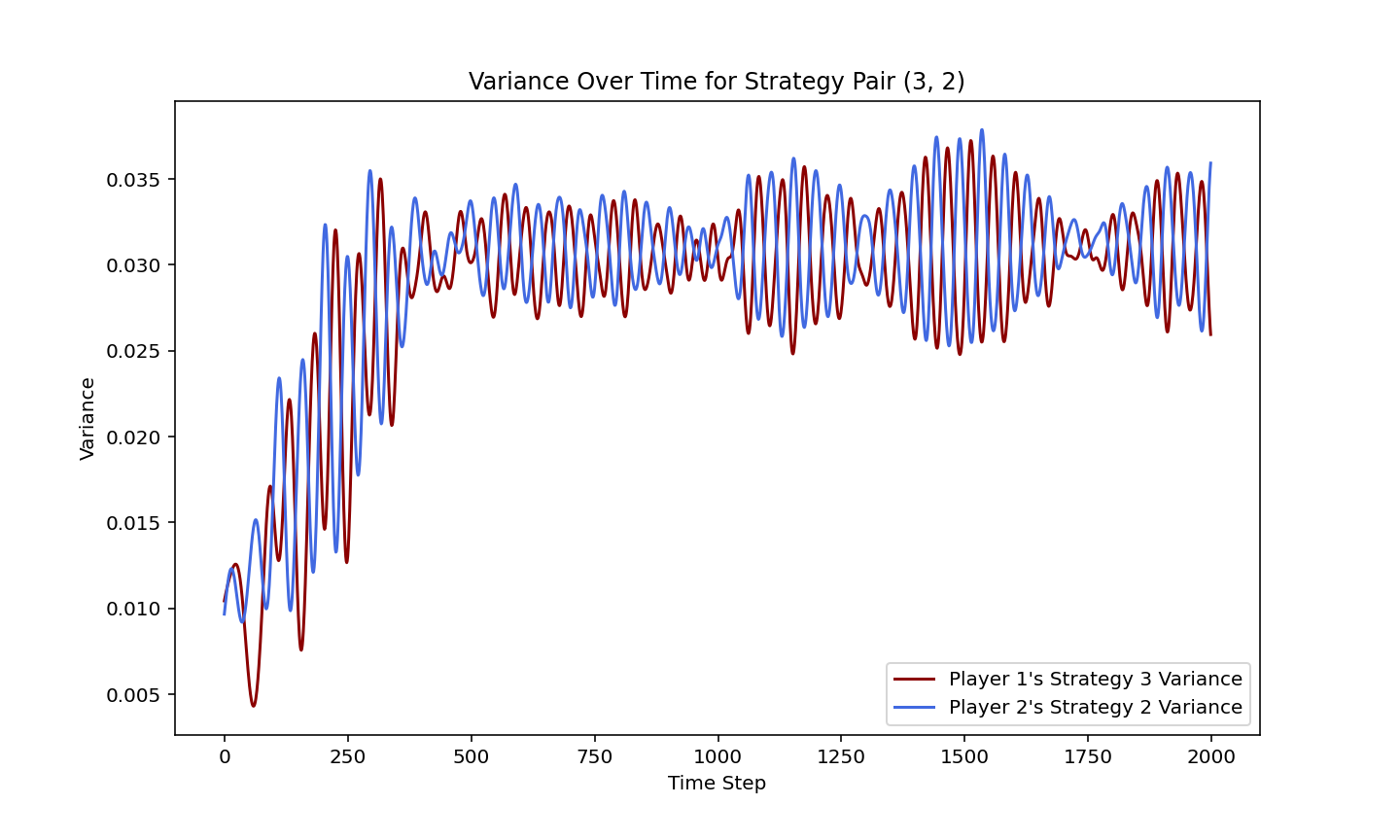}
}
\caption{Covariance evolution on a randomly generated 3*3 game}
\label{}
\end{figure}

\begin{figure}[h]
\centering
\subfigure[]{
\includegraphics[clip,width=0.45\columnwidth]{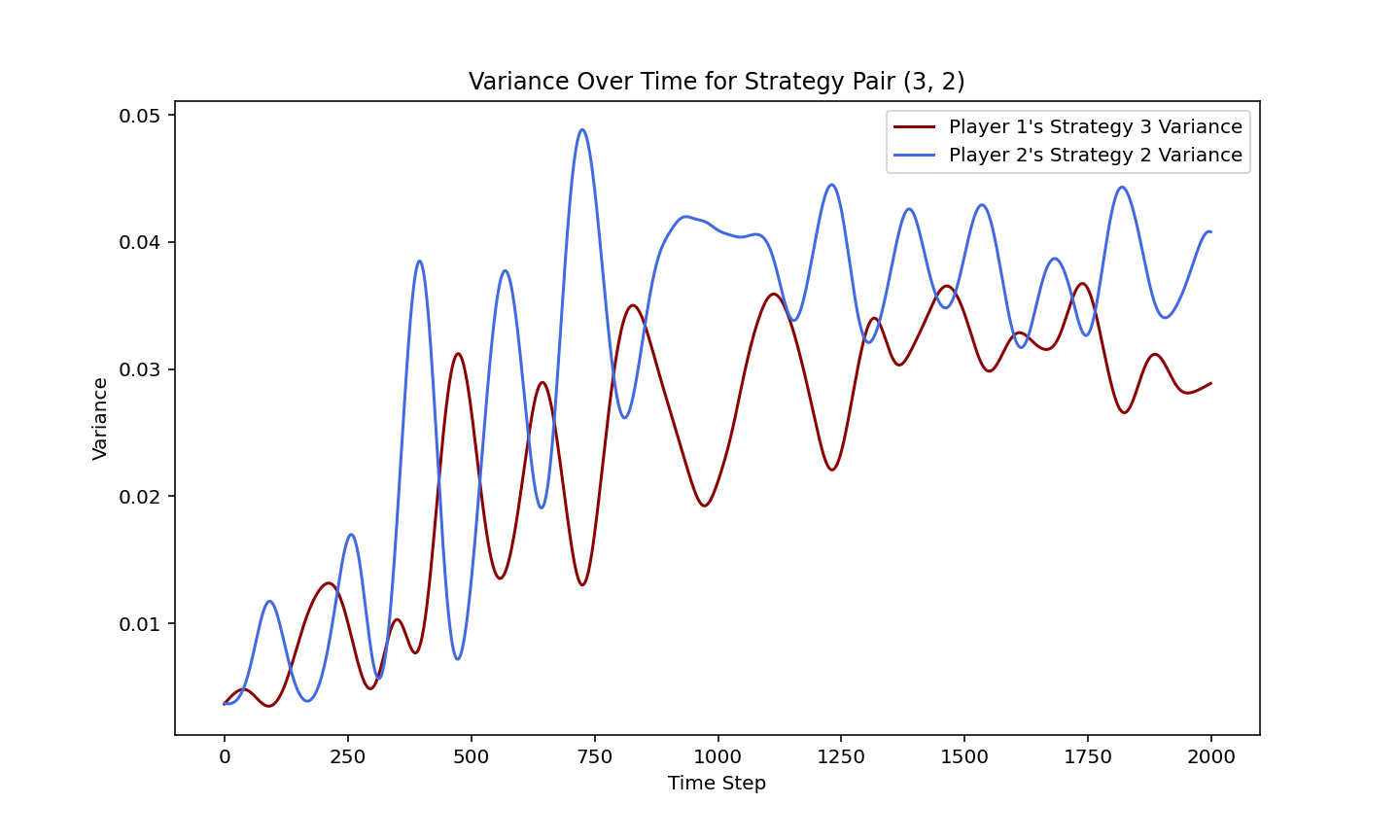}
}
\subfigure[]{
\includegraphics[clip,width=0.45\columnwidth]{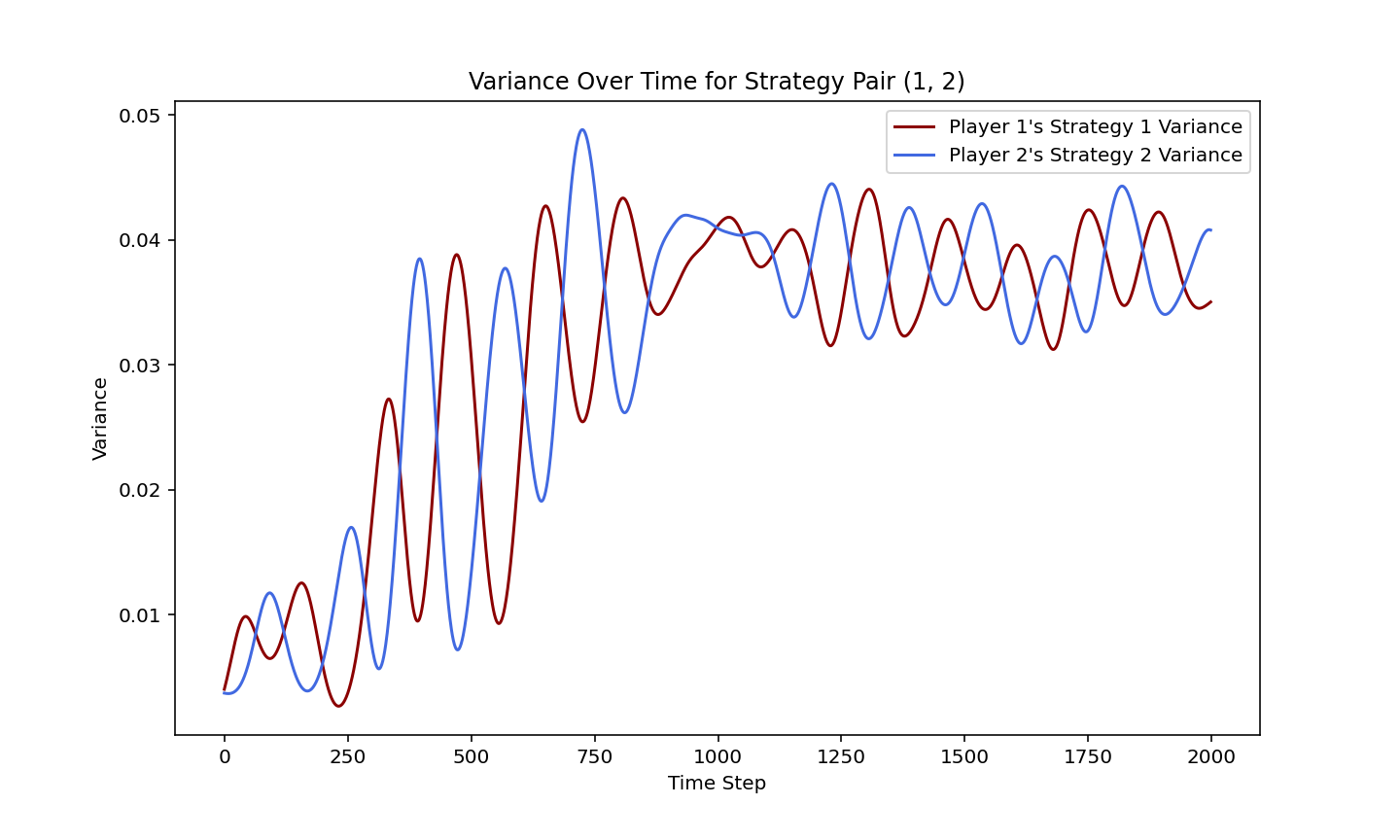}
}
\caption{Covariance evolution on a randomly generated 5*5 game}
\label{}
\end{figure}

\begin{figure}[h]
\centering
\subfigure[]{
\includegraphics[clip,width=0.45\columnwidth]{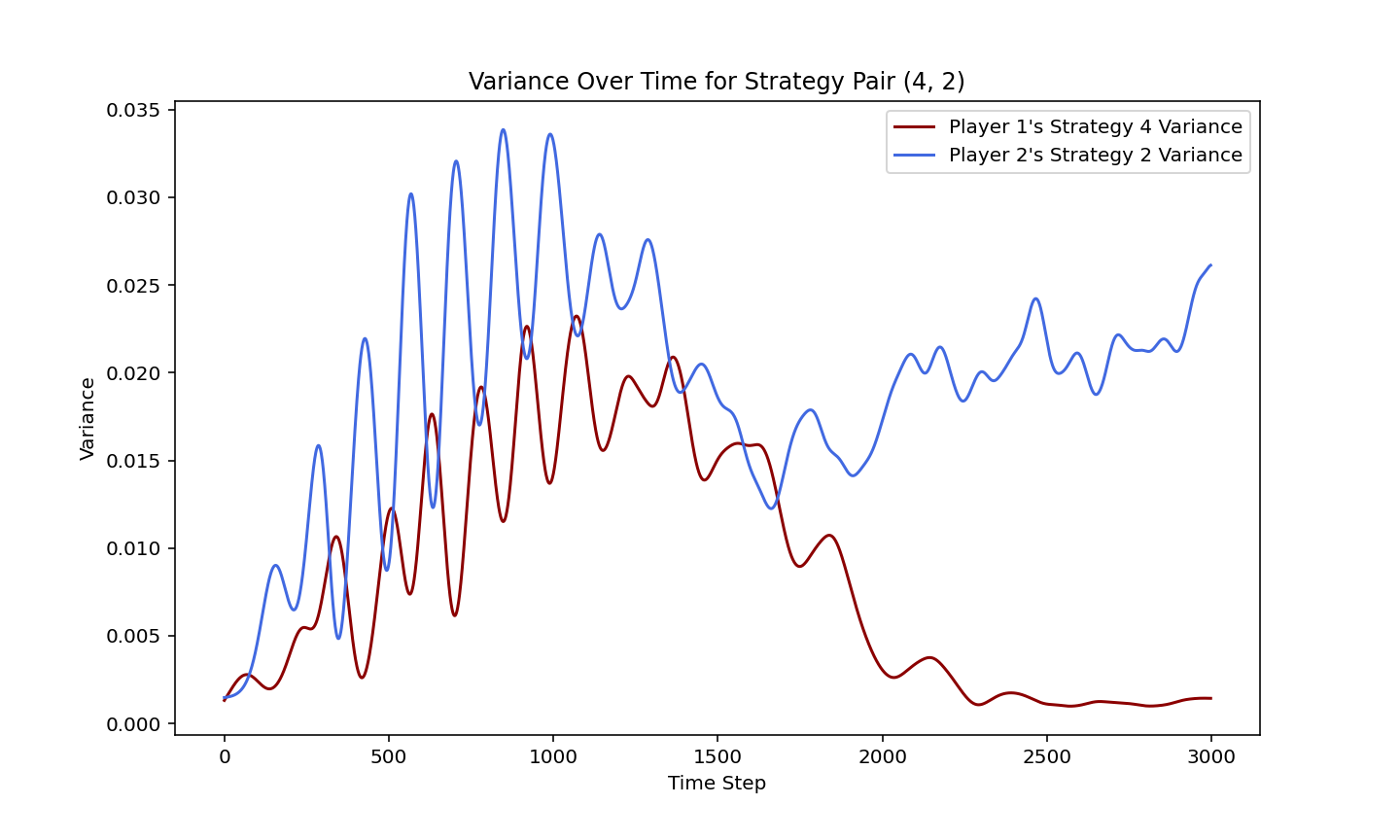}
}
\subfigure[]{
\includegraphics[clip,width=0.45\columnwidth]{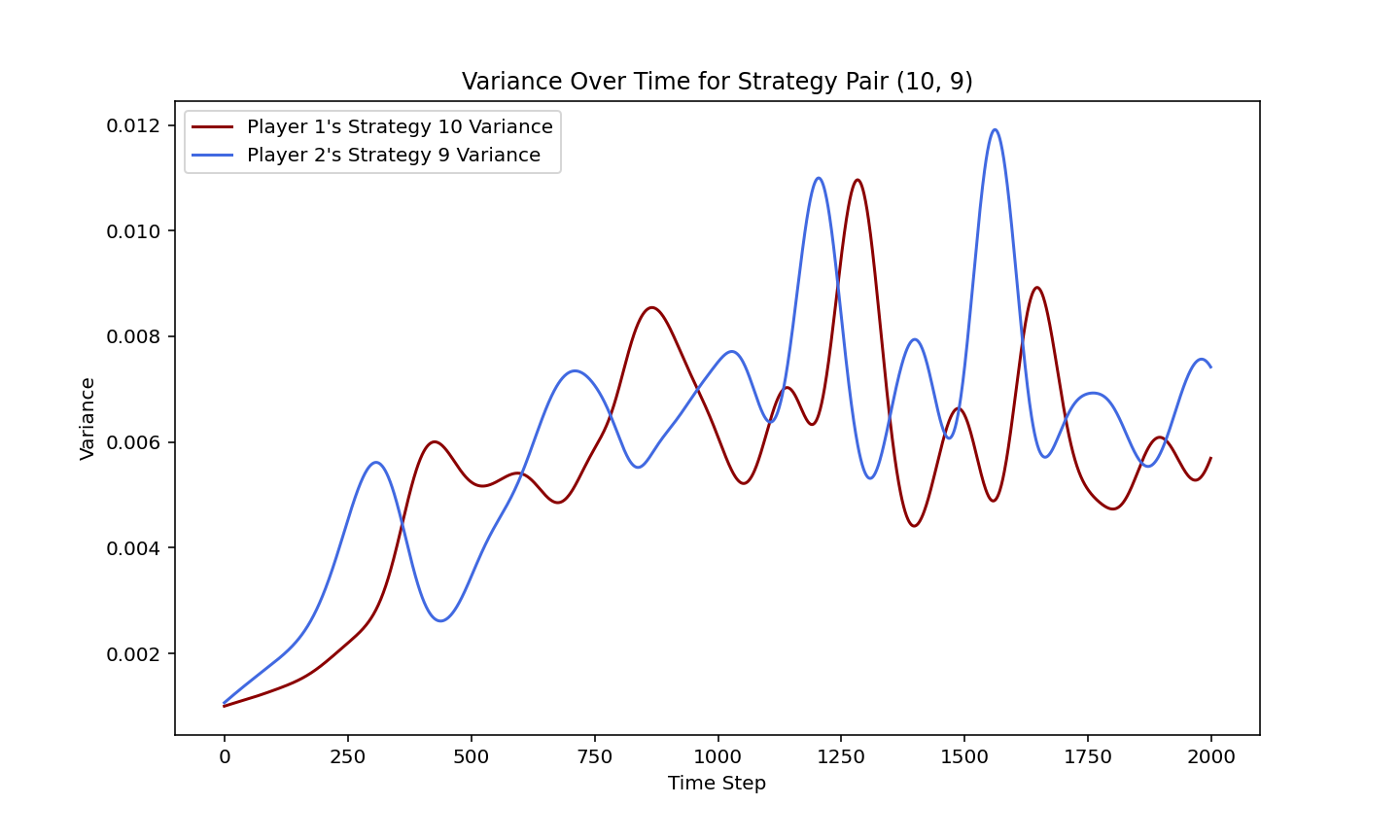}
}
\caption{Covariance evolution on a randomly generated 10*10 game}
\label{}
\end{figure}

\begin{figure}[h]
\centering
\subfigure[]{
\includegraphics[clip,width=0.45\columnwidth]{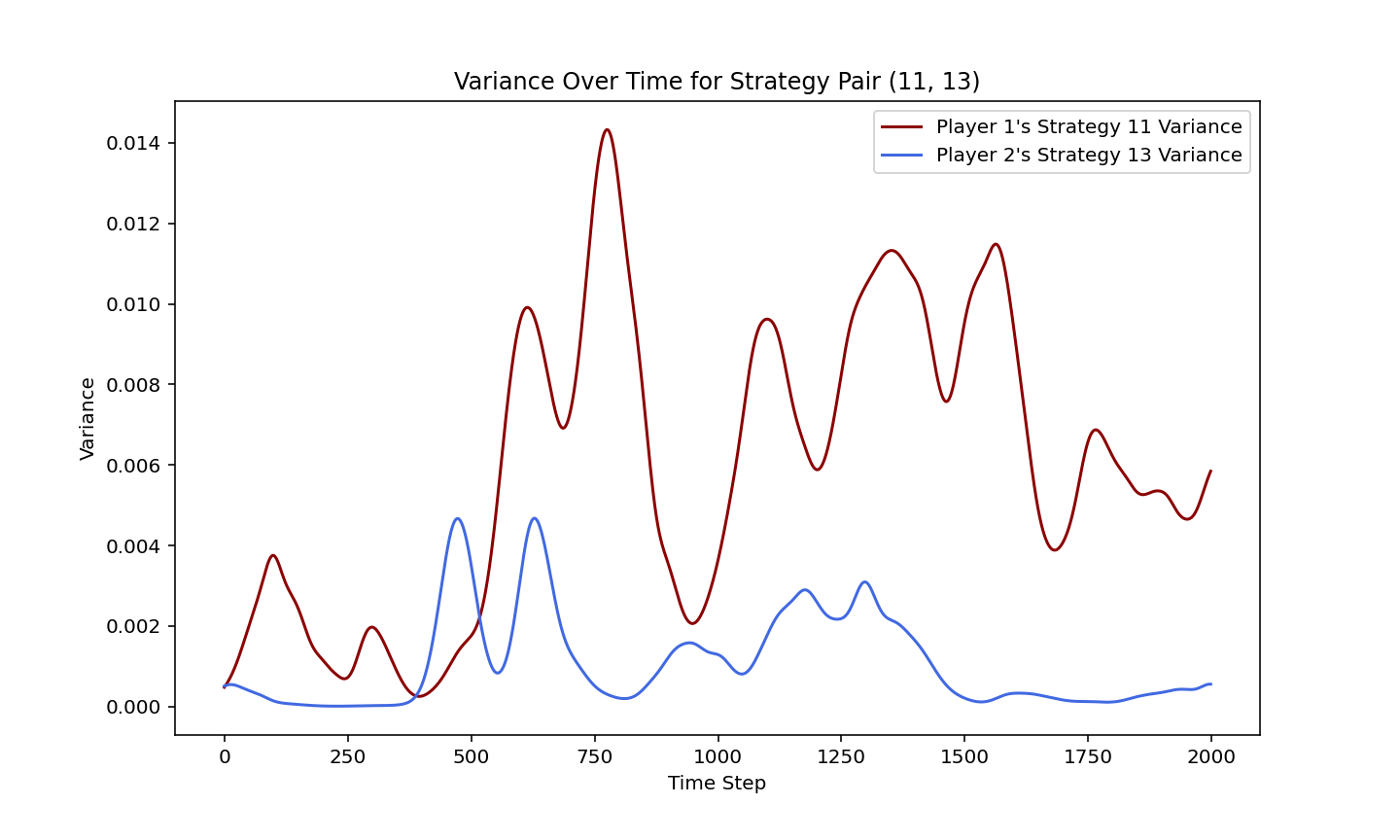}
}
\subfigure[]{
\includegraphics[clip,width=0.45\columnwidth]{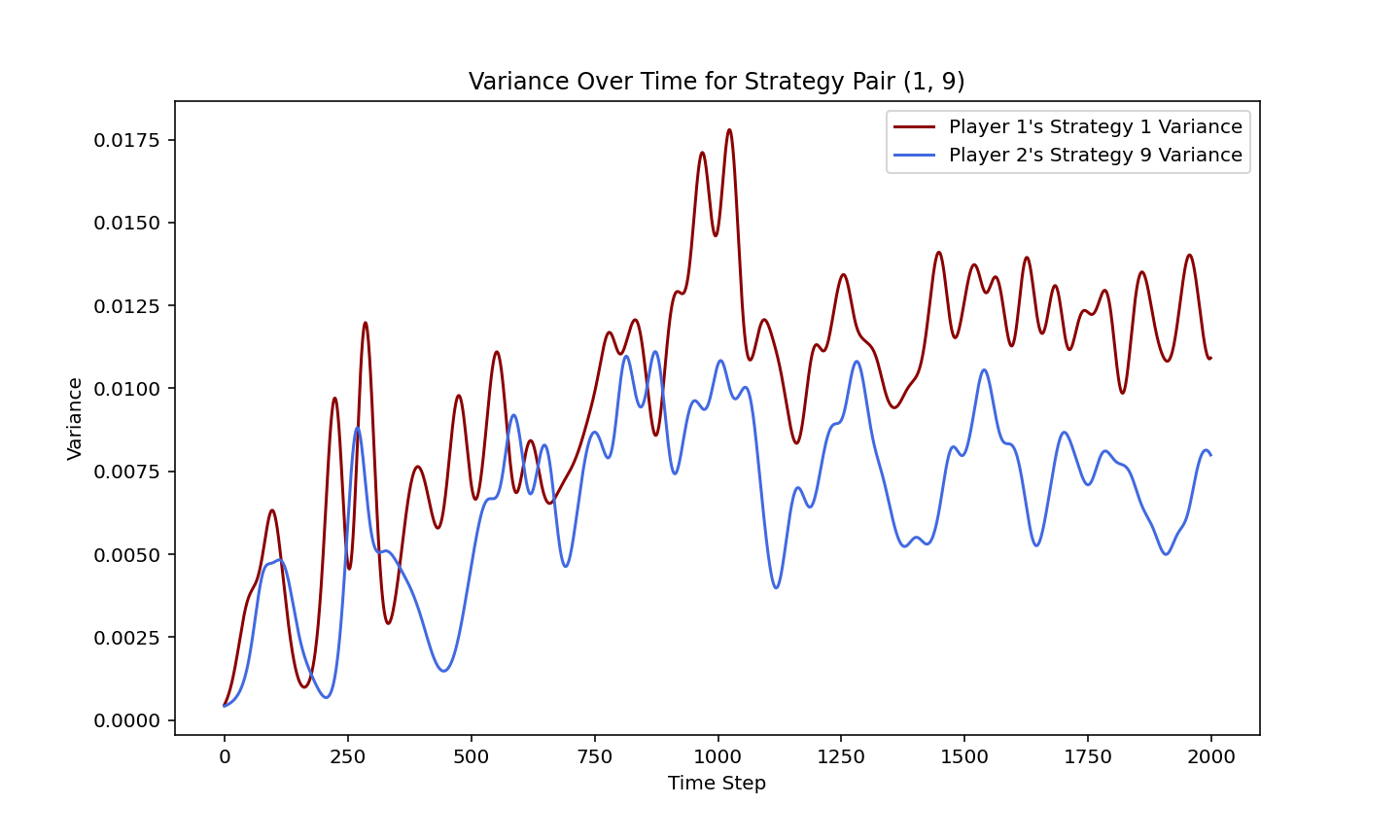}
}
\caption{Covariance evolution on a randomly generated 15*15 game}
\label{}
\end{figure}

\newpage
\section{Experiments for non-singular cases.}\label{eeee1}


\textbf{Continuous time FTRL.} We illustrate how $\Var(X_{1,1}(t))$ and $\Var(y_{1,1}(t))$ evolve with continuous time FTRL with payoff matrices
\begin{align*}
    &A_4 = [[1,-2],[-1,1]], A_5 =[[2,-3],[-1,5]] ,\\
    &A_6 =[[2,-1.5],[-2,3]].
\end{align*}

See (a)(b) Figure \ref{continuous va2}.
In (a) the $\Var(X_{1,1}(t))$ is bounded, and in (b) $\Var(y_{1,1}(t))$ is bounded, which support results of continuous time part in Theorem \ref{Covariance Evolution in FTRL with Euclidean Regularizer}
for the non-singular cases.

\begin{figure}[h]
    \centering
    \subfigure[$\Var(X_{1,1}(t))$, non-singular]{
    	\includegraphics[width=3in]{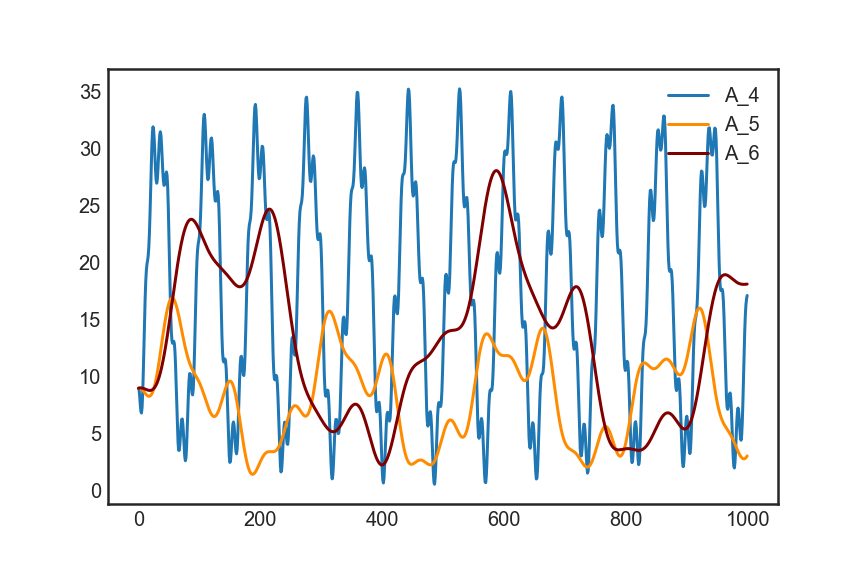}
        \label{}
    }
    \subfigure[$\Var(y_{1,1}(t))$, non-singular]{
	\includegraphics[width=3in]{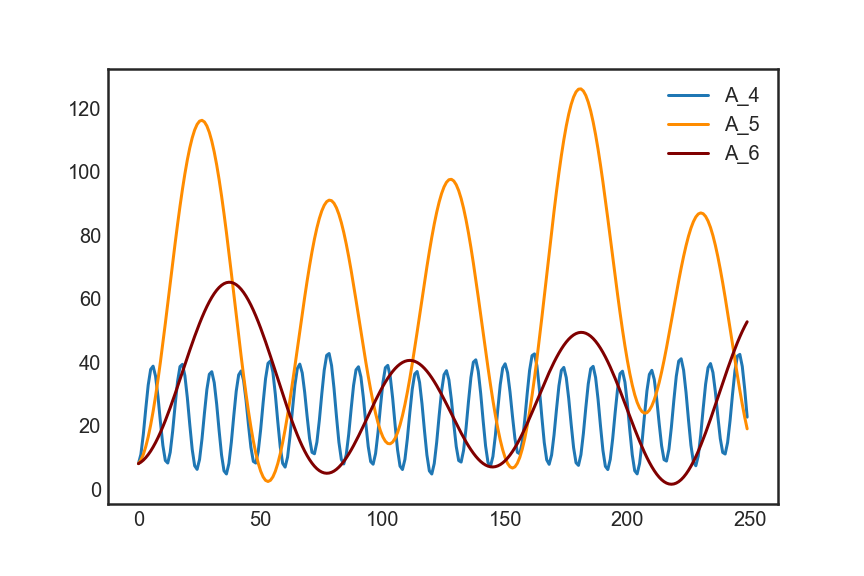}
        \label{}
    }
    \caption{Variance evolution of continuous FTRL}
    \label{continuous va2}
\end{figure}

\textbf{Symplectic discretization.} We illustrate how $\Var(X^t_{1,1})$ and $\Var(y^t_{1,1})$ evolves with symplectic discretization, the payoff matrices are given as follows: 
\begin{align*}
    &B_4 = [[1,-1.1],[-1,1]], B_5 = [[1,-1.2],[-1,1]], \\
    &B_6 = [[1,-1.3],[-1,1]].
\end{align*}
See Figure \ref{sym va2}. From the experimental results, we can see the variance behavior of symplectic discretization is same as continuous case, which support results of symplectic discretization part of Theorem \ref{Covariance Evolution in FTRL with Euclidean Regularizer} for the non-singular cases.

\begin{figure}[h]
    \centering
    \subfigure[$\Var(X^t_{1,1})$, non-singular]{
    	\includegraphics[width=3in]{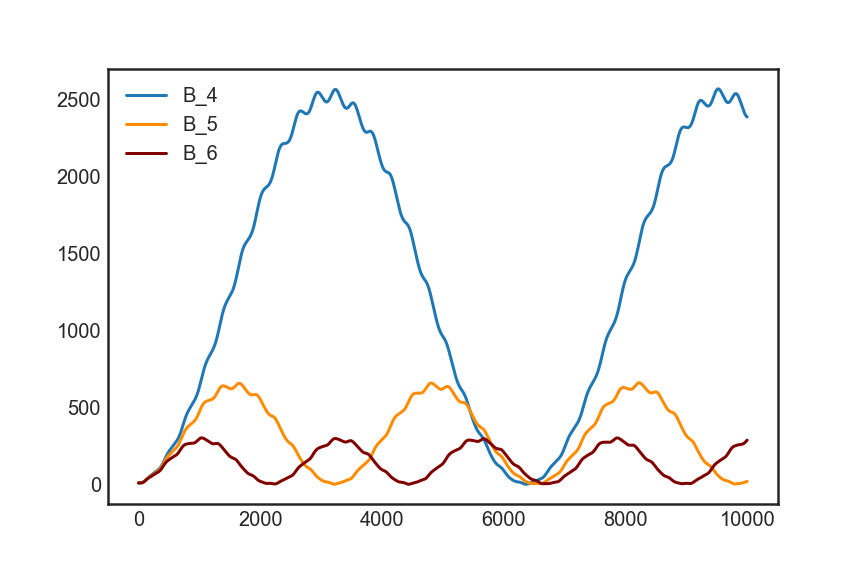}
        \label{}
    }
    \subfigure[$\Var(y^t_{1,1})$, non-singular]{
	\includegraphics[width=3in]{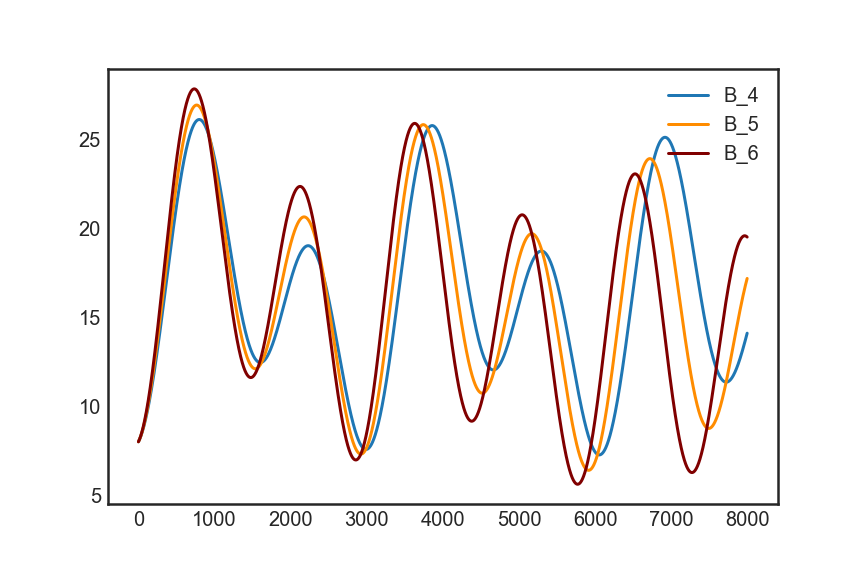}
        \label{}
    }
    \caption{Variance evolution of Symplectic discretization}
    \label{sym va2}
\end{figure}

\end{document}